\numberwithin{equation}{section}
\newtheorem{lem}{Lemma}[section]
\newtheorem{thm}{Theorem}[section]
\newtheorem{ass}{Assumption}
\renewcommand{\citep}[1]{\citeauthor{#1}, \citeyear{#1}}
\newcommand{\Supp}{\text{Supp}}
\newcommand{\indep}{\perp\!\!\!\perp}
\newcommand{\convP}{\stackrel{p}{\longrightarrow}}
\newcommand{\convD}{\rightsquigarrow}
\newcommand{\N}{\mathcal{N}}
\newcommand{\eps}{\varepsilon}
\renewcommand{\epsilon}{\varepsilon}
\DeclareMathOperator*{\argmin}{arg\,min}
\DeclareMathOperator*{\plim}{plim}
\newcommand*{\rom}
[1]{\expandafter\@slowromancap\romannumeral #1@}
\title{Bootstrap Inference for Quantile Treatment Effects in Randomized Experiments with Matched Pairs\thanks{\protect\doublespacing We thank the editor and two referees for extensive comments which have led to many improvements. We mention with thanks Kengo Kato, Yu-Chin Hsu, Shuping Shi, and Jun Yu for useful comments and suggestions. We are grateful to David McKenzie for providing the Stata code of the matching algorithm used in the empirical application, and to Esther Duflo and Cynthia Kinnan for providing the empirical data in the early versions of the paper. Jiang acknowledges support from MOE (Ministry of Education in China) Project of Humanities and Social Sciences (Project No.18YJC790063). Liu acknowledges financial support from the Chinese Ministry of Education Project of Humanities and Social Sciences (No. 18YJC790005) and the National Natural Science Foundation of China (No.72003171). Zhang acknowledges financial support from Singapore Ministry of Education Tier 2 grant under grant MOE2018-T2-2-169 and a Lee Kong Chian Fellowship. Phillips acknowledges support from NSF Grant No. SES 18-50860, a Kelly Fellowship at the University of Auckland, and a Lee Kong Chian Fellowship. Any and all errors are our own.\vspace{1.3mm}}
	\\ \vspace{2mm}
}
\author{Liang Jiang\thanks{\protect\doublespacing Fanhai International School of Finance, Fudan University.\ E-mail~address: jiangliang@fudan.edu.cn.\vspace{1.3mm}} \and Xiaobin Liu \thanks{\protect\doublespacing The corresponding author. School of Economics, Academy of Financial Research, Zhejiang University.\ E-mail~address:  liuxiaobin@zju.edu.cn.\vspace{1.3mm}} \and Peter C.B. Phillips \thanks{\protect\doublespacing  Yale University, University of Auckland, University of Southampton, and Singapore Management University. \ E-mail~address: peter.phillips@yale.edu \vspace{1.3mm}} \and Yichong Zhang\thanks{\protect\doublespacing Singapore Management University.\ E-mail~address: yczhang@smu.edu.sg.}
	\date{}
}
\begin{document}
	\maketitle
	\newpage
	\begin{abstract}
		This paper examines methods of inference concerning quantile treatment effects (QTEs) in randomized experiments with matched-pairs designs (MPDs). Standard multiplier bootstrap inference fails to capture the negative dependence of observations within each pair and is therefore conservative. Analytical inference involves estimating multiple functional quantities that require several tuning parameters. Instead, this paper proposes two bootstrap methods that can consistently approximate the limit distribution of the original QTE estimator and lessen the burden of tuning parameter choice. Most especially, the inverse propensity score weighted multiplier bootstrap can be implemented without knowledge of pair identities.

		%     considers how to bootstrap the quantile treatment effect estimator under matched-pairs designs. We establish the weak limit of the quantile treatment effect estimator uniformly over a compact set of quantile indexes. The standard error of the estimator under matched-pairs designs is weakly smaller than that obtained when the treatment is randomly assignment. Consequently,We propose to use  \\
		
		\noindent \textbf{Keywords:} Bootstrap inference, matched pairs, quantile treatment effect, randomized control trials\bigskip
		
		\noindent \textbf{JEL codes:} C14, C21
	\end{abstract}
	\newpage
\section{Introduction}
Matched-pairs designs (MPDs) have recently seen widespread and increasing use in various randomized experiments conducted by economists. By MPD we mean a randomization scheme that first pairs units based on the closeness of their baseline covariates and then randomly assigns one unit in the pair to be treated. In development economics, researchers routinely pair villages, neighborhoods, microenterprises, or townships in their experiments (\citep{BDGK15}; \citep{crepon2015}; \citep{glewwe2016}; \citep{groh2016}). In labor economics, especially in the field of education, researchers pair schools or students to evaluate the effects of various education interventions (\citep{angrist2009}; \citep{beuermann2015}; \citep{fryer2017management}; \citep{fryer2017}; \citep{bold2018}; \citep{fryer2018}). \cite{B09} surveyed leading experts in development field experiments and reported that 56\% of them explicitly match pairs of observations on baseline characteristics.

Researchers often use randomized experiments to estimate quantile treatment effects (QTEs) as well as average treatment effects (ATEs). Quantile effects can capture heterogeneity in both the sign and magnitude of treatment effects, which may vary according to position within the distribution of outcomes. A common practice in conducting inference on QTEs is to use bootstrap rather than analytical methods because the latter usually require tuning parameters in implementation. However, in MPDs, the treatment statuses are negatively \textit{dependent} within pairs because exactly half of the units are treated; covariates across pairs are \textit{dependent} as well because of pair matching. Neither the standard multiplier bootstrap nor bootstrapping the pairs mimics such dependence structure. This difficulty raises the question of how to conduct bootstrap inference for QTEs in MPDs in a manner that mitigates these shortcomings.

To tackle these shortcomings we propose two bootstrap inference methods: the gradient bootstrap and the inverse propensity score weighted (IPW) multiplier bootstrap. We first show that the gradient bootstrap can consistently approximate the limit distribution of the QTE estimator under MPDs uniformly over a compact set of quantile indexes. \cite{h17} proposed using the gradient bootstrap for the cluster-robust inference in linear quantile regression models. Like \cite{h17}, we rely on the gradient bootstrap to avoid estimating the Hessian matrix that involves the infinite-dimensional nuisance parameters. The gradient bootstrap procedure is therefore free of tuning parameters. On the other hand and differing from \cite{h17}, we construct a specific perturbation of the score based on pair and adjacent pairs of observations, which can capture the dependence structure in the original data.

To implement our gradient bootstrap method, researchers need to know the identities of pairs. Such information may not be available when they are using an experiment that was run by other investigators in the past and the randomization procedure may not have been fully described. For example, publicly available datasets for papers such as \cite{panagopoulos2008} and \cite{butler2010} contain no information on pair identities.\footnote{\protect\doublespacing Both datasets are available in the data archive of the Institute for Social and Policy Studies at Yale University (https://isps.yale.edu/research/data).} \cite{B09} also pointed out that many papers in existing experiments do not describe the randomization procedure in detail.

To address these issues, we next propose an IPW multiplier bootstrap, which can be implemented without the knowledge of pair identities. We show that such a bootstrap can consistently approximate the limit distribution of the QTE estimator under MPDs. There is a cost to not using information about pair identities as the method requires one tuning parameter for the nonparametric estimation of the propensity score. In spite of this additional cost, this multiplier bootstrap method still has an advantage over direct analytic inference because practical implementation of the latter requires more than one tuning parameter.

%
%Although this method requires one tuning parameter for the nonparametric estimation of the propensity score, it is still better than  When the null hypothesis involves multiple or even continuum of quantile indexes, the multiplier bootstrap of the IPW estimator still just requires one tuning parameter while .

%We notice that the infinite-dimensional nuisance parameters need to be estimated for analytical inference only show up in the Hessian matrix while the negative dependence of the treatment assignment only affects the score. The proposed gradient bootstrap avoids the estimation of the Hessian matrix, and thus, the tuning parameters, and is flexible in perturbing the score. We construct a specific perturbation of the score based on pair and adjacent pairs of observations and show the gradient bootstrap can approximate the limiting distribution of the QTE estimator uniformly over a compact set of quantile indexes. Based on the bootstrap sample, we illustrate how to conduct bootstrap inference when the null hypothesis involves single, multiple, or even continuum of quantile indexes.

The contributions in the present paper relate to  other recent research. \cite{BRS19} first pointed out that in MPDs the two-sample $t$-test for the null hypothesis that the ATE equals a pre-specified value is conservative. They then proposed adjusting the standard error of the estimator and studied the validity of the permutation test. This paper complements those results by considering the QTEs and by developing new methods of bootstrap inference. Unlike the permutation test, our methods of bootstrap inference do not require studentization, which is cumbersome in the QTE context. In addition, our multiplier bootstrap method complements their results by providing a way to perform inference relating to both ATEs and QTEs when pair identities are unknown. In other work, \cite{B19} investigated the optimality of MPDs in randomized experiments. \cite{ZZ20} considered bootstrap inference under covariate-adaptive randomization. A key difference in our contribution is that in MPDs the number of strata is proportional to the sample size, whereas in covariate-adaptive randomization that number is fixed. In consequence, the present work uses fundamentally different asymptotic arguments and bootstrap methods from those employed by \cite{ZZ20}. The present paper also fits within a growing literature that studies inference in randomized experiments (e.g., \cite{HHK11}, \cite{athey2017}, \cite{abadie2018}, \cite{BCS17}, \cite{T18}, and \cite{BCS18}, among others).

The remainder of the paper is organized as follows. Section \ref{sec:setup} describes the model setup and notation. Section \ref{sec:est} develops the asymptotic properties of our QTE estimator. In Section \ref{sec:bootstrap} we study the naive multiplier bootstrap, the naive multiplier bootstrap of the pairs, the gradient bootstrap, and the IPW multiplier bootstrap. Section \ref{sec:compute} provides computational details and recommendations for practitioners. Section \ref{sec:sim} reports simulation results. Section \ref{sec:app} provides an empirical application of our methods of bootstrap inference to the data in \cite{groh2016}, examining both the ATEs and QTEs of macroinsurance on consumption and profits. Section \ref{sec:concl} concludes. Proofs of all results and additional simulations are in the Online Supplement.

\section{Setup and Notation}
\label{sec:setup}
Denote the potential outcomes for treated and control groups as $Y(1)$ and $Y(0)$, respectively. Treatment status is written as $A$, where $A=1$ is treated and $A=0$ is untreated. The researcher only observes $\{Y_i,X_i,A_i\}_{i=1}^{2n}$ where $Y_i = Y_i(1)A_i + Y_i(0)(1-A_i)$, and $X_i \in \Re^{d_x}$ is a collection of baseline covariates, where $d_x$ is the dimension of $X$. The parameter of interest is the $\tau$th QTE, denoted as
\begin{align*}
\quad q(\tau) = q_1(\tau) - q_0(\tau),
\end{align*}
where $q_1(\tau)$ and $q_0(\tau)$ are the $\tau$th quantiles of $Y(1)$ and $Y(0)$, respectively.
The testing problems of interest involve single, multiple, or even a continuum of quantile indexes, as in the following null hypotheses
\begin{align*}
&\mathcal{H}_0: q(\tau) = \underline{q} \quad \text{versus} \quad q(\tau) \neq \underline{q}, \\
&\mathcal{H}_0: q(\tau_1) - q(\tau_2) = \underline{q} \quad \text{versus} \quad q(\tau_1) - q(\tau_2) \neq \underline{q}, \;\textrm{and} \\
&\mathcal{H}_0: q(\tau) = \underline{q}(\tau)~\forall \tau \in \Upsilon \quad \text{versus} \quad q(\tau) \neq \underline{q}(\tau)~\text{for some}~\tau \in \Upsilon,
\end{align*}
for some pre-specified value $\underline{q}$ or function $\underline{q}(\tau)$, where $\Upsilon$ is some compact subset of $(0,1)$.

The units are grouped into pairs based on the closeness of their baseline covariates, which is now made clear. Pairs of units are denoted
\begin{align*}
(\pi(2j-1),\pi(2j))~\text{for }j \in [n],
\end{align*}
where $[n] = \{1,\cdots,n\}$ and $\pi$ is a permutation of $2n$ units based on $\{X_i\}^{2n}_{i=1}$ as specified in Assumption \ref{ass:assignment1}(iv) below. Within a pair, one unit is randomly assigned to treatment and the other to control. Specifically, we make the following assumption on the data generating process (DGP) and the treatment assignment rule.
\begin{ass}
	\begin{enumerate}[label=(\roman*)]
		\item $\{Y_i(1),Y_i(0),X_i\}_{i=1}^{2n}$ is i.i.d.
		\item $\{Y_i(1),Y_i(0)\}^{2n}_{i=1} \indep \{A_i\}^{2n}_{i=1}|\{X_i\}^{2n}_{i=1}$.
		\item Conditionally on $\{X_i\}^{2n}_{i=1}$, $\{A_{\pi(2j-1)},A_{\pi(2j)}\}_{j \in [n]}$ are i.i.d. and each uniformly distributed over the values in $\{(1,0),(0,1)\}$.
		\item $\frac{1}{n}\sum_{j=1}^n\left\Vert X_{\pi(2j)} - X_{\pi(2j-1)} \right\Vert_2^r \convP 0$ for $r=1,2$.
	\end{enumerate}
	\label{ass:assignment1}
\end{ass}
Assumption \ref{ass:assignment1} is used in \cite{BRS19} to which we refer readers for more discussion. In Assumption \ref{ass:assignment1}(iv), $||\cdot||_2$ denotes Euclidean distance. However, all our results hold if $||\cdot||_2$ is replaced by any distance that is equivalent to it, such as $L_\infty$ distance, $L_1$ distance, and the Mahalanobis distance when all the eigenvalues of the covariance matrix are bounded and bounded away from zero.

%Later in Section \ref{sec:bootstrap} and following Assumption \ref{ass:pair} we provide two cases for which Assumption \ref{ass:assignment1}(iv) holds.

\section{Estimation}
\label{sec:est}
Let $\hat{q}_1(\tau)$ and $\hat{q}_0(\tau)$ be the $\tau$th percentiles of outcomes in the treated and control groups, respectively. Then, the $\tau$th QTE estimator we consider is just
\begin{align*}
\hat{q}(\tau) = \hat{q}_1(\tau) - \hat{q}_0(\tau).
\end{align*}
For ease of notation, dependence of $\hat{q}(\tau)$, $\hat{q}_1(\tau)$, $\hat{q}_0(\tau)$ and all the other estimators on $n$ is suppressed throughout the rest of the paper. To facilitate further analysis and motivate our bootstrap procedure, we note that $\hat{q}(\tau)$ can be equivalently computed by direct quantile regression. Let
\begin{align*}
(\hat{\beta}_0(\tau),\hat{\beta}_1(\tau)) = \argmin_b \sum_{i=1}^{2n}\rho_\tau(Y_i - \dot{A}^\top b),
\end{align*}
where $\dot{A}_i = (1,A_i)^\top$ and $\rho_\tau(u) = u(\tau - 1\{u\leq 0\})$. Then, $\hat{q}(\tau) = \hat{\beta}_1(\tau)$ and $\hat{q}_0(\tau) = \hat{\beta}_0(\tau)$.

\begin{ass}
	\label{ass:reg}
	For $a=0,1$, define $F_a(\cdot)$, $F_a(\cdot|x)$, $f_a(\cdot)$, and $f_a(\cdot|x)$ as the CDF of $Y_i(a)$, the conditional CDF of $Y_i(a)$ given $X_i=x$, the PDF of $Y_i(a)$, and the conditional PDF of $Y_i(a)$ given $X_i=x$, respectively. \begin{enumerate}[label=(\roman*)]
		\item $f_a(q_a(\tau))$ is bounded and bounded away from zero uniformly over $\tau \in \Upsilon$, and $f_a(q_a(\tau)|x)$ is uniformly bounded for $(x,\tau) \in \Supp(X) \times \Upsilon$.
		\item There exists a function $C(x)$ such that
		\begin{align*}
		\sup_{\tau \in \Upsilon}|f_a(q_a(\tau) + v|x) - f_a(q_a(\tau)|x)| \leq C(x)|v| \quad \text{and} \quad \mathbb{E}C(X_i) < \infty.
		\end{align*}
		\item Let $\N_0$ be a neighborhood of 0. Then, there exists a constant $C$ such that for any $x,x' \in \Supp(X)$
		\begin{align*}
		\sup_{\tau \in \Upsilon,v\in \N_0}|f_a(q_a(\tau)+v|x') - f_a(q_a(\tau)+v|x)| \leq C||x'-x||_2
		\end{align*}
		and
		$$\sup_{\tau \in \Upsilon,v\in \N_0}|F_a(q_a(\tau)+v|x')-F_a(q_a(\tau)+v|x)| \leq C||x'-x||_2.$$
	\end{enumerate}
\end{ass}

Assumption \ref{ass:reg}(i) is a standard regularity condition widely assumed in quantile estimation. The Lipschitz conditions in Assumptions \ref{ass:reg}(ii) and \ref{ass:reg}(iii) are similar in spirit to those assumed in \citet[Assumption 2.1]{BRS19} and ensure that units that are ``close" in terms of their baseline covariates are suitably comparable. For $a = 0,1$, let $m_{a,\tau}(x,q) = \mathbb{E}(\tau - 1\{Y(a) \leq q\}|X=x)$ and $m_{a,\tau}(x) = m_{a,\tau}(x,q_a(\tau))$.

\begin{thm}
	\label{thm:est}
	Suppose Assumptions \ref{ass:assignment1} and \ref{ass:reg} hold. Then, uniformly over $\tau \in \Upsilon$,
	$$\sqrt{n}(\hat{q}(\tau) - q(\tau)) \convD \mathcal{B}(\tau),$$
	where $\mathcal{B}(\tau)$ is a tight Gaussian process with covariance kernel $\Sigma(\cdot,\cdot)$ such that
	\begin{align*}
	\Sigma(\tau,\tau') = & \frac{\min(\tau, \tau') - \tau\tau' - \mathbb{E}m_{1,\tau}(X)m_{1,\tau'}(X)}{f_1(q_1(\tau))f_1(q_1(\tau'))} +  \frac{\min(\tau, \tau') - \tau\tau' - \mathbb{E}m_{0,\tau}(X)m_{0,\tau'}(X)}{f_0(q_0(\tau))f_0(q_0(\tau'))} \\
	& + \frac{1}{2}\mathbb{E}\left(\frac{m_{1,\tau}(X)}{f_1(q_1(\tau))} - \frac{m_{0,\tau}(X)}{f_0(q_0(\tau))}\right)\left(\frac{m_{1,\tau'}(X)}{f_1(q_1(\tau'))} - \frac{m_{0,\tau'}(X)}{f_0(q_0(\tau'))}\right).
	\end{align*}
\end{thm}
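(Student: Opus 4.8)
The plan is to reduce the quantile process to a linear (Bahadur) representation and then prove a functional central limit theorem (FCLT) for the leading linear term, exploiting the conditional independence across pairs together with the covariate-closeness in Assumption \ref{ass:assignment1}(iv). Writing $\psi_{a,\tau}(y) = \tau - 1\{y \le q_a(\tau)\}$, the first step is to establish that, uniformly over $\tau \in \Upsilon$,
\begin{align*}
\sqrt{n}(\hat{q}(\tau) - q(\tau)) = \frac{1}{\sqrt{n}}\sum_{i=1}^{2n}\left[A_i\frac{\psi_{1,\tau}(Y_i(1))}{f_1(q_1(\tau))} - (1-A_i)\frac{\psi_{0,\tau}(Y_i(0))}{f_0(q_0(\tau))}\right] + o_p(1) =: \mathbb{L}_n(\tau) + o_p(1).
\end{align*}
First I would show uniform consistency of $\hat q_a(\tau)$ and then derive a uniform-in-$\tau$ Bahadur representation for each arm's empirical quantile, obtaining the linearization from Knight's convexity argument applied to the check-function objective and controlling the remainder through stochastic equicontinuity of $q \mapsto \sum_{i:A_i=a}[1\{Y_i \le q\} - F_a(q|X_i)]$ near $q_a(\tau)$; the indicator class is Donsker and the conditional-mean subtraction is handled by the Lipschitz conditions in Assumption \ref{ass:reg}.

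The core of the argument is the FCLT for $\mathbb{L}_n(\tau)$, and the key algebraic step is the decomposition $\mathbb{L}_n(\tau) = \mathbb{A}_n(\tau) + \mathbb{R}_n(\tau)$, where
\begin{align*}
\mathbb{A}_n(\tau) = \frac{1}{\sqrt{n}}\sum_{i=1}^{2n}\frac{1}{2}\left[\frac{\psi_{1,\tau}(Y_i(1))}{f_1(q_1(\tau))} - \frac{\psi_{0,\tau}(Y_i(0))}{f_0(q_0(\tau))}\right]
\end{align*}
collects the terms free of the assignment and
\begin{align*}
\mathbb{R}_n(\tau) = \frac{1}{\sqrt{n}}\sum_{i=1}^{2n}\left(A_i - \frac{1}{2}\right)\left[\frac{\psi_{1,\tau}(Y_i(1))}{f_1(q_1(\tau))} + \frac{\psi_{0,\tau}(Y_i(0))}{f_0(q_0(\tau))}\right]
\end{align*}
collects the randomization fluctuations. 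The term $\mathbb{A}_n$ is a normalized sum of i.i.d.\ mean-zero summands (functions of $(Y_i(1),Y_i(0),X_i)$ only), so a standard Donsker argument delivers its weak convergence. For $\mathbb{R}_n$ I would condition on $\mathcal{F}_n = \{Y_i(1),Y_i(0),X_i\}_{i=1}^{2n}$: since within each pair exactly one unit is treated, the pair contributions collapse to $\frac{1}{2}\eta_j[Z_{\pi(2j-1)}(\tau) - Z_{\pi(2j)}(\tau)]$, where $Z_i(\tau) = \psi_{1,\tau}(Y_i(1))/f_1(q_1(\tau)) + \psi_{0,\tau}(Y_i(0))/f_0(q_0(\tau))$ and the $\eta_j \in \{\pm 1\}$ are, by Assumption \ref{ass:assignment1}(ii)--(iii), i.i.d.\ Rademacher conditionally on $\mathcal{F}_n$. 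The summands are then conditionally independent and bounded, so a Lindeberg CLT gives finite-dimensional convergence and a maximal inequality for the symmetrized indicator class yields asymptotic equicontinuity over $\tau \in \Upsilon$.

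Computing the limiting conditional covariance of $\mathbb{R}_n$ rests on the law of large numbers $\frac{1}{n}\sum_j D_j(\tau)D_j(\tau') \convP \mathbb{E}[e(\tau)e(\tau')]$, where $D_j = Z_{\pi(2j-1)} - Z_{\pi(2j)}$ and $e_i(\tau) = Z_i(\tau) - \mathbb{E}[Z_i(\tau)|X_i]$; here the covariate-closeness in Assumption \ref{ass:assignment1}(iv) is exactly what kills the cross-pair conditional-mean differences, while the products $e_{\pi(2j-1)}e_{\pi(2j)}$ vanish because the two units in a pair are conditionally independent given their covariates. Since $\mathbb{A}_n$ is $\mathcal{F}_n$-measurable and $\mathbb{E}[\mathbb{R}_n|\mathcal{F}_n]=0$, the two processes are asymptotically independent, so their covariances add; a direct calculation then shows that the cross terms involving the joint law of $(Y(1),Y(0))$ cancel between $\mathbb{A}_n$ and $\mathbb{R}_n$, leaving precisely the kernel $\Sigma(\tau,\tau')$ in the statement. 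Joint weak convergence of $(\mathbb{A}_n,\mathbb{R}_n)$ together with tightness of each piece then gives $\mathbb{L}_n \convD \mathcal{B}$, and the Bahadur reduction transfers this to $\sqrt{n}(\hat q - q)$.

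I expect the main obstacle to be the two empirical-process arguments that cannot be imported directly from the i.i.d.\ theory: the uniform Bahadur representation and the asymptotic equicontinuity of $\mathbb{R}_n$, both of which must be carried out under the matched-pair dependence. The resolution in each case is to condition on $\mathcal{F}_n$, which reduces $\mathbb{R}_n$ to a Rademacher-symmetrized process amenable to maximal inequalities over the Donsker class $\{1\{\cdot \le q_a(\tau)\} : \tau \in \Upsilon\}$, and to lean on Assumption \ref{ass:assignment1}(iv) to control the bias created by pairing non-identical covariates.
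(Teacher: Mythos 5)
Your proposal is correct, and the covariance bookkeeping does come out to $\Sigma(\tau,\tau')$ (I verified the cancellation you assert), but your route is genuinely different from the paper's. You split the linearized process by the \emph{assignment} randomness, $\mathbb{L}_n = \mathbb{A}_n + \mathbb{R}_n$ with $A_i = \tfrac12 + (A_i - \tfrac12)$, and condition on $\mathcal{F}_n = \{Y_i(1),Y_i(0),X_i\}_{i=1}^{2n}$, so that $\mathbb{A}_n$ is a bona fide i.i.d.\ Donsker process and $\mathbb{R}_n$ is, conditionally, a Rademacher-weighted process over pairs --- a design-based decomposition in the spirit of the ATE analysis of \cite{BRS19}, with symmetrization built in so the maximal inequalities come cheaply. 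The paper instead splits each score by its conditional mean given $X$, writing $\tau - 1\{Y_i(a)\le q_a(\tau)\} = \eta_{i,a}(\tau) + m_{a,\tau}(X_i)$, conditions on $\{A_i,X_i\}_{i=1}^{2n}$, handles $\sum_i A_i\eta_{i,1}(\tau)/\sqrt{n}$ by a triangular-array CLT, kills $\sum_i (A_i-\tfrac12)m_{a,\tau}(X_i)/\sqrt{n}$ via the pairing condition, and sums the two kernels $\tilde{\Sigma}_1+\tilde{\Sigma}_2$. The substantive trade-off: under the paper's conditioning the treated and control residual sums live on \emph{disjoint} units and are exactly conditionally independent, so the unidentified joint law of $(Y(1),Y(0))$ never enters anywhere; in your decomposition both potential outcomes of every unit appear in both $\mathbb{A}_n$ and $\mathbb{R}_n$, so the intermediate covariances contain terms like $\mathbb{E}\,\psi_{1,\tau}(Y(1))\psi_{0,\tau'}(Y(0))$ that are not identified and must cancel exactly between the two pieces --- they do, because $\mathbb{E}\eta_1(\tau)\eta_0(\tau') = \mathbb{E}\psi_{1,\tau}\psi_{0,\tau'} - \mathbb{E}m_{1,\tau}(X)m_{0,\tau'}(X)$, but this makes the constant-tracking more delicate. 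Two small points to fix in a full write-up: first, with $D_j(\tau)=Z_{\pi(2j-1)}(\tau)-Z_{\pi(2j)}(\tau)$ the correct limit is $\frac1n\sum_j D_j(\tau)D_j(\tau') \convP 2\,\mathbb{E}[e(\tau)e(\tau')]$ (two independent copies per pair), which after the factor $\tfrac14$ from the $\pm\tfrac12$ weights gives conditional covariance $\tfrac12\mathbb{E}[e(\tau)e(\tau')]$ --- your displayed limit is off by a factor of $2$, though the final kernel you claim is the right one; second, $\mathcal{F}_n$-measurability of $\mathbb{A}_n$ plus $\mathbb{E}[\mathbb{R}_n\mid\mathcal{F}_n]=0$ does not by itself yield asymptotic independence --- you need that the conditional law of $\mathbb{R}_n$ given $\mathcal{F}_n$ converges to a \emph{fixed} (non-random) Gaussian law, which is exactly what your conditional Lindeberg CLT with deterministic limiting covariance provides, so the gap is only in the stated justification, not in the available ingredients. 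A further practical advantage of the paper's representation is that its score structure is reused verbatim in the bootstrap theorems (Theorems \ref{thm:weight}--\ref{thm:ipw_w}), whereas your decomposition would need to be redone there.
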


Several remarks are in order. First, the asymptotic variance of $\hat{q}(\tau)$ under MPDs is
\begin{align}
\Sigma(\tau,\tau) = & \frac{\tau - \tau^2 - \mathbb{E}m^2_{1,\tau}(X)}{f^2_1(q_1(\tau))} +  \frac{\tau - \tau^2 - \mathbb{E}m^2_{0,\tau}(X)}{f^2_0(q_0(\tau))} + \frac{1}{2}\mathbb{E}\left(\frac{m_{1,\tau}(X)}{f_1(q_1(\tau))} - \frac{m_{0,\tau}(X)}{f_0(q_0(\tau))}\right)^2.
\label{eq:sigma}
\end{align}
Note further that the asymptotic variance of $\hat{q}(\tau)$ under simple random sampling (SRS)\footnote{By simple random sample, we mean the treatment status is assigned independently with probability $1/2$.} is
\begin{align}
\Sigma^\dagger(\tau,\tau) = & \frac{\tau - \tau^2}{f^2_1(q_1(\tau))} +  \frac{\tau - \tau^2 }{f^2_0(q_0(\tau))}.
\label{eq:sigmadagger}
\end{align}
It is clear that
\begin{align}
\label{eq:simplesigma}
\Sigma^\dagger(\tau,\tau) - \Sigma(\tau,\tau) = \frac{1}{2}\mathbb{E}\left(\frac{m_{1,\tau}(X)}{f_1(q_1(\tau))} + \frac{m_{0,\tau}(X)}{f_0(q_0(\tau))}\right)^2 \geq 0.
\end{align}
Equality in the last expression holds when both $m_{1,\tau}(X)$ and $m_{0,\tau}(X)$ are zero, which implies that $X$ is irrelevant to the $\tau$th quantiles of $Y(0)$ and $Y(1)$.

Second, note that $\hat{q}(\tau)$ has the same asymptotic variance as that for the QTE estimators studied by  \cite{F07} and \cite{DH14} under SRS.
%\footnote{\protect\doublespacing The propensity score is just a constant of $1/2$.\vspace{1.3mm}} \cite{H98} pointed out that, even in the case of simple random sampling, to achieve the semiparametric efficiency bound one needs to use the IPW estimator with a nonparametrically estimated propensity score. We view the MPD as an alternative to achieving such efficiency without nonparametric estimation.\footnote{\protect\doublespacing Whether the efficiency bound remains the same under MPDs is still an open question and is an interesting topic for future research.}

Third, to provide an analytic estimate of the asymptotic variance $\Sigma(\tau,\tau)$ it is necessary at least to estimate the infinite dimensional nuisance parameters $f_1(q_1(\tau))$ and $f_0(q_0(\tau))$, which requires two tuning parameters. Hence, if a researcher is interested in testing a null hypothesis that involves $G$ quantile indexes, $2G$ tuning parameters are needed to estimate $2G$ densities, a cumbersome task in practical work; and to construct a uniform confidence band for the QTE analytically, two tuning parameters are needed at each grid point of the quantile indexes. Moreover, if pair identities are unknown, analytic methods of inference potentially require nonparametric estimation of the quantities $m_{a,\tau}(\cdot)$ for $a = 0,1$ as well. There are other practical difficulties. Nonparametric estimation is sometimes sensitive to the choice of tuning parameters and rule-of-thumb tuning parameter selection may not be appropriate for every data generating process (DGP) or every quantile. Use of cross-validation in selecting the tuning parameters is possible in principle but in practice time-consuming. These practical difficulties of analytic methods of inference provide a strong motivation to investigate bootstrap inference procedures that are much less reliant on tuning parameters.

\section{Bootstrap Inference}
\label{sec:bootstrap}

This section examines four bootstrap inference procedures for the QTEs in MPDs. We first show that both the naive multiplier bootstrap and the naive multiplier bootstrap of the pairs fail to approximate the limit distribution of the QTE estimator derived in Section \ref{sec:est}. We then propose two bootstrap methods that can consistently approximate the limit distribution of the QTE estimator.

\subsection{Naive Multiplier Bootstrap}
\label{sec:weighted}
Consider first the naive multiplier bootstrap estimators of $\hat{\beta}_0(\tau)$ and $\hat{\beta}_1(\tau)$, defining
\begin{align*}
(\hat{\beta}_0^m(\tau),\hat{\beta}_1^m(\tau)) = \argmin_b \sum_{i=1}^{2n}\xi_i\rho_\tau(Y_i - \dot{A}^\top_i b),
\end{align*}
where $\xi_i$ is the bootstrap weight defined in the next assumption.

\begin{ass}
	\label{ass:weight}
	Suppose $\{\xi_i\}_{i=1}^{2n}$ is a sequence of nonnegative i.i.d. random variables with unit expectation and variance and a sub-exponential upper tail.
\end{ass}

In practice, we generate $\xi_{i}$ independently from the standard exponential distribution. Denote $\hat{q}^m(\tau) = \hat{\beta}_1^m(\tau)$ and recall that $\hat{q}(\tau) =  \hat{\beta}_1(\tau)$.
\begin{thm}
	\label{thm:weight}
	If Assumptions \ref{ass:assignment1}--\ref{ass:weight} hold, then conditionally on the data and uniformly over $\tau \in \Upsilon$,
	\begin{align*}
	\sqrt{n}(\hat{q}^m(\tau) - \hat{q}(\tau)) \convD \mathcal{B}^m(\tau),
	\end{align*}
	where $\mathcal{B}^m(\tau)$ is a tight Gaussian process with covariance kernel $\Sigma^\dagger(\cdot,\cdot)$ such that
	\begin{align*}
	\Sigma^\dagger(\tau,\tau') =  \frac{ \min(\tau,\tau') - \tau \tau' }{f_1(q_1(\tau))f_1(q_1(\tau'))} + \frac{ \min(\tau,\tau') - \tau \tau' }{f_0(q_0(\tau))f_0(q_0(\tau'))}.
	\end{align*}
\end{thm}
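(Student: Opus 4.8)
The plan is to realize $\hat q^m(\tau)=\hat\beta_1^m(\tau)$ as the argmin of a convex criterion and, conditional on the data, to reproduce the argument behind Theorem~\ref{thm:est} with the weights $\xi_i$ inserted into the score. Write $\hat b(\tau)=(\hat\beta_0(\tau),\hat\beta_1(\tau))^\top$, $\hat b^m(\tau)=(\hat\beta_0^m(\tau),\hat\beta_1^m(\tau))^\top$ and $\beta(\tau)=(q_0(\tau),q(\tau))^\top$, so that $\dot A_i^\top\beta(\tau)=q_{A_i}(\tau)$, and set $\psi_\tau(u)=\tau-1\{u\le 0\}$. Reparametrizing by $u=\sqrt n\,(b-\beta(\tau))$, the recentered weighted objective
\[
Z_n^m(u,\tau)=\sum_{i=1}^{2n}\xi_i\Bigl[\rho_\tau\bigl(Y_i-\dot A_i^\top\beta(\tau)-\dot A_i^\top u/\sqrt n\bigr)-\rho_\tau\bigl(Y_i-\dot A_i^\top\beta(\tau)\bigr)\Bigr]
\]
is convex in $u$ with minimizer $\sqrt n(\hat b^m(\tau)-\beta(\tau))$. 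By Knight's identity $Z_n^m(u,\tau)=-u^\top W_n^m(\tau)+R_n^m(u,\tau)$, where $W_n^m(\tau)=n^{-1/2}\sum_i\xi_i\dot A_i\psi_\tau(Y_i-\dot A_i^\top\beta(\tau))$ and, using Assumption~\ref{ass:reg} and a law of large numbers valid under MPDs, $R_n^m(u,\tau)$ converges, conditionally on the data and in probability, to the deterministic quadratic form $\tfrac12 u^\top H(\tau)u$ with $H(\tau)$ the probability limit of $n^{-1}\sum_{i=1}^{2n}f_{A_i}(q_{A_i}(\tau))\dot A_i\dot A_i^\top$.

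First I would import the unweighted first-order condition from the proof of Theorem~\ref{thm:est}, $n^{-1/2}\sum_i\dot A_i\psi_\tau(Y_i-\dot A_i^\top\hat b(\tau))=o_p(1)$ uniformly in $\tau$, and subtract it so as to recenter the weighted score at the data. Together with the convexity (argmin) lemma applied to $Z_n^m(\cdot,\tau)$ this yields the conditional, uniform-in-$\tau$ Bahadur representation
\[
\sqrt n\bigl(\hat b^m(\tau)-\hat b(\tau)\bigr)=H(\tau)^{-1}\,n^{-1/2}\sum_{i=1}^{2n}(\xi_i-1)\dot A_i\psi_\tau\bigl(Y_i-\dot A_i^\top\hat b(\tau)\bigr)+r_n(\tau),
\]
with $r_n(\tau)$ conditionally negligible uniformly over $\Upsilon$. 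Reading off the second coordinate and using the explicit $H(\tau)^{-1}$ implied by the marginal treatment probability $1/2$ reduces $\sqrt n(\hat q^m(\tau)-\hat q(\tau))$ to the multiplier empirical process $n^{-1/2}\sum_i(\xi_i-1)g_{\tau,i}$, with $g_{\tau,i}=\bigl[A_i/f_1(q_1(\tau))-(1-A_i)/f_0(q_0(\tau))\bigr]\psi_\tau(Y_i-\dot A_i^\top\hat b(\tau))$.

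Conditional on the data the summands $(\xi_i-1)g_{\tau,i}$ are independent across $i$ with mean zero, so their finite-dimensional distributions converge to a Gaussian limit by a conditional triangular-array Lindeberg CLT, with the sub-exponential tail of Assumption~\ref{ass:weight} supplying the Lindeberg condition (the $g_{\tau,i}$ being bounded). The conditional covariance equals $n^{-1}\sum_i g_{\tau,i}g_{\tau',i}$, and because $A_i(1-A_i)=0$ it contains no cross-pair products; an MPD law of large numbers then gives $n^{-1}\sum_i g_{\tau,i}g_{\tau',i}\convP\Sigma^\dagger(\tau,\tau')$, where the factor $2$ from summing $2n$ terms against the $\sqrt n$ scaling cancels the $1/2$ treatment probability. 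This step is exactly where the within-pair dependence responsible for the extra terms of $\Sigma(\tau,\tau')$ in Theorem~\ref{thm:est} disappears: independent weights cannot recreate the negative correlation of treatment status within a pair, so only the ``simple sampling'' part $\Sigma^\dagger$ survives.

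Finally, asymptotic equicontinuity in $\tau$ follows because $\{g_{\tau,\cdot}:\tau\in\Upsilon\}$ is a bounded VC-type, hence Donsker, class generated by the indicators $1\{Y_i\le\dot A_i^\top\hat b(\tau)\}$; a conditional multiplier functional central limit theorem then upgrades the finite-dimensional convergence to conditional weak convergence of $n^{-1/2}\sum_i(\xi_i-1)g_{\tau,i}$ to a tight Gaussian process with kernel $\Sigma^\dagger$, and the convexity lemma transfers this to $\sqrt n(\hat q^m(\tau)-\hat q(\tau))$. I expect the main obstacle to be this uniform-in-$\tau$ conditional step: controlling $r_n(\tau)$ and the equicontinuity of a weighted, non-smooth empirical process simultaneously in the multiplier randomness and in the MPD-dependent design, where the textbook i.i.d.\ multiplier CLT does not apply directly and one must argue conditionally on the dependent data.
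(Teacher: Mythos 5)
Your proposal is correct and follows essentially the same route as the paper's proof: a convexity (Knight's identity plus argmin) argument yielding a uniform Bahadur representation for the weighted estimator, subtraction of the unweighted representation so that $\sqrt{n}(\hat{q}^m(\tau)-\hat{q}(\tau))$ reduces to $Q^{-1}(\tau)$ times a multiplier score, computation of the conditional covariance (whose cross terms vanish and whose diagonal entries converge to $\min(\tau,\tau')-\tau\tau'$), and a conditional maximal-inequality argument for tightness. The only cosmetic difference is that you recenter the multiplier score at the estimated coefficients $\hat{b}(\tau)$ via the subgradient condition, whereas the paper centers both Bahadur representations at the true $\beta(\tau)$ so the $m_{a,\tau}(X_i)$ terms cancel directly; both devices lead to the same limit.
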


Three remarks are in order. First, $\sqrt{n}(\hat{q}^m(\tau) - \hat{q}(\tau))$ and $\mathcal{B}^m(\tau)$ are viewed as processes indexed by $\tau \in \Upsilon$ and denoted by $G_n$ and $G$, respectively. Then, following \citet[Chapter 2.9]{VW96}, we say $G_n$ weakly converges to $G$ conditionally on data and uniformly over $\tau \in \Upsilon$ if
\begin{align*}
\sup_{h \in \text{BL}_1}|\mathbb{E}_\xi h(G_n) - \mathbb{E}h(G)| \convP 0,
\end{align*}
where $\text{BL}_1$ is the set of all functions $h:\ell^\infty(\Upsilon) \mapsto [0,1]$ such that $|h(z_1)-h(z_2)| \leq ||z_1-z_2||_\infty$ for every $z_1,z_2 \in \ell^\infty(\Upsilon)$, and $\mathbb{E}_\xi$ denotes expectation with respect to the bootstrap weights $\{\xi\}_{i=1}^n$.\footnote{Asymptotic measurability holds in our setting from \citet[Lemma 1.5.2]{VW96}, which requires asymptotic tightness of the bootstrap process. The latter has been established in the proof of Theorem \ref{thm:weight}. For notational simplicity, we ignore the issue of asymptotic measurability.} The same remark applies to Theorems \ref{thm:pair}, \ref{thm:boot}, and \ref{thm:ipw_w} below.

Second, $\Sigma^\dagger(\tau,\tau')$ is just the covariance kernel of the QTE estimator when simple random sampling (instead of the MPD) is used as the treatment assignment rule. It follows that the naive multiplier bootstrap fails to approximate the limit distribution of $\hat{q}(\tau)$ ($\hat{\beta}_1(\tau)$). The intuition is straightforward. Given the data, the bootstrap weights are i.i.d. and thus unable to mimic the cross-sectional dependence in the original sample.

Third, it is possible to consider the conventional nonparametric bootstrap in which the bootstrap sample is generated from the empirical distribution of the data. If the observations are i.i.d., \citet[Section 3.6]{VW96} showed that the conventional bootstrap is first-order equivalent to a multiplier bootstrap with Poisson(1) weights. However, in the current setting, $\{A_i\}_{i \in [2n]}$ are dependent. It is technically challenging to show rigorously that the above equivalence still holds and this challenge is left for future research.

\subsection{Naive Multiplier Bootstrap of the Pairs }
\label{sec:pair}

Next consider the naive multiplier bootstrap of the pairs which uses the same bootstrap multiplier for the observations within the pair. Let

\begin{align*}
(\hat{\beta}_{0}^p(\tau),\hat{\beta}_{1}^p(\tau)) = \argmin_b \sum_{i=1}^{2n}\xi_i^p\rho_\tau(Y_i - \dot{A}^\top_ib),
\end{align*}
where $\xi_i^p$ is the bootstrap weight defined in the next assumption.

\begin{ass}
	\label{ass:pair_boot}
	Suppose $\{\xi_{\pi(2j-1)}^p\}_{j=1}^{n}$ is a sequence of nonnegative i.i.d. random variables with unit expectation and variance and a sub-exponential upper tail and $\xi_{\pi(2j-1)}^p = \xi_{\pi(2j)}^p$ for $j=1,\cdots,n$.
\end{ass}

Because the units in the same pair share the same multiplier, we call this the naive multiplier bootstrap of the pairs. Denote $\hat{q}^p(\tau) = \hat{\beta}_{1}^p(\tau)$.
\begin{thm}
	\label{thm:pair}
	If Assumptions \ref{ass:assignment1}, \ref{ass:reg}, and \ref{ass:pair_boot} hold, then conditionally on the data and uniformly over $\tau \in \Upsilon$,
	\begin{align*}
	\sqrt{n}(\hat{q}^p(\tau) - \hat{q}(\tau)) \convD \mathcal{B}^p(\tau),
	\end{align*}
	where $\mathcal{B}^p(\tau)$ is a tight Gaussian process with covariance kernel $\Sigma^p(\cdot,\cdot)$ such that
	\begin{align*}
	\Sigma^p(\tau,\tau') = \frac{ \min(\tau,\tau') - \tau \tau' }{f_1(q_1(\tau))f_1(q_1(\tau'))} + \frac{ \min(\tau,\tau') - \tau \tau' }{f_0(q_0(\tau))f_0(q_0(\tau'))} - \frac{\mathbb{E}m_{1,\tau}(X_i)m_{0,\tau'}(X_i)}{f_1(q_1(\tau))f_0(q_0(\tau'))} - \frac{\mathbb{E}m_{1,\tau'}(X_i)m_{0,\tau}(X_i)}{f_0(q_0(\tau))f_1(q_1(\tau'))}.
	\end{align*}
\end{thm}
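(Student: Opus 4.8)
The plan is to follow the same route used for Theorems \ref{thm:est} and \ref{thm:weight}, reducing the nonsmooth weighted quantile regression to a linear (Bahadur) representation and then invoking a conditional multiplier CLT, the only genuinely new feature being the within-pair covariance produced by the shared weights. Write the quantile score as $\psi_\tau(u) = \tau - 1\{u \le 0\}$, abbreviate $f_a = f_a(q_a(\tau))$ and $f_a' = f_a(q_a(\tau'))$, and recall $\hat q^p(\tau) = \hat q_1^p(\tau) - \hat q_0^p(\tau)$, where $\hat q_a^p(\tau)$ is the $\xi^p$-weighted $\tau$th sample quantile of the subsample with $A_i = a$. First I would establish, conditionally on the data and uniformly over $\tau \in \Upsilon$, the representation
\begin{align*}
\sqrt n\bigl(\hat q_a^p(\tau) - \hat q_a(\tau)\bigr) = \frac{1}{f_a}\,\frac{1}{\sqrt n}\sum_{i:A_i=a}(\xi_i^p - 1)\,\psi_\tau\bigl(Y_i - q_a(\tau)\bigr) + \op(1).
\end{align*}
This is obtained exactly as in Theorem \ref{thm:weight}: convexity of the weighted objective together with Knight's identity localizes the minimizer, Assumption \ref{ass:reg}(i) supplies the Hessian $f_a$, and the stochastic equicontinuity of the weighted empirical process over the bounded-variation class $\{\,y \mapsto \psi_\tau(y-q): \tau \in \Upsilon, q \in \N_0\,\}$ (controlled using the sub-exponential tail of $\xi^p$ from Assumption \ref{ass:pair_boot}) removes the nuisance remainder. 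Subtracting the two representations gives a linear representation for $\sqrt n(\hat q^p(\tau) - \hat q(\tau))$ driven by a single weighted sum indexed by pairs.

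Second, I would organize the linear term by pairs. Since $\xi_{\pi(2j-1)}^p = \xi_{\pi(2j)}^p =: \xi_j^p$ and each pair contains exactly one treated and one control unit, the difference of the two representations becomes $\frac{1}{\sqrt n}\sum_{j=1}^n (\xi_j^p - 1)\,Z_j(\tau)$, where
\begin{align*}
Z_j(\tau) = \frac{\psi_\tau(Y_{j,1} - q_1(\tau))}{f_1} - \frac{\psi_\tau(Y_{j,0} - q_0(\tau))}{f_0}
\end{align*}
and $Y_{j,1}, Y_{j,0}$ denote the treated and control outcomes in pair $j$. Conditionally on the data the summands are independent across $j$ with conditional mean zero and $\mathrm{Var}_\xi(\xi_j^p)=1$, so a conditional Lindeberg argument (again using the sub-exponential tail to verify negligibility) delivers conditional asymptotic normality of the finite-dimensional projections, and the same bounded-variation/Donsker bound gives conditional asymptotic tightness in $\ell^\infty(\Upsilon)$. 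It then remains only to identify the limiting covariance kernel as the in-probability limit of $\frac1n\sum_{j=1}^n Z_j(\tau)Z_j(\tau')$.

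Third, and this is the step that separates the result from the naive bootstrap of Theorem \ref{thm:weight}, I would evaluate this empirical covariance with a matched-pairs law of large numbers. Expanding $Z_j(\tau)Z_j(\tau')$ produces two ``own'' terms and the two ``cross'' terms involving $\psi_\tau(Y_{j,1})\psi_{\tau'}(Y_{j,0})$ and $\psi_{\tau'}(Y_{j,1})\psi_\tau(Y_{j,0})$. The own terms converge by a standard i.i.d.\ argument to $(\min(\tau,\tau')-\tau\tau')/(f_1 f_1')$ and $(\min(\tau,\tau')-\tau\tau')/(f_0 f_0')$, using $F_a(q_a(\tau))=\tau$. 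For a cross term I would first condition on the covariates within the pair: because the treated and control outcomes are independent given $\{X_i\}$ (Assumptions \ref{ass:assignment1}(i)--(ii)), the conditional expectation factorizes as $m_{1,\tau}(X_{t,j})\,m_{0,\tau'}(X_{c,j})$, where $X_{t,j}, X_{c,j}$ are the two covariate values in pair $j$. Replacing $\psi$ by these conditional means costs only mean-zero, asymptotically negligible fluctuations, and Assumption \ref{ass:assignment1}(iv) together with the Lipschitz bound of Assumption \ref{ass:reg}(iii) lets me replace $X_{c,j}$ by $X_{t,j}$ up to an $\op(1)$ error, so that $\frac1n\sum_j m_{1,\tau}(X_{t,j})m_{0,\tau'}(X_{c,j}) \convP \mathbb{E}[m_{1,\tau}(X)m_{0,\tau'}(X)]$. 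Dividing by the densities and assembling the four pieces yields exactly $\Sigma^p(\tau,\tau')$.

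The main obstacle I anticipate is the cross-term analysis of the third step rather than the linearization, which is essentially inherited from Theorem \ref{thm:weight}. The difficulty is that the two units in a pair carry distinct covariate values, and the survival of the cross term hinges on showing that, after taking within-pair conditional expectations, the matched but non-identical covariates can be collapsed to a common value; this requires combining the closeness condition in Assumption \ref{ass:assignment1}(iv) with the uniform (in $\tau$ and $v$) Lipschitz continuity in Assumption \ref{ass:reg}(iii), and controlling these approximations uniformly over $(\tau,\tau') \in \Upsilon^2$. Verifying that the neglected cross and deviation terms vanish in probability, uniformly in the quantile indices, is where the matched-pairs structure does its essential work.
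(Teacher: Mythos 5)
Your proposal is correct and follows essentially the same route as the paper's proof: a Knight's-identity/convexity linearization of the weighted quantile objective, differencing against the original estimator's Bahadur representation to isolate the $(\xi^p-1)$-weighted score, a conditional CLT over pairs, and identification of the covariance kernel in which the cross terms are obtained from within-pair conditional independence combined with the covariate-closeness condition of Assumption \ref{ass:assignment1}(iv) and the Lipschitz condition of Assumption \ref{ass:reg}(iii). The only cosmetic difference is that you treat the two subsample quantiles as separate scalar problems and scale by $1/f_a$ inside the pair-level score, whereas the paper runs the joint two-dimensional quantile regression and applies $Q^{-1}(\tau)$ at the end; these are equivalent.
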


%Three remarks are in order. First, Theorem \ref{thm:pair} implies that bootstrapping pairs of observations alone is unable to mimic the dependence structure in the original sample. The potential outcomes $Y_{i}(1)$ and $Y_{i}(0)$ can be written as functions of $(U_i(1),X_i)$ and $(U_i(0),X_i)$, respectively, where $(U_i(1),U_i(0))$ are the unobserved heterogeneities. In MPDs, the units in the same pair share a similar covariate $X$, but their unobserved heterogeneities are independent conditional on $X$. However, bootstrapping pairs artificially introduces dependence between the unobserved heterogeneities of two observations within the same pair by forcing their bootstrap multipliers to be the same.

Three remarks are in order. First, Theorem \ref{thm:pair} implies that bootstrapping pairs of observations alone is unable to mimic the dependence structure in the original sample. In MPDs, covariates across pairs are dependent. However, bootstrapping pairs misses such dependence by assigning independent multipliers to pairs.\footnote{We provide an illustrative example in Section A of the supplement to show the dependence of covariates across pairs. We thank the editor for an inspiring and helpful discussion on the failure of bootstrapping pairs.}

%only the treatment statuses are dependent across units while the covariates $\{X_i\}_{i \in [2n]}$ are not. However, by bootstrapping the pairs, we assign the same multiplier to units in the same pair. Because the pairs are determined by matching the covariates, bootstrapping pairs will artificially introduce dependence in covariate $X$ in the bootstrap sample.
%%
%
%the units in the same pair share a similar covariate $X$, but their unobserved heterogeneities are independent conditional on $X$. However, bootstrapping pairs artificially introduces dependence between the unobserved heterogeneities of two observations within the same pair by forcing their bootstrap multipliers to be the same.

Second, the variances of the original estimator $\hat{q}(\tau)$ under MPD and $\hat{q}^p(\tau)$ conditional on data have the following relationship:
\begin{align}
\label{eq:pair_var}
\Sigma^p(\tau,\tau) - \Sigma(\tau,\tau) = \frac{1}{2}\mathbb{E}\left( \frac{m_{1,\tau}(X_i)}{f_1(q_1(\tau))} - \frac{m_{0,\tau}(X_i)}{f_0(q_0(\tau))}\right)^2 \geq 0.
\end{align}

Third, the gradient bootstrap procedure proposed below is based on a similar idea and uses the same weight for the observations within the pair to construct the score $S_{n,1}^*$ defined in \eqref{eq:S1}. But in order to construct a final score that exactly mimics the dependence in the data, an extra score component, $S_{n,2}^*$ defined in \eqref{eq:S2} below, is needed. This component is constructed based on adjacent pairs of observations.

\subsection{Gradient Bootstrap Inference}
\label{sec:boot}
We develop an approximation for the asymptotic distribution of the QTE estimator via the gradient bootstrap. Let $u = \sqrt{n}(b - \beta(\tau))$ be a localizing estimation error parameter. From the derivations in Theorem \ref{thm:est}, we see that
\begin{align}
\sqrt{n}(\hat{\beta}(\tau) - \beta(\tau)) = Q^{-1}(\tau)
\begin{pmatrix}
1 & 1 \\
1 & 0
\end{pmatrix}
S_n(\tau) +o_p(1),
\label{eq:betatau}
\end{align}
where
\begin{align*}
S_n(\tau) = \begin{pmatrix}
\sum_{i=1}^{2n}\frac{A_i}{\sqrt{n}}\left(\tau- 1\{Y_i(1) \leq q_1(\tau)\}\right) \\
\sum_{i=1}^{2n}\frac{(1-A_i)}{\sqrt{n}}\left(\tau- 1\{Y_i(0)\leq q_0(\tau)\}\right)
\end{pmatrix},
\end{align*}
and
\begin{align*}
Q(\tau) = \begin{pmatrix}
f_1(q_1(\tau)) + f_0(q_0(\tau)) & f_1(q_1(\tau)) \\
f_1(q_1(\tau)) & f_1(q_1(\tau))
\end{pmatrix}.
\end{align*}

The gradient bootstrap proposes to perturb the objective function by some random error $S_n^*(\tau)$, which will be specified later. This error in turn perturbs the score function $S_n(\tau)$. The corresponding bootstrap estimator $\hat{\beta}^*(\tau)$ solves the following optimization problem
\begin{align}
\label{eq:gb}
\hat{\beta}^*(\tau) = \argmin_b \sum_{i=1}^{2n}\rho_\tau(Y_i - \dot{A}^\top_ib)- \sqrt{n}b^\top\begin{pmatrix}
1 & 1 \\
1 & 0
\end{pmatrix}S_n^*(\tau).
\end{align}
We can then show that
\begin{align}
\sqrt{n}(\hat{\beta}^*(\tau) - \beta(\tau)) = Q^{-1}(\tau)
\begin{pmatrix}
1 & 1 \\
1 & 0
\end{pmatrix}
[S_n(\tau)+S_n^*(\tau)] + o_p(1).
\label{eq:betatau*}
\end{align}
Taking the difference between \eqref{eq:betatau} and \eqref{eq:betatau*} gives
\begin{align*}
\sqrt{n}(\hat{\beta}^*(\tau) - \hat{\beta}(\tau)) = Q^{-1}(\tau)\begin{pmatrix}
1 & 1 \\
1 & 0
\end{pmatrix} S_n^*(\tau) + o_p(1).
\end{align*}
The second element of $\hat{\beta}^*(\tau)$ in \eqref{eq:gb} is the bootstrap version of the QTE estimator, which is denoted $\hat{q}^*(\tau)$. By solving \eqref{eq:gb} we avoid estimating the Hessian $Q(\tau)$, which involves infinite-dimensional nuisance parameters. Then, for the gradient bootstrap to consistently approximate the limit distribution of the original estimator $\hat{\beta}(\tau)$, we need only construct $S_n^*(\tau)$ in such a way that its weak limit given the data coincides with that of the original score $S_n(\tau)$.

Accordingly, we now show how to specify $S_n^*(\tau)$. Let $\{\eta_j\}_{j=1}^{n}$ and $\{\hat{\eta}_k\}_{k=1}^{\lfloor n/2 \rfloor}$ be two mutually independent i.i.d. sequences of standard normal random variables. Use the indexes $(j,1),(j,0)$ to denote  the indexes in $(\pi(2j-1),\pi(2j))$ with $A=1$ and $A=0$, respectively. For example, if $A_{\pi(2j)} = 1$ and $A_{\pi(2j-1)}=0$, then $(j,1) = \pi(2j)$ and $(j,0) = \pi(2j-1)$. Similarly, use indexes $(k,1),\cdots, (k,4)$ to denote the first index in $(\pi(4k-3),\cdots,\pi(4k))$ with $A=1$, the first index with $A=0$, the second index with $A=1$, and the second index with $A=0$, respectively. Now let
\begin{align*}
S_n^*(\tau)  = \frac{S_{n,1}^*(\tau) + S_{n,2}^*(\tau)}{\sqrt{2}},
\end{align*}
where
\begin{align}
\label{eq:S1}
S_{n,1}^*(\tau) = \frac{1}{\sqrt{n}}\begin{pmatrix}
\sum_{j=1}^{n}\eta_j\left(\tau- 1\{Y_{(j,1)} \leq \hat{q}_1(\tau)\}\right) \\
\sum_{j=1}^{n}\eta_j\left(\tau- 1\{Y_{(j,0)} \leq \hat{q}_0(\tau)\}\right)
\end{pmatrix},
\end{align}
and
\begin{align}
\label{eq:S2}
S_{n,2}^*(\tau) = \frac{1}{\sqrt{n}}\begin{pmatrix}
\sum_{k=1}^{\lfloor n/2 \rfloor}\hat{\eta}_k \left[\left(\tau - 1\{Y_{(k,1)} \leq \hat{q}_1(\tau)\} \right) - \left(\tau - 1\{Y_{(k,3)} \leq \hat{q}_1(\tau)\} \right)\right] \\
\sum_{k=1}^{\lfloor n/2 \rfloor}\hat{\eta}_k\left[\left(\tau - 1\{Y_{(k,2)} \leq \hat{q}_0(\tau)\} \right) - \left(\tau - 1\{Y_{(k,4)} \leq \hat{q}_0(\tau)\} \right)\right] \\
\end{pmatrix}.
\end{align}

The perturbation $S_{n}^*$ consists of two parts: $S_{n,1}^*$ and $S_{n,2}^*$. The first part $S_{n,1}^*$ is constructed by pairs of observations, based on the idea of bootstrapping the pairs. But bootstrapping the pairs alone cannot fully capture the dependence structure in the MPD, as shown in Section \ref{sec:pair}. So $S_{n,1}^*$ is adjusted by adding a term capturing the remaining dependence. This second term, $S_{n,2}^*$, is motivated by the idea of using adjacent pairs to adjust the standard error of the ATE estimator under MPDs in \cite{BRS19}.

In Section \ref{sec:compute} we show how to compute the bootstrap estimator $\hat{\beta}^*(\tau)$ directly from the sub-gradient condition of \eqref{eq:gb}. This method avoids the optimization inherent in \eqref{eq:gb} and computation is fast. The following assumption imposes the condition that baseline covariates in adjacent pairs are also `close'.
\begin{ass}
	\label{ass:pair}
	Suppose that
	$\frac{1}{n}\sum_{k =1}^{\lfloor n/2 \rfloor}\left \Vert X_{(k,l)} - X_{(k,l')}\right \Vert_2^r \convP 0$
	for $r = 1,2$ and $l,l' \in [4]$.
\end{ass}

Assumption \ref{ass:pair} and Assumption \ref{ass:assignment1}(iv) are jointly equivalent to \citet[Assumption 2.4]{BRS19}. We refer readers to \cite{BRS19} for further discussion of this assumption. In particular, \citet[Theorems 4.1 and 4.2]{BRS19} show that it is possible to implement the matching algorithm to re-order pairs so that both Assumption \ref{ass:pair} and Assumption \ref{ass:assignment1}(iv) hold automatically. We provide more detail in Section \ref{subsec:gradient computation}.

% We repeat their results below for completeness.
%\vspace{1mm}
%\textbf{Case (1).} Suppose $X$ is a scalar and $\mathbb{E}X^2 < \infty$. Let $\pi$ be any permutation of $2n$ elements such that
%$X_{\pi(1)} \leq \cdots \leq X_{\pi(2n)}.$
%Then, both Assumption \ref{ass:pair} and Assumption \ref{ass:assignment1}(iv) hold.
%
%\vspace{1mm}
%\textbf{Case (2). }Suppose $\Supp(X) \subset [0,1]^{d_x}$. Let $\breve{\pi}$ be any permutation of $2n$ elements minimizing
%$\frac{1}{n}\sum_{j=1}^n ||X_{\breve{\pi}(2j-1)}-X_{\breve{\pi}(2j)} ||_2$, let
%$\overline{X}_j = \frac1{2}\left(X_{\breve{\pi}(2j-1)}+X_{\breve{\pi}(2j)}\right)$, and let $\overline{\pi}$ be any permutation of $n$ elements minimizing
%$\frac{1}{n}\sum_{j=1}^n ||\overline{X}_{\overline{\pi}(j)}-\overline{X}_{\overline{\pi}(j-1)} ||_2$.
%Then, the permutation $\pi$ with $\pi(2j) = \breve{\pi}(2 \overline{\pi}(j))$ and $\pi(2j-1) = \breve{\pi}(2 \overline{\pi}(j)-1)$ satisfies Assumption \ref{ass:pair} and Assumption \ref{ass:assignment1}(iv).

%\vspace{2mm}
Define $\hat{q}^*(\tau) = \hat{\beta}_1^*(\tau)$ and recall that $\hat{q}(\tau) =  \hat{\beta}_1(\tau)$. We have the following result.
\begin{thm}
	\label{thm:boot}
	Suppose Assumptions \ref{ass:assignment1}, \ref{ass:reg}, and \ref{ass:pair} hold. Then, conditionally on the data and uniformly over $\tau \in \Upsilon$,
	$ \sqrt{n}(\hat{q}^*(\tau) - \hat{q}(\tau)) \convD \mathcal{B}(\tau)$,
	where $\mathcal{B}(\tau)$ is the same Gaussian process defined in Theorem \ref{thm:est}.
\end{thm}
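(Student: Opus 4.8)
The plan is to combine a Bahadur-type linearization of the gradient bootstrap estimator with a conditional functional central limit theorem for the perturbation score $S_n^*(\tau)$. Concretely, I would first establish the representation \eqref{eq:betatau*} rigorously, so that
\[
\sqrt{n}(\hat{\beta}^*(\tau) - \hat{\beta}(\tau)) = Q^{-1}(\tau)\begin{pmatrix}1&1\\1&0\end{pmatrix}S_n^*(\tau) + \op(1)
\]
uniformly over $\tau \in \Upsilon$, and then show that, conditionally on the data, $S_n^*(\cdot)$ converges weakly in $\ell^\infty(\Upsilon)$ to the same bivariate Gaussian process (with coordinates $\mathcal{S}_1,\mathcal{S}_2$) that is the weak limit of the original score $S_n(\cdot)$ in the proof of Theorem \ref{thm:est}. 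Granting these two facts, the second coordinate of the displayed identity is exactly $\sqrt{n}(\hat{q}^*(\tau)-\hat{q}(\tau))$, and since the linear map $Q^{-1}(\tau)\left(\begin{smallmatrix}1&1\\1&0\end{smallmatrix}\right)$ is continuous and deterministic given consistent estimates of $Q(\tau)$, the continuous mapping theorem for the bootstrap in the sense of \citet[Chapter 2.9]{VW96} delivers $\sqrt{n}(\hat{q}^*(\tau)-\hat{q}(\tau)) \convD \mathcal{B}(\tau)$ conditionally on the data and uniformly over $\Upsilon$.

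For the linearization, I would exploit that the gradient bootstrap objective in \eqref{eq:gb} differs from the original quantile objective only by the term $-\sqrt{n}b^\top\left(\begin{smallmatrix}1&1\\1&0\end{smallmatrix}\right)S_n^*(\tau)$, which is linear in $b$. After the reparametrization $u = \sqrt{n}(b-\beta(\tau))$, this perturbation adds a linear shift to the localized criterion while leaving the quadratic term with Hessian $Q(\tau)$ unchanged. Because $S_n^*(\tau) = \Op(1)$ conditionally on the data (it is exactly Gaussian with a conditional covariance that is $\Op(1)$), the minimizer stays in a compact set, and the convexity argument underlying \eqref{eq:betatau} applies with the score $S_n(\tau)$ replaced by $S_n(\tau) + S_n^*(\tau)$; subtracting \eqref{eq:betatau} yields the displayed identity, with uniformity over $\tau$ inherited from the stochastic equicontinuity established for Theorem \ref{thm:est}.

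The core of the proof is the conditional weak convergence of $S_n^*$. Since $\{\eta_j\}$ and $\{\hat{\eta}_k\}$ are independent standard normals, conditionally on the data $S_n^*(\tau)$ is an exactly Gaussian process, so finite-dimensional convergence reduces to showing that its conditional covariance converges in probability to the target. Writing $g_{a,\tau}(y)=\tau - 1\{y\le q_a(\tau)\}$, the within-pair part $S_{n,1}^*$ contributes $\min(\tau,\tau')-\tau\tau'=\mathbb{E}g_{a,\tau}(Y(a))g_{a,\tau'}(Y(a))$ to each diagonal block by the law of large numbers, and, because treated and control units within a pair share the weight and have nearly identical covariates, contributes $\mathbb{E}m_{1,\tau}(X)m_{0,\tau'}(X)$ to the off-diagonal block. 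The adjacent-pair part $S_{n,2}^*$ contributes a further $\min(\tau,\tau')-\tau\tau' - \mathbb{E}m_{a,\tau}(X)m_{a,\tau'}(X)$ to each diagonal block, where the cross terms across adjacent pairs generate the $-\mathbb{E}m_{a,\tau}m_{a,\tau'}$ correction via Assumption \ref{ass:pair}, while its off-diagonal contribution cancels to zero. Dividing by the normalization $\sqrt{2}$ in $S_n^* = (S_{n,1}^* + S_{n,2}^*)/\sqrt{2}$, the diagonal blocks converge to $\min(\tau,\tau')-\tau\tau' - \tfrac12\mathbb{E}m_{a,\tau}m_{a,\tau'}$ and the off-diagonal block to $\tfrac12\mathbb{E}m_{1,\tau}m_{0,\tau'}$; applying the second row of $Q^{-1}(\tau)\left(\begin{smallmatrix}1&1\\1&0\end{smallmatrix}\right)$ then reproduces $\Sigma(\tau,\tau')$ exactly. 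Conditional tightness over $\Upsilon$ follows from multiplier and maximal inequalities for the VC class $\{1\{\cdot \le q\}\}$ together with sub-Gaussianity of the normal weights, after showing that the estimated quantiles $\hat{q}_a(\tau)$ may be replaced by $q_a(\tau)$ at cost $\op(1)$ using Assumption \ref{ass:reg}(i) and $\hat{q}_a(\tau)-q_a(\tau) = \Op(n^{-1/2})$.

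The step I expect to be the main obstacle is the covariance calculation for $S_{n,2}^*$, specifically verifying that the cross terms across adjacent pairs produce precisely the $-\mathbb{E}m_{a,\tau}(X)m_{a,\tau'}(X)$ corrections. This requires combining the law of large numbers with the covariate-closeness Assumptions \ref{ass:assignment1}(iv) and \ref{ass:pair} to replace the covariates of distinct-but-adjacent units by a common value, while simultaneously controlling the replacement of $\hat{q}_a(\tau)$ by $q_a(\tau)$ uniformly over $\tau$. Establishing the conditional asymptotic equicontinuity of $S_n^*$ over $\Upsilon$, with indicators evaluated at the data-dependent estimates, is the other delicate ingredient, handled through empirical-process maximal inequalities conditional on the data.
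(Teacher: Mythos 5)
Your proposal follows essentially the same route as the paper's proof: a convexity-based Bahadur linearization of \eqref{eq:gb} (the paper invokes \citet[Theorem 2]{K09} after verifying stochastic equicontinuity and uniform boundedness of $S_n^*$), then exploiting the exact conditional Gaussianity of $S_n^*$ given the normal multipliers to reduce conditional weak convergence to convergence of the conditional covariance kernels of $S_{n,1}^*$ and $S_{n,2}^*$, plus a conditional tightness argument in which $\hat{q}_a(\tau)$ is replaced by $q_a(\tau)$ at cost $\Op(n^{-1/2})$ uniformly over $\Upsilon$. Your covariance bookkeeping is correct and reproduces $\Sigma(\tau,\tau')$; the only (cosmetic) difference is organizational: you attribute the $-\mathbb{E}m_{a,\tau}(X)m_{a,\tau'}(X)$ correction to the adjacent-pair cross terms controlled by Assumption \ref{ass:pair}, whereas the paper's centered decomposition places that correction in the within-unit term $I(\tau,\tau')$ and uses Assumption \ref{ass:pair} to kill the remainder term $\widehat{IV}(\tau,\tau')$ --- an algebraically equivalent version of the same computation.
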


Two remarks on Theorem \ref{thm:boot} are in order. First, the bootstrap estimator $\hat{q}^*(\tau)$ has the following objectives: (i) to avoid estimating densities; and (ii) to mimic the distribution of the original estimator $\hat{\beta}(\tau)$ under MPDs. Objective (i) relates to the Hessian ($Q$) and (ii) to the score ($S_n$) of the quantile regression. The gradient bootstrap provides a flexible approach to achieve both goals.

%\textbf{Second, \cite{BRS19} showed that adjacent pairs can be used to construct a valid standard error for the ATE estimator under MPDs. Our construction of $S_{n,2}^*(\tau)$ follows their idea. }

%Theorem \ref{thm:boot} shows that the limit distribution of the resulting bootstrapped perturbation $S_n^*(\tau)$ given that the data can consistently approximate that of the original score $S_n(\tau)$ uniformly over $\tau \in \Upsilon$. For inference concerning the ATE, it is not necessary to use the gradient bootstrap as the Hessian does not contain any infinite-dimensional nuisance parameters. In fact, the way we compute the perturbation $S^*_n(\tau)$ leads directly to a variance estimator $\hat{\nu}^2_n$ for the ATE estimator $\hat{\Delta} = \frac{1}{n}\sum_{j =1}^{n}( Y_{(j,1)} - Y_{(j,0)})$, where
%\begin{align*}
%\hat{\nu}^2_n = \frac{1}{2n}\sum_{j =1}^{n}( Y_{(j,1)} - Y_{(j,0)}  - \hat{\Delta})^2 + \frac{1}{2n}\sum_{k =1}^{\lfloor n/2 \rfloor}\left[(Y_{(k,1)} - Y_{(k,3)})-(Y_{(k,2)} - Y_{(k,4)})\right]^2.
%\end{align*}
%By some manipulation, one can show that $\hat{\nu}^2_n$ is numerically the same as the estimate used in the adjusted $t$-test of \citet[Section 3.3]{BRS19}.

Second, to implement the gradient bootstrap, researchers need to know identities of pairs. This information may not be available when the experiment was run by others and the randomization procedure was not fully detailed. In such cases, we propose IPW multiplier bootstrap inference for the QTE, whose validity is established in the next section.

\subsection{IPW Multiplier Bootstrap Inference}
\label{sec:ipw}

In empirical research, researchers may not know the identities of pairs when they are using an experiment that was run by other investigators in the past and the randomization procedure may not have been fully described. For example, publicly available datasets for papers such as \cite{panagopoulos2008} and \cite{butler2010} contain no information on pair identities. \cite{B09} also pointed out that many papers in existing experiments do not describe the randomization procedure in detail.  In this section we establish the validity of IPW multiplier bootstrap inference for the QTE, showing that the procedure can be implemented without the knowledge of pair identities.

We use the sieve method to nonparametrically estimate the propensity score. Let $b(X)$ be the $K$-dimensional sieve basis on $X$ and $\hat{A}_i$ be the estimated propensity score for the $i$th individual. Then,
\begin{align}
\label{eq:ps}
\hat{A}_i = b^\top(X_i)\hat{\theta},
\end{align}
where
$\hat{\theta} = \argmin_\theta \sum_{i=1}^{2n}\xi_i(A_i - b^\top(X_i)\theta)^2$ and  $\xi_i$ is the bootstrap weight defined in Assumption \ref{ass:weight}.

Because the true propensity score is $1/2$, by setting the first component of $b(X)$ to unity, we have $1/2 = b^\top(X)\theta_0$ where $\theta_0 = (0.5,0,\cdots,0)^\top$. The linear probability model for the propensity score is correctly specified. It is possible to use sieve logistic regression to compute the propensity score, as done by \cite{HIR03}, \cite{F07}, and \cite{DH14}. The main benefit of using logistic regression is to guarantee that the estimated propensity score lies between zero and one. However, in MPDs, the estimated propensity score is always very close to 0.5. Therefore, for simplicity, we use a linear sieve regression here.

The IPW multiplier bootstrap estimator can be computed as
\begin{align*}
\hat{q}_{ipw}^w(\tau) = \hat{q}_{ipw,1}^w(\tau) - \hat{q}_{ipw,0}^w(\tau),
\end{align*}
where
\begin{align}
\label{eq:qipw10}
\hat{q}_{ipw,1}^w(\tau) = \argmin_q \sum_{i =1}^{2n} \frac{\xi_iA_i}{\hat{A}_i}\rho_\tau(Y_i - q) \quad
\text{and} \quad \hat{q}_{ipw,0}^w(\tau) = \argmin_q \sum_{i =1}^{2n} \frac{\xi_i(1-A_i)}{1-\hat{A}_i}\rho_\tau(Y_i - q).
\end{align}

\begin{ass}
	\begin{enumerate}[label=(\roman*)]
		\item The support of $X$ is compact. The first component of $b(X)$ is 1.
		\item $\max_{k \in [K]}\mathbb{E}b_k^2(X_i) \leq \overline{C}<\infty$ for some constant $\overline{C}>0$, where $b_k(X_i)$ is the $k$th coordinate of $b(X_i)$. $\sup_{x \in \Supp(X)}||b(x)||_{2} = \zeta(K).$
		\item $K^2 \zeta(K)^2 \log(n) = o(n)$.
		\item With probability approaching one, there exist constants $\underline{C}$ and $\overline{C}$ such that
		\begin{align*}
		0 < \underline{C}\leq \lambda_{\min}\left(\frac{1}{n}\sum_{i=1}^{2n}\xi_ib(X_i)b^\top(X_i)\right) \leq	\lambda_{\max}\left(\frac{1}{n}\sum_{i=1}^{2n}\xi_ib(X_i)b^\top(X_i)\right) \leq \overline{C} < \infty,
		\end{align*}
		where $\lambda_{\min}(\mathcal{M})$ and $\lambda_{\max}(\mathcal{M})$ denote the minimum and maximum eigenvalues of matrix $\mathcal{M}$.
		\item There exist $\gamma_1(\tau) \in \Re^K$ and $\gamma_0(\tau) \in \Re^K$ such that
		\begin{align*}
		B_{a,\tau}(x) = m_{a,\tau}(x) - b^\top(x)\gamma_a(\tau),\;\; a = 0,1,
		\end{align*}
		and $\sup_{a=0,1,\tau\in \Upsilon, x \in \Supp(X)}|B_{a,\tau}(x)| = o(1/\sqrt{n})$.
	\end{enumerate}
	\label{ass:sieve}
\end{ass}

Two remarks are in order. First, requiring $X$ to have a compact support is common in nonparametric sieve estimation.  Second, the quantity $\zeta(K)$ depends on the choice of basis functions. For example, $\zeta(K) = O(K^{1/2})$ for splines and $\zeta(K) = O(K)$ for power series.\footnote{\protect\doublespacing See \cite{c07} for a full discussion of the sieve method.} Taking splines as an example, Assumption \ref{ass:sieve}(iii) requires $K = o(n^{1/3})$. Assumption \ref{ass:sieve}(iv) is standard because $K \ll n$. Assumption \ref{ass:sieve}(v) requires that the approximation error of $m_{a,\tau}(x)$ via a linear sieve function is sufficiently small. For instance, suppose $m_{a,\tau}(x)$ is s-times continuously differentiable in $x$ with all derivatives uniformly bounded by some constant $\overline{C}$, then $\sup_{a=0,1,\tau\in \Upsilon, x \in \Supp(X)}|B_{a,\tau}(x)| = O(K^{-s/d_x})$. Assumptions \ref{ass:sieve}(iii) and \ref{ass:sieve}(v) imply that $K = n^{h}$ for some $h \in (d_x/(2s), 1/3)$, which implicitly requires $s>3d_x/2$. The choice of $K$ reflects the usual bias-variance trade-off and is the only tuning parameter that researchers need to specify when implementing this bootstrap method.

\begin{thm}
	\label{thm:ipw_w}
	Suppose Assumptions \ref{ass:assignment1}--\ref{ass:weight} and \ref{ass:sieve} hold, then conditionally on the data and uniformly over $\gamma \in \Upsilon$,
	$\sqrt{n}(\hat{q}_{ipw}^w(\tau) - \hat{q}(\tau)) \convD \mathcal{B}(\tau)$,
	where $\mathcal{B}(\tau)$ is the same Gaussian process as defined in Theorem \ref{thm:est}.
\end{thm}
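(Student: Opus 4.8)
The plan is to reduce the theorem to a uniform-in-$\tau$ Bahadur representation for the two bootstrap IPW quantile estimators in \eqref{eq:qipw10}, conditional on the data, and to show that the associated influence function is that of an IPW estimator built on the true propensity $1/2$, \emph{augmented} by a correction generated by the nonparametric estimation of the propensity score. This correction is precisely what supplies the $m_{a,\tau}$ terms in $\Sigma$ in \eqref{eq:sigma}. Writing $\psi_{a,\tau,i}=\tau-1\{Y_i(a)\le q_a(\tau)\}$, I would first characterize $\hat q_{ipw,1}^w(\tau)$ through its subgradient condition $\frac{1}{2n}\sum_i \frac{\xi_i A_i}{\hat A_i}\psi_\tau(Y_i,q)\approx 0$ and, using the standard convexity and monotonicity argument for weighted check-function objectives, obtain
\[
\sqrt n\big(\hat q_{ipw,1}^w(\tau)-q_1(\tau)\big)=\frac{1}{f_1(q_1(\tau))}\,\frac{1}{2\sqrt n}\sum_{i=1}^{2n}\frac{\xi_i A_i}{\hat A_i}\,\psi_{1,\tau,i}+\op(1)
\]
uniformly over $\tau\in\Upsilon$; the normalization is correct because $\mathbb{E}[2A\,f_1(q_1(\tau)\mid X)]=f_1(q_1(\tau))$.

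The second and heaviest step is the sieve linearization of the propensity effect. I would expand $A_i/\hat A_i=2A_i-4A_i(\hat A_i-\tfrac12)+\Op((\hat A_i-\tfrac12)^2)$, insert the weighted least squares representation $\hat\theta-\theta_0=\hat\Sigma_b^{-1}\frac{1}{2n}\sum_l \xi_l b(X_l)(A_l-\tfrac12)$ with $\hat\Sigma_b=\frac{1}{2n}\sum_i\xi_i b(X_i)b^\top(X_i)$, and identify the probability limit of $\frac{1}{2n}\sum_i\xi_i A_i\psi_{1,\tau,i} b^\top(X_i)\hat\Sigma_b^{-1}$ as $\tfrac12\gamma_1^\top(\tau)$. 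By Assumption \ref{ass:sieve}(v) the linear sieve $b^\top(X)\gamma_a(\tau)$ approximates $m_{a,\tau}(X)$ with error $\op(1/\sqrt n)$ uniformly, so this term converges to $\tfrac12 m_{1,\tau}(X_i)$. Collecting the two pieces yields
\[
\sqrt n\big(\hat q_{ipw,1}^w(\tau)-q_1(\tau)\big)=\frac{1}{f_1(q_1(\tau))}\,\frac{1}{\sqrt n}\sum_{i=1}^{2n}\xi_i\big[A_i\psi_{1,\tau,i}-(A_i-\tfrac12)m_{1,\tau}(X_i)\big]+\op(1),
\]
with the symmetric expansion for the control arm. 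The decisive feature is that the \emph{same} weights $\xi_i$ drive the propensity regression and the quantile objective, so $\xi_i$ multiplies the \emph{entire} bracket; this is exactly what lets the reweighted estimator carry the efficiency correction that the naive multiplier bootstrap of Theorem \ref{thm:weight} misses. Controlling the quadratic remainder and the approximation bias uniformly in $\tau$ under the rate condition $K^2\zeta(K)^2\log n=o(n)$ and the eigenvalue bounds of Assumption \ref{ass:sieve} is where the real work lies.

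Third, I would pin down the centering and the conditional covariance kernel. Subtracting the $\xi\equiv 1$ version converts the statement into one about $\sqrt n(\hat q_{ipw}^w(\tau)-\hat q_{ipw}(\tau))$, and centering at $\hat q(\tau)$ is equivalent because $\sqrt n(\hat q_{ipw}(\tau)-\hat q(\tau))\convP 0$. The latter holds since the two estimators differ only through $\frac{1}{\sqrt n}\sum_i (A_i-\tfrac12)m_{a,\tau}(X_i)$, which telescopes within each pair into $\tfrac12$ times a signed difference $m_{a,\tau}(X_{\pi(2j-1)})-m_{a,\tau}(X_{\pi(2j)})$; its conditional variance is bounded by $\frac{1}{n}\sum_j\|X_{\pi(2j-1)}-X_{\pi(2j)}\|_2^2\convP 0$ using Assumption \ref{ass:assignment1}(iv) and the Lipschitz continuity of $m_{a,\tau}$ implied by Assumption \ref{ass:reg}(iii). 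Setting $\Phi_{\tau,i}=\frac{A_i\psi_{1,\tau,i}-(A_i-\frac12)m_{1,\tau}(X_i)}{f_1(q_1(\tau))}-\frac{(1-A_i)\psi_{0,\tau,i}+(A_i-\frac12)m_{0,\tau}(X_i)}{f_0(q_0(\tau))}$, I would then use $\mathrm{Var}(\xi_i)=1$ and a law of large numbers for the weakly (pairwise) dependent matched-pairs array to get $\frac1n\sum_i\Phi_{\tau,i}\Phi_{\tau',i}\convP 2\mathbb{E}[\Phi_{\tau}\Phi_{\tau'}]$. A direct computation exploiting $A_i^2=A_i$, $(A_i-\tfrac12)^2=\tfrac14$, $A_i(1-A_i)=0$ and $\mathbb{E}[\psi_{a,\tau}\mid X]=m_{a,\tau}$ shows this equals $\Sigma(\tau,\tau')$ of Theorem \ref{thm:est}.

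Finally, I would upgrade the finite-dimensional conditional convergence to process convergence. The classes $\{\psi_{a,\tau}:\tau\in\Upsilon\}$ and $\{m_{a,\tau}:\tau\in\Upsilon\}$ are Donsker, so $\tau\mapsto\Phi_{\tau,\cdot}$ is manageable; together with the sub-exponential tails of the weights (Assumption \ref{ass:weight}), the conditional multiplier central limit theorem of \citet[Chapter 2.9]{VW96} delivers asymptotic tightness and the conditional weak limit $\mathcal{B}(\tau)$, which is the $\convD$ statement of Theorem \ref{thm:ipw_w}. I expect the main obstacle to be the uniform sieve step of the second paragraph: establishing, simultaneously over $\tau\in\Upsilon$ and conditionally on the data, that the estimated-propensity correction converges to the $m_{a,\tau}$-projection with a remainder that is $\op(1)$ after $\sqrt n$-scaling, which requires careful empirical-process control of the bootstrap-weighted sieve design (including that $\hat A_i$ stays bounded away from $0$ and $1$ with probability approaching one) together with the bias bound in Assumption \ref{ass:sieve}(v).
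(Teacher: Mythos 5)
Your proposal is correct and follows essentially the same route as the paper's own proof: a uniform Bahadur representation for the weighted IPW quantile estimators via the convexity argument, the expansion of $A_i/\hat A_i$ to second order combined with the weighted-least-squares identity for $\hat\theta-\theta_0$ (the paper's Lemmas \ref{lem:sieve} and \ref{lem:sieve2}), which—precisely because the same $\xi_i$ drive both the propensity regression and the quantile objective—converts the propensity correction into $\xi_i(A_i-\tfrac12)m_{1,\tau}(X_i)$ and yields the influence function $\xi_i\{A_i\eta_{1,i}(\tau)+m_{1,\tau}(X_i)/2\}/f_1(q_1(\tau))$ minus its control-arm analogue, after which subtracting the representation of $\hat q(\tau)$ and computing the conditional covariance of the $(\xi_i-1)$-weighted sums gives exactly $\Sigma(\tau,\tau')$. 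Your intermediate centering at the unweighted IPW estimator together with the claim $\sqrt{n}(\hat q_{ipw}(\tau)-\hat q(\tau))=o_p(1)$ is just the paper's Lemma \ref{lem:R12} applied one step earlier, so the two arguments coincide in substance.
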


To understand the need to nonparametrically estimate the propensity score in the bootstrap sample, note that there are two stages to statistical inference in randomized experiments: the design stage and the analysis stage. In the design stage, researchers can use either simple random sampling (SRS) or matched-pairs design (MPD).\footnote{Simple random sampling means that treatment status is assigned independently with probability $1/2$. Note that SRS and MPD share the same true propensity score $1/2$. But MPD achieves the strong balance that exactly half of the units are treated whereas SRS does not.} In the analysis stage, researchers can choose either the true ($1/2$ in MPDs) or the nonparametrically estimated propensity score to construct the estimator.\footnote{Specifically, we have $\hat{q}_{ipw,1}(\tau) = \argmin_q \sum_{i \in [2n]} \frac{A_i}{\hat{\pi}(X_i)}\rho_\tau(Y_i - q)$ and $\hat{q}_{ipw,0}(\tau) = \argmin_q \sum_{i \in [2n]} \frac{1-A_i}{1-\hat{\pi}(X_i)}\rho_\tau(Y_i - q)$, where $\hat{\pi}(X_i)$ is an estimator of the propensity score. When we use the true score, i.e., $\hat{\pi}(X_i) = 1/2$, we have $\hat{q}_{ipw,a} = \hat{q}_a$ as defined in the paper for $a=0,1$. However, we can also let $\hat{\pi}(X_i)$ be the  nonparametric estimator of the propensity score.} The asymptotic variances of the QTE estimator with true and estimated propensity scores under SRS are $\Sigma^\dagger(\tau,\tau)$ defined in \eqref{eq:sigmadagger} and $\Sigma(\tau,\tau)$ defined in \eqref{eq:sigma}, respectively, which are derived by \cite{H98} and \cite{F07}; the asymptotic variance of the QTE estimator with the true propensity score under MPD is $\Sigma(\tau,\tau)$, which is shown in Theorem \ref{thm:est}. These results are summarized in the following table.
\begin{table}[H]
	\centering
	\caption{Asymptotic Variances}
	\vspace{1ex}
	\begin{tabular}{c|cc}
		& True Score &  Nonparametric Score \\
		\hline
		SRS & $\Sigma^\dagger(\tau,\tau)$ & $\Sigma(\tau,\tau)$ \\
		MPD		& $\Sigma(\tau,\tau)$ & unknown \\
	\end{tabular}%
	\label{tab:est}%
\end{table}%
Note that the asymptotic variances for the QTE estimator under MPD with the true score and that under SRS with the nonparametrically estimated score are the same. If we conduct multiplier bootstrap inference, conditionally on data, the bootstrap sample of observations is independent. Therefore, in order for the multiplier bootstrap estimator to mimic the asymptotic behavior of the original estimator under MPD with the true score, we need to nonparametrically estimate the propensity score in the bootstrap sample.

The benefit of the IPW multiplier bootstrap is that it does not require knowledge of the pair identities. The cost is that we have to nonparametrically estimate the propensity score, which requires one tuning parameter and is subject to the usual curse of dimensionality. Nonetheless, we still prefer this bootstrap method of inference to the analytic approach. Analytic estimation of the standard error of the QTE estimator without the knowledge of pair identities requires nonparametric estimation of $\{m_{a,\tau}(X),f_a(q_a(\tau))\}_{a=0,1}$, which involves four tuning parameters. The number of tuning parameters further increases with the number of quantile indexes involved in the null hypothesis. To construct uniform confidence bands for QTE over $\tau$, we require $4G$ tuning parameters for grid size $G$. By contrast, implementation of the IPW multiplier bootstrap requires estimation of the propensity score only once, and thus, the use of a single tuning parameter.

Inference concerning the \emph{ATE} in MPDs can also be accomplished via a similar IPW multiplier bootstrap procedure. We can show that such a bootstrap can consistently approximate the asymptotic distribution of the ATE estimator under MPDs. This result complements that established by \cite{BRS19} because it provides a way to make inferences about the ATE in MPDs when information on pair identities is unavailable. That pair identity information is required by \cite{BRS19} in computing standard errors for their adjusted $t$-test.

%The result of Theorem \ref{thm:ipw_w} is intuitive. As we previously mentioned, the standard QTE estimator under MPD achieves the semiparametric efficiency bound. Under multiplier bootstrap, although conditionally on data, the bootstrap sample is independent, and thus, does not preserve the dependence structure in the original sample with MPD, the IPW estimator with the nonparametrically estimated propensity score can still achieve the same efficiency bound, as has been pointed out by \cite{F07}.

\section{Computation and Guidance for Practitioners}
\label{sec:compute}
\subsection{Computation of the Gradient Bootstrap}
\label{subsec:gradient computation}
In practice, the order of pairs in the dataset is usually arbitrary and does not satisfy Assumption \ref{ass:pair}. To apply the gradient bootstrap, researchers first need to re-order the pairs. For the $j$th pair with units indexed by $(j,1)$ and $(j,0)$ in the treatment and control groups, let $\overline{X}_j = \frac1{2}\{X_{(j,1)} + X_{(j,0)}\}$. Then, let $\overline{\pi}$ be any permutation of $n$ elements that minimizes
\begin{align*}
\frac{1}{n}\sum_{j=1}^n ||\overline{X}_{\overline{\pi}(j)}-\overline{X}_{\overline{\pi}(j-1)}||_2.
\end{align*}
The pairs are re-ordered by indexes $\overline{\pi}(1),\cdots, \overline{\pi}(n)$. With an abuse of notation, we still index the pairs after re-ordering by $1,\cdots,n$. Note that the original QTE estimator $\hat{q}(\tau) = \hat{q}_1(\tau)-\hat{q}_0(\tau)$ is invariant to the re-ordering.

For the bootstrap sample, we directly compute $\hat{\beta}^*(\tau)$ from the sub-gradient condition of \eqref{eq:gb}. Specifically, we compute $\hat{\beta}_0^*(\tau)$ as $Y^0_{(h_0)}$ and $\hat{q}^*(\tau) \equiv \hat{\beta}_1^*(\tau)$ as $Y^1_{(h_1)} - Y^0_{(h_0)}$, where $Y^0_{(h_0)}$ and $Y^1_{(h_1)}$ are the $h_0$th and $h_1$th order statistics of outcomes in the treatment and control groups, respectively,\footnote{\protect\doublespacing We assume $Y^a_{(1)} \leq \cdots \leq Y^a_{(n)}$ for $a = 0,1$.} and $h_0$ and $h_1$ are two integers satisfying
\begin{align}
\label{eq:h}
n\tau + T_{n,a}^*(\tau) + 1 \geq h_a \geq n\tau + T_{n,a}^*(\tau),~a=0,1,
\end{align}
with
\begin{align*}
\begin{pmatrix}
T_{n,1}^*(\tau) \\
T_{n,0}^*(\tau)
\end{pmatrix} = \sqrt{n} S_n^*(\tau) = & \frac{1}{\sqrt{2}}\biggl[\begin{pmatrix}
\sum_{j=1}^{n}\eta_j\left(\tau- 1\{Y_{(j,1)} \leq \hat{q}_1(\tau)\}\right) \\
\sum_{j=1}^{n}\eta_j\left(\tau- 1\{Y_{(j,0)} \leq \hat{q}_0(\tau)\}\right)
\end{pmatrix} \\
& +  \begin{pmatrix}
\sum_{k=1}^{\lfloor n/2 \rfloor}\hat{\eta}_k \left[\left(\tau - 1\{Y_{(k,1)} \leq \hat{q}_1(\tau)\} \right) - \left(\tau - 1\{Y_{(k,3)} \leq \hat{q}_1(\tau)\} \right)\right] \\
\sum_{k=1}^{\lfloor n/2 \rfloor}\hat{\eta}_k\left[\left(\tau - 1\{Y_{(k,2)} \leq \hat{q}_0(\tau)\} \right) - \left(\tau - 1\{Y_{(k,4)} \leq \hat{q}_0(\tau)\} \right)\right] \\
\end{pmatrix} \biggr].
\end{align*}
As the probability of $n\tau + T_{n,a}^*(\tau)$ being an integer is zero, $h_a$ is uniquely defined with probability one.\footnote{The sub-gradient condition of \eqref{eq:gb} is $\hat{q}^*(\tau) = Y_{i_1}-  Y_{i_0}$ such that $i_i,i_0 \in [2n]$ are two indexes, $A_{i_1}= 1$,  $A_{i_0}= 0$,
	\begin{align*}
	& \tau n + T_{n,1}^*(\tau) \geq \sum_{i \in [2n]} A_i 1\{ Y_i < Y_{i_1}\} \geq  \tau n + T_{n,1}^*(\tau) - 1, \quad \text{and} \\
	& \tau n + T_{n,0}^*(\tau) \geq \sum_{i \in [2n]} (1-A_i) 1\{ Y_i < Y_{i_0}\} \geq  \tau n + T_{n,0}^*(\tau) - 1.
	\end{align*}
	By letting  $h_1 = \sum_{i \in [2n]} A_i 1\{ Y_i < Y_{i_1}\}+1$ and $h_0 = \sum_{i \in [2n]} (1-A_i) 1\{ Y_i < Y_{i_0}\}+1$, we have  $Y_{i_1} = Y_{(h_1)}^1$ and $Y_{i_0} = Y_{(h_0)}^0$. }

We summarize the steps in the bootstrap procedure as follows.
\begin{enumerate}
	\item Re-order the pairs.
	\item Compute the original estimator $\hat{q}(\tau) = \hat{q}_1(\tau) - \hat{q}_0(\tau)$.
	\item Let $B$ be the number of bootstrap replications. Let $\mathcal{G}$ be a grid of quantile indexes. For $b \in [B]$, generate $\{\eta_j\}_{j \in [n]}$ and $\{\hat{\eta}_k\}_{k \in \lfloor n/2 \rfloor}$. Compute $\hat{q}^{*b}(\tau) = Y^1_{(h_1)} - Y^0_{(h_0)}$ for $\tau \in \mathcal{G}$, where $h_0$ and $h_1$ are computed in \eqref{eq:h}. Obtain $\{\hat{q}^{*b}(\tau)\}_{\tau \in \mathcal{G}}$.
	\item Repeat the above step for $b \in [B]$ and obtain $B$ bootstrap estimators of the QTE, denoted as
	$\{\hat{q}^{*b}(\tau)\}_{b \in [B],\tau \in \mathcal{G}}$.
\end{enumerate}

\subsection{Computation of the IPW Multiplier Bootstrap}
\label{subsec:IPW computation}
We first provide more details on the sieve bases. Let $b(x) \equiv (b_{1}(x), \cdots, b_{K}(x))^\top$, where $\{b_{k }(\cdot)\}_{k=1}^{K}$ are $K$ basis functions of a linear sieve space $\mathcal{B}$.  Given that all $d_x$ elements of $X$ are continuously distributed, the sieve space $\mathcal{B}$ can be constructed as follows.
\begin{enumerate}
	\item For each element $X^{(l)}$ of $X$, $l=1,\cdots,d_x$, let $\mathcal{B}_l$ be the univariate sieve space of dimension $J_n$. One example of $\mathcal{B}_l$ is the linear span of the $J_n$ dimensional polynomials given by
	$$\mathcal{B}_l = \biggl\{\sum_{k=0}^{J_n}\alpha_k x^k, x \in \Supp(X^{(l)}), \alpha_k \in \Re \biggr\};$$
	Another is the linear span of $r$-order splines with $J_n$ nodes given by
	$$\mathcal{B}_l = \biggl\{\sum_{k=0}^{r-1}\alpha_k x^k + \sum_{j=1}^{J_n}b_j[\max(x-t_j,0)]^{r-1}, x \in \Supp(X^{(l)}), \alpha_k, b_j \in \Re \biggr\},$$
	where the grid $-\infty=t_0 \leq t_1 \leq \cdots \leq t_{J_n} \leq t_{J_n+1} = \infty$ partitions $\Supp(X^{(l)})$ into $J_n+1$ subsets $I_j = [t_j,t_{j+1}) \cap \Supp(X^{(l)})$, $j=1,\cdots,J_n-1$, $I_{0} = (t_{0},t_{1}) \cap \Supp(X^{(l)})$, and $I_{J_{n}} = (t_{J_n},t_{J_n+1}) \cap \Supp(X^{(l)})$.
	\item Let $\mathcal{B}$ be the tensor product of $\{\mathcal{B}_l\}_{l=1}^{d_x}$, which is defined as a linear space spanned by the functions $\prod_{l=1}^{d_x} g_l$, where $g_l \in \mathcal{B}_l$. The dimension of $\mathcal{B}$ is then $K \equiv J_n^{d_x}$.
\end{enumerate}

In practice, we suggest not using all the tensor products. Otherwise the dimension $J_n^{d_x}$ can be too large even for moderate sample sizes. Instead, when the number of pairs is $50$ or $100$, we suggest using splines with one node (usually the median) for each dimension and one interaction term across each pair of dimensions. In the appendix, we also propose a cross-validation method to select the basis functions.

Given the sieve bases, we can estimate the propensity score following \eqref{eq:ps}. We then obtain $\hat{q}_{ipw,1}^w(\tau)$ and $\hat{q}_{ipw,0}^w(\tau)$ by solving the sub-gradient conditions for the two optimizations in \eqref{eq:qipw10}. Specifically, we have $\hat{q}_{ipw,1}^w(\tau) = Y_{h_1'}$ and $\hat{q}_{ipw,0}^w(\tau) = Y_{h_0'}$, where the indexes $h_0'$ and $h_1'$ satisfy $A_{h_a'} = a$, $a=0,1$,
\begin{align}
\label{eq:h1'}
\tau \left(\sum_{i=1}^{2n}\frac{\xi_i A_i}{\hat{A}_i}\right) - \frac{\xi_{h_1'}}{\hat{A}_{h_1'}} \leq \sum_{i =1}^{2n}\frac{\xi_iA_i}{\hat{A}_i} 1\{ Y_i < Y_{h_1'}\} \leq \tau \left(\sum_{i=1}^{2n}\frac{\xi_i A_i}{\hat{A}_i}\right),
\end{align}
and
\begin{align}
\label{eq:h0'}
\tau \left(\sum_{i=1}^{2n}\frac{\xi_i (1-A_i)}{1-\hat{A}_i}\right)  - \frac{\xi_{h_0'}}{1-\hat{A}_{h_0'}} \leq \sum_{i =1}^{2n}\frac{\xi_i(1-A_i)}{1-\hat{A}_i} 1\{ Y_i < Y_{h_0'}\} \leq \tau \left(\sum_{i=1}^{2n}\frac{\xi_i (1-A_i)}{1-\hat{A}_i}\right).
\end{align}
In practical implementation we set $\{\xi_i\}_{i \in [2n]}$ as i.i.d. standard exponential random variables. In this case, all the equalities in \eqref{eq:h1'} and \eqref{eq:h0'} hold with probability zero. Thus, $h_1'$ and  $h_0'$ are uniquely defined with probability one.

The IPW multiplier bootstrap can also be used to infer the ATE when the pairs identities are unknown. The point estimator of ATE under MPD is just the difference in mean estimator: $\hat{\Delta} = \frac{1}{n}\sum_{i =1}^{2n}(A_iY_i - (1-A_i)Y_i)$. In order to mimic its limit distribution, we propose to an IPW multiplier bootstrap for the ATE estimator
\begin{align*}
\hat{\Delta}^w_{ipw} = \frac{1}{\sum_{i=1}^{2n}{\xi_i A_i}/{\hat{A}_i}}\sum_{i =1}^{2n}\frac{\xi_i A_i Y_i}{\hat{A}_i} - \frac{1}{\sum_{i=1}^{2n}{\xi_i (1-A_i)}/{(1-\hat{A}_i)}}\sum_{i =1}^{2n}\frac{\xi_i (1-A_i) Y_i}{1-\hat{A}_i}.
\end{align*}

We summarize the bootstrap procedure as follows.
\begin{enumerate}
	\item Compute the original estimator $\hat{q}(\tau) = \hat{q}_1(\tau) - \hat{q}_0(\tau)$ and $\hat{\Delta}$.
	\item Let $B$ be the number of bootstrap replications. Let $\mathcal{G}$ be a grid of quantile indexes. For $b \in [B]$, generate $\{\xi_i\}_{i \in [2n]}$ as a sequence of i.i.d. exponential random variables. Estimate the propensity score following \eqref{eq:ps}. Compute $\hat{q}_{ipw}^{w,b}(\tau) = Y_{h_1'} - Y_{h_0'}$ for $\tau \in \mathcal{G}$, where $h_0'$ and $h_1'$ are computed as in \eqref{eq:h1'} and \eqref{eq:h0'}, respectively, and $\hat{\Delta}_{ipw}^{w,b}$.
	\item Repeat the above step for $b \in [B]$ and obtain $B$ bootstrap estimators of the QTE and ATE, denoted as
	$\{\hat{q}_{ipw}^{w,b}(\tau)\}_{b \in [B],\tau \in \mathcal{G}}$, $\{\hat{\Delta}_{ipw}^{w,b}\}_{b \in [B]}$, respectively.
\end{enumerate}

For comparison, we also consider the naive multiplier bootstrap and the naive multiplier bootstrap of the pairs in our simulations. The computation of the naive multiplier bootstrap follows a procedure similar to the above with only one difference: the nonparametric estimate $\hat{A}_i$ of the propensity score is replaced by the truth, that is, $1/2$. The computation of the naive multiplier bootstrap of the pairs follows a similar procedure to that of the naive multiplier bootstrap except that the units in the same pair share the same multiplier.

\subsection{Bootstrap Confidence Intervals}
\label{subsec:boostrap+confid_int}
Given the bootstrap estimates, we discuss how to conduct bootstrap inference for the null hypotheses with single, multiple, and a continuum of quantile indexes. We take the gradient bootstrap as an example. If the IPW multiplier bootstrap is used, one can just replace $\{\hat{q}^{*b}(\tau)\}_{b \in [B],\tau \in \mathcal{G}}$ by $\{\hat{q}_{ipw}^{w,b}(\tau)\}_{b \in [B],\tau \in \mathcal{G}}$ in the following cases. The same procedure applies to the bootstrap inference of ATE as well.

\vspace{1mm}
\textbf{Case (1).} We aim to test the single null hypothesis that $\mathcal{H}_0: q(\tau) = \underline{q}$ vs. $q(\tau) \neq \underline{q}$. Let $\mathcal{G} = \{\tau\}$ in the procedures described above. Further denote $\mathcal{Q}(\nu)$ as the $\nu$th empirical quantile of the sequence  $\{\hat{q}^{*b}(\tau)\}_{b \in [B]}$. Let $\alpha \in (0,1)$ be the significance level. We suggest using the bootstrap estimator to construct the standard error of $\hat{q}(\tau)$ as  $\hat{\sigma} = \frac{\mathcal{Q}(0.975)- \mathcal{Q}(0.025)}{C_{0.975} - C_{0.025}}$, where $C_{\mu}$ is the $\mu$th standard normal critical value. Then a valid confidence interval and Wald test using this standard error are
\begin{align*}
CI(\alpha) = (\hat{q}(\tau)-C_{1-\alpha/2}\hat{\sigma}, \hat{q}(\tau)+C_{1-\alpha/2}\hat{\sigma}),
\end{align*}
and $1\{\left|\frac{\hat{q}(\tau) - \underline{q}}{\hat{\sigma}}\right| \geq C_{1-\alpha/2}\}$, respectively.\footnote{It is asymptotically valid to use standard and percentile bootstrap confidence intervals. In our simulations, we found that the confidence interval proposed in the paper has better finite sample performance.}

%Further denote the standard and percentile bootstrap confidence intervals as $CI_2$ and $CI_3$, respectively, where
%\begin{align*}
%CI_2(\alpha) = (2\hat{q}(\tau) - \mathcal{Q}(1-\alpha/2), 2\hat{q}(\tau) - \mathcal{Q}(\alpha/2))
%\end{align*}
%and
%\begin{align*}
%CI_3(\alpha) = (\mathcal{Q}(\alpha/2), \mathcal{Q}(1-\alpha/2)).
%\end{align*}
%Theoretically, $CI_1$, $CI_2$, and $CI_3$ are all valid. When $\alpha = 0.05$, $CI_1$, $CI_2$, and $CI_3$ are centered at $\hat{q}(\tau)$, $2\hat{q}(\tau) - \frac1{2}\{\mathcal{Q}(0.975)+ \mathcal{Q}(0.025)\}$, and $\frac1{2}\{\mathcal{Q}(0.975)+ \mathcal{Q}(0.025)\}$, respectively, but share the same length $\mathcal{Q}(0.975)- \mathcal{Q}(0.025)$. In (unreported) simulations, we found that in small samples, $CI_1$ usually has the best size control while $CI_2$ over-rejects and $CI_3$ under-rejects.

\vspace{2mm}
\textbf{Case (2).} We aim to test the null hypothesis that $\mathcal{H}_0: q(\tau_1) - q(\tau_2) = \underline{q}$ vs. $q(\tau_1) - q(\tau_2) \neq \underline{q}$. In this case, let $\mathcal{G} = \{\tau_1,\tau_2\}$. Further, let $\mathcal{Q}(\nu)$ denote the $\nu$th empirical quantile of the sequence  $\{\hat{q}^{*b}(\tau_1) - \hat{q}^{*b}(\tau_2)\}_{b \in [B]}$, and let $\alpha \in (0,1)$ be the significance level. We suggest using the bootstrap standard error to construct a valid confidence interval and Wald test as
\begin{align*}
CI(\alpha) = (\hat{q}(\tau_1)-\hat{q}(\tau_2)-C_{1-\alpha/2}\hat{\sigma}, \hat{q}(\tau_1)-\hat{q}(\tau_2)+C_{1-\alpha/2}\hat{\sigma}),
\end{align*}
and $1\{\left|\frac{\hat{q}(\tau_1)-\hat{q}(\tau_2) - \underline{q}}{\hat{\sigma}}\right| \geq C_{1-\alpha/2}\}$, respectively, where $\hat{\sigma} = \frac{\mathcal{Q}(0.975)- \mathcal{Q}(0.025)}{C_{0.975} - C_{0.025}}$.

% In the following, $CI_1(\alpha)$, $CI_2(\alpha)$, $CI_3(\alpha)$ denote the standard bootstrap, percentile bootstrap, and bootstrap standard error confidence intervals, respectively.
%\begin{align*}
%CI_3(\alpha) = (2(\hat{q}(\tau_1)-\hat{q}(\tau_2))-C_{1-\alpha/2}\hat{\sigma}, 2(\hat{q}(\tau_1)-\hat{q}(\tau_2))+C_{\alpha/2}\hat{\sigma}),
%\end{align*}

%
%\begin{align*}
%CI_1(\alpha) = (2(\hat{q}(\tau_1)-\hat{q}(\tau_2)) - \mathcal{Q}(1-\alpha/2), 2(\hat{q}(\tau_1)-\hat{q}(\tau_2)) - \mathcal{Q}(\alpha/2)),
%\end{align*}
%\begin{align*}
%CI_2(\alpha) = (\mathcal{Q}(\alpha/2), \mathcal{Q}(1-\alpha/2)),
%\end{align*}
%and

\vspace{2mm}
\textbf{Case (3).} We aim to test the null hypothesis that
$$\mathcal{H}_0: q(\tau) = \underline{q}(\tau)~\forall \tau \in \Upsilon~\text{vs.}~q(\tau) \neq \underline{q}(\tau)~ \exists \tau \in \Upsilon.$$ In theory, we should let $\mathcal{G} = \Upsilon$. In practice, we let $\mathcal{G} = \{\tau_1,\cdots,\tau_G\}$ be a fine grid of $\Upsilon$ where $G$ should be as large as computationally possible. Further, let $\mathcal{Q}_\tau(\nu)$ denote the $\nu$th empirical quantile of the sequence  $\{\hat{q}^{*b}(\tau)\}_{b \in [B]}$ for $\tau \in \mathcal{G}$. Compute the standard error of $\hat{q}(\tau)$ as
\begin{align*}
\hat{\sigma}_\tau = \frac{\mathcal{Q}_\tau(0.975)- \mathcal{Q}_\tau(0.025)}{C_{0.975} - C_{0.025}}.
\end{align*}
The uniform confidence band with an $\alpha$ significance level is constructed as
\begin{align*}
CB(\alpha) = \{ \hat{q}(\tau)-\mathcal{C}_\alpha  \hat{\sigma}_\tau,  \hat{q}(\tau)+\mathcal{C}_\alpha  \hat{\sigma}_\tau: \tau \in \mathcal{G} \},
\end{align*}
where the critical value $\mathcal{C}_\alpha$ is computed as
\begin{align*}
\mathcal{C}_\alpha = \inf\left\{z:\frac{1}{B}\sum_{b=1}^B 1\left\{ \sup_{\tau \in \mathcal{G}}\left| \frac{\hat{q}^{*b}(\tau) - \tilde{q}(\tau)}{\hat{\sigma}_\tau}\right| \leq z \right\}\geq 1-\alpha \right\}
\end{align*}
and $\tilde{q}(\tau)$ is first-order equivalent to $\hat{q}(\tau)$ in the sense that $\sup_{\tau \in \Upsilon}|\tilde{q}(\tau) - \hat{q}(\tau)|=  o_p(1/\sqrt{n})$. We suggest choosing $\tilde{q}(\tau) = \frac1{2}\{\mathcal{Q}_\tau(0.975) + \mathcal{Q}_\tau(0.025)\}$ over other choices such as $\tilde{q}(\tau) =\mathcal{Q}_\tau(0.5)$ and $\tilde{q}(\tau) = \hat{q}(\tau)$ due to its better finite sample performance. We reject $\mathcal{H}_0$ at an $\alpha$ significance level if $\underline{q}(\cdot) \notin CB(\alpha).$

\subsection{Practical Recommendations}
\label{sec:rec}
Our practical recommendations are straightforward. If pair identities are known, we suggest using the gradient bootstrap for inference. Otherwise, we suggest using the IPW multiplier bootstrap with a nonparametrically estimated propensity score for inference.

\section{Simulations}
\label{sec:sim}

In this section, we assess the finite sample performance of the methods discussed in Section \ref{sec:bootstrap} with a Monte Carlo simulation study. In all cases, potential outcomes for $a\in\{0,1\}$ and $1\leq i\leq2n$ are generated as
\begin{equation}
\ensuremath{Y_{i}(a)=\mu_{a}+m_{a}\left(X_{i}\right)+\sigma_{a}\left(X_{i}\right)\epsilon_{a,i}},~a=0,1,\label{eq:simulpart01}
\end{equation}
where $\mu_{a},m_{a}\left(X_{i}\right),\sigma_{a}\left(X_{i}\right)$,
and $\epsilon_{a,i}$ are specified as follows. In each of the specifications below, $n\in\{50,100\}$ and ($X_{i},\epsilon_{0,i},\epsilon_{1,i}$) are i.i.d. The number of replications is 10,000. For bootstrap replications we set $B=5,000$.

\begin{description}
	\item [{Model 1}]
	$X_{i}\sim\text{Unif}[0,1]$; $m_{0}\left(X_{i}\right)=0$; $m_{1}\left(X_{i}\right)=10\left(X_{i}^{2}-\frac{1}{3}\right)$; $\epsilon_{a,i}\sim N(0,1)\text{ for }a=0,1$; $\sigma_{0}\left(X_{i}\right)=\sigma_{0}=1$ and $\sigma_{1}\left(X_{i}\right)=\sigma_{1}$.
	
	\item [{Model 2}] As in Model 1, but with $\sigma_{0}\left(X_{i}\right)=\left(1+X_{i}^{2}\right)$ and $\sigma_{1}\left(X_{i}\right)=\left(1+X_{i}^{2}\right)\sigma_{1}$.
	
	\item [{Model 3}] $X_{i}=\left(\Phi\left(V_{i1}\right),\Phi\left(V_{i2}\right)\right)^\top $,
	where $\Phi(\cdot)$ is the standard normal cumulative distribution function and
	\[
	V_{i}\sim N\left(\left(\begin{array}{l}
	0\\
	0
	\end{array}\right),\left(\begin{array}{ll}
	1 & \rho\\
	\rho & 1
	\end{array}\right)\right),
	\]
	$m_{0}\left(X_{i}\right)=\gamma^\top X_{i}-1$; $m_{1}\left(X_{i}\right)=m_{0}\left(X_{i}\right)+10\left(\Phi^{-1}\left(X_{i1}\right)\Phi^{-1}\left(X_{i2}\right)-\rho\right)$;
	$\epsilon_{a,i}\sim N(0,1)$ for $a=0,1$; $\sigma_{0}\left(X_{i}\right)=\sigma_{0}=1$
	and $\sigma_{1}\left(X_{i}\right)=\sigma_{1}$. We set $\gamma=\left(1,1\right)^\top $, $\sigma_{1}=1$, $\rho=0.2$.
	
	\item [{Model 4}] As in Model 3, but with $\gamma=\left(1,4\right)^\top $,
	$\sigma_{1}=2$, $\rho=0.7$.
	
\end{description}

Pairs are determined similarly to those in \cite{BRS19}. Specifically, if $X_i$ is a scalar, then pairs are determined by sorting $\{X_i\}_{i \in [2n]}$. If $X_i$ is multi-dimensional, then the pairs are determined by the permutation $\pi$ computed using the \emph{R} package \emph{nbpMatching}. We refer interested readers to \citet[Section 4]{BRS19} for more detail.  After forming the pairs, we assign treatment status within each pair through a random draw from the uniform distribution over $\{(0,1),(1,0)\}$.

We examine the performance of various tests for ATEs and QTEs at the nominal level $\alpha=5\%$. For the ATE, we consider the hypothesis that
\begin{align*}
\mathbb{E}(Y(1) - Y(0)) = \text{truth}+\Delta \quad vs. \quad \mathbb{E}(Y(1) - Y(0)) \neq \text{truth}+\Delta.
\end{align*}
For the QTE, we consider the hypotheses that
\begin{align*}
q(\tau) = \text{truth}+\Delta \quad vs. \quad q(\tau) \neq \text{truth}+\Delta,
\end{align*}
for $\tau = 0.25$, $0.5$, and $0.75$,
\begin{align}
\label{eq:diff}
q(0.25)- q(0.75) = \text{truth}+\Delta \quad vs. \quad q(0.25)- q(0.75) \neq \text{truth}+\Delta,
\end{align}
and
\begin{align}
\label{eq:uni}
q(\tau) = \text{truth}+\Delta \quad \forall \tau \in [0.25,0.75] \quad vs. \quad q(\tau) \neq \text{truth}+\Delta \quad \exists \tau \in [0.25,0.75].
\end{align}
To illustrate size and power of the tests, we set $\mathcal{H}_0: \Delta = 0$ and $\mathcal{H}_1: \Delta = 1/2$. The true value for the ATE is $0$, whereas the true values for the QTEs are simulated with a $10,000$ sample size and replications. The computational procedures described in Section \ref{sec:compute} are followed to perform the bootstrap and calculate the test statistics. To test the single null hypothesis involving one or two quantile indexes, we use the Wald tests specified in Section \ref{subsec:boostrap+confid_int}. To test the null hypothesis involving a continuum of quantile indexes, we use the uniform confidence band $CB(\alpha)$ defined in Case (3) in the same section.

\newcolumntype{L}{>{\raggedright\arraybackslash}X}
\newcolumntype{C}{>{\centering\arraybackslash}X}

\begin{table}[ht]
	\caption{The Empirical Size and Power of Tests for ATEs}
	\vspace{1ex}
	%\begin{adjustbox}{max width=\textwidth}
	\centering{}%
	\begin{tabularx}{1\textwidth} {L CCCC CCCC}
		\hline
		\hline
		\multirow{3}{*}{Model} & \multicolumn{8}{c}{$\mathcal{H}_{0}$: $\Delta=0$} \\
		& \multicolumn{4}{c}{$n=50$} & \multicolumn{4}{c}{$n=100$} \\
		\cmidrule(lr){2-5} \cmidrule(lr){6-9}
		& Naive & Naive Pair & Adj  & IPW  & Naive & Naive Pair & Adj & IPW  \\
		
		\hline
		1 & 1.32 & 1.52 & 5.47  & 5.44 & 1.22 & 1.34 & 5.75 & 6.00 \\
		2 & 1.85 & 2.22 & 5.35 &  5.59 & 1.64  & 1.83 & 5.63  & 5.89 \\
		3 & 1.20 & 1.48 & 4.76 & 4.92 & 0.77 & 0.89 & 4.68 & 5.16 \\
		4 & 2.32 & 2.72 & 6.47 & 6.01 & 1.25 & 1.33 & 5.33 & 4.74 \\
		\hline
		\multirow{3}{*}{Model} & \multicolumn{8}{c}{$\mathcal{H}_{1}$: $\Delta=1/2$}\\
		& \multicolumn{4}{c}{$n=50$} & \multicolumn{4}{c}{$n=100$} \\
		\cmidrule(lr){2-5} \cmidrule(lr){6-9}
		& Naive & Naive Pair & Adj & IPW & Naive & Naive Pair & Adj & IPW \\
		\hline
		1 & 11.80 & 13.47 & 29.10 & 29.44 & 27.67 & 28.57 & 49.79 & 50.46 \\
		2 & 10.43 & 12.06 & 23.26 & 24.24 & 23.72 & 24.99 & 40.42  & 41.68 \\
		3 & 1.31 & 1.73 & 5.66 & 5.91 & 1.92 & 2.07 & 8.13 & 8.74 \\
		4 & 1.08 & 1.45 & 5.16 & 4.35 & 0.93 & 1.05 & 5.65 & 4.89 \\
		\hline
		
	\end{tabularx}
	\justify
	Notes: The table presents the rejection probabilities for tests of ATEs.
	The columns `Naive' and `Adj' correspond to the two-sample $t$-test and the adjusted $t$-test in \cite{BRS19}, respectively; the `Naive pair' column corresponds to the $t$-test using the standard errors estimated by the naive multiplier bootstrap of the pairs; the column `IPW' corresponds to the $t$-test using the standard errors estimated by the IPW multiplier bootstrap ATE estimator.
	\label{tab:sim_ate}
\end{table}

The results for the ATEs appear in Table \ref{tab:sim_ate}. Each row presents a different model and each column reports the rejection probabilities for the various methods. The column `Naive' refers to the two-sample $t$-test and `Adj' refers to the adjusted $t$-test in \cite{BRS19}; the column `Naive pair' corresponds to the $t$-test using the standard errors estimated by the naive multiplier bootstrap of the pairs; the column `IPW' corresponds to the $t$-test using the standard errors estimated by the IPW multiplier bootstrap ATE estimator.

We make several observations on these findings. First, the two-sample $t$-test has rejection probability under $\mathcal{H}_{0}$ far below the nominal level and is the least powerful test among the four. Second, the adjusted $t$-test has rejection probability under $\mathcal{H}_{0}$ close to the nominal level and is not conservative. This result is consistent with those in \cite{BRS19}. Third, the $t$-test using the standard error estimated by bootstrapping the pairs of units alone is conservative under the null and lacks power under the alternative. Fourth, the IPW $t$-test proposed in this paper has performance similar to the adjusted $t$-test.\footnote{\protect\doublespacing Throughout this section, for both ATE and QTE estimation, we use splines to nonparametrically estimate the propensity score in the IPW multiplier bootstrap. If dim($X_i$)=1, we choose the bases $\{1,X,X^2,[\max(X-qx_{0.5},0)]^2 \}$ where $qx_{0.5}$ is the quantile of $X$ at 50\%; if dim($X_i$)=2, we choose the bases $\{1,X_{1},X_{2},\max(X_{1}-qx_{1,0.5},0), \max(X_{2}-qx_{2,0.5},0), X_{1}X_{2}\}$, where for $j=1,2$, $qx_{j,\alpha}$ is the $\alpha$th sample percentile of $X_j$. Results are similar using the cross-validation method proposed in the appendix to choose the sieve bases: for details on the cross-validation method and its simulation results, see Section H  in the supplement.} Under $\mathcal{H}_{0}$, the test has rejection probability close to 5\%; under $\mathcal{H}_{1}$, it is more powerful than the `naive' and `naive pair' methods and has power similar to the adjusted $t$-test. These findings indicate that the IPW $t$-test provides an alternative to the adjusted $t$-test when pair identities are unknown.

The results for QTEs are summarized in Tables \ref{tab:sim_qte} and \ref{tab:sim_uniform}. Each table has four panels (Models 1-4). Each row in the panel displays the rejection probabilities for the tests using the standard errors estimated by various bootstrap methods. Specifically, the rows `Naive', `Naive pair', `Gradient', and `IPW' respectively correspond to the results of the naive multiplier bootstrap, the naive multiplier bootstrap of the pairs, the gradient bootstrap, and the IPW multiplier bootstrap.

Table \ref{tab:sim_qte} reports the empirical size and power of the tests with a single null hypothesis involving one or two quantile indexes. Columns `0.25', `0.50', and `0.75' correspond to tests with quantiles at 25\%, 50\%, and 75\%. Column `Dif' corresponds to the test with null hypothesis \eqref{eq:diff}. As expected given Theorem \ref{thm:weight}, the test with standard errors estimated by two naive methods performs poorly in all cases. It is conservative under $\mathcal{H}_{0}$ and lacks power under $\mathcal{H}_{1}$. In contrast, the test using the standard errors estimated by either the gradient bootstrap or the IPW multiplier bootstrap method has a rejection probability under $\mathcal{H}_{0}$ that is close to the nominal level in almost all specifications. When the number of pairs is 50, the tests in the `Dif' column constructed based on either the gradient or the IPW multiplier bootstrap method are slightly conservative. Sizes approach the nominal level when $n$ increases to $100$.

Table \ref{tab:sim_uniform} reports empirical size and power of the uniform confidence bands for the hypothesis specified in \eqref{eq:uni} with a grid $\mathcal{G} = \{0.25,0.27,\cdots,0.47,0.49,0.5,0.51,0.53,\cdots,0.73,0.75\}$. The test using standard errors estimated by two naive methods has rejection probabilities under $\mathcal{H}_{0}$ far below the nominal level in all specifications. In Models 1-2, the test using standard errors estimated by either the gradient bootstrap or the IPW multiplier bootstrap yields a rejection probability under $\mathcal{H}_{0}$ that is very close to the nominal level even when the number of pairs is as small as 50. Nonetheless, in Models 3-4, the tests constructed based on both methods are conservative when the number of pairs equals 50. When the number of pairs increases to 100, both tests perform much better and have rejection probabilities under $\mathcal{H}_{0}$ that are close to the nominal level. Under $\mathcal{H}_{1}$, the tests based on both the gradient and IPW methods are more powerful than those based on the naive methods.

In summary, the simulation results in Tables \ref{tab:sim_qte} and \ref{tab:sim_uniform} are consistent with the results in Theorems \ref{thm:boot} and \ref{thm:ipw_w}: both the gradient bootstrap and the IPW multiplier bootstrap provide valid pointwise and uniform inference for QTEs under MPDs. The findings also show that when the information on pair identities is unavailable, the IPW multiplier bootstrap continues to provide a sound basis for inference.

\begin{landscape}
	\begin{table}[htbp]
		\caption{The Empirical Size and Power of Tests for QTEs }
		\vspace{1ex}

		\centering{}
		\begin{threeparttable}
			\begin{tabular}{lcccccccccccccccc}
				\hline
				\hline
				\multicolumn{1}{l}{} & \multicolumn{8}{c}{$\mathcal{H}_{0}$: $\Delta=0$} & \multicolumn{8}{c}{$\mathcal{H}_{1}$: $\Delta=1/2$}\\
				\multicolumn{1}{l}{} & \multicolumn{4}{c}{$n=50$} & \multicolumn{4}{c}{$n=100$} & \multicolumn{4}{c}{$n=50$} & \multicolumn{4}{c}{$n=100$}\\
				\cmidrule(lr){2-5} \cmidrule(lr){6-9} \cmidrule(lr){10-13}  \cmidrule(lr){14-17}
				\multicolumn{1}{l}{} & 0.25 & 0.50 & 0.75 & \multicolumn{1}{c}{Dif} & 0.25 & 0.50 & 0.75 & \multicolumn{1}{c}{Dif} & 0.25 & 0.50 & 0.75 & \multicolumn{1}{c}{Dif} & 0.25 & 0.50 & 0.75 & Dif\\
				\hline
				\emph{Model 1} &  &  &  &  &  &  &  &  &  &  &  &  &  &  &  & \\
				Naive & 3.00 & 2.00 & 2.22 & 1.98 & 3.12 & 2.06 & 1.93 & 1.73 & 16.67 & 6.05 & 5.56 & 3.96 & 34.93 & 11.56 & 8.11 & 7.35 \\
				Naive pair & 2.99 & 2.29 & 2.30 & 2.07 & 3.37 & 2.23 & 2.08 & 1.81 & 16.85 & 6.75 & 5.64 & 4.08 & 35.33 & 11.41 & 8.24 & 7.59 \\
				Gradient & 5.13 & 4.82 & 4.92 & 3.66 & 5.07 & 5.62 & 5.30 & 4.04 & 23.76 & 13.03 & 11.27 & 8.18 & 42.92 & 22.91 & 17.30 & 14.57 \\
				IPW & 5.47 & 5.31 & 6.17 & 4.24 & 5.26 & 5.83 & 5.65 & 3.95 & 24.81 & 13.48 & 12.12 & 8.40 & 43.93 & 23.33 & 17.21 & 13.91 \\
				&  &  &  &  &  &  &  &  &  &  &  &  &  &  &  & \\
				\emph{Model 2} &  &  &  &  &  &  &  &  &  &  &  &  &  &  &  & \\
				Naive & 3.08 & 2.32 & 2.55 & 1.96 & 3.64 & 2.53 & 2.08 & 1.87 & 14.82 & 6.54 & 4.71 & 3.68 & 30.29 & 11.50 & 7.46 & 6.88 \\
				Naive pair & 3.05 & 2.50 & 2.65 & 2.03 & 3.87 & 2.77 & 2.23 & 1.95 & 14.67 & 6.81 & 4.96 & 3.65 & 30.97 & 11.80 & 7.85 & 7.27 \\
				Gradient & 4.57 & 4.63 & 4.39 & 3.44 & 5.00 & 5.42 & 5.28 & 3.68 & 19.51 & 12.25 & 8.76 & 6.57 & 35.38 & 20.86 & 14.79 & 12.25 \\
				IPW & 4.93 & 5.12 & 5.78 & 4.45 & 5.17 & 5.73 & 5.88 & 4.00 & 20.29 & 12.90 & 10.40 & 7.35 & 36.38 & 21.53 & 15.14 & 12.53 \\
				&  &  &  &  &  &  &  &  &  &  &  &  &  &  &  & \\
				\emph{Model 3} &  &  &  &  &  &  &  &  &  &  &  &  &  &  &  & \\
				Naive & 2.11 & 1.03 & 2.10 & 0.92 & 1.56 & 1.37 & 1.58 & 0.86 & 4.98 & 2.85 & 1.92 & 0.98 & 6.57 & 7.14 & 1.73 & 1.43 \\
				Naive pair & 2.21 & 1.18 & 2.31 & 1.13 & 1.57 & 1.42 & 1.55 & 0.96 & 4.99 & 3.29 & 2.14 & 1.11 & 6.80 & 7.56 & 1.88 & 1.44 \\
				Gradient & 5.24 & 3.06 & 3.14 & 1.76 & 4.83 & 4.20 & 4.27 & 3.01 & 9.71 & 7.43 & 3.22 & 2.39 & 13.80 & 16.72 & 5.67 & 4.40 \\
				IPW & 4.76 & 3.19 & 5.61 & 2.60 & 4.77 & 3.71 & 4.95 & 3.02 & 8.75 & 7.81 & 5.35 & 3.09 & 13.04 & 15.42 & 6.06 & 4.21 \\
				&  &  &  &  &  &  &  &  &  &  &  &  &  &  &  & \\
				\emph{Model 4} &  &  &  &  &  &  &  &  &  &  &  &  &  &  &  & \\
				Naive & 2.59 & 1.71 & 1.98 & 1.65 & 2.65 & 1.66 & 1.55 & 1.23 & 6.09 & 1.94 & 1.76 & 1.28 & 9.85 & 2.98 & 1.19 & 1.18 \\
				Naive pair & 2.90 & 1.79 & 2.04 & 1.74 & 2.74 & 1.71 & 1.71 & 1.35 & 6.24 & 2.23 & 1.93 & 1.45 & 10.25 & 3.17 & 1.35 & 1.23 \\
				Gradient & 4.75 & 4.00 & 3.33 & 2.82 & 4.70 & 4.74 & 5.06 & 3.88 & 9.37 & 5.76 & 3.35 & 2.87 & 14.67 & 8.88 & 5.27 & 4.25 \\
				IPW & 3.97 & 3.97 & 4.91 & 3.68 & 4.23 & 4.51 & 5.01 & 3.48 & 8.08 & 5.37 & 4.79 & 3.26 & 13.50 & 8.33 & 5.17 & 3.51 \\
				\hline
			\end{tabular}
			
			\begin{tablenotes}[para,flushleft]
				Note:	The table presents the rejection probabilities for tests of QTEs. The columns `0.25', `0.50', and `0.75' correspond to tests with quantiles at 25\%, 50\%, and 75\%, respectively; the column `Dif' corresponds to the test with the null hypothesis specified in \eqref{eq:diff}. The rows `Naive', `Naive pair', `Gradient', and `IPW' correspond to the results of the naive multiplier bootstrap, the naive multiplier bootstrap of the pairs, the gradient bootstrap, and IPW multiplier bootstrap, respectively.
			\end{tablenotes}
			
			\label{tab:sim_qte}
		\end{threeparttable}
		
	\end{table}
	
\end{landscape}

\begin{table}[ht]
	\caption{The Empirical Size and Power of Uniform Inferences for QTEs}
	\vspace{1ex}
	
	\centering{}%
	\begin{tabularx}{1\textwidth}{LCCCC}
		\hline
		\hline
		& \multicolumn{2}{c}{$\mathcal{H}_{0}$: $\Delta=0$} & \multicolumn{2}{c}{$\mathcal{H}_{1}$: $\Delta=1/2$}\\
		\cmidrule(lr){2-3} \cmidrule(lr){4-5}
		& \multicolumn{1}{c}{$n=50$} & \multicolumn{1}{c}{$n=100$} & \multicolumn{1}{c}{$n=50$} & \multicolumn{1}{c}{$n=100$}\\
		\hline
		\emph{Model 1} &  &  &  & \\
		Naive & 1.07 & 1.52 & 7.50 & 18.12 \\
		Naive pair & 1.32 & 1.63 & 7.10 & 18.52 \\
		Gradient & 4.08 & 4.64 & 17.88 & 33.30 \\
		IPW & 4.49 & 4.94 & 16.30 & 32.40 \\
		&  &  &  & \\
		\emph{Model 2} &  &  &  & \\
		Naive & 1.37 & 1.85 & 6.73 & 16.50 \\
		Naive pair & 1.39 & 1.91 & 6.63 & 17.04 \\
		Gradient & 3.66 & 4.57 & 14.30 & 27.64 \\
		IPW & 4.25 & 4.91 & 14.27 & 27.47 \\
		&  &  &  & \\
		\emph{Model 3} &  &  &  & \\
		Naive & 0.63 & 0.63 & 1.43 & 3.50 \\
		Naive pair & 0.60 & 0.69 & 1.54 & 4.02 \\
		Gradient & 1.90 & 3.07 & 5.19 & 13.33 \\
		IPW & 2.19 & 2.99 & 4.25 & 11.34 \\
		&  &  &  & \\
		\emph{Model 4} &  &  &  & \\
		Naive & 0.99 & 1.00 & 1.40 & 3.05 \\
		Naive pair & 0.97 & 1.00 & 1.33 & 3.28 \\
		Gradient & 2.87 & 3.72 & 4.47 & 8.57 \\
		IPW & 2.78 & 3.36 & 3.18 & 6.98 \\
		\hline
	\end{tabularx}
	
	\vspace{-1ex}
	\justify
	Notes: The table presents the rejection probabilities of the uniform confidence bands for the hypothesis specified in \eqref{eq:uni}. The rows `Naive', `Naive pair', `Gradient', and `IPW' correspond to the results of the naive multiplier bootstrap, the naive multiplier bootstrap of the pairs, the gradient bootstrap, and the IPW multiplier bootstrap, respectively.
	\label{tab:sim_uniform}
\end{table}

\section{Empirical Application}
\label{sec:app}

Policy and macroeconomic uncertainty are considered to be two major constraints to firm growth in developing countries (\citep{bloom2014}; \citep{bloom2018}; \citep{worldbank2004}). \cite{groh2016} conducted a randomized experiment with a MPD to explore the treatment effect of providing insurance against macroeconomic and policy shocks to microenterprise owners. In this section, we apply the bootstrap methods developed in this paper to their data and examine both the ATEs and QTEs of macroinsurance on business owners' monthly consumption and their firms' monthly profits.\footnote{\protect\doublespacing Data are available at https://microdata.worldbank.org/index.php/catalog/2063.}

\newcolumntype{B}{>{\hsize=1.3\hsize \raggedright\arraybackslash}X}
\newcolumntype{S}{>{\hsize=.9\hsize \centering\arraybackslash}X}

\begin{table}[ht]
	\centering
	\caption{Summary Statistics}
	\vspace{1ex}
	\begin{tabularx}{1\textwidth}{BSSS}\hline \hline
		& Total & Treatment group  & Control group \\
		\hline
		\textit{Outcome variables} & & & \\
		Consumption & 1946.9(903.0) & 1946.1(900.7) & 1947.7(905.7) \\
		Profit & 1342.4(1470.7) & 1299.9(1444.3) & 1384.9(1496.1) \\
		& & & \\
		\textit{Matching variables} & & & \\
		Owner is female & 0.36(0.48) & 0.36(0.48) & 0.36(0.48) \\
		Expected likelihood of a macro shock & 56.5(32.7) & 56.1(33.0) & 56.8(32.4) \\
		Higher risk aversion & 0.48(0.50) & 0.47(0.50) & 0.49(0.50) \\
		Owner is ambiguity neutral & 0.29(0.46) & 0.29(0.45) & 0.30(0.46) \\
		Sales drop 20\% more & 0.41(0.49) & 0.41(0.49) & 0.41(0.49) \\
		Sales drop between 5\% and 20\% & 0.29(0.45) & 0.29(0.45) & 0.29(0.45) \\
		Considering delaying investments & 0.10(0.30) & 0.10(0.30) & 0.10(0.29) \\
		Expect to renew their loan & 0.89(0.31) & 0.90(0.31) & 0.89(0.31) \\
		Expect to renew a loan of 3000 LE or less & 0.27(0.45) & 0.28(0.45) & 0.27(0.44) \\
		Expect to renew a loan of 3001 to 5000 LE & 0.27(0.44) & 0.27(0.44) & 0.26(0.44) \\
		Profits in Feb 2012 & 1085.4(1174.5) & 1060.4(1149.0) & 1110.4(1199.4) \\
		Profits in Jan 2012 & 1054.0(1161.1) & 1023.4(1114.5) & 1084.6(1205.5) \\
		Missing Feb 2012 profits & 0.04(0.18) & 0.04(0.19) & 0.03(0.18) \\
		Missing Jan 2012 profits & 0.04(0.19) & 0.04(0.20) & 0.03(0.18) \\
		& & & \\
		Observations & 2824 & 1412 & 1412 \\
		\hline
	\end{tabularx} \\
	
	\vspace{-1ex}
	\justify
	Notes: Unit of observation: business owners. The table presents the means and standard deviations (in parentheses) of two outcome variables and all the pair-matching variables.
	\label{tab:sum}
\end{table}

The sample consists of 2824 business owners, who were the clients of Egypt's largest microfinance institution -- Alexandria Business Association (ABA). After an exact match on gender and microfinance branch code within ABA, the business owners were grouped into pairs by using an optimal greedy algorithm to minimize the Mahalanobis distance between the values of additional 13 matching variables (See \citep{groh2016} for the definitions of these 13 variables). This segmentation gives 1412 pairs in the sample; one business owner in each pair was randomly assigned to the treatment group and the other to the control group. In the treatment group, a macroinsurance product was offered. Groh and McKenzie (2016) then examined the impacts of the access to macroinsurance on various outcome variables.

Here we focus on the impacts of macroinsurance on two outcome variables: the business owners' monthly consumption and their firms' monthly profits. Table \ref{tab:sum} gives descriptive statistics (means and standard deviations) of these two outcome variables as well as all the matching variables used by \cite{groh2016} to form the pairs in their experiments.\footnote{\protect\doublespacing We filter out 137 unbalanced observations (less than 5\% of the total observations in \citep{groh2016}) to keep a balanced data in both the pairs and the pairs of the pairs. The summary statistics in Table \ref{tab:sum} are almost exactly the same as those in Table 1 of \cite{groh2016}.}

\begin{table}[ht]
	\centering
	\caption{ATEs of Macroinsurance on Consumption and Profits}
	\vspace{1ex}
	\begin{tabularx}{1\textwidth}{LCCCC}\hline \hline
		& Naive & Naive pair & Adj & IPW \\
		\hline
		Consumption & -1.59(33.98) & -1.59(29.71) & -1.59(29.45) & -1.59(29.45) \\
		Profit & -86.68(56.09) & -86.68(48.41) & -86.68(49.38) & -86.68(45.38) \\
		\hline
	\end{tabularx} \\
	
	\vspace{-1ex}
	\justify
	Notes: The table presents the ATE estimates of the effect of macroinsurance on the monthly consumption and profits. Standard errors are in the parentheses. The columns ``Naive" and ``Adj" correspond to the two-sample $t$-test and the adjusted $t$-test in \cite{BRS19}, respectively. The column `Naive pair' corresponds to the $t$-test using standard errors estimated by the naive multiplier bootstrap of the pairs. The column ``IPW" corresponds to the $t$-test using the standard errors estimated by the IPW multiplier bootstrap.
	\label{tab:emp_ate}
\end{table}

Table \ref{tab:emp_ate} reports the ATE estimators of macroinsurance on the consumption and profits with the standard errors (in parentheses) calculated by four methods. Specifically, the columns `Naive' and `Adj' correspond to the two-sample $t$-test and the adjusted $t$-test in \cite{BRS19}, respectively; the column `Naive pair' corresponds to the $t$-test using standard errors estimated by the naive multiplier bootstrap of the pairs; the column `IPW' corresponds to the $t$-test using standard errors estimated by IPW multiplier bootstrap.\footnote{\protect\doublespacing Throughout this section, to nonparametrically estimate the propensity score in the IPW multiplier bootstrap, we first standardize all the continuous matching variables to have mean zero and variance one. There are only three continuous matching variables; the rest of the matching variables are all dummy variables. We then conduct sieve estimation by choosing the bases $\{1,\max(X_{1}-qx_{1,0.3},0),\max(X_{1}-qx_{1,0.5},0), \max(X_{2}-qx_{2,0.3},0),\max(X_{2}-qx_{2,0.5},0), \max(X_{3}-qx_{3,0.3},0),\max(X_{3}-qx_{3,0.5},0), X_{1}X_{2}, X_{2}X_{3}, X_{1}X_{3}, DV \}$, where $(X_1,X_2,X_3)$ denote  three standardized continuous matching variables, $qx_{j,0.3}$ and $qx_{j,0.5}$ are $0.3$ and $0.5$th quantiles of $X_j$ for $j = 1,2,3$, and DV denotes the dummy matching variables except for the variable ``missing Feb 2012 profits" as it is collinear with the variable ``missing Jan 2012 profits." Results reported in this section are similar to those when the sieve basis functions are selected via cross-validation. For more details on the cross-validation method and the related empirical results, see Section H in the supplement.} The results lead to the following observations. First, consistent with the findings in \cite{groh2016}, the naive two-sample $t$-tests show that expanding access to macroinsurance has no significant average effects on monthly consumption and profits. Second, the standard errors in the adjusted $t$-test are lower than those in the naive $t$-test, which is consistent with the finding in \cite{BRS19}. Compared to the standard errors estimated by the naive multiplier bootstrap of the pairs, they are modestly lower for the ATE estimates of macroinsurance on the consumption and slightly larger for those on the profits. More importantly, the standard errors estimated by the IPW multiplier bootstrap are lower than those estimated by two naive methods. These results corroborate our earlier finding that the IPW multiplier bootstrap is an alternative to the approach adopted in \cite{BRS19}, especially when the information on pair identities is unavailable. In addition, the results for the firms' profits also highlight the importance of accounting for the dependence structure within the pairs when estimating the standard errors of the ATE estimates. The ATE of macroinsurance on profits is statistically insignificant based on the naive $t$-test, but is significant at $10\%$ significance level if adjusted or IPW $t$-test is used instead.

\newcolumntype{B}{>{\hsize=1.44\hsize \raggedright\arraybackslash}X}
\newcolumntype{S}{>{\hsize=.89\hsize \centering\arraybackslash}X}

\begin{table}[ht]
	\centering
	\caption{QTEs of Macroinsurance on Consumption and Profits}
	\vspace{1ex}
	\begin{tabularx}{1\textwidth}{BSSSS}\hline \hline
		& Naive & Naive pair & Gradient & IPW  \\
		\hline
		\textit{Panel A. Consumption} & & & & \\
		25\% & -14.33(27.40) & -14.33(25.85) & -14.33(25.32) & -14.33(25.89) \\
		50\% & -4.50(34.12) & -4.50(32.23) & -4.50(32.50) & -4.50(31.80) \\
		75\% & -22.17(61.18) & -22.17(55.78) & -22.17(55.51) & -22.17(56.08) \\
		& & & & \\
		\textit{Panel B. Profit } & & & & \\
		25\% & -33.33(29.76) & -33.33(29.76) & -33.33(29.12) & -33.33(25.94) \\
		50\% & -66.67(59.52) & -66.67(54.42) & -66.67(53.15) & -66.67(51.02) \\
		75\% & -200.00(99.92) & -200.00(89.29) & -200.00(93.54) & -200.00(85.04) \\
		\hline
	\end{tabularx} \\
	\vspace{-1ex}
	\justify
	Notes: The table presents the QTE estimates of the effect of macroinsurance on the monthly consumption and profits at quantiles 25\%, 50\%, and 75\%. Standard errors are in parentheses. The columns ``Naive,"``Naive pair," ``Gradient," and ``IPW" correspond to the results of the naive multiplier bootstrap, the naive multiplier bootstrap of the pairs, the gradient bootstrap, and the IPW multiplier bootstrap, respectively.
	\label{tab:emp_qte}
\end{table}

Table \ref{tab:emp_qte} presents the QTE estimates at quantile indexes 0.25, 0.5, and 0.75 with the standard errors (in parentheses) estimated by four different methods. Specifically, the columns ``Naive,"``Naive pair," ``Gradient," and ``IPW" correspond to the results of the naive multiplier bootstrap, the naive multiplier bootstrap of the pairs, the gradient bootstrap,\footnote{\protect\doublespacing Using the original pair identities and all the matching variables in \cite{groh2016}, we can re-order the pairs according to the procedure described in Section \ref{subsec:gradient computation}. We thank David McKenzie for providing us the Stata code to implement the matching algorithm.} and the IPW multiplier bootstrap, respectively. These results lead to the following three observations.

First, consistent with the theoretical results in Section \ref{sec:bootstrap}, all the standard errors estimated by the gradient bootstrap or the IPW multiplier bootstrap are lower than those estimated by the naive multiplier bootstrap. For example in Panel A, at the 75th percentile, compared with the naive multiplier bootstrap, the gradient and IPW multiplier bootstraps reduce the standard error by 9.3\% and 8.2\%, respectively.

Second, the standard errors estimated by the gradient or IPW multiplier bootstrap are mostly lower than those estimated by the naive multiplier bootstrap of the pairs as well. The magnitude of reduction is modest, which may be because the two terms, $\frac{m_{1,\tau}(X_i)}{f_1(q_1(\tau))}$ and $\frac{m_{0,\tau}(X_i)}{f_0(q_0(\tau))}$ in \eqref{eq:pair_var}, are almost equal in the current dataset,  and thus, their effects on the standard error estimates are canceled.

Third, there is considerable heterogeneity in the effects of macroinsurance on the firm profits. Specifically, the magnitude of the treatment effects of macroinsurance rises as the quantile indexes increase. For example, in Panel B, the treatment effects on the monthly profits at the 25th percentile and the median are negative but not statistically significantly different from zero, whereas the effects at the 75th percentile are statistically significant. The magnitude of the treatment effect increases by over 100\% from the 25th percentile to the median and by about 200\% from the median to the 75th percentile. These findings may imply that expanding access to macroinsurance has small but negative effects on the firm profits in the lower tail of the distribution, and that these negative effects become stronger for upper-ranked microenterprises.

The third observation in Table \ref{tab:emp_qte} indicates that the heterogeneous effects of macroinsurance on the firms' profits are economically substantial. To assess whether these are statistically significant too, Table \ref{tab:emp_qte_dif} reports statistical tests for the heterogeneity of the QTEs. Specifically, we test the null hypotheses that $q(0.50)- q(0.25) = 0$, $q(0.75)- q(0.50) = 0$, and $q(0.75)- q(0.25) = 0$. We find that only the difference between the 75th and 25th QTEs in Panel B is statistically significant at the 10\% significance level. This finding implies that the statistical evidence of heterogeneous treatment effects of macroinsurance on the firm profits is strong only in comparisons of the microenterprises between the lower and upper tails of the distribution.

\begin{table}[ht]
	\centering
	\caption{Tests for the Difference between Two QTEs of Macroinsurance}
	\vspace{1ex}
	\begin{tabularx}{1\textwidth}{BSSSS}\hline \hline
		& Naive & Naive pair & Gradient & IPW  \\
		\hline
		\textit{Panel A. Consumption} & & & &\\
		50\%-25\% & 9.83(30.34) & 9.83(28.66) & 9.83(29.00) & 9.83(29.87) \\
		75\%-50\% & -17.67(51.51) & -17.67(48.30) & -17.67(49.09) & -17.67(48.58) \\
		75\%-25\% & -7.83(58.42) & -7.83(54.17) & -7.83(55.89) & -7.83(55.87) \\
		& & & & \\
		\textit{Panel B. Profit } & & & &\\
		50\%-25\% & -33.33(51.02) & -33.33(51.02) & -33.33(51.02) & -33.33(49.32) \\
		75\%-50\% & -133.33(85.04) & -133.33(81.63) & -133.33(82.91) & -133.33(85.04) \\
		75\%-25\% & -166.67(88.65) & -166.67(85.04) & -166.67(87.16) & -166.67(89.29) \\
		\hline
	\end{tabularx} \\
	\vspace{-1ex}
	\justify
	Notes: The table presents tests for the difference between two QTEs of macroinsurance on the monthly consumption and profits. Standard errors are in parentheses. The columns ``Naive,"``Naive pair," ``Gradient," and ``IPW" correspond to the results of the naive multiplier bootstrap, the naive multiplier bootstrap of the pairs, the gradient bootstrap, and the  IPW multiplier bootstrap, respectively.
	\label{tab:emp_qte_dif}
\end{table}

\section{Conclusion}
\label{sec:concl}

This paper has studied estimation and inference of QTEs under MPDs and developed new bootstrap methods to improve statistical performance. Derivation of the limit distribution of QTE estimators under MPDs reveals that analytic methods of inference based on asymptotic theory requires estimation of two infinite-dimensional nuisance parameters for every quantile index of interest. A further limitation is that both the naive multiplier bootstrap and the naive multiplier bootstrap of the pairs fail to approximate the limit distribution of the QTE estimator as they do not preserve the dependence structure in the original sample. Instead, we propose a gradient bootstrap approach that can consistently approximate the limit distribution of the original estimator and is free of tuning parameters. Implementation of the gradient bootstrap requires knowledge of pair identities. So when such information is unavailable we propose an IPW multiplier bootstrap and show that it consistently approximates the limit distribution of the original QTE estimator. Simulations provide finite sample evidence of these procedures that support the asymptotic findings. An empirical application of these bootstrap methods to the real dataset in \cite{groh2016} shows considerable evidence of heterogeneity in the effects of macroinsurance on firm profits. In both the simulations and the empirical application, the two recommended bootstrap methods of inference perform well in the sense that they usually provide smaller standard errors and greater inferential accuracy than those obtained by naive bootstrap methods.

Two directions for future research are especially evident from the present findings. First, it would be interesting to study inference of the QTEs when data are independent but not identically distributed. Such an assumption is adopted in the linear quantile regression literature to address issues regarding data heteroskedasticity and clustering (see, for example, \citep{chen2015} and \citep{h17}). Second, it would be useful to incorporate data-driven methods such as cross-validation to the selection of the sieve basis functions when implementing the IPW multiplier bootstrap and then develop a procedure for inference orthogonal to the model selection bias introduced by data-driven methods.

\newpage
\appendix

\section{An Illustrative Example to Show the Failure of Bootstrapping Pairs}
\label{sec:example}
In this section, we provide a simple example to illustrate the covariates across pairs in MPDs are not independent. By assigning independent multipliers to pairs, bootstrapping pairs fails to mimic such dependence of covariates across pairs.

%Consider an i.i.d. sample of covariates $\{X_i\}_{i \in [2n]}$ with $\mathbb{E}X_i = 0$. The covariates are grouped into pairs and denoted as $\{X_{\pi(2j-1)},X_{\pi(2j)}  \}_{j \in [n]}$. We note $\pi(\cdot)$ satisfies Assumption \ref{ass:assignment1}(iv). We have

Suppose $Y_i = U_i+X_i$, where $\{U_i,X_i\}_{i \in [2n]}$ is an i.i.d. sequence of random variables such that $U_i$ represents the unobserved heterogeneity and $X_i$ represents the covariate. We further assume, for simplicity, that $\mathbb{E}U_i=\mathbb{E}X_i = 0$ and $U_i \indep X_i.$ $\{Y_{\pi(2j-1)},Y_{\pi(2j)}  \}_{j \in [n]}$ are pairs determined by matching the covariate.

We consider the sample mean estimator $\bar{Y}_n = \frac{1}{2n}\sum_{i \in [2n]}Y_i$. We have
\begin{align*}
\sqrt{n} \bar{Y}_n = \frac{1}{2\sqrt{n}}\sum_{i \in [2n]}Y_i \convD \N(0, \frac{Var(U)}{2}+\frac{Var(X)}{2}).
\end{align*}
\noindent
Note that $\sqrt{n}\bar{Y}_n$ can be written as

\begin{align*}
\sqrt{n}\bar{Y}_n = & \frac{1}{\sqrt{n}}\sum_{j \in [n]}\frac{Y_{\pi(2j-1)}+Y_{\pi(2j)}}{2}\\
= & \frac{1}{\sqrt{n}}\sum_{j \in [n]}\frac{U_{\pi(2j-1)}+U_{\pi(2j)}}{2} +
\frac{1}{\sqrt{n}}\sum_{j \in [n]}\frac{X_{\pi(2j-1)}+X_{\pi(2j)}}{2}.
\end{align*}
\noindent
Given $\{X_i\}_{i \in [2n]}$, $\{U_{\pi(2j-1)}+U_{\pi(2j)}\}_{j \in [n]}$ is an independent sequence, thus

\begin{align*}
\frac{1}{\sqrt{n}}\sum_{j \in [n]}\frac{U_{\pi(2j-1)}+U_{\pi(2j)}}{2}
\convD \N\left(0,\frac{Var(U)}{2}\right).
\end{align*}
\noindent
Because $U \indep X$, we have $\frac{1}{\sqrt{n}}\sum_{j \in [n]}\frac{U_{\pi(2j-1)}+U_{\pi(2j)}}{2} = \frac{1}{2\sqrt{n}}\sum_{i \in [2n]}U_i$ and $\frac{1}{\sqrt{n}}\sum_{j \in [n]}\frac{X_{\pi(2j-1)}+X_{\pi(2j)}}{2}= \frac{1}{2\sqrt{n}}\sum_{i \in [2n]}X_i$ are independent.

Suppose $\{X_{\pi(2j-1)}+X_{\pi(2j)}\}_{j \in [n]}$ are independent across $j$,  $\frac{1}{\sqrt{n}}\sum_{j \in [n]}\frac{X_{\pi(2j-1)}+X_{\pi(2j)}}{2}$ then converges weakly to a normal distribution with zero mean and variance $\sigma^2$ where

\begin{align*}
\sigma^2 = & \plim_{n\rightarrow \infty} \frac{1}{n} \sum_{j \in [n]}\left(\frac{X_{\pi(2j-1)}+X_{\pi(2j)}}{2}\right)^2 \\
= & \plim_{n\rightarrow \infty}\left[\frac{1}{n} \sum_{j \in [n]}\frac{X^2_{\pi(2j-1)} + X^2_{\pi(2j)} + 2X_{\pi(2j-1)} X_{\pi(2j)}}{4}\right] \\
= & \plim_{n\rightarrow \infty}\left[\frac{1}{n} \sum_{j \in [n]}\frac{2X^2_{\pi(2j-1)} + 2X^2_{\pi(2j)} - (X_{\pi(2j-1)} - X_{\pi(2j)})^2}{4} \right]\\
= & \plim_{n\rightarrow \infty}\frac{1}{2n} \sum_{i \in [2n]} X^2_{i} - \plim_{n\rightarrow \infty}\frac{1}{4n}\sum_{j \in [n]}(X_{\pi(2j-1)} - X_{\pi(2j)})^2 \\
= & Var(X) \neq \frac{Var(X)}{2},
\end{align*}
where the last equality holds because of Assumption \ref{ass:assignment1}(iv). This means that $\{X_{\pi(2j-1)}+X_{\pi(2j)}\}_{j \in [n]}$ are not independent across $j$. By assigning independent multipliers to pairs, bootstrapping pairs fails to mimic such dependence of covariates across pairs.\\

\section{Proof of Theorem \ref{thm:est}}
\label{sec:thmest}
Let $u=(u_0, u_1)^\top  \in \Re^2$ and
\begin{align*}
L_n(u,\tau)  = \sum_{i=1}^{2n}\left[\rho_\tau(Y_i - \dot{A}_i^\top \beta(\tau) - \dot{A}_i^\top u/\sqrt{n}) - \rho_\tau(Y_i - \dot{A}_i^\top \beta(\tau))\right].
\end{align*}
Then, by change of variables we have
\begin{align*}
\sqrt{n}(\hat{\beta}(\tau) - \beta(\tau)) = \argmin_u L_n(u,\tau).
\end{align*}
Note that $L_n(u,\tau)$ is convex in $u$ for each $\tau$ and bounded in $\tau$ for each $u$. We divide the proof into three steps. In Step (1), we show that there exists
$$g_n(u,\tau) = - u^\top W_n(\tau) + \frac{u^\top Q(\tau)u}{2}$$
for some $W_n(\tau)$ and $Q(\tau)$ such that for each $u$,
\begin{align*}
\sup_{\tau \in \Upsilon}|L_n(u,\tau) - g_n(u,\tau)| \convP 0,
\end{align*}
and the maximum eigenvalue of $Q(\tau)$ is bounded from above and the minimum eigenvalue of $Q(\tau)$ is bounded away from $0$, uniformly over $\tau \in \Upsilon$. In Step (2), we show $\sup_{\tau \in \Upsilon}||W_n(\tau)||_2= O_p(1)$. Then by \citet[Theorem 2]{K09}, we have
\begin{align*}
\sqrt{n}(\hat{\beta}(\tau) - \beta(\tau)) = [Q(\tau)]^{-1}W_n(\tau) + r_n(\tau),
\end{align*}
where $\sup_{\tau \in \Upsilon}||r_n(\tau)||_2 = o_p(1)$. Last, in Step (3), we establish weak convergence of $[Q(\tau)]^{-1}W_n(\tau)$ uniformly over $\tau \in \Upsilon$. The second element of the limit  process is $\mathcal{B}(\tau)$, as given in Theorem \ref{thm:est}.

\vspace{2mm}
\noindent \textbf{Step (1).} By Knight's identity (\citep{K98}), we have
\begin{align*}
& L_n(u,\tau) \\
= & -\sum_{i=1}^{2n}\frac{u^\top }{\sqrt{n}}\dot{A}_i\left(\tau- 1\{Y_i\leq \dot{A}_i^\top \beta(\tau)\}\right) + \sum_{i=1}^{2n}\int_0^{\frac{\dot{A}_i^\top u}{\sqrt{n}}}\left(1\{Y_i -  \dot{A}_i^\top \beta(\tau)\leq v\} - 1\{Y_i -  \dot{A}_i^\top \beta(\tau)\leq 0\} \right)dv \\
\equiv & -u^\top W_n(\tau) +Q_n(u,\tau),
\end{align*}
where
\begin{align*}
W_n(\tau) = \sum_{i=1}^{2n}\frac{1}{\sqrt{n}}\dot{A}_i\left(\tau- 1\{Y_i\leq \dot{A}_i^\top \beta(\tau)\}\right),
\end{align*}
and
\begin{align}
\label{eq:Qn}
Q_n(u,\tau) = & \sum_{i=1}^{2n}\int_0^{\frac{\dot{A}_i^\prime u}{\sqrt{n}}}\left(1\{Y_i -  \dot{A}_i^\top \beta(\tau)\leq v\} - 1\{Y_i -  \dot{A}_i^\top \beta(\tau)\leq 0\} \right)dv \notag  \\
= & \sum_{i=1}^{2n} A_i\int_0^{\frac{u_0+u_1}{\sqrt{n}}}\left(1\{Y_i(1) - q_1(\tau)\leq v\} - 1\{Y_i(1) - q_1(\tau)\leq 0\}  \right)dv \notag \\
& + \sum_{i=1}^{2n} (1-A_i)\int_0^{\frac{u_0}{\sqrt{n}}}\left(1\{Y_i(0) - q_0(\tau)\leq v\} - 1\{Y_i(0) - q_0(\tau)\leq 0\}  \right)dv \notag \\
\equiv & Q_{n,1}(u,\tau) + Q_{n,0}(u,\tau).
\end{align}
We first consider $Q_{n,1}(u,\tau)$. Let
\begin{align}
\label{eq:H}
H_n(X_i,\tau) = \mathbb{E}\left(\int_0^{\frac{u_0+u_1}{\sqrt{n}}}\left(1\{Y_i(1) -  q_1(\tau)\leq v\} - 1\{Y_i(1) -  q_1(\tau)\leq 0\} \right)dv\biggl|X_i\right).
\end{align}
Then,
\begin{align}
\label{eq:Qn1}
Q_{n,1}(u,\tau) = & \sum_{i=1}^{2n}\frac{H_n(X_i,\tau)}{2} + \sum_{i=1}^{2n} \left(A_i-\frac{1}{2}\right)H_n(X_i,\tau) \notag \\
& + \sum_{i=1}^{2n} A_i \left[\int_0^{\frac{u_0+u_1}{\sqrt{n}}}\left(1\{Y_i(1) -  q_1(\tau)\leq v\} - 1\{Y_i(1) -  q_1(\tau)\leq 0\} \right)dv- H_n(X_i,\tau)\right].
\end{align}
For the first term on the RHS of \eqref{eq:Qn1}, we have, uniformly over $\tau \in \Upsilon$,
\begin{align}
\label{eq:Q11}
\sum_{i=1}^{2n}\frac{H_n(X_i,\tau)}{2} = \frac{1}{4n}\sum_{i=1}^{2n}f_{1}( q_1(\tau)+\tilde{v}|X_i)(u_0+u_1)^2 \convP  \frac{f_{1}(q_1(\tau))(u_0+u_1)^2}{2},
\end{align}
where $\tilde{v}$ is between $0$ and $|u_0+u_1|/\sqrt{n}$ and the convergence holds because
\begin{align*}
\sup_{\tau \in \Upsilon}\frac{1}{2n}\sum_{i=1}^{2n}\left|f_{1}( q_1(\tau)+\tilde{v}|X_i)-f_{1}( q_1(\tau)|X_i)\right| \leq \left(\frac{1}{2n}\sum_{i=1}^{2n}C(X_i)\right) \frac{|u_0+u_1|}{\sqrt{n}} \convP 0
\end{align*}
due to Assumption \ref{ass:reg}, and
\begin{align*}
\sup_{\tau \in \Upsilon}\left|\frac{1}{2n}\sum_{i =1}^{2n}f_1(q_1(\tau)|X_i) - f_1(q_1(\tau))\right| \convP 0.
\end{align*}

Lemma \ref{lem:Q1} shows
\begin{align}
\label{eq:Q12}
& \sup_{\tau \in \Upsilon}\left|\sum_{i=1}^{2n} \left(A_i-\frac{1}{2}\right)H_n(X_i,\tau)\right| =o_p(1),
\end{align}
and
\begin{align}
\label{eq:Q13}
\sup_{\tau \in \Upsilon}&\left|\sum_{i=1}^{2n} A_i \left[\int_0^{\frac{u_0+ u_1}{\sqrt{n}}}\left(1\{Y_i(1) -  q_1(\tau)\leq v\} - 1\{Y_i(1) -  q_1(\tau)\leq 0\} \right)dv- H_n(X_i,\tau)\right]\right| = o_p(1).
\end{align}

Combining \eqref{eq:Qn1}--\eqref{eq:Q13}, we have
\begin{align}
\label{eq:Qn1f}
\sup_{\tau \in \Upsilon}\left|Q_{n,1}(u,\tau) - \frac{f_1(q_1(\tau))(u_0+u_1)^2}{2}\right| = o_p(1).
\end{align}

By a similar argument, we can show that
\begin{align}
\label{eq:Qn0f}
\sup_{\tau \in \Upsilon}\left|Q_{n,0}(u,\tau) - \frac{f_0(q_0(\tau))u_0^2}{2}\right| = o_p(1).
\end{align}

Combining \eqref{eq:Qn1f} and \eqref{eq:Qn0f}, we have
\begin{align*}
Q_n(u,\tau) \convP  \frac{u^\top Q(\tau)u}{2},
\end{align*}
where
\begin{align}
\label{eq:Qqr}
Q(\tau) = \begin{pmatrix}
f_1(q_1(\tau)) + f_0(q_0(\tau)) & f_1(q_1(\tau)) \\
f_1(q_1(\tau)) & f_1(q_1(\tau))
\end{pmatrix}.
\end{align}
Then,
\begin{align*}
\sup_{\tau \in \Upsilon}|L_n(u,\tau) - g_n(u,\tau)| = \sup_{\tau \in \Upsilon}\left|Q_n(u,\tau) - \frac{u^\top Q(\tau)u}{2}\right| = o_p(1).
\end{align*}
Last, because $f_a(q_a(\tau))$ for $a=0,1$ is bounded and bounded away from zero uniformly over $\tau \in \Upsilon$, so are the eigenvalues of $Q(\tau)$ uniformly over $\tau \in \Upsilon$.

\vspace{2mm}
\noindent \textbf{Step (2).}
Let $e_1 = (1,1)^\top $, $e_0 = (1,0)^\top $. Then,
\begin{align}
\label{eq:wn0}
W_n(\tau) = & \sum_{i=1}^{2n}\frac{e_1}{\sqrt{n}}A_i\left(\tau- 1\{Y_i(1) \leq q_1(\tau)\}\right) + \sum_{i=1}^{2n}\frac{e_0}{\sqrt{n}}(1-A_i)\left(\tau- 1\{Y_i(0)\leq q_0(\tau)\}\right) \notag \\
\equiv & e_1 W_{n,1}(\tau) + e_0 W_{n,0}(\tau). \notag \\
\end{align}
Recall $m_{1,\tau}(X_i) = \mathbb{E}(\tau - 1\{Y_{i}(1) \leq q_1(\tau)\}|X_i)$. Denote
$$\eta_{i,1}(\tau) = \tau - 1\{Y_{i}(1) \leq q_1(\tau)\} - m_{1,\tau}(X_i).$$
For $W_{n,1}(\tau)$, we have
\begin{align}
\label{eq:wn}
W_{n,1}(\tau) = \sum_{i=1}^{2n}\frac{A_i}{\sqrt{n}}\eta_{i,1}(\tau) +  \sum_{i=1}^{2n} \frac{1}{2\sqrt{n}}m_{1,\tau}(X_i) + R_1(\tau)
\end{align}
where
\begin{align*}
R_1(\tau) = \sum_{i=1}^{2n} \frac{(A_i - 1/2)}{\sqrt{n}}m_{1,\tau}(X_i).
\end{align*}
By Lemma \ref{lem:R12}, we have
\begin{align*}
\sup_{\tau \in \Upsilon}|R_{1}(\tau)|= o_p(1).
\end{align*}

Next, we focus on the first two terms on the RHS of \eqref{eq:wn}. Note $\{Y_i(1)\}_{i=1}^{2n}$ given $\{X_i\}_{i=1}^{2n}$ is an independent sequence that is also independent of $\{A_i\}_{i=1}^{2n}$. Let $\tilde{Y}_j(1)|\tilde{X}_j$ be distributed according to $Y_{i_j}(1)|X_{i_j}$ where $i_j$ is the $j$-th smallest index in the set $\{i\in [2n]: A_i = 1 \}$ and $\tilde{X}_j = X_{i_j}$. Then, by noticing that $\sum_{i=1}^{2n}A_i = n$, we have
\begin{align}
\label{eq:=d}
\sum_{i=1}^{2n}\frac{A_i}{\sqrt{n}}\eta_{i,1}(\tau)|\{A_i,X_i\}_{i=1}^{2n} \stackrel{d}{=}\sum_{j=1}^{n}\frac{\tilde{\eta}_{j,1}(\tau)}{\sqrt{n}}\biggl| \{\tilde{X}_j\}_{j=1}^n,
\end{align}
where $\tilde{\eta}_{j,1}(\tau) = \tau - 1\{\tilde{Y}_j(1) \leq q_1(\tau)\} - m_{1,\tau}(\tilde{X}_j)$, and given $\{\tilde{X}_j\}_{j=1}^n$, $\{\tilde{\eta}_{j,1}(\tau)\}_{j=1}^n$ is a sequence of independent random variables. Further, denote the conditional distribution of $\tilde{Y}_j(1)$ given $\tilde{X}_j $ as $\mathbb{P}^{(j)}$ and $\Lambda_\tau(x) = F_1(q_1(\tau)|x)(1-F_1(q_1(\tau)|x))$. Then,
\begin{align*}
\frac{1}{n}\sum_{j=1}^n\mathbb{P}^{(j)}(\tilde{\eta}_{j,1}(\tau))^2 = & \; \frac{1}{n}\sum_{j=1}^n\Lambda_\tau(\tilde{X}_j) \\
= & \; \frac{1}{n}\sum_{i=1}^{2n} A_i\Lambda_\tau(X_i) \\
= &\; \frac{1}{2n}\sum_{i=1}^{2n} \Lambda_\tau(X_i) + \frac{1}{2n}\sum_{j=1}^{n} (A_{\pi(2j-1)}-A_{\pi(2j)}) \left[\Lambda_\tau(X_{\pi(2j-1)}) - \Lambda_\tau(X_{\pi(2j)})\right] \\
\convP &\; \mathbb{E}\Lambda_\tau(X_i),
\end{align*}
where the last convergence holds because
\begin{align*}
\frac{1}{2n}\sum_{i=1}^{2n} \Lambda_\tau(X_i) \convP  \mathbb{E}\Lambda_\tau(X_i),
\end{align*}
and
\begin{align*}
\left|\frac{1}{2n}\sum_{j=1}^{n} (A_{\pi(2j-1)}-A_{\pi(2j)}) \left[\Lambda_\tau(X_{\pi(2j-1)}) - \Lambda_\tau(X_{\pi(2j)})\right] \right| \lesssim \frac{1}{2n}\sum_{j=1}^{n}||X_{\pi(2j-1)} - X_{\pi(2j)}||_2 \convP 0.
\end{align*}
In addition, because $\tilde{\eta}_{j,1}(\tau)$ is bounded, the Lyapounov condition holds, i.e.,
\begin{align*}
\frac{1}{n^{3/2}}\sum_{i=1}^n\mathbb{P}^{(j)}|\tilde{\eta}_{j,1}(\tau)|^3 \convP 0.
\end{align*}
Therefore, by the triangular array CLT, for fixed $\tau$, we have
\begin{align*}
\sum_{i=1}^{2n}\frac{A_i}{\sqrt{n}}\eta_{i,1}(\tau)|\{A_i,X_i\}_{i=1}^{2n} \stackrel{d}{=}\sum_{j=1}^{n}\frac{\tilde{\eta}_{j,1}(\tau)}{\sqrt{n}}\biggl| \{\tilde{X}_j\}_{j=1}^n\convD \N(0,\mathbb{E}\Lambda_\tau(X_i)) = O_p(1).
\end{align*}
It is straightforward to extend this result to finite-dimensional convergence by the Cram\'{e}r-Wold device. In particular, the covariance between $\sum_{i=1}^{2n}\frac{A_i}{\sqrt{n}}\eta_{i,1}(\tau)$ and $\sum_{i=1}^{2n}\frac{A_i}{\sqrt{n}}\eta_{i,1}(\tau' )$ conditional on $\{A_i,X_i\}_{i=1}^{2n}$ converges to
\begin{align*}
\min(\tau,\tau' ) - \tau \tau'  - \mathbb{E}m_{1,\tau}(X)m_{1,\tau' }(X).
\end{align*}

Next, we show that the process $\{\sum_{i=1}^{2n}\frac{A_i}{\sqrt{n}}\eta_{i,1}(\tau):\tau \in \Upsilon\}$ is stochastically equicontinuous. Denote $\overline{\mathbb{P}}f = \frac{1}{n}\sum_{j =1}^n \mathbb{P}^{(j)}f$ for a generic function $f$. Let
\begin{align*}
\mathcal{F}_1 = \{ \left[\tau - 1\{Y \leq q_1(\tau)\}\right] -  \left[\tau'  - 1\{Y \leq q_1(\tau' )\}\right] :  \tau,\tau' \in \Upsilon, |\tau - \tau' |\leq \eps\}
\end{align*}
which is a VC-class with a fixed VC-index and has an envelope $F_i = 2$. In addition,
\begin{align*}
\sigma_n^2 = \sup_{f \in \mathcal{F}_1}\overline{\mathbb{P}}f^2 \lesssim \sup_{\tilde{\tau} \in \Upsilon}\frac{1}{n}\sum_{i=1}^n \left[\eps^2+\frac{f_1(q_1(\tilde{\tau})|\tilde{X}_j)\eps}{f_1(q_1(\tilde{\tau}))} \right] \lesssim \eps~a.s.
\end{align*}
Then, by Lemma \ref{lem:max_eq},
\begin{align*}
\mathbb{E}\left[\sup_{\tau, \tau' \in \Upsilon, |\tau-\tau' |\leq \eps}\left|\sum_{j=1}^{n}\frac{\tilde{\eta}_{j,1}(\tau) - \tilde{\eta}_{j,1}(\tau' )}{\sqrt{n}}\right| \biggl| \{\tilde{X}_j\}_{j=1}^n \right] = & \mathbb{E}\left[\Vert\mathbb{P}_n - \overline{\mathbb{P}}\Vert_{\mathcal{F}_1}\biggl| \{\tilde{X}_j\}_{j=1}^n \right] \\
\lesssim & \sqrt{\eps \log(1/\eps)} + \frac{\log(1/\eps)}{\sqrt{n}}~a.s.
\end{align*}
For any $\delta,\eta>0$, we can find an $\eps>0$ such that
\begin{align*}
& \limsup_{n}\mathbb{P}\left(\sup_{\tau, \tau' \in \Upsilon, |\tau-\tau' |\leq \eps}  \left|\sum_{i=1}^{2n}\frac{A_i}{\sqrt{n}}\left(\eta_{i,1}(\tau)-\eta_{i,1}(\tau' ) \right) \right|\geq \delta \right) \\
= & \limsup_{n}\mathbb{E}\mathbb{P}\left(\sup_{\tau, \tau' \in \Upsilon, |\tau-\tau' |\leq \eps}  \left|\sum_{i=1}^{2n}\frac{A_i}{\sqrt{n}}\left(\eta_{i,1}(\tau)-\eta_{i,1}(\tau' ) \right) \right|\geq \delta \biggl| \{A_i,X_i\}_{i=1}^{2n}\right) \\
\leq & \limsup_{n}\mathbb{E} \frac{\mathbb{E}\left[\sup_{\tau, \tau' \in \Upsilon, |\tau-\tau' |\leq \eps}\left|\sum_{j=1}^{n}\frac{\tilde{\eta}_{j,1}(\tau) - \tilde{\eta}_{j,1}(\tau' )}{\sqrt{n}}\right| \biggl| \{\tilde{X}_j\}_{j=1}^n \right]}{\delta} \\
\lesssim & \limsup_{n}\frac{\sqrt{\eps \log(1/\eps)} + \frac{\log(1/\eps)}{\sqrt{n}}}{\delta} \leq \eta,
\end{align*}
where the last inequality holds because $\eps \log(1/\eps) \rightarrow 0$ as $\eps\rightarrow 0$. This implies $\{\sum_{i=1}^{2n}\frac{A_i}{\sqrt{n}}\eta_{i,1}(\tau):\tau \in \Upsilon\}$ is stochastically equicontinuous, and hence $\sup_{\tau \in \Upsilon}|\sum_{i=1}^{2n}\frac{A_i}{\sqrt{n}}\eta_{i,1}(\tau)| = O_p(1)$.

In addition, note $\{X_i\}_{i=1}^{2n}$ are i.i.d. and $\{m_{1,\tau}(x):\tau \in \Upsilon\}$ is Donsker, then $\sup_{\tau \in \Upsilon}|\sum_{i=1}^{2n} \frac{1}{2\sqrt{n}}m_{1,\tau}(X_i)|=O_p(1)$. This leads to the desired result that $\sup_{\tau \in \Upsilon}|W_{n,1}(\tau) |= O_p(1)$. In the same manner, we can show that $\sup_{\tau \in \Upsilon}|W_{n,0}(\tau) |= O_p(1)$, which leads to the desired result that $\sup_{\tau \in \Upsilon}||W_{n}(\tau) ||_2= O_p(1)$.

\vspace{2mm}
\noindent \textbf{Step (3).} Recall $m_{0,\tau}(X_i) = \mathbb{E}(\tau - 1\{Y_{i}(0) \leq q_0(\tau)\}|X_i)$ and let $\eta_{i,0}(\tau) = \tau - 1\{Y_{i}(0) \leq q_0(\tau)\} - m_{0,\tau}(X_i)$. Then, based on the previous two steps, we have
\begin{align}
\label{eq:beta}
\sqrt{n}(\hat{\beta}(\tau) - \beta(\tau)) = Q^{-1}(\tau)\begin{pmatrix}
1 & 1 \\
1 & 0
\end{pmatrix} \begin{pmatrix} \sum_{i=1}^{2n}\frac{A_i}{\sqrt{n}}\eta_{i,1}(\tau) +  \sum_{i=1}^{2n} \frac{1}{2\sqrt{n}}m_{1,\tau}(X_i)\\
\sum_{i=1}^{2n}\frac{1-A_i}{\sqrt{n}}\eta_{i,0}(\tau) +  \sum_{i=1}^{2n} \frac{1}{2\sqrt{n}}m_{0,\tau}(X_i)
\end{pmatrix} + R(\tau),
\end{align}
where $\sup_{\tau \in \Upsilon} ||R(\tau)||_2 = o_p(1)$. In addition, we have already established the stochastic equicontinuity and finite-dimensional convergence of
\begin{align*}
\begin{pmatrix} \sum_{i=1}^{2n}\frac{A_i}{\sqrt{n}}\eta_{i,1}(\tau) +  \sum_{i=1}^{2n} \frac{1}{2\sqrt{n}}m_{1,\tau}(X_i)\\
\sum_{i=1}^{2n}\frac{1-A_i}{\sqrt{n}}\eta_{i,0}(\tau) +  \sum_{i=1}^{2n} \frac{1}{2\sqrt{n}}m_{0,\tau}(X_i)
\end{pmatrix}.
\end{align*}
Thus, in order to derive the weak limit of $\sqrt{n}(\hat{\beta}(\tau) - \beta(\tau))$ uniformly over $\tau \in \Upsilon$, it suffices to consider its covariance kernel. First, note that, by construction, $$\sum_{i=1}^{2n}\frac{A_i}{\sqrt{n}}\eta_{i,1}(\tau) \indep \sum_{i=1}^{2n}\frac{1-A_i}{\sqrt{n}}\eta_{i,0}(\tau' )\biggl| \{A_i,X_i\}_{i \in [2n]}$$ for any $(\tau,\tau' ) \in \Upsilon$. Second, note that  $\sum_{i=1}^{2n}\frac{A_i}{\sqrt{n}}\eta_{i,1}(\tau)$ is asymptotically independent of $ \sum_{i=1}^{2n} \frac{1}{2\sqrt{n}}m_{1,\tau' }(X_i)$. To see this, let $(s,t) \in \Re^2$, then
\begin{align*}
& \mathbb{P}\left(\sum_{i=1}^{2n}\frac{A_i}{\sqrt{n}}\eta_{i,1}(\tau)\leq t,\sum_{i=1}^{2n} \frac{1}{2\sqrt{n}}m_{1,\tau' }(X_i)\leq s \right) \\
= & \mathbb{E}\left\{\mathbb{P}\left(\sum_{i=1}^{2n}\frac{A_i}{\sqrt{n}}\eta_{i,1}(\tau)\leq t \biggl|\{A_i,X_i\}_{i=1}^{2n}\right)1\left\{\sum_{i=1}^{2n} \frac{1}{2\sqrt{n}}m_{1,\tau' }(X_i)\leq s  \right\}\right\} \\
= & \Phi(t/\sqrt{\mathbb{E}\Lambda_\tau(X_i)})\mathbb{P}\left(\sum_{i=1}^{2n} \frac{1}{2\sqrt{n}}m_{1,\tau' }(X_i)\leq s \right) \\
& + \mathbb{E}\biggl\{\left[\mathbb{P}\left(\sum_{i=1}^{2n}\frac{A_i}{\sqrt{n}}\eta_{i,1}(\tau)\leq t\biggl|\{A_i,X_i\}_{i=1}^{2n}\right) - \Phi(t/\sqrt{\mathbb{E}\Lambda_\tau(X_i)})\right]1\left\{\sum_{i=1}^{2n} \frac{1}{2\sqrt{n}}m_{1,\tau' }(X_i)\leq s   \right\}\biggr\} \\
\rightarrow & \Phi(t/\sqrt{\mathbb{E}\Lambda_\tau(X_i)})\Phi(s/\sqrt{\mathbb{E}m_{1,\tau}^2(X_i)/2}) \\
= & \lim_{n \rightarrow \infty}\mathbb{P}\left( \sum_{i=1}^{2n}\frac{A_i}{\sqrt{n}}\eta_{i,1}(\tau)\leq t\right)\mathbb{P}\left( \sum_{i=1}^{2n} \frac{1}{2\sqrt{n}}m_{1,\tau' }(X_i)\leq s \right),
\end{align*}
where $\Phi(\cdot)$ is the standard normal CDF and the second last line holds due to the fact that
\begin{align*}
\mathbb{P}\left(\sum_{i=1}^{2n}\frac{A_i}{\sqrt{n}}\eta_{i,1}(\tau)\leq t\biggl|\{A_i,X_i\}_{i=1}^{2n}\right) - \Phi(t/\sqrt{\mathbb{E}\Lambda_\tau(X_i)}) \convP 0.
\end{align*}

We can extend the independence result to multiple $\tau$ and $\tau' $, implying that the two stochastic processes
\begin{align*}
\left\{\sum_{i=1}^{2n}\frac{A_i}{\sqrt{n}}\eta_{i,1}(\tau): \tau \in \Upsilon \right\} \quad \text{and} \quad \left\{\sum_{i=1}^{2n} \frac{1}{2\sqrt{n}}m_{1,\tau}(X_i): \tau \in \Upsilon \right\}
\end{align*}
are asymptotically independent. For the same reason, we can show
\begin{align*}
\left\{\left(\sum_{i=1}^{2n}\frac{A_i}{\sqrt{n}}\eta_{i,1}(\tau),\sum_{i=1}^{2n}\frac{1-A_i}{\sqrt{n}}\eta_{i,0}(\tau)\right): \tau \in \Upsilon \right\} \quad \text{and} \quad \left\{\left(\sum_{i=1}^{2n} \frac{1}{2\sqrt{n}}m_{1,\tau}(X_i),\sum_{i=1}^{2n} \frac{1}{2\sqrt{n}}m_{0,\tau}(X_i)\right): \tau \in \Upsilon \right\}
\end{align*}
are asymptotically independent. Last, it is tedious but straightforward to show that, uniformly over $\tau \in \Upsilon$,
\begin{align*}
\left(\sum_{i=1}^{2n}\frac{A_i}{\sqrt{n}}\eta_{i,1}(\tau),\sum_{i=1}^{2n}\frac{1-A_i}{\sqrt{n}}\eta_{i,0}(\tau)\right)\biggl| \{A_i,X_i\}_{i \in [2n]} \convD \tilde{\mathcal{B}}_1(\tau),
\end{align*}
and
\begin{align*}
\left(\sum_{i=1}^{2n} \frac{1}{2\sqrt{n}}m_{1,\tau}(X_i),\sum_{i=1}^{2n} \frac{1}{2\sqrt{n}}m_{0,\tau}(X_i)\right)\convD \tilde{\mathcal{B}}_2(\tau),
\end{align*}
where $\tilde{\mathcal{B}}_1(\tau)$ and $\tilde{\mathcal{B}}_2(\tau)$ are two Gaussian processes with covariance kernels
\begin{align}
\label{eq:sigmatilde1}
\tilde{\Sigma}_1(\tau,\tau' ) =
\begin{pmatrix}
\mathbb{E}\left[\min(\tau,\tau' ) - \tau \tau'  - \mathbb{E}m_{1,\tau}(X)m_{1,\tau' }(X)  \right] & 0 \\
0 & \mathbb{E}\left[\min(\tau,\tau' ) - \tau \tau'  - \mathbb{E}m_{0,\tau}(X)m_{0,\tau' }(X)  \right]
\end{pmatrix}
\end{align}
and
\begin{align}
\label{eq:sigmatilde2}
\tilde{\Sigma}_2(\tau,\tau' ) =
\frac{1}{2}\begin{pmatrix}
\mathbb{E}m_{1,\tau}(X)m_{1,\tau' }(X) & \mathbb{E}m_{1,\tau}(X)m_{0,\tau' }(X) \\
\mathbb{E}m_{1,\tau' }(X)m_{0,\tau}(X) & \mathbb{E}m_{0,\tau}(X)m_{0,\tau' }(X)
\end{pmatrix}, \quad \text{respectively.}
\end{align}
This implies
$\sqrt{n}(\hat{\beta}(\tau) - \beta(\tau)) \convD \tilde{\mathcal{B}}(\tau)$, where $\tilde{\mathcal{B}}(\tau)$ is a tight Gaussian process with covariance kernel
\begin{align*}
\tilde{\Sigma}(\tau,\tau' ) = Q^{-1}(\tau)\begin{pmatrix}
1 & 1 \\
1 & 0
\end{pmatrix} \left(\tilde{\Sigma}_1(\tau,\tau' )+\tilde{\Sigma}_2(\tau,\tau' ) \right) \left[Q^{-1}(\tau' )\begin{pmatrix}
1 & 1 \\
1 & 0
\end{pmatrix} \right]^\top .
\end{align*}
Focusing on the second element of $\hat{\beta}(\tau)$, we have
\begin{align*}
\sqrt{n}(\hat{q}(\tau) - q(\tau)) \convD \mathcal{B}(\tau),
\end{align*}
where $\mathcal{B}(\tau)$ is a tight Gaussian process with covariance kernel
\begin{align*}
\Sigma(\tau,\tau' ) = & \frac{\min(\tau, \tau' ) - \tau\tau'  - \mathbb{E}m_{1,\tau}(X)m_{1,\tau' }(X)}{f_1(q_1(\tau))f_1(q_1(\tau' ))} +  \frac{\min(\tau, \tau' ) - \tau\tau'  - \mathbb{E}m_{0,\tau}(X)m_{0,\tau' }(X)}{f_0(q_0(\tau))f_0(q_0(\tau' ))} \\
& + \frac{1}{2}\mathbb{E}\left(\frac{m_{1,\tau}(X)}{f_1(q_1(\tau))} - \frac{m_{0,\tau}(X)}{f_0(q_0(\tau))}\right)\left(\frac{m_{1,\tau' }(X)}{f_1(q_1(\tau' ))} - \frac{m_{0,\tau' }(X)}{f_0(q_0(\tau' ))}\right).
\end{align*}

\section{Proof of Theorem \ref{thm:weight}}
\label{sec:weight_pf}
Let $u=(u_0, u_1)^\top  \in \Re^2$ and
\begin{align*}
L^m_n(u,\tau)  = \sum_{i=1}^{2n}\xi_i\left[\rho_\tau(Y_i - \dot{A}_i^\top \beta(\tau) - \dot{A}_i^\top u/\sqrt{n}) - \rho_\tau(Y_i - \dot{A}_i^\top \beta(\tau))\right],
\end{align*}
where $\{\xi\}_{i \in [2n]}$ is defined in Assumption \ref{ass:weight}. Then, by change of variables we have
\begin{align*}
\sqrt{n}(\hat{\beta}^m(\tau) - \beta(\tau)) = \argmin_u L^m_n(u,\tau).
\end{align*}
Notice that $L^m_n(u,\tau)$ is convex in $u$ for each $\tau$ and bounded in $\tau$ for each $u$. In the following, we divide the proof into three steps. In Step (1), we show that there exists
$$g^m_n(u,\tau) = - u^\top W^m_n(\tau) + \frac{u^\top Q(\tau)u}{2}$$
for some $W^m_n(\tau)$ such that for each $u$,
\begin{align*}
\sup_{\tau \in \Upsilon}|L^m_n(u,\tau) - g^m_n(u,\tau)| \convP 0
\end{align*}
and $Q(\tau)$ is defined in the proof of Theorem \ref{thm:est}. In Step (2), we show $\sup_{\tau \in \Upsilon}||W^m_n(\tau)||_2= O_p(1)$. Then by \citet[Theorem 2]{K09}, we have
\begin{align*}
\sqrt{n}(\hat{\beta}^m(\tau) - \beta(\tau)) = [Q(\tau)]^{-1}W^m_n(\tau) + r_n(\tau),
\end{align*}
where $\sup_{\tau \in \Upsilon}||r_n(\tau)||_2 = o_p(1)$. Last, in Step (3), we establish the weak convergence of
\begin{align*}
\sqrt{n}(\hat{\beta}^m(\tau) - \hat{\beta}(\tau))
\end{align*}
conditionally on data. The proof of the first two steps are the same as those in Section \ref{sec:pair_pf} with $\xi_i$ replaced by $\xi_i^m$, and thus, is omitted for brevity. In the following, we focus on the third step.

\vspace{1mm}
\noindent \textbf{Step (3).} Based on the previous two steps, we have
\begin{align}
\label{eq:betaw}
\sqrt{n}(\hat{\beta}^m(\tau) - \beta(\tau)) = Q^{-1}(\tau)\begin{pmatrix}
1 & 1 \\
1 & 0
\end{pmatrix} \begin{pmatrix} \sum_{i=1}^{2n}\frac{A_i}{\sqrt{n}}\eta^m_{i,1}(\tau) +  \sum_{i=1}^{2n} \frac{1}{2\sqrt{n}}m_{1,\tau}(X_i)\\
\sum_{i=1}^{2n}\frac{1-A_i}{\sqrt{n}}\eta^m_{i,0}(\tau) +  \sum_{i=1}^{2n} \frac{1}{2\sqrt{n}}m_{0,\tau}(X_i)
\end{pmatrix} + R^m(\tau),
\end{align}
where $\eta_{i,a}^m(\tau) = \xi_i(\tau - 1\{Y_i(a) \leq q_a(\tau)\}) - m_{a,\tau}(X_i)$, $\sup_{\tau \in \Upsilon} ||R^m(\tau)||_2 = o_p(1)$, and $\sqrt{n}(\hat{\beta}^m(\tau) - \beta(\tau))$ is stochastically equicontinuous.
Taking the difference between \eqref{eq:beta} and \eqref{eq:betaw}, we have
\begin{align}
\label{eq:betaw2}
\sqrt{n}(\hat{\beta}^m(\tau) - \hat{\beta}(\tau)) = Q^{-1}(\tau)\begin{pmatrix}
1 & 1 \\
1 & 0
\end{pmatrix} \begin{pmatrix} \sum_{i=1}^{2n}\frac{A_i}{\sqrt{n}}(\xi_i - 1)(\tau - 1\{Y_i(1) \leq q_1(\tau)\}) \\
\sum_{i=1}^{2n}\frac{1-A_i}{\sqrt{n}}(\xi_i - 1)(\tau - 1\{Y_i(0) \leq q_0(\tau)\})
\end{pmatrix} + R^*(\tau),
\end{align}
where $\sup_{\tau \in \Upsilon}||R^*(\tau)||_2 = o_p(1)$. In addition, because both $\sqrt{n}(\hat{\beta}^m(\tau) - \beta(\tau))$ and $\sqrt{n}(\hat{\beta}(\tau) - \beta(\tau))$ are stochastically equicontinuous, so be $\sqrt{n}(\hat{\beta}^m(\tau) - \hat{\beta}(\tau))$. Then by the Markov inequality, $\sqrt{n}(\hat{\beta}^m(\tau) - \hat{\beta}(\tau))$ is stochastically equicontinuous conditionally on data as well. In order to derive the limit  distribution of  $\sqrt{n}(\hat{\beta}^m(\tau) - \hat{\beta}(\tau))$ conditionally on data, we need only compute the covariance kernel. Note that
\small
\begin{align*}
& \mathbb{E}\left[\begin{pmatrix} \sum_{i=1}^{2n}\frac{A_i}{\sqrt{n}}(\xi_i - 1)(\tau - 1\{Y_i(1) \leq q_1(\tau)\}) \\
\sum_{i=1}^{2n}\frac{1-A_i}{\sqrt{n}}(\xi_i - 1)(\tau - 1\{Y_i(0) \leq q_0(\tau)\})
\end{pmatrix}\begin{pmatrix} \sum_{i=1}^{2n}\frac{A_i}{\sqrt{n}}(\xi_i - 1)(\tau'  - 1\{Y_i(1) \leq q_1(\tau' )\}) \\
\sum_{i=1}^{2n}\frac{1-A_i}{\sqrt{n}}(\xi_i - 1)(\tau'  - 1\{Y_i(0) \leq q_0(\tau' )\})
\end{pmatrix}^\top \biggl|Data\right] \\
= & \frac{1}{n}\sum_{i=1}^{2n}\begin{pmatrix}
A_i(\tau - 1\{Y_i(1) \leq q_1(\tau)\})(\tau'  - 1\{Y_i(1) \leq q_1(\tau' )\}) & 0 \\
0 & (1-A_i)(\tau - 1\{Y_i(0) \leq q_0(\tau)\})(\tau'  - 1\{Y_i(0) \leq q_0(\tau' )\})
\end{pmatrix}.
\end{align*}
\normalsize
For the $(1,1)$ entry, we have
\begin{align*}
& \frac{1}{n}\sum_{i=1}^{2n}A_i(\tau - 1\{Y_i(1) \leq q_1(\tau)\})(\tau'  - 1\{Y_i(1) \leq q_1(\tau' )\}) \\
&=  \frac{1}{n}\sum_{i =1}^{2n}A_i \eta_{1,i}(\tau)\eta_{1,i}(\tau' ) + \frac{1}{n}\sum_{i =1}^{2n}A_i \eta_{1,i}(\tau) m_{1,\tau' }(X_i)+\frac{1}{n}\sum_{i =1}^{2n}A_i \eta_{1,i}(\tau' ) m_{1,\tau}(X_i)+ \frac{1}{n}\sum_{i =1}^{2n}A_i m_{1,\tau}(X_i)m_{1,\tau' }(X_i).
\end{align*}
Note that
\begin{align}
\label{eq:etaeta}
\frac{1}{n}\sum_{i =1}^{2n}A_i \eta_{1,i}(\tau)\eta_{1,i}(\tau' ) \stackrel{d}{=} & \frac{1}{n}\sum_{j =1}^{n} \tilde{\eta}_{1,j}(\tau)\tilde{\eta}_{1,j}(\tau' ) \notag \\
\convP &\; \lim_n \frac{1}{n}\sum_{j =1}^{n}(F_1(q_1(\min(\tau, \tau' ))|\tilde{X}_j) - F_1(q_1(\tau)|\tilde{X}_j)F_1(q_1(\tau' )|\tilde{X}_j))  \notag \\
= &\; \min(\tau, \tau' ) - \mathbb{E}F_1(q_1(\tau)|X_i)F_1(q_1(\tau' )|X_i).
\end{align}
Lemma \ref{lem:w1} shows
\begin{align*}
\frac{1}{n}\sum_{i =1}^{2n}A_i \eta_{1,i}(\tau) m_{1,\tau' }(X_i) \convP 0
\end{align*}
and
\begin{align*}
\frac{1}{n}\sum_{i =1}^{2n}A_i \eta_{1,i}(\tau' ) m_{1,\tau}(X_i) \convP 0.
\end{align*}
Lemma \ref{lem:4} implies
\begin{align*}
\frac{1}{n}\sum_{i =1}^{2n}A_i m_{1,\tau}(X_i)m_{1,\tau' }(X_i) = & \frac{1}{2n}\sum_{i =1}^{2n} m_{1,\tau}(X_i)m_{1,\tau' }(X_i) + \frac{1}{n}\sum_{i =1}^{2n}(A_i -\frac{1}{2})m_{1,\tau}(X_i)m_{1,\tau' }(X_i)\\ \convP & \ \mathbb{E}m_{1,\tau}(X_i)m_{1,\tau' }(X_i).
\end{align*}
This means
\begin{align*}
\frac{1}{n}\sum_{i=1}^{2n}A_i(\tau - 1\{Y_i(1) \leq q_1(\tau)\})(\tau'  - 1\{Y_i(1) \leq q_1(\tau' )\}) \convP \min(\tau, \tau' ) - \tau\tau' .
\end{align*}
For the same reason,
\begin{align*}
\frac{1}{n}\sum_{i=1}^{2n}(1-A_i)(\tau - 1\{Y_i(0) \leq q_0(\tau)\})(\tau'  - 1\{Y_i(0) \leq q_0(\tau' )\} \convP \min(\tau, \tau' ) - \tau\tau' .
\end{align*}
Then, for the second element $\hat{\beta}_1^m(\tau)$ of $\hat{\beta}^m(\tau)$, conditional on the data, we have
\begin{align*}
\sqrt{n}(\hat{\beta}^m_1(\tau) - \hat{\beta}_1(\tau)) \convD \mathcal{B}^m(\tau),
\end{align*}
where $\mathcal{B}^m(\tau)$ is a Gaussian process with covariance kernel
\begin{align*}
\Sigma^\dagger(\tau,\tau' ) = \frac{ \min(\tau,\tau' ) - \tau \tau'  }{f_1(q_1(\tau))f_1(q_1(\tau' ))} + \frac{ \min(\tau,\tau' ) - \tau \tau'  }{f_0(q_0(\tau))f_0(q_0(\tau' ))}.
\end{align*}

\section{Proof of Theorem \ref{thm:pair}}
\label{sec:pair_pf}
Let $u=(u_0, u_1)^\top  \in \Re^2$ and
\begin{align*}
L^p_n(u,\tau)  = \sum_{i=1}^{2n}\xi_i^p\left[\rho_\tau(Y_i - \dot{A}_i^\top \beta(\tau) - \dot{A}_i^\top u/\sqrt{n}) - \rho_\tau(Y_i - \dot{A}_i^\top \beta(\tau))\right],
\end{align*}
where $\xi_i^p$ is defined in Assumption \ref{ass:pair_boot}. Then, by change of variables we have
\begin{align*}
\sqrt{n}(\hat{\beta}^p(\tau) - \beta(\tau)) = \argmin_u L^p_n(u,\tau).
\end{align*}
Notice that $L^p_n(u,\tau)$ is convex in $u$ for each $\tau$ and bounded in $\tau$ for each $u$. In the following, we divide the proof into three steps. In Step (1), we show that there exists
$$g^p_n(u,\tau) = - u^\top W^p_n(\tau) + \frac{u^\top Q(\tau)u}{2}$$
for some $W_n^p(\tau)$ such that for each $u$,
\begin{align*}
\sup_{\tau \in \Upsilon}|L^p_n(u,\tau) - g^p_n(u,\tau)| \convP 0
\end{align*}
and $Q(\tau)$ is defined in the proof of Theorem \ref{thm:est}. In Step (2), we show $\sup_{\tau \in \Upsilon}||W^p_n(\tau)||_2=O_p(1)$. Then by \citet[Theorem 2]{K09}, we have
\begin{align*}
\sqrt{n}(\hat{\beta}^p(\tau) - \beta(\tau)) = [Q(\tau)]^{-1}W^p_n(\tau) + r_n(\tau),
\end{align*}
where $\sup_{\tau \in \Upsilon}||r_n(\tau)||_2 = o_p(1)$. Last, in Step (3), we establish the weak convergence of
\begin{align*}
\sqrt{n}(\hat{\beta}^p(\tau) - \hat{\beta}(\tau))
\end{align*}
conditionally on data.
%
%$[Q(\tau)]^{-1}W_n(\tau)$, uniformly over $\tau \in \Upsilon$. The second element of the limit  process is $\mathcal{B}(\tau)$ stated in Theorem \ref{thm:est}.

\vspace{2mm}
\noindent \textbf{Step (1).} We have
\begin{align*}
& L^p_n(u,\tau) =  -u^\top W^p_n(\tau) +Q^p_n(u,\tau),
\end{align*}
where
\begin{align*}
W^p_n(\tau) = \sum_{i=1}^{2n}\frac{\xi_i^p}{\sqrt{n}}\dot{A}_i\left(\tau- 1\{Y_i\leq \dot{A}_i^\top \beta(\tau)\}\right)
\end{align*}
and
\begin{align}
\label{eq:Qnm}
Q^p_n(u,\tau) = & \sum_{i=1}^{2n}\xi_i^p\int_0^{\frac{\dot{A}_i^\prime u}{\sqrt{n}}}\left(1\{Y_i -  \dot{A}_i^\top \beta(\tau)\leq v\} - 1\{Y_i -  \dot{A}_i^\top \beta(\tau)\leq 0\} \right)dv \notag  \\
= & \sum_{i=1}^{2n} \xi_i^pA_i\int_0^{\frac{u_0+u_1}{\sqrt{n}}}\left(1\{Y_i(1) - q_1(\tau)\leq v\} - 1\{Y_i(1) - q_1(\tau)\leq 0\}  \right)dv \notag \\
& + \sum_{i=1}^{2n} \xi_i^p(1-A_i)\int_0^{\frac{u_0}{\sqrt{n}}}\left(1\{Y_i(0) - q_0(\tau)\leq v\} - 1\{Y_i(0) - q_0(\tau)\leq 0\}  \right)dv \notag \\
\equiv & Q^p_{n,1}(u,\tau) + Q^p_{n,0}(u,\tau).
\end{align}
We first consider $Q^p_{n,1}(u,\tau)$. Note
\begin{align}
\label{eq:Hm}
H_n(X_i,\tau) = & \ \mathbb{E}\xi_i^p\left(\int_0^{\frac{u_0+u_1}{\sqrt{n}}}\left(1\{Y_i(1) -  q_1(\tau)\leq v\} - 1\{Y_i(1) -  q_1(\tau)\leq 0\} \right)dv \biggl|X_i\right) \notag \\
= & \ \mathbb{E}\left(\int_0^{\frac{u_0+u_1}{\sqrt{n}}}\left(1\{Y_i(1) -  q_1(\tau)\leq v\} - 1\{Y_i(1) -  q_1(\tau)\leq 0\} \right)dv\biggl|X_i\right).
\end{align}
Then,
\begin{align}
\label{eq:Qn1m}
Q^p_{n,1}(u,\tau) = & \sum_{i=1}^{2n}\frac{H_n(X_i,\tau)}{2} + \sum_{i=1}^{2n} \left(A_i-\frac{1}{2}\right)H_n(X_i,\tau) \notag \\
& + \sum_{i=1}^{2n} A_i \left[\xi_i^p\int_0^{\frac{u_0+u_1}{\sqrt{n}}}\left(1\{Y_i(1) -  q_1(\tau)\leq v\} - 1\{Y_i(1) -  q_1(\tau)\leq 0\} \right)dv- H_n(X_i,\tau)\right].
\end{align}
By \eqref{eq:Q11}, we have, uniformly over $\tau \in \Upsilon$,
\begin{align*}
\sum_{i=1}^{2n}\frac{H_n(X_i,\tau)}{2}  \convP  \frac{f_{1}(q_1(\tau))(u_0+u_1)^2}{2}.
\end{align*}
In addition, \eqref{eq:Q12} implies
\begin{align*}
\sup_{\tau \in \Upsilon}\left|\sum_{i=1}^{2n} \left(A_i-\frac{1}{2}\right)H_n(X_i,\tau)\right| =o_p(1).
\end{align*}
Last, Lemma \ref{lem:Q1m} implies
\begin{align*}
\sup_{\tau \in \Upsilon}&\left|\sum_{i=1}^{2n} A_i\left[ \xi_i^p\int_0^{\frac{u_0+ u_1}{\sqrt{n}}}\left(1\{Y_i(1) -  q_1(\tau)\leq v\} - 1\{Y_i(1) -  q_1(\tau)\leq 0\} \right)dv- H_n(X_i,\tau)\right]\right| = o_p(1).
\end{align*}
Combining the above results, we have
\begin{align}
\label{eq:Qn1fm}
\sup_{\tau \in \Upsilon}\left|Q^p_{n,1}(u,\tau) - \frac{f_1(q_1(\tau))(u_0+u_1)^2}{2}\right| = o_p(1).
\end{align}

By a similar argument, we can show that
\begin{align}
\label{eq:Qn0fm}
\sup_{\tau \in \Upsilon}\left|Q^p_{n,0}(u,\tau) - \frac{f_0(q_0(\tau))u_0^2}{2}\right| = o_p(1).
\end{align}
Combining \eqref{eq:Qn1fm} and \eqref{eq:Qn0fm}, we have
\begin{align*}
\sup_{\tau \in \Upsilon}|L^p_n(u,\tau) - g^p_n(u,\tau)| = \sup_{\tau \in \Upsilon}\left|Q^p_n(u,\tau) - \frac{u^\top Q(\tau)u}{2}\right| = o_p(1).
\end{align*}

\vspace{2mm}
\noindent \textbf{Step (2).} We have
\begin{align}
\label{eq:wn0m}
W^p_n(\tau) = & \sum_{i=1}^{2n}\frac{e_1}{\sqrt{n}}\xi_i^pA_i\left(\tau- 1\{Y_i(1) \leq q_1(\tau)\}\right) + \sum_{i=1}^{2n}\frac{e_0}{\sqrt{n}}(1-A_i)\xi_i^p\left(\tau- 1\{Y_i(0)\leq q_0(\tau)\}\right) \notag \\
\equiv & e_1 W^p_{n,1}(\tau) + e_0 W^p_{n,0}(\tau). \notag \\
\end{align}
Recall $m_{1,\tau}(X_i) = \mathbb{E}(\tau - 1\{Y_{i}(1) \leq q_1(\tau)\}|X_i)$, $e_1 = (1,1)^\top $, and $e_0 = (1,0)^\top $, and denote
$$\eta^p_{i,1}(\tau) = \xi_i^p(\tau - 1\{Y_{i}(1) \leq q_1(\tau)\}) - m_{1,\tau}(X_i).$$
Then, for $W^p_{n,1}(\tau)$, we have
\begin{align}
\label{eq:wnm}
W^p_{n,1}(\tau) = \sum_{i=1}^{2n}\frac{A_i}{\sqrt{n}}\eta_{i,1}^p(\tau) +  \sum_{i=1}^{2n} \frac{1}{2\sqrt{n}}m_{1,\tau}(X_i) + R_1(\tau),
\end{align}
where by Lemma \ref{lem:R12},
\begin{align*}
\sup_{\tau \in \Upsilon}|R_{1}(\tau)| = \sup_{\tau \in \Upsilon}\left|\sum_{i=1}^{2n} \frac{(A_i - 1/2)}{\sqrt{n}}m_{1,\tau}(X_i)\right| = o_p(1).
\end{align*}

The second term on the RHS of \eqref{eq:wnm} is stochastically equicontinuous and uniformly bounded in probability. Next, we focus on the first term. Similar to the argument in Step (2) in Section \ref{sec:thmest}, we have
\begin{align}
\label{eq:=dm}
\sum_{i=1}^{2n}\frac{A_i}{\sqrt{n}}\eta^p_{i,1}(\tau)|\{A_i,X_i\}_{i=1}^{2n} \stackrel{d}{=}\sum_{j=1}^{n}\frac{\tilde{\eta}^p_{j,1}(\tau)}{\sqrt{n}}\biggl| \{\tilde{X}_j\}_{j=1}^n,
\end{align}
where $\tilde{\eta}^p_{j,1}(\tau) = \tilde{\xi}_j(\tau - 1\{\tilde{Y}_j(1) \leq q_1(\tau)\}) - m_{1,\tau}(\tilde{X}_j)$, $(\tilde{Y}_j(1),\tilde{X}_j)$ are as defined before, $\tilde{\xi}_j = \xi_{i_j}^p$, $i_j$ is the $j$-th smallest index in the set $\{i\in [2n]: A_i = 1 \}$, and given $\{\tilde{X}_j\}_{j=1}^n$, $\{\tilde{\eta}^p_{j,1}(\tau)\}_{j=1}^n$ is a sequence of independent random variables.\footnote{Although units in the same pair share the same bootstrap weight, the weights across pairs are still independent.} Further, denote the conditional distribution of $(\tilde{\xi}_j,\tilde{Y}_j(1))$ given $\tilde{X}_j $ as $\mathbb{P}^{(j)}$. Then,
\begin{align*}
\frac{1}{n}\sum_{j=1}^n\mathbb{P}^{(j)}(\tilde{\eta}^p_{j,1}(\tau))^2 = & \frac{1}{n}\sum_{j=1}^n\left\{ \mathbb{E}\left[(\tilde{\xi}_j)^2(\tau - 1\{\tilde{Y}_j(1) \leq q_1(\tau)\})^2|\tilde{X}_j\right] - m_{1,\tau}^2(\tilde{X}_j)\right\} \leq \overline{C}<\infty,
\end{align*}
for some constant $\overline{C}>0.$ This implies that pointwise in $\tau \in \Upsilon$,
\begin{align*}
\sum_{i=1}^{2n}\frac{A_i}{\sqrt{n}}\eta^p_{i,1}(\tau)|\{A_i,X_i\}_{i=1}^{2n} \stackrel{d}{=}\sum_{j=1}^{n}\frac{\tilde{\eta}^p_{j,1}(\tau)}{\sqrt{n}}\biggl| \{\tilde{X}_j\}_{j=1}^n = O_p(1).
\end{align*}

In addition, let
\begin{align*}
\mathcal{F}_2 = \{ \xi \left[\tau - 1\{Y \leq q_1(\tau)\}\right] -  \xi\left[\tau'  - 1\{Y \leq q_1(\tau' )\}\right] :  \tau,\tau' \in \Upsilon, |\tau - \tau' |\leq \eps\}
\end{align*}
which is a VC-class with a fixed VC-index and has an envelope $F_j = 2\tilde{\xi}_j$. In addition,  $||\max_{j \in [n]}F_j||_{\mathbb{P},2} \leq C \log(n)$ and
\begin{align*}
\sigma_n^2 = \sup_{f \in \mathcal{F}_2}\overline{\mathbb{P}}f^2 \lesssim \sup_{\tilde{\tau} \in \Upsilon}\frac{1}{n}\sum_{i=1}^n \left[\eps^2+\frac{f_1(q_1(\tilde{\tau})|\tilde{X}_j)\eps}{f_1(q_1(\tilde{\tau}))} \right] \lesssim \eps~a.s.
\end{align*}
Then, by Lemma \ref{lem:max_eq},
\begin{align*}
\mathbb{E}\left[\sup_{\tau, \tau' \in \Upsilon, |\tau-\tau' |\leq \eps}\left|\sum_{j=1}^{n}\frac{\tilde{\eta}^p_{j,1}(\tau) - \tilde{\eta}^p_{j,1}(\tau' )}{\sqrt{n}}\right| \biggl| \{\tilde{X}_j\}_{j=1}^n \right] = & \mathbb{E}\left[\Vert\mathbb{P}_n - \overline{\mathbb{P}}\Vert_{\mathcal{F}_2}\biggl| \{\tilde{X}_j\}_{j=1}^n \right] \\
\lesssim & \sqrt{\eps \log(1/\eps)} + \frac{\log(1/\eps) \log(n)}{\sqrt{n}}~a.s.
\end{align*}
The RHS of the above display vanishes as $n \rightarrow \infty$ followed by $\eps \rightarrow 0$, which implies
\begin{align}
\label{eq:Bstar}
\sum_{i=1}^{2n}\frac{A_i}{\sqrt{n}}\eta^p_{i,1}(\tau)|\{A_i,X_i\}_{i=1}^{2n} \stackrel{d}{=}\sum_{j=1}^{n}\frac{\tilde{\eta}^p_{j,1}(\tau)}{\sqrt{n}}\biggl| \{\tilde{X}_j\}_{j=1}^n
\end{align}
is stochastically equicontinuous. Therefore, $\sup_{\tau \in \Upsilon}\left|\sum_{i=1}^{2n}\frac{A_i}{\sqrt{n}}\eta^p_{i,1}(\tau)\right|= O_p(1)$, and thus, $\sup_{\tau \in \Upsilon}|W_{n,1}^p(\tau)|= O_p(1)$. Similarly, we can show $\sup_{\tau \in \Upsilon}|W_{n,0}^p(\tau)|= O_p(1)$.

\vspace{2mm}
\noindent \textbf{Step (3).} Based on the previous two steps, we have
\begin{align}
\label{eq:betam}
\sqrt{n}(\hat{\beta}^p(\tau) - \beta(\tau)) = Q^{-1}(\tau)\begin{pmatrix}
1 & 1 \\
1 & 0
\end{pmatrix} \begin{pmatrix} \sum_{i=1}^{2n}\frac{A_i}{\sqrt{n}}\eta^p_{i,1}(\tau) +  \sum_{i=1}^{2n} \frac{1}{2\sqrt{n}}m_{1,\tau}(X_i)\\
\sum_{i=1}^{2n}\frac{1-A_i}{\sqrt{n}}\eta^p_{i,0}(\tau) +  \sum_{i=1}^{2n} \frac{1}{2\sqrt{n}}m_{0,\tau}(X_i)
\end{pmatrix} + R^p(\tau)
\end{align}
where $\sup_{\tau \in \Upsilon} ||R^p(\tau)||_2 = o_p(1)$ and $\sqrt{n}(\hat{\beta}^p(\tau) - \beta(\tau))$ is stochastically equicontinuous.
Taking the difference between \eqref{eq:beta} and \eqref{eq:betam}, we have
\begin{align}
\label{eq:betam2}
\sqrt{n}(\hat{\beta}^p(\tau) - \hat{\beta}(\tau)) = Q^{-1}(\tau)\begin{pmatrix}
1 & 1 \\
1 & 0
\end{pmatrix} \begin{pmatrix} \sum_{i=1}^{2n}\frac{A_i}{\sqrt{n}}(\xi_i^p - 1)(\tau - 1\{Y_i(1) \leq q_1(\tau)\}) \\
\sum_{i=1}^{2n}\frac{1-A_i}{\sqrt{n}}(\xi_i^p - 1)(\tau - 1\{Y_i(0) \leq q_0(\tau)\})
\end{pmatrix} + R^*(\tau),
\end{align}
where $\sup_{\tau \in \Upsilon} ||R^*(\tau)||_2 = o_p(1)$. In addition, because both $\sqrt{n}(\hat{\beta}^p(\tau) - \beta(\tau))$ and $\sqrt{n}(\hat{\beta}(\tau) - \beta(\tau))$ are stochastically equicontinuous, so be  $\sqrt{n}(\hat{\beta}^p(\tau) - \hat{\beta}(\tau))$. Then by Markov's inequality, $\sqrt{n}(\hat{\beta}^p(\tau) - \hat{\beta}(\tau))$ is stochastically equicontinuous conditionally on data as well. In order to derive the limit  distribution of  $\sqrt{n}(\hat{\beta}^p(\tau) - \hat{\beta}(\tau))$ conditionally on data, we need only compute the covariance kernel. Note that
\footnotesize
\begin{align*}
& \mathbb{E}\left[\begin{pmatrix} \sum_{i=1}^{2n}\frac{A_i}{\sqrt{n}}(\xi_i^p - 1)(\tau - 1\{Y_i(1) \leq q_1(\tau)\}) \\
\sum_{i=1}^{2n}\frac{1-A_i}{\sqrt{n}}(\xi_i^p - 1)(\tau - 1\{Y_i(0) \leq q_0(\tau)\})
\end{pmatrix}\begin{pmatrix} \sum_{i=1}^{2n}\frac{A_i}{\sqrt{n}}(\xi_i^p - 1)(\tau'  - 1\{Y_i(1) \leq q_1(\tau' )\}) \\
\sum_{i=1}^{2n}\frac{1-A_i}{\sqrt{n}}(\xi_i^p - 1)(\tau'  - 1\{Y_i(0) \leq q_0(\tau' )\})
\end{pmatrix}^\top \biggl|Data\right] \\
= & \mathbb{E}\left[\begin{pmatrix} \sum_{j=1}^{n}\frac{1}{\sqrt{n}}(\xi_{(j,1)}^p - 1)(\tau - 1\{Y_{(j,1)}(1) \leq q_1(\tau)\}) \\
\sum_{j=1}^{n}\frac{1}{\sqrt{n}}(\xi_{(j,0)}^p - 1)(\tau - 1\{Y_{(j,0)}(0) \leq q_0(\tau)\})
\end{pmatrix}
\begin{pmatrix} \sum_{j=1}^{n}\frac{1}{\sqrt{n}}(\xi_{(j,1)}^p - 1)(\tau'  - 1\{Y_{(j,1)}(1) \leq q_1(\tau' )\}) \\
\sum_{j=1}^{n}\frac{1}{\sqrt{n}}(\xi_{(j,0)}^p - 1)(\tau'  - 1\{Y_{(j,1)}(0) \leq q_0(\tau' )\})
\end{pmatrix}^\top \biggl|Data\right] \\
= & \frac{1}{n}\sum_{j=1}^{n}\begin{pmatrix}
(\tau - 1\{Y_{(j,1)}(1) \leq q_1(\tau)\})(\tau'  - 1\{Y_{(j,1)}(1) \leq q_1(\tau' )\}) & (\tau - 1\{Y_{(j,1)}(1) \leq q_1(\tau)\})(\tau'  - 1\{Y_{(j,0)}(0) \leq q_0(\tau' )\}) \\
(\tau - 1\{Y_{(j,0)}(0) \leq q_0(\tau)\})(\tau'  - 1\{Y_{(j,1)}(1) \leq q_1(\tau' )\}) & (\tau - 1\{Y_{(j,0)}(0) \leq q_0(\tau)\})(\tau'  - 1\{Y_{(j,0)}(0) \leq q_0(\tau' )\})
\end{pmatrix},
\end{align*}
\normalsize
where the indexes notation $(j,1)$ and $(j,0)$ are defined in Section \ref{sec:boot}, and the second equality holds because $\xi_{(j,1)}^p = \xi_{(j,0)}^p$. Following the same argument in the proof of Theorem \ref{thm:boot},\footnote{In fact, in the proof of Theorem \ref{thm:boot}, we establish a more general result allowing the true quantiles $q_1(\tau)$ and $q_0(\tau)$ to be replaced by their estimates.} we can show that the limit of the above display is
\begin{align*}
\begin{pmatrix}
\min(\tau,\tau' ) - \tau \tau'  & \mathbb{E}m_{1,\tau}(X_i)m_{0,\tau' }(X_i)\\
\mathbb{E}m_{1,\tau' }(X_i)m_{0,\tau}(X_i) & \min(\tau,\tau' ) - \tau \tau'
\end{pmatrix}.
\end{align*}

\normalsize

Then, for the second element $\hat{\beta}_1^p(\tau)$ of $\hat{\beta}^p(\tau)$, conditional on the data, we have
\begin{align*}
\sqrt{n}(\hat{\beta}^p_1(\tau) - \hat{\beta}_1(\tau)) \convD \mathcal{B}^p(\tau),
\end{align*}
where $\mathcal{B}^p(\tau)$ is a tight Gaussian process with covariance kernel
\begin{align*}
\Sigma^p(\tau,\tau' ) = \frac{ \min(\tau,\tau' ) - \tau \tau'  }{f_1(q_1(\tau))f_1(q_1(\tau' ))} + \frac{ \min(\tau,\tau' ) - \tau \tau'  }{f_0(q_0(\tau))f_0(q_0(\tau' ))} - \frac{\mathbb{E}m_{1,\tau}(X_i)m_{0,\tau' }(X_i)}{f_1(q_1(\tau))f_0(q_0(\tau' ))} - \frac{\mathbb{E}m_{1,\tau' }(X_i)m_{0,\tau}(X_i)}{f_0(q_0(\tau))f_1(q_1(\tau' ))}.
\end{align*}

\section{Proof of Theorem \ref{thm:boot}}
\label{sec:boot_pf}
Let $u \in \Re^2$ and
\begin{align*}
L_n^*(u,\tau)  = \sum_{i=1}^{2n}\left[\rho_\tau(Y_i - \dot{A}_i^\top \beta(\tau) - \dot{A}_i^\top u/\sqrt{n}) -  \rho_\tau(Y_i - \dot{A}_i^\top \beta(\tau))\right] - u^\top  \begin{pmatrix}
1 & 1 \\
1 & 0
\end{pmatrix}S_n^*(\tau).
\end{align*}
Then,
\begin{align*}
\sqrt{n}\left(\hat{\beta}^*(\tau) - \beta(\tau) \right) = \argmin_u L_n^*(u,\tau).
\end{align*}
By the same argument as in the proof of Theorem \ref{thm:est}, we have
\begin{align*}
L_n^*(u,\tau) = - u^\top  W_n(\tau) + Q_n(u,\tau) - u^\top  \begin{pmatrix}
1 & 1 \\
1 & 0
\end{pmatrix} S_n^*(\tau) = -u^\top \begin{pmatrix}
1 & 1 \\
1 & 0
\end{pmatrix}\left(S_n(\tau) + S_n^*(\tau)\right) +  Q_n(u,\tau).
\end{align*}
Further note that $S_n^*(\tau) = \frac{1}{\sqrt{2}}\left(S_{n,1}^*(\tau)+S_{n,2}^*(\tau) \right)$. In the following, we divide the proof into three steps. In Step (1), we derive the weak limit of $S_{n,1}^*(\tau)$ given data. In Step (2), we derive the weak limit of $S_{n,2}^*(\tau)$. In Step (3), we derive the desired result of this theorem.

\vspace{2mm}
\noindent \textbf{Step (1).} Given the data, $S_{n,1}^*(\tau)$ is a Gaussian process with covariance kernel
\begin{align*}
\tilde{\Sigma}_1^*(\tau,\tau' ) = \begin{pmatrix}
\tilde{\Sigma}^*_{1,1,1}(\tau,\tau' )  & \tilde{\Sigma}^*_{1,1,2}(\tau,\tau' ) \\
\tilde{\Sigma}^*_{1,2,1}(\tau,\tau' )  & \tilde{\Sigma}^*_{1,2,2}(\tau,\tau' ) \\
\end{pmatrix}
\end{align*}
where
\begin{align*}
\tilde{\Sigma}^*_{1,1,1}(\tau,\tau' )  = \frac{1}{n}\sum_{j =1}^{n}\left(\tau - 1\{Y_{(j,1)} \leq \hat{q}_1(\tau)\}\right)\left(\tau'  - 1\{Y_{(j,1)} \leq \hat{q}_1(\tau' )\}\right),
\end{align*}
\begin{align*}
\tilde{\Sigma}^*_{1,1,2}(\tau,\tau' )  = \frac{1}{n}\sum_{j =1}^{n}\left(\tau - 1\{Y_{(j,1)} \leq \hat{q}_1(\tau)\}\right)\left(\tau'  - 1\{Y_{(j,0)} \leq \hat{q}_0(\tau' )\}\right),
\end{align*}
\begin{align*}
\tilde{\Sigma}^*_{1,2,1}(\tau,\tau' )  = \frac{1}{n}\sum_{j =1}^{n}\left(\tau - 1\{Y_{(j,0)} \leq \hat{q}_0(\tau)\}\right)\left(\tau'  - 1\{Y_{(j,1)} \leq \hat{q}_1(\tau' )\}\right),
\end{align*}
and
\begin{align*}
\tilde{\Sigma}^*_{1,2,2}(\tau,\tau' )  = \frac{1}{n}\sum_{j =1}^{n}\left(\tau - 1\{Y_{(j,0)} \leq \hat{q}_0(\tau)\}\right)\left(\tau'  - 1\{Y_{(j,0)} \leq \hat{q}_0(\tau' )\}\right).
\end{align*}
Next, we derive the limit of $\tilde{\Sigma}_1^*(\tau,\tau' )$ uniformly over $\tau,\tau'  \in \Upsilon$. Recall $m_{1,\tau}(X_i,q) = \mathbb{E}\left(\tau - 1\{Y_i(1) \leq q\} |X_i\right)$ and define $\eta_{1,i}(q,\tau) = (\tau - 1\{Y_i(1) \leq q\}) - m_{1,\tau}(X_i,q)$. Then
\begin{align}
\label{eq:sigma111}
\tilde{\Sigma}_{1,1,1}(\tau,\tau' ) = & \frac{1}{n}\sum_{j =1}^n\eta_{1,(j,1)}(\hat{q}_1(\tau),\tau)\eta_{1,(j,1)}(\hat{q}_1(\tau' ),\tau' ) +  \frac{1}{n}\sum_{j =1}^n\eta_{1,(j,1)}(\hat{q}_1(\tau),\tau)m_{1,\tau' }(X_{(j,1)},\hat{q}_1(\tau' )) \notag \\
& + \frac{1}{n}\sum_{j =1}^n\eta_{1,(j,1)}(\hat{q}_1(\tau' ),\tau' )m_{1,\tau}(X_{(j,1)},\hat{q}_1(\tau)) + \frac{1}{n}\sum_{j =1}^n m_{1,\tau}(X_{(j,1)},\hat{q}_1(\tau))m_{1,\tau' }(X_{(j,1)},\hat{q}_1(\tau' )) \notag \\
= & I(\tau,\tau' )+II(\tau,\tau' ) +III(\tau,\tau' )+IV(\tau,\tau' ),
\end{align}
where we use the fact that $Y_{(j,1)} = Y_{(j,1)}(1)$ and $Y_{(j,0)} = Y_{(j,0)}(0)$. Given $\{A_i,X_i\}_{i=1}^{2n}$, $\{Y_{(j,1)}(1)\}_{j=1}^n$ is a sequence of independent random variables with probability measure $\Pi_{j=1}^n\mathbb{P}^{(j)}$, where $\mathbb{P}^{(j)}$ is the conditional probability of $Y(1)$ given $X$ evaluated at $X = X_{(j,1)}$. Therefore,
\begin{align}
\label{eq:I}
I(\tau,\tau' ) = \overline{\mathbb{P}}\eta_{1,(j,1)}(\hat{q}_1(\tau),\tau)\eta_{1,(j,1)}(\hat{q}_1(\tau' ),\tau' ) + \left(\mathbb{P}_n - \overline{\mathbb{P}}\right)\eta_{1,(j,1)}(\hat{q}_1(\tau),\tau)\eta_{1,(j,1)}(\hat{q}_1(\tau' ),\tau' ),
\end{align}
where $\overline{\mathbb{P}}\eta_{1,(j,1)}(\hat{q}_1(\tau),\tau)\eta_{1,(j,1)}(\hat{q}_1(\tau' ),\tau' )$ is interpreted as $\overline{\mathbb{P}}\eta_{1,(j,1)}(q,\tau)\eta_{1,(j,1)}(q',\tau' )|_{q = \hat{q}_1(\tau),q'=\hat{q}(\tau' )}$. In addition, by Theorem \ref{thm:est}, for any $\eps>0$, it is possible to find a sufficiently large constant $L$ such that
\begin{align}
\label{eq:qhat}
\mathbb{P}(\sup_{\tau \in \Upsilon}|\hat{q}(\tau) - q(\tau)|\leq L/\sqrt{n}) \geq 1-\eps.
\end{align}
Therefore, we have,
\begin{align}
\label{eq:I1}
& \overline{\mathbb{P}}\eta_{1,(j,1)}(\hat{q}_1(\tau),\tau)\eta_{1,(j,1)}(\hat{q}_1(\tau' ),\tau' ) \notag \\
= & \frac{1}{n}\sum_{j=1}^n\left[ F_1(\min(\hat{q}_1(\tau),\hat{q}_1(\tau' ))|X_{(j,1)}) - F_1(\hat{q}_1(\tau)|X_{(j,1)})F_1(\hat{q}_1(\tau' )|X_{(j,1)})\right] \notag \\
= & \frac{1}{n}\sum_{j=1}^n\left[ F_1(\min(q_1(\tau),q_1(\tau' ))|X_{(j,1)}) - F_1(q_1(\tau)|X_{(j,1)})F_1(q_1(\tau' )|X_{(j,1)})\right] + R_I(\tau,\tau' ) \notag \\
= & \frac{1}{n}\sum_{i=1}^{2n}A_i\left[ F_1(\min(q_1(\tau),q_1(\tau' ))|X_i) - F_1(q_1(\tau)|X_i)F_1(q_1(\tau' )|X_i)\right] +R_I(\tau,\tau' ) \notag \\
= & \frac{1}{2n}\sum_{i=1}^{2n}\left[ F_1(\min(q_1(\tau),q_1(\tau' ))|X_i) - F_1(q_1(\tau)|X_i)F_1(q_1(\tau' )|X_i)\right] \notag \\
+  & \frac{1}{n}\sum_{i=1}^{2n}\left(A_i - \frac{1}{2}\right)\left[ F_1(\min(q_1(\tau),q_1(\tau' ))|X_i) - F_1(q_1(\tau)|X_i)F_1(q_1(\tau' )|X_i)\right] +  R_I(\tau,\tau' ),
%\convP & \min(\tau,\tau' ) - \tau\tau'  - \mathbb{E}m_{1,\tau}(X_i)m_{1,\tau' }(X_i),
\end{align}
where $\sup_{\tau,\tau'  \in \Upsilon}|R_I(\tau,\tau' )| \convP 0$ due to \eqref{eq:qhat} and Lipschitz continuity of $F_1(\cdot|X)$.

By the standard uniform convergence theorem (\citet[Theorem 2.4.1]{VW96}), uniformly over $\tau, \tau' \in \Upsilon$,
\begin{align*}
\frac{1}{2n}\sum_{i=1}^{2n}\left[ F_1(\min(q_1(\tau),q_1(\tau' ))|X_i) - F_1(q_1(\tau)|X_i)F_1(q_1(\tau' )|X_i)\right] \convP \min(\tau,\tau' ) - \tau\tau'  - \mathbb{E}m_{1,\tau}(X)m_{1,\tau' }(X).
\end{align*}
By the same argument as in Lemma \ref{lem:R12},
\begin{align*}
\sup_{\tau, \tau' \in \Upsilon}\left| \frac{1}{n}\sum_{i=1}^{2n}\left(A_i - \frac{1}{2}\right)\left[ F_1(\min(q_1(\tau),q_1(\tau' ))|X_i) - F_1(q_1(\tau)|X_i)F_1(q_1(\tau' )|X_i)\right]\right| \convP 0.
\end{align*}
Therefore, uniformly over $\tau,\tau'  \in \Upsilon$,
\begin{align*}
\overline{\mathbb{P}}\eta_{1,(j,1)}(\hat{q}_1(\tau),\tau)\eta_{1,(j,1)}(\hat{q}_1(\tau' ),\tau' ) \convP \min(\tau,\tau' ) - \tau\tau'  - \mathbb{E}m_{1,\tau}(X)m_{1,\tau' }(X).
\end{align*}

To deal with the second term in \eqref{eq:I}, first denote
\begin{align*}
\mathcal{F}_3 = \{\left(\tau - 1\{Y \leq q_1(\tau)+v\}\right)\left(\tau'  - 1\{Y \leq q_1(\tau' ) +v'\}\right): \tau,\tau'  \in \Upsilon, |v|,|v'| \leq L/\sqrt{n}\}.
\end{align*}
Note  $\mathcal{F}_3$ has an envelope $F=1$ and is nested by a VC-class of functions with a fixed VC-index. Then, by Lemma \ref{lem:max_eq},
\begin{align*}
\mathbb{E}\Vert\mathbb{P}_n - \overline{\mathbb{P}} \Vert_{\mathcal{F}_3} \lesssim 1/\sqrt{n}.
\end{align*}
This implies, with probability greater than $1 -\eps$, that
\begin{align}
\label{eq:I2}
\sup_{\tau, \tau' \in \Upsilon}|\left(\mathbb{P}_n - \overline{\mathbb{P}}\right)\eta_{1,(j,1)}(\hat{q}_1(\tau),\tau)\eta_{1,(j,1)}(\hat{q}_1(\tau' ),\tau' )| \convP 0.
\end{align}
Since $\eps$ in \eqref{eq:qhat} is arbitrary, we have, uniformly over $\tau, \tau' \in \Upsilon$,
\begin{align}
\label{eq:I0}
I(\tau,\tau' ) \convP \min(\tau,\tau' ) - \tau\tau'  - \mathbb{E}m_{1,\tau}(X)m_{1,\tau' }(X).
\end{align}
By Lemma \ref{lem:23}, we have
\begin{align*}
\sup_{\tau, \tau' \in \Upsilon}|II(\tau,\tau' )| = o_p(1) \quad \text{and} \quad \sup_{\tau, \tau' \in \Upsilon}|III(\tau,\tau' )| = o_p(1).
\end{align*}

For $IV(\tau,\tau' )$, we note that
\begin{align}
\label{eq:IV}
IV(\tau,\tau' ) = & \frac{1}{n}\sum_{j =1}^n m_{1,\tau}(X_{(j,1)})m_{1,\tau' }(X_{(j,1)}) + R_{IV}(\tau,\tau' ) \notag \\
= & \frac{1}{n}\sum_{i =1}^{2n} A_i m_{1,\tau}(X_i)m_{1,\tau' }(X_i) + R_{IV}(\tau,\tau' ) \notag \\
= & \frac{1}{2n}\sum_{i =1}^{2n} m_{1,\tau}(X_i)m_{1,\tau' }(X_i) + \frac{1}{n}\sum_{i =1}^{2n} \left(A_i-\frac{1}{2}\right) m_{1,\tau}(X_i)m_{1,\tau' }(X_i) + R_{IV}(\tau,\tau' ).
\end{align}
By the standard uniform convergence theorem (\citet[Theorem 2.4.1]{VW96}), uniformly over $\tau, \tau' \in \Upsilon$,
\begin{align*}
\frac{1}{2n}\sum_{i =1}^{2n} m_{1,\tau}(X_i)m_{1,\tau' }(X_i) \convP \mathbb{E} m_{1,\tau}(X)m_{1,\tau' }(X).
\end{align*}
Lemma \ref{lem:4} further shows that
\begin{align*}
\sup_{\tau, \tau' \in \Upsilon}|R_{IV}(\tau,\tau' )| = o_p(1)   \quad \text{and} \quad \sup_{\tau, \tau' \in \Upsilon}\left|\frac{1}{n}\sum_{i =1}^{2n} \left(A_i-\frac{1}{2}\right) m_{1,\tau}(X_i)m_{1,\tau' }(X_i)\right| =o_p(1).
\end{align*}
Combining the above results, we have, uniformly over $\tau, \tau' \in \Upsilon$,
\begin{align*}
\tilde{\Sigma}^*_{1,1,1}(\tau,\tau' ) \convP \min(\tau,\tau' ) - \tau\tau' .
\end{align*}

Now we turn to $\tilde{\Sigma}^*_{1,1,2}(\tau,\tau' )$. Recall $m_{0,\tau}(X_i,q) = \mathbb{E}\left(\tau - 1\{Y_i(0) \leq q\} |X_i\right)$ and define $\eta_{0,i}(q,\tau) = (\tau - 1\{Y_i(0) \leq q\}) - m_{0,\tau}(X_i,q)$. Then,
\begin{align*}
\tilde{\Sigma}^*_{1,1,2}(\tau,\tau' ) = & \frac{1}{n}\sum_{j =1}^n\eta_{1,(j,1)}(\hat{q}_1(\tau),\tau)\eta_{0,(j,0)}(\hat{q}_0(\tau' ),\tau' ) +  \frac{1}{n}\sum_{j =1}^n\eta_{1,(j,1)}(\hat{q}_1(\tau),\tau)m_{0,\tau' }(X_{(j,0)},\hat{q}_0(\tau' )) \\
& + \frac{1}{n}\sum_{j =1}^n\eta_{0,(j,0)}(\hat{q}_0(\tau' ),\tau' )m_{1,\tau}(X_{(j,1)},\hat{q}_1(\tau)) + \frac{1}{n}\sum_{j =1}^n m_{1,\tau}(X_{(j,1)},\hat{q}_1(\tau))m_{0,\tau' }(X_{(j,0)},\hat{q}_0(\tau' )) \\
= & \widetilde{I}(\tau,\tau' )+\widetilde{II}(\tau,\tau' ) +\widetilde{III}(\tau,\tau' )+\widetilde{IV}(\tau,\tau' ).
\end{align*}

We derive the uniform limit for each term on the RHS of the above display. First, note that
\begin{align}
\label{eq:Itilde}
\widetilde{I}(\tau,\tau' ) = \overline{\mathbb{P}}\eta_{1,(j,1)}(\hat{q}_1(\tau),\tau)\eta_{0,(j,0)}(\hat{q}_0(\tau' ),\tau' ) + (\mathbb{P}_n - \overline{\mathbb{P}})\eta_{1,(j,1)}(\hat{q}_1(\tau),\tau)\eta_{0,(j,0)}(\hat{q}_0(\tau' ),\tau' ).
\end{align}
Similar to \eqref{eq:I1}, we have
\begin{align*}
\sup_{\tau, \tau' \in \Upsilon}\left| \overline{\mathbb{P}}\eta_{1,(j,1)}(\hat{q}_1(\tau),\tau)\eta_{0,(j,0)}(\hat{q}_0(\tau' ),\tau' ) - \overline{\mathbb{P}}\eta_{1,(j,1)}(q_1(\tau),\tau)\eta_{0,(j,0)}(q_0(\tau' ),\tau' )\right| \convP 0.
\end{align*}
Furthermore, because $(j,1) \neq (j,0)$, conditionally on $\{A_i,X_i\}_{i=1}^{2n}$, $\eta_{1,(j,1)}(q_1(\tau),\tau) \indep \eta_{1,(j,0)}(q_0(\tau),\tau)$,
\begin{align*}
\overline{\mathbb{P}}\eta_{1,(j,1)}(q_1(\tau),\tau)\eta_{0,(j,0)}(q_0(\tau' ),\tau' ) = 0.
\end{align*}
Similar to \eqref{eq:I2}, we have
\begin{align*}
\sup_{\tau, \tau' \in \Upsilon}\left|(\mathbb{P}_n - \overline{\mathbb{P}})\eta_{1,(j,1)}(\hat{q}_1(\tau),\tau)\eta_{0,(j,0)}(\hat{q}_0(\tau' ),\tau' )\right| \convP 0.
\end{align*}
This implies that, uniformly over $\tau,\tau' \in \Upsilon$,
$\; \widetilde{I}(\tau,\tau' ) \convP 0.$
By the same argument as in the proof of Lemma \ref{lem:23}, we can show that
\begin{align*}
\sup_{\tau, \tau' \in \Upsilon}\left|\widetilde{II}(\tau,\tau' )\right| \convP 0 \quad \text{and} \quad \sup_{\tau, \tau' \in \Upsilon}\left|\widetilde{III}(\tau,\tau' )\right| \convP 0.
\end{align*}

Last, by the same argument in the proof of Lemma \ref{lem:4}, we can show that, uniformly over $\tau,\tau' \in \Upsilon$,
\begin{align*} \widetilde{IV}(\tau,\tau' ) = &\; \frac{1}{n}\sum_{j=1}^n m_{1,\tau}(X_{(j,1)})m_{0,\tau' }(X_{(j,0)}) + o_p(1) \\
= &\; \frac{1}{n}\sum_{j=1}^n m_{1,\tau}(X_{(j,1)})m_{0,\tau' }(X_{(j,1)}) + \frac{1}{n}\sum_{j=1}^n m_{1,\tau}(X_{(j,1)})[m_{0,\tau' }(X_{(j,0)}) - m_{0,\tau' }(X_{(j,1)})]+o_p(1) \\
\convP &\; \mathbb{E}m_{1,\tau}(X)m_{0,\tau' }(X),
\end{align*}
where the $o_p(1)$ holds uniformly over $\tau,\tau'  \in \Upsilon$, and the last line holds because $m_{1,\tau}(x)$ is bounded and $m_{0,\tau}(x)$ is Lipschitz.

Combining the above results, we have uniformly over $\tau,\tau' \in \Upsilon$,
\begin{align*}
\tilde{\Sigma}^*_{1,1,2}(\tau,\tau' ) \convP\mathbb{E}m_{1,\tau}(X)m_{0,\tau' }(X).
\end{align*}
The limits of $\tilde{\Sigma}^*_{1,2,1}$ and $\tilde{\Sigma}^*_{1,2,2}$ can be derived similarly. To sum up, we have established that, uniformly over $\tau,\tau' \in \Upsilon$,
\begin{align*}
\tilde{\Sigma}_1^*(\tau,\tau' ) \convP \begin{pmatrix}
\min(\tau,\tau' ) - \tau\tau'  & \mathbb{E}m_{1,\tau}(X_i)m_{0,\tau' }(X_i) \\
\mathbb{E}m_{0,\tau}(X_i)m_{1,\tau' }(X_i) & \min(\tau,\tau' ) - \tau\tau'
\end{pmatrix}.
\end{align*}
Lemma \ref{lem:tight1} shows $S_{n,1}^*(\tau)$ is stochastically equicontinuous and $\sup_{\tau \in \Upsilon}||S_{n,1}^*(\tau)||_2= O_p(1)$. This concludes the proof of this step.

\vspace{2mm}
\noindent \textbf{Step (2).} Given the data, $S_{n,2}^*(\tau)$ is a Gaussian process with covariance kernel
\begin{align*}
\tilde{\Sigma}_2^*(\tau,\tau' ) = \begin{pmatrix}
\tilde{\Sigma}^*_{2,1,1}(\tau,\tau' )  & \tilde{\Sigma}^*_{2,1,2}(\tau,\tau' ) \\
\tilde{\Sigma}^*_{2,2,1}(\tau,\tau' )  & \tilde{\Sigma}^*_{2,2,2}(\tau,\tau' ) \\
\end{pmatrix},
\end{align*}
where
\begin{align*}
\tilde{\Sigma}^*_{2,1,1}(\tau,\tau' )  = & \frac{1}{n}\sum_{k =1}^{\lfloor n/2 \rfloor}\left[\left(\tau - 1\{Y_{(k,1)} \leq \hat{q}_1(\tau)\} \right) - \left(\tau - 1\{Y_{(k,3)} \leq \hat{q}_1(\tau)\} \right)\right] \\
& \times \left[\left(\tau'  - 1\{Y_{(k,1)} \leq \hat{q}_1(\tau' )\} \right) - \left(\tau'  - 1\{Y_{(k,3)} \leq \hat{q}_1(\tau' )\} \right)\right],
\end{align*}
\begin{align*}
\tilde{\Sigma}^*_{2,1,2}(\tau,\tau' )  = & \frac{1}{n}\sum_{k =1}^{\lfloor n/2 \rfloor}\left[\left(\tau - 1\{Y_{(k,1)} \leq \hat{q}_1(\tau)\} \right) - \left(\tau - 1\{Y_{(k,3)} \leq \hat{q}_1(\tau)\} \right)\right] \\
& \times \left[\left(\tau'  - 1\{Y_{(k,2)} \leq \hat{q}_0(\tau' )\} \right) - \left(\tau'  - 1\{Y_{(k,4)} \leq \hat{q}_0(\tau' )\} \right)\right],
\end{align*}
\begin{align*}
\tilde{\Sigma}^*_{2,2,1}(\tau,\tau' )  = & \frac{1}{n}\sum_{k =1}^{\lfloor n/2 \rfloor}\left[\left(\tau - 1\{Y_{(k,2)} \leq \hat{q}_0(\tau)\} \right) - \left(\tau - 1\{Y_{(k,4)} \leq \hat{q}_0(\tau)\} \right)\right] \\
& \times \left[\left(\tau'  - 1\{Y_{(k,1)} \leq \hat{q}_1(\tau' )\} \right) - \left(\tau'  - 1\{Y_{(k,3)} \leq \hat{q}_1(\tau' )\} \right)\right],
\end{align*}
and
\begin{align*}
\tilde{\Sigma}^*_{2,2,2}(\tau,\tau' )  = & \frac{1}{n}\sum_{k =1}^{\lfloor n/2 \rfloor}\left[\left(\tau - 1\{Y_{(k,2)} \leq \hat{q}_0(\tau)\} \right) - \left(\tau - 1\{Y_{(k,4)} \leq \hat{q}_0(\tau)\} \right)\right] \\
& \times \left[\left(\tau'  - 1\{Y_{(k,2)} \leq \hat{q}_0(\tau' )\} \right) - \left(\tau'  - 1\{Y_{(k,4)} \leq \hat{q}_0(\tau' )\} \right)\right].
\end{align*}
In the following, we derive the limit of $\tilde{\Sigma}_2^*(\tau,\tau' )$. For $\tilde{\Sigma}^*_{2,1,1}(\tau,\tau' )$, we have
\begin{align*}
& \tilde{\Sigma}^*_{2,1,1}(\tau,\tau' ) \\
= &\; \frac{1}{n}\sum_{k =1}^{\lfloor n/2 \rfloor}\left[\eta_{1,(k,1)}(\hat{q}_1(\tau),\tau) - \eta_{1,(k,3)}(\hat{q}_1(\tau),\tau)  \right]\left[\eta_{1,(k,1)}(\hat{q}_1(\tau' ),\tau' ) - \eta_{1,(k,3)}(\hat{q}_1(\tau' ),\tau' )\right] \\
& + \frac{1}{n}\sum_{k =1}^{\lfloor n/2 \rfloor}\left[\eta_{1,(k,1)}(\hat{q}_1(\tau),\tau) - \eta_{1,(k,3)}(\hat{q}_1(\tau),\tau)  \right]\left[m_{1,\tau' }(X_{(k,1)},\hat{q}_1(\tau' )) - m_{1,\tau' }(X_{(k,3)},\hat{q}_1(\tau' ))\right] \\
&  + \frac{1}{n}\sum_{k =1}^{\lfloor n/2 \rfloor}\left[m_{1,\tau}(X_{(k,1)},\hat{q}_1(\tau)) - m_{1,\tau}(X_{(k,3)},\hat{q}_1(\tau))  \right]\left[\eta_{1,(k,1)}(\hat{q}_1(\tau' ),\tau' ) - \eta_{1,(k,3)}(\hat{q}_1(\tau' ),\tau' )\right] \\
& + \frac{1}{n}\sum_{k =1}^{\lfloor n/2 \rfloor}\left[m_{1,\tau}(X_{(k,1)},\hat{q}_1(\tau)) - m_{1,\tau}(X_{(k,3)},\hat{q}_1(\tau))  \right]\left[m_{1,\tau' }(X_{(k,1)},\hat{q}_1(\tau' )) - m_{1,\tau' }(X_{(k,3)},\hat{q}_1(\tau' ))\right] \\
\equiv&\; \widehat{I}(\tau,\tau' ) + \widehat{II}(\tau,\tau' ) + \widehat{III}(\tau,\tau' ) + \widehat{IV}(\tau,\tau' ).
\end{align*}
Also note that
\begin{align*}
& \widehat{I}(\tau,\tau' ) \\
= &\; \frac{1}{n}\sum_{k =1}^{\lfloor n/2 \rfloor} \left[\eta_{1,(k,1)}(\hat{q}_1(\tau),\tau)\eta_{1,(k,1)}(\hat{q}_1(\tau' ),\tau' )  + \eta_{1,(k,3)}(\hat{q}_1(\tau),\tau)\eta_{1,(k,3)}(\hat{q}_1(\tau' ),\tau' )\right] \\
& - \frac{1}{n}\sum_{k =1}^{\lfloor n/2 \rfloor}\eta_{1,(k,1)}(\hat{q}_1(\tau),\tau)\eta_{1,(k,3)}(\hat{q}_1(\tau' ),\tau' ) - \frac{1}{n}\sum_{k =1}^{\lfloor n/2 \rfloor}\eta_{1,(k,1)}(\hat{q}_1(\tau' ),\tau' )\eta_{1,(k,3)}(\hat{q}_1(\tau),\tau) \\
= &\; \frac{1}{n}\sum_{j =1}^n\eta_{1,(j,1)}(\hat{q}_1(\tau),\tau)\eta_{1,(j,1)}(\hat{q}_1(\tau' ),\tau' ) \\
& - \frac{1}{n}\sum_{k =1}^{\lfloor n/2 \rfloor}\eta_{1,(k,1)}(\hat{q}_1(\tau),\tau)\eta_{1,(k,3)}(\hat{q}_1(\tau' ),\tau' ) - \frac{1}{n}\sum_{k =1}^{\lfloor n/2 \rfloor}\eta_{1,(k,1)}(\hat{q}_1(\tau' ),\tau' )\eta_{1,(k,3)}(\hat{q}_1(\tau),\tau).
\end{align*}
The first term on the RHS of the above display is just $I(\tau,\tau' )$ defined in Step (1), whose limit is established in \eqref{eq:I0}. For the second and third terms, we note that $(k,1) \neq (k,3)$, which implies, given $\{X_i,A_i\}_{i=1}^{2n}$, $(\eta_{1,(k,1)}(q_1(\tau),\tau),\eta_{1,(k,1)}(q_1(\tau' ),\tau' )) \indep(\eta_{1,(k,3)}(q_1(\tau),\tau),\eta_{1,(k,3)}(q_1(\tau' ),\tau' ))$. Then, by the same argument in \eqref{eq:Itilde} and the discussion below, we have
\begin{align*}
\sup_{\tau, \tau' \in \Upsilon}\left| \frac{1}{n}\sum_{k =1}^{\lfloor n/2 \rfloor}\eta_{1,(k,1)}(\hat{q}_1(\tau),\tau)\eta_{1,(k,3)}(\hat{q}_1(\tau' ),\tau' ) \right| \convP 0,
\end{align*}
and
\begin{align*}
\sup_{\tau, \tau' \in \Upsilon}\left|\frac{1}{n}\sum_{k =1}^{\lfloor n/2 \rfloor}\eta_{1,(k,1)}(\hat{q}_1(\tau' ),\tau' )\eta_{1,(k,3)}(\hat{q}_1(\tau),\tau) \right| \convP 0.
\end{align*}
This implies that, uniformly over $\tau,\tau'  \in \Upsilon$,
\begin{align*}
\widehat{I}(\tau,\tau' ) \convP  \min(\tau,\tau' ) - \tau\tau'  - \mathbb{E}m_{1,\tau}(X)m_{1,\tau' }(X).
\end{align*}

By the same argument as in the proof of Lemma \ref{lem:23}, we have
\begin{align*}
\sup_{\tau, \tau' \in \Upsilon}\left|\widehat{II}(\tau,\tau' )  \right| \convP 0 \quad \text{and} \quad \sup_{\tau, \tau' \in \Upsilon}\left|\widehat{III}(\tau,\tau' )  \right| \convP 0.
\end{align*}
For $\widehat{IV}(\tau,\tau' )$, we note $m_{1,\tau}(x,q)$ is Lipschitz in $x$ by Assumption \ref{ass:reg}. Therefore, by Assumption \ref{ass:pair}, we have
\begin{align*}
\sup_{\tau, \tau' \in \Upsilon}\left|\widehat{IV}(\tau,\tau' )\right| \lesssim \frac{1}{n}\sum_{k =1}^{\lfloor n/2 \rfloor}||X_{(k,1)} - X_{(k,3)}||_2^2 \convP 0.
\end{align*}
Combining the above results, we show that, uniformly over $\tau,\tau'  \in \Upsilon$,
\begin{align*}
\tilde{\Sigma}^*_{2,1,1}(\tau,\tau' ) \convP \min(\tau,\tau' ) - \tau\tau'  - \mathbb{E}m_{1,\tau}(X)m_{1,\tau' }(X).
\end{align*}

For $\tilde{\Sigma}^*_{2,1,2}(\tau,\tau' )$, we have
\begin{align*}
& \tilde{\Sigma}^*_{2,1,1}(\tau,\tau' ) \\
= &\; \frac{1}{n}\sum_{k =1}^{\lfloor n/2 \rfloor}\left[\eta_{1,(k,1)}(\hat{q}_1(\tau),\tau) - \eta_{1,(k,3)}(\hat{q}_1(\tau),\tau)  \right]\left[\eta_{0,(k,2)}(\hat{q}_0(\tau' ),\tau' ) - \eta_{0,(k,4)}(\hat{q}_0(\tau' ),\tau' )\right] \\
& + \frac{1}{n}\sum_{k =1}^{\lfloor n/2 \rfloor}\left[\eta_{1,(k,1)}(\hat{q}_1(\tau),\tau) - \eta_{1,(k,3)}(\hat{q}_1(\tau),\tau)  \right]\left[m_{0,\tau' }(X_{(k,2)},\hat{q}_0(\tau' )) - m_{0,\tau' }(X_{(k,4)},\hat{q}_0(\tau' ))\right] \\
&  + \frac{1}{n}\sum_{k =1}^{\lfloor n/2 \rfloor}\left[m_{1,\tau}(X_{(k,1)},\hat{q}_1(\tau)) - m_{1,\tau}(X_{(k,3)},\hat{q}_1(\tau))  \right]\left[\eta_{0,(k,2)}(\hat{q}_0(\tau' ),\tau' ) - \eta_{0,(k,4)}(\hat{q}_0(\tau' ),\tau' )\right] \\
& + \frac{1}{n}\sum_{k =1}^{\lfloor n/2 \rfloor}\left[m_{1,\tau}(X_{(k,1)},\hat{q}_1(\tau)) - m_{1,\tau}(X_{(k,3)},\hat{q}_1(\tau))  \right]\left[m_{0,\tau' }(X_{(k,2)},\hat{q}_0(\tau' )) - m_{0,\tau' }(X_{(k,4)},\hat{q}_0(\tau' ))\right] \\
\equiv&\; \overline{I}(\tau,\tau' ) + \overline{II}(\tau,\tau' ) + \overline{III}(\tau,\tau' ) + \overline{IV}(\tau,\tau' ).
\end{align*}
Because $(k,1),\cdots,(k,4)$ are distinctive, the elements
\begin{align*}
\left(\eta_{1,(k,1)}(q,\tau), \eta_{1,(k,3)}(q,\tau), \eta_{0,(k,2)}(q',\tau), \eta_{0,(k,4)}(q',\tau)\right)
\end{align*}
are mutually independent conditionally on $\{X_i,A_i\}_{i=1}^{2n}$. Then, by the same arguments as in \eqref{eq:I1} and \eqref{eq:I2}, we have
\begin{align*}
\sup_{\tau, \tau' \in \Upsilon}|\overline{I}(\tau,\tau' )| \convP 0.
\end{align*}
By the same argument as in the proof of Lemma \ref{lem:23}, we also have
\begin{align*}
\sup_{\tau, \tau' \in \Upsilon}|\overline{II}(\tau,\tau' )| \convP 0 \quad \text{and} \quad \sup_{\tau, \tau' \in \Upsilon}|\overline{III}(\tau,\tau' )| \convP 0.
\end{align*}
Last, by Assumption \ref{ass:pair}, we have
\begin{align*}
\sup_{\tau, \tau' \in \Upsilon}|\overline{IV}(\tau,\tau' )| \lesssim & \frac{1}{n}\sum_{k =1}^{\lfloor n/2 \rfloor}||X_{(k,1)} - X_{(k,3)}||_2||X_{(k,2)} - X_{(k,4)}||_2 \\
\lesssim & \frac{1}{n}\sum_{k =1}^{\lfloor n/2 \rfloor}||X_{(k,1)} - X_{(k,3)}||_2^2 + \frac{1}{n}\sum_{k =1}^{\lfloor n/2 \rfloor}||X_{(k,2)} - X_{(k,4)}||_2^2 \convP 0.
\end{align*}
Combining the above results, we have
\begin{align*}
\sup_{\tau, \tau' \in \Upsilon}|\tilde{\Sigma}^*_{2,1,2}(\tau,\tau' )| \convP 0.
\end{align*}

We can derive the limits of $\tilde{\Sigma}^*_{2,2,1}(\tau,\tau' )$ and $\tilde{\Sigma}^*_{2,2,2}(\tau,\tau' )$ in the same manner. To sum up, uniformly over $\tau,\tau' \in \Upsilon$, we have
\begin{align*}
\tilde{\Sigma}^*_2 \convP \begin{pmatrix}
\min(\tau,\tau' ) - \tau\tau'  - \mathbb{E}m_{1,\tau}(X_i)m_{1,\tau' }(X_i) & 0 \\
0 & \min(\tau,\tau' ) - \tau\tau'  - \mathbb{E}m_{0,\tau}(X_i)m_{0,\tau' }(X_i)
\end{pmatrix}.
\end{align*}
The stochastic equicontinuity and uniform boundedness in probability of $S_{n,2}^*(\tau)$ can be established similarly to $S_{n,1}^*(\tau)$.

\vspace{2mm}
\noindent \textbf{Step (3).}
Because both $S_n(\tau)$ and $S_n^*(\tau)$ are stochastically equicontinuous and
$$\sup_{\tau \in \Upsilon}\left(||S_n(\tau)||_2 + ||S_n^*(\tau)||_2 \right)= O_p(1),$$ we can apply \citet[Theorem 2]{K09} and have
\begin{align}
\label{eq:betastar}
\sqrt{n}(\hat{\beta}^*(\tau) - \beta(\tau) )= Q^{-1}(\tau)\begin{pmatrix}
1 & 1 \\
1 & 0
\end{pmatrix}\left(S_n(\tau) + S_n^*(\tau)\right) + R^*(\tau),
\end{align}
where $\sup_{\tau \in \Upsilon} ||R^*(\tau)||_2 = o_p(1)$. Taking the difference between \eqref{eq:betastar} and \eqref{eq:beta}, we have
\begin{align*}
\sqrt{n}(\hat{\beta}^*(\tau) - \hat{\beta}(\tau) ) = Q^{-1}(\tau)\begin{pmatrix}
1 & 1 \\
1 & 0
\end{pmatrix}S_n^*(\tau) + \tilde{R}^*(\tau),
\end{align*}
where $\sup_{\tau \in \Upsilon} ||\tilde{R}^*(\tau)||_2 = o_p(1)$. In addition, given the data, $S_{n,1}^*(\tau)$ and $S_{n,2}^*(\tau)$ are independent. Steps (1) and (2) show that uniformly over $\tau \in \Upsilon$ and conditionally on data, $S_n^*(\tau) = \frac{S_{n,1}^*(\tau) + S_{n,2}^*(\tau)}{\sqrt{2}}$ converges to a Gaussian process with covariance kernel
\begin{align*}
& \frac{1}{2}\left[ \tilde{\Sigma}_1(\tau,\tau' ) +\tilde{\Sigma}_2(\tau,\tau' )\right],
\end{align*}
where $\tilde{\Sigma}_1(\tau,\tau' ) $ and $\tilde{\Sigma}_2(\tau,\tau' )$ are defined in \eqref{eq:sigmatilde1} and \eqref{eq:sigmatilde2}, respectively. The weak limit of $S_n^*(\tau)$ given data coincides with the weak limit of $S_n(\tau)$. This implies, given the data, that
\begin{align*}
\sqrt{n}(\hat{q}^*(\tau) -\hat{q}(\tau)) \convD \mathcal{B}(\tau),
\end{align*}
where $\mathcal{B}(\tau)$ is the Gaussian process defined in Theorem \ref{thm:est}. This concludes the proof.

\section{Proof of Theorem \ref{thm:ipw_w}}
\label{sec:thmipww}
We first focus on $\hat{q}_{ipw,1}^w(\tau)$. Let $u \in \Re$ and
\begin{align*}
\tilde{L}^w_n(u,\tau)  = \sum_{i=1}^{2n}\frac{\xi_iA_i}{2\hat{A}_i}\left[\rho_\tau(Y_i - q_1(\tau) - u/\sqrt{n}) - \rho_\tau(Y_i - q_1(\tau))\right].
\end{align*}
Then, by change of variables, we have
\begin{align*}
\sqrt{n}(\hat{q}_{ipw,1}^w(\tau) - q_1(\tau)) = \argmin_u \tilde{L}^w_n(u,\tau).
\end{align*}
Notice that $\tilde{L}^w_n(u,\tau)$ is convex in $u$ for each $\tau$ and bounded in $\tau$ for each $u$. In the following, we divide the proof into three steps. In Step (1), we show that there exists
$$\tilde{g}^w_n(u,\tau) = - u\widetilde{W}^w_{n,1}(\tau) + \frac{f_1(q_1(\tau))u^2}{2}$$
for some $ \widetilde{W}^w_{n,1}(\tau)$ such that for each $u$,
\begin{align*}
\sup_{\tau \in \Upsilon}|\tilde{L}^w_n(u,\tau) - \tilde{g}^w_n(u,\tau)| \convP 0.
\end{align*}
In Step (2), we show $\sup_{\tau \in \Upsilon}|\widetilde{W}^w_{n,1}(\tau)|= O_p(1)$. Then by \citet[Theorem 2]{K09}, we have
\begin{align*}
\sqrt{n}(\hat{q}_{ipw,1}^w(\tau) - q_1(\tau))= [f_1(q_1(\tau))]^{-1}\widetilde{W}^w_{n,1}(\tau) + \tilde{r}_{n,1}(\tau),
\end{align*}
where $\sup_{\tau \in \Upsilon}|\tilde{r}_{n,1}(\tau)| = o_p(1)$. For the same reason, we can show
\begin{align*}
\sqrt{n}(\hat{q}_{ipw,0}^w(\tau) - q_0(\tau))= [f_0(q_0(\tau))]^{-1}\widetilde{W}^w_{n,0}(\tau) + \tilde{r}_{n,0}(\tau),
\end{align*}
for some $\widetilde{W}^w_{n,0}(\tau)$ to be specified later and $\sup_{\tau \in \Upsilon}|\tilde{r}_{n,0}(\tau)| = o_p(1)$. Last, in Step (3), we establish the weak convergence of
\begin{align*}
\sqrt{n}(\hat{q}_{ipw}^w(\tau) - \hat{q}(\tau))
\end{align*}
conditionally on data.

\vspace{2mm}
\noindent \textbf{Step (1).} Similar to Step (1) in the previous section, we have
\begin{align*}
& \tilde{L}^w_n(u,\tau) =  -\widetilde{W}^w_{n,1}(\tau)u +\tilde{Q}^w_n(u,\tau),
\end{align*}
where
\begin{align*}
\widetilde{W}^w_{n,1}(\tau) = \sum_{i=1}^{2n}\frac{\xi_i A_i}{2\sqrt{n}\hat{A}_i}\left(\tau- 1\{Y_i(1)\leq q_1(\tau)\}\right)
\end{align*}
and
\begin{align}
\label{eq:Qnipw}
\tilde{Q}^w_n(u,\tau) = & \sum_{i=1}^{2n}\frac{\xi_iA_i}{2\hat{A}_i}\int_0^{\frac{u}{\sqrt{n}}}\left(1\{Y_i(1) -  q_1(\tau)\leq v\} - 1\{Y_i(1) -  q_1(\tau)\leq 0\} \right)dv \notag  \\
= & \sum_{i=1}^{2n}\xi_iA_i\int_0^{\frac{u}{\sqrt{n}}}\left(1\{Y_i(1) -  q_1(\tau)\leq v\} - 1\{Y_i(1) -  q_1(\tau)\leq 0\} \right)dv \notag \\
& + \sum_{i=1}^{2n}\frac{\xi_iA_i(1/2 - \hat{A}_i)}{\hat{A}_i}\int_0^{\frac{u}{\sqrt{n}}}\left(1\{Y_i(1) -  q_1(\tau)\leq v\} - 1\{Y_i(1) -  q_1(\tau)\leq 0\} \right)dv \notag \\
\equiv & \tilde{Q}^w_{n,1}(u,\tau) + \tilde{Q}^w_{n,2}(u,\tau).
\end{align}
Similar to $Q^p_{n,1}(u,\tau)$ in Section \ref{sec:pair_pf}, we have
\begin{align}
\label{eq:Qn1fipw}
\sup_{\tau \in \Upsilon}\left|\tilde{Q}^w_{n,1}(u,\tau) - \frac{f_1(q_1(\tau))u^2}{2}\right| = o_p(1).
\end{align}
For $\tilde{Q}^w_{n,2}(u,\tau)$, we have, with probability approaching one,
\begin{align}
\label{eq:q20}
|\tilde{Q}^w_{n,2}(u,\tau)| \leq & \max_{i \in [2n]}|\hat{A}_i - 1/2| \sum_{i =1}^{2n }\frac{\xi_i}{1/2 - \max_{i \in [2n]}|\hat{A}_i - 1/2|} 1\{  |Y_i(1)-q_1(\tau)|\leq u/\sqrt{n} \} \frac{|u|}{\sqrt{n}} \notag \\
\leq & \max_{i \in [2n]}|\hat{A}_i - 1/2| \sum_{i =1}^{2n} 4\xi_i 1\{  |Y_i(1)-q_1(\tau)|\leq u/\sqrt{n} \} \frac{|u|}{\sqrt{n}},
\end{align}
where the second inequality follows the fact that, w.p.a.1, $|\hat{A}_i - 1/2| \leq 1/4$ as proved in Lemma \ref{lem:sieve}. Because $\{\xi_i,Y_i(1)\}_{i \in [2n]}$ are i.i.d., by the usual maximal inequality, we can show that
\begin{align}
\label{eq:q21}
\sup_{\tau \in \Upsilon}\left|\sum_{i =1}^{2n} 4\xi_i 1\{ |Y_i(1)-q_1(\tau)|\leq u/\sqrt{n} \} \frac{|u|}{\sqrt{n}} - \mathbb{E}\sum_{i =1}^{2n} 4\xi_i 1\{  |Y_i(1)-q_1(\tau)|\leq u/\sqrt{n} \} \frac{|u|}{\sqrt{n} }\right|= o_p(1).
\end{align}
In addition,
\begin{align}
\label{eq:q2E}
\mathbb{E}\sum_{i =1}^{2n} 4\xi_i 1\{  |Y_i(1)-q_1(\tau)|\leq u/\sqrt{n} \} \frac{|u|}{\sqrt{n}} \lesssim \sqrt{n}|u|\left(F_1(q_1(\tau) + \frac{|u|}{\sqrt{n}}) - F_1(q_1(\tau) - \frac{|u|}{\sqrt{n}})\right) \lesssim u^2.
\end{align}
Combining \eqref{eq:q20}--\eqref{eq:q2E} with the fact that $ \max_{i \in [2n]}|\hat{A}_i - 1/2| = o_p(1)$ as proved in Lemma \ref{lem:sieve}, we have
\begin{align*}
\sup_{\tau \in \Upsilon}|\tilde{Q}^w_{n,2}(u,\tau)| = o_p(1).
\end{align*}
This concludes the proof of Step (1).

\vspace{2mm}
\noindent \textbf{Step (2).} We have
\begin{align}
\label{eq:w10}
\widetilde{W}^w_{n,1}(\tau) = & \sum_{i=1}^{2n}\frac{\xi_i A_i}{\sqrt{n}}\left(\tau- 1\{Y_i(1)\leq q_1(\tau)\}\right) - \sum_{i=1}^{2n}\frac{2\xi_i A_i (\hat{A}_i-1/2)}{\sqrt{n}}\left(\tau- 1\{Y_i(1)\leq q_1(\tau)\}\right) \notag \\
& + \sum_{i=1}^{2n}\frac{2\xi_i A_i (1/2 - \hat{A}_i)^2}{\sqrt{n}\hat{A}_i}\left(\tau- 1\{Y_i(1)\leq q_1(\tau)\}\right) \notag \\
\equiv & \widetilde{W}^w_{n,1,1}(\tau)- \widetilde{W}^w_{n,1,2}(\tau)+\widetilde{W}^w_{n,1,3}(\tau).
\end{align}
First, $\widetilde{W}^w_{n,1,1}(\tau)$ is uniformly bounded in probability following the exact same argument as in Step (2) of Section \ref{sec:pair_pf}. Second, we have
\begin{align*}
\widetilde{W}^w_{n,1,2}(\tau) = &\sum_{i=1}^{2n}\frac{\xi_i m_{1,\tau}(X_i) (\hat{A}_i-1/2)}{\sqrt{n}} + \sum_{i=1}^{2n}\frac{2\xi_i (A_i - 1/2)m_{1,\tau}(X_i) (\hat{A}_i-1/2)}{\sqrt{n}} + \sum_{i=1}^{2n}\frac{2\xi_iA_i \eta_{1,i}(\tau) (\hat{A}_i-1/2)}{\sqrt{n}} \\
\equiv & \; I(\tau) + II(\tau) + III(\tau).
\end{align*}
Lemma \ref{lem:sieve2} shows
\begin{align*}
\sup_{\tau \in \Upsilon}\left|I(\tau) - \sum_{i=1}^{2n}\frac{\xi_i m_{1,\tau}(X_i)(A_i - 1/2)}{\sqrt{n}}\right| = o_p(1),
\end{align*}
\begin{align*}
\sup_{\tau \in \Upsilon}\left|II(\tau) \right| = o_p(1), \quad \text{and} \quad \sup_{\tau \in \Upsilon}\left|III(\tau) \right| = o_p(1).
\end{align*}
Combining the above results, we have
\begin{align}
\label{eq:W12}
\sup_{\tau \in \Upsilon}\left|\widetilde{W}^w_{n,1,2}(\tau)-\sum_{i=1}^{2n}\frac{\xi_i m_{1,\tau}(X_i)(A_i - 1/2)}{\sqrt{n}}\right| = o_p(1).
\end{align}

Last, we have, w.p.a.1,
\begin{align}
\label{eq:w13}
\sup_{\tau \in \Upsilon}|\widetilde{W}^w_{n,1,3}(\tau)| \leq & \sum_{i=1}^{2n}\frac{2\xi_i}{\sqrt{n}(1/2 - \max_{i \in [2n]}|1/2-\hat{A}_i|)}(1/2 - \hat{A}_i)^2 \notag \\
\lesssim & \frac{4}{\sqrt{n}}\sum_{i=1}\xi_i(1/2 - \hat{A}_i)^2 = o_p(1),
\end{align}
where the first inequality holds because $\sup_{\tau \in \Upsilon}|\tau -1\{Y_i(1) \leq q_1(\tau)\}|\leq 1$, the second inequality holds because $\max_i|1/2-\hat{A}_i|\leq 1/4$ w.p.a.1 as proved in Lemma \ref{lem:sieve}, and the last inequality holds due to Lemma \ref{lem:sieve}.

Combining \eqref{eq:w10}--\eqref{eq:w13}, we have
\begin{align*}
\widetilde{W}^w_{n,1}(\tau) = \sum_{i=1}^{2n}\frac{\xi_i A_i \eta_{1,i}(\tau)}{\sqrt{n}} +\sum_{i=1}^{2n}\frac{\xi_i m_{1,\tau}(X_i)}{2\sqrt{n}} + o_p(1),
\end{align*}
where the $o_p(1)$ term holds uniformly over $\tau \in \Upsilon$. By \eqref{eq:Bstar} and the argument above, we can show $\sum_{i=1}^{2n}\frac{\xi_i A_i \eta_{1,i}(\tau)}{\sqrt{n}}$ as a stochastic process over $\tau \in \Upsilon$ is stochastically equicontinuous and $\sup_{\tau \in \Upsilon}|\sum_{i=1}^{2n}\frac{\xi_i A_i \eta_{1,i}(\tau)}{\sqrt{n}}| = O_p(1)$. Furthermore, $\{\xi_i,X_i\}_{i \in [2n]}$ is a sequence of i.i.d. random variables. Then, by the usual maximal inequality, we can show $\sum_{i=1}^{2n}\frac{\xi_i m_{1,\tau}(X_i)}{2\sqrt{n}}$ as a stochastic process over $\tau \in \Upsilon$ is stochastically equicontinuous and $\sup_{\tau \in \Upsilon}|\sum_{i=1}^{2n}\frac{\xi_i m_{1,\tau}(X_i)}{2\sqrt{n}}| = O_p(1)$. This implies, $\widetilde{W}^w_{n,1}(\tau)$ as a stochastic process over $\tau \in \Upsilon$ is stochastically equicontinuous and $\sup_{\tau \in \Upsilon}|\widetilde{W}^w_{n,1}(\tau)| = O_p(1)$, and thus, is stochastically equicontinuous conditionally on data by the Markov inequality. Therefore, we have
\begin{align*}
\sqrt{n}(\hat{q}_{ipw,1}^w(\tau) - q_1(\tau)) = \frac{1}{f_1(q_1(\tau))}\left(\sum_{i=1}^{2n}\frac{\xi_i A_i \eta_{1,i}(\tau)}{\sqrt{n}} +\sum_{i=1}^{2n}\frac{\xi_i m_{1,\tau}(X_i)}{2\sqrt{n}}\right) + \tilde{r}_{n,1}(\tau),
\end{align*}
where $\sup_{\tau \in \Upsilon}|\tilde{r}_{n,1}(\tau)| = o_p(1)$. Similarly, we can show that
\begin{align*}
\sqrt{n}(\hat{q}_{ipw,0}^w(\tau) - q_0(\tau)) = \frac{1}{f_0(q_0(\tau))}\left(\sum_{i=1}^{2n}\frac{\xi_i (1-A_i) \eta_{0,i}(\tau)}{\sqrt{n}} +\sum_{i=1}^{2n}\frac{\xi_i m_{0,\tau}(X_i)}{2\sqrt{n}}\right) + \tilde{r}_{n,0}(\tau),
\end{align*}
where $\sup_{\tau \in \Upsilon}|\tilde{r}_{n,1}(\tau)| = o_p(1)$.

\vspace{2mm}
\noindent \textbf{Step (3).} In the proof of Theorem \ref{thm:est}, we established that
\begin{align*}
& \sqrt{n}(\hat{q}(\tau) - q(\tau)) \\
= & \frac{1}{f_1(q_1(\tau))}\left(\sum_{i=1}^{2n}\frac{ A_i \eta_{1,i}(\tau)}{\sqrt{n}} +\sum_{i=1}^{2n}\frac{ m_{1,\tau}(X_i)}{2\sqrt{n}}\right) \\
&-  \frac{1}{f_0(q_0(\tau))}\left(\sum_{i=1}^{2n}\frac{ (1-A_i) \eta_{0,i}(\tau)}{\sqrt{n}} +\sum_{i=1}^{2n}\frac{ m_{0,\tau}(X_i)}{2\sqrt{n}}\right) + r_b(\tau),
\end{align*}
where $\sup_{\tau \in \Upsilon}|r_{b}(\tau)| = o_p(1)$. Then, we have
\begin{align*}
\sqrt{n}(\hat{q}_{ipw}^w(\tau) -\hat{q}(\tau)) = & \frac{1}{f_1(q_1(\tau))}\left(\sum_{i=1}^{2n}\frac{(\xi_i-1) A_i \eta_{1,i}(\tau)}{\sqrt{n}}\right) -\frac{1}{f_0(q_0(\tau))}\left(\sum_{i=1}^{2n}\frac{(\xi_i-1) (1-A_i) \eta_{0,i}(\tau)}{\sqrt{n}}\right)\\
& +\sum_{i=1}^{2n}\frac{(\xi_i-1) }{2\sqrt{n}} \left(\frac{m_{1,\tau}(X_i)}{f_1(q_1(\tau))}-\frac{m_{0,\tau}(X_i)}{f_0(q_0(\tau))}\right) + \tilde{r}_b(r),
\end{align*}
where $\sup_{\tau \in \Upsilon}|\tilde{r}_{b}(\tau)| = o_p(1)$. The conditional stochastic equicontinuity of the first three terms on the RHS of the above display has been established in Step (2). The finite-dimensional convergence of $\{\sqrt{n}(\hat{q}_{ipw}^w(\tau_l) -\hat{q}(\tau_l))\}_{l=1}^L$ given data can be shown using the Cram\'{e}r-Wold device and the Lyapunov CLT. Next, we determine the covariance kernel of  $ \sqrt{n}(\hat{q}_{ipw}^w(\tau) -\hat{q}(\tau))$ given data. Specifically, the covariance kernel is the limit of the display below:
\begin{align}
\label{eq:cov}
& \frac{1}{f_1(q_1(\tau))f_1(q_1(\tau' ))}\sum_{i=1}^{2n}\frac{A_i \eta_{1,i}(\tau)\eta_{1,i}(\tau' )}{n} + \frac{1}{f_0(q_0(\tau))f_0(q_0(\tau' ))}\sum_{i=1}^{2n}\frac{(1-A_i) \eta_{0,i}(\tau)\eta_{0,i}(\tau' )}{n} \notag \\
& + \sum_{i=1}^{2n}\frac{1}{4n} \left(\frac{m_{1,\tau}(X_i)}{f_1(q_1(\tau))}-\frac{m_{0,\tau}(X_i)}{f_0(q_0(\tau))}\right)\left(\frac{m_{1,\tau' }(X_i)}{f_1(q_1(\tau' ))}-\frac{m_{0,\tau' }(X_i)}{f_0(q_0(\tau' ))}\right) \notag \\
& + \frac{1}{2n}\sum_{i=1}^{2n} \frac{(1-A_i)\eta_{0,i}(\tau)}{f_0(q_0(\tau))}\left(\frac{m_{1,\tau' }(X_i)}{f_1(q_1(\tau' ))}-\frac{m_{0,\tau' }(X_i)}{f_0(q_0(\tau' ))}\right) + \frac{1}{2n}\sum_{i=1}^{2n} \frac{A_i\eta_{1,i}(\tau)}{f_1(q_1(\tau))}\left(\frac{m_{1,\tau' }(X_i)}{f_1(q_1(\tau' ))}-\frac{m_{0,\tau' }(X_i)}{f_0(q_0(\tau' ))}\right) \notag \\
& + \frac{1}{2n}\sum_{i=1}^{2n} \frac{(1-A_i)\eta_{0,i}(\tau' )}{f_0(q_0(\tau' ))}\left(\frac{m_{1,\tau}(X_i)}{f_1(q_1(\tau))}-\frac{m_{0,\tau}(X_i)}{f_0(q_0(\tau))}\right) + \frac{1}{2n}\sum_{i=1}^{2n} \frac{A_i\eta_{1,i}(\tau' )}{f_1(q_1(\tau' ))}\left(\frac{m_{1,\tau}(X_i)}{f_1(q_1(\tau))}-\frac{m_{0,\tau}(X_i)}{f_0(q_0(\tau))}\right).
\end{align}
Note that \eqref{eq:etaeta} implies
\begin{align*}
\frac{1}{f_1(q_1(\tau))f_1(q_1(\tau' ))}\sum_{i=1}^{2n}\frac{A_i \eta_{1,i}(\tau)\eta_{1,i}(\tau' )}{n} \convP & \ \frac{\min(\tau, \tau' ) - \mathbb{E}F_1(q_1(\tau)|X_i)F_1(q_1(\tau' )|X_i)}{f_1(q_1(\tau))f_1(q_1(\tau' ))} \\
= & \ \frac{\min(\tau, \tau' ) -  \tau\tau'  -\mathbb{E}m_{1,\tau}(X_i)m_{1,\tau' }(X_i)}{f_1(q_1(\tau))f_1(q_1(\tau' ))}.
\end{align*}
Similarly,
\begin{align*}
\frac{1}{f_0(q_0(\tau))f_0(q_0(\tau' ))}\sum_{i=1}^{2n}\frac{(1-A_i) \eta_{0,i}(\tau)\eta_{0,i}(\tau' )}{n} \convP \ \frac{\min(\tau, \tau' ) -  \tau\tau'  -\mathbb{E}m_{0,\tau}(X_i)m_{0,\tau' }(X_i)}{f_0(q_0(\tau))f_0(q_0(\tau' ))}.
\end{align*}
By the law of large numbers,
\begin{align*}
& \sum_{i=1}^{2n}\frac{1}{4n} \left(\frac{m_{1,\tau}(X_i)}{f_1(q_1(\tau))}-\frac{m_{0,\tau}(X_i)}{f_0(q_0(\tau))}\right)\left(\frac{m_{1,\tau' }(X_i)}{f_1(q_1(\tau' ))}-\frac{m_{0,\tau' }(X_i)}{f_0(q_0(\tau' ))}\right) \\
\convP & \ \frac{1}{2}\mathbb{E}\left(\frac{m_{1,\tau}(X_i)}{f_1(q_1(\tau))}-\frac{m_{0,\tau}(X_i)}{f_0(q_0(\tau))}\right)\left(\frac{m_{1,\tau' }(X_i)}{f_1(q_1(\tau' ))}-\frac{m_{0,\tau' }(X_i)}{f_0(q_0(\tau' ))}\right).
\end{align*}
Last, by Lemma \ref{lem:w1}, the last four terms on the RHS of \eqref{eq:cov} will vanish. Hence,
\begin{align*}
\text{\eqref{eq:cov}} \convP \Sigma(\tau,\tau' ),
\end{align*}
where $\Sigma(\tau,\tau' )$ is defined in Theorem \ref{thm:est}. This concludes the proof.

\section{Technical Lemmas}
\label{sec:lem}
\subsection{A Maximal Inequality with i.n.i.d. Random Variables}
\label{sec:max}
Although \cite{CCK14} derived their Corollary 5.1 for i.i.d. data, the result is still valid when the data are independent but not identically distributed (i.n.i.d.). In this section, we restate their corollary for i.n.i.d. data and provide a brief justification. The proof is due to \cite{CCK14}. We include this section purely for clarification. Let $\{W_i\}_{i=1}^n$ be a sequence of i.n.i.d. random variables taking values in a measurable space $(S,\mathcal{S})$ with distributions $\Pi_{i=1}^n\mathbb{P}^{(i)}$. Let $\mathcal{F}$ be a generic class of measurable functions $S \mapsto \Re$ with envelope $F$. Further denote $\overline{\mathbb{P}}f = \frac{1}{n}\sum_{i=1}^n\mathbb{P}^{(i)} f$, $||f||_{\overline{\mathbb{P}},2} = \sqrt{\overline{\mathbb{P}}f^2}$ and $\mathbb{P}_nf$ is the usual empirical process $\mathbb{P}_nf = \frac{1}{n}\sum_{i=1}^n f(W_i)$, $\sigma^2 = \sup_{f \in \mathcal{F}}\overline{\mathbb{P}}f^2 \leq \overline{\mathbb{P}}F^2$, and $M = \max_{i \in [n]}F(W_i)$.

\begin{lem}
	\label{lem:max_eq}
	Suppose $\overline{\mathbb{P}}F^2<\infty$ and there exist constants $a \geq e$ and $v \geq 1$ such that
	\begin{align}
	\sup_Q N(\mathcal{F},e_Q,\eps||F||_{Q,2}) \leq \left(\frac{a}{\eps}\right)^v, \quad \forall \eps \in (0,1],
	\label{eq:VC}
	\end{align}
	where $e_Q(f,g) = ||f-g||_{Q,2}$ and the supremum is taken over all finitely discrete probability measures on $(S,\mathcal{S})$. Then,
	\begin{align*}
	\mathbb{E}||\sqrt{n}(\mathbb{P}_n - \overline{\mathbb{P}})||_{\mathcal{F}} \lesssim \sqrt{v \sigma^2 \log\left(\frac{a ||F||_{\overline{\mathbb{P}},2}}{\sigma}\right)} + \frac{v||M||_2}{\sqrt{n}}\log\left(\frac{a ||F||_{\overline{\mathbb{P}},2}}{\sigma}\right).
	\end{align*}
\end{lem}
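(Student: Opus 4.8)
\section{Proof of Lemma \ref{lem:max_eq}}

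The plan is to follow the argument of \cite{CCK14} for their Corollary 5.1 essentially verbatim, with the common law $P$ everywhere replaced by the averaged measure $\overline{\mathbb{P}} = \frac1n\sum_{i=1}^n \mathbb{P}^{(i)}$, and to check at each step that the only distributional inputs used are the independence of the $W_i$ and the uniform entropy bound \eqref{eq:VC}, neither of which requires the $W_i$ to be identically distributed. First I would symmetrize. The symmetrization inequality for independent (not necessarily i.i.d.) summands, \citet[Lemma 2.3.1]{VW96}, gives
\begin{align*}
\mathbb{E}\Vert\sqrt{n}(\mathbb{P}_n - \overline{\mathbb{P}})\Vert_{\mathcal{F}} \le 2\,\mathbb{E}\left\Vert \frac{1}{\sqrt{n}}\sum_{i=1}^n \epsilon_i f(W_i)\right\Vert_{\mathcal{F}},
\end{align*}
where $\{\epsilon_i\}$ are i.i.d. Rademacher signs independent of $\{W_i\}$. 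This reduces the task to bounding the conditional Rademacher process, and it is the one place where identical distribution would otherwise have entered the centering, so it is important that Lemma 2.3.1 needs only independence.

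Conditionally on $\{W_i\}_{i=1}^n$, the process $f \mapsto n^{-1/2}\sum_i \epsilon_i f(W_i)$ is sub-Gaussian with respect to the random semimetric $e_{\mathbb{P}_n}(f,g) = \Vert f - g\Vert_{\mathbb{P}_n,2}$. I would therefore apply Dudley's entropy bound — a generic-chaining estimate that is purely a statement about sub-Gaussian processes and is insensitive to how the $W_i$ are distributed — to obtain, conditionally,
\begin{align*}
\mathbb{E}_\epsilon\left\Vert \frac{1}{\sqrt{n}}\sum_{i=1}^n \epsilon_i f(W_i)\right\Vert_{\mathcal{F}} \lesssim \int_0^{\hat\sigma}\sqrt{1+\log N(\eps,\mathcal{F},e_{\mathbb{P}_n})}\,d\eps,
\end{align*}
with $\hat\sigma^2 = \sup_{f\in\mathcal{F}}\mathbb{P}_n f^2$, up to an additive remainder of order $n^{-1/2}\Vert M\Vert$ from the coarsest chaining level. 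The crucial point is that $N(\eps,\mathcal{F},e_{\mathbb{P}_n})$ is controlled by \eqref{eq:VC}, which is a supremum over all finitely discrete probability measures and hence applies to the empirical measure $\mathbb{P}_n$ regardless of the non-identical underlying laws. Substituting the VC-type bound $N \le (a/\eps)^v$ and evaluating the integral yields a bound of the form $\sqrt{v}\,\hat\sigma\,\sqrt{\log(a\Vert F\Vert_{\mathbb{P}_n,2}/\hat\sigma)}$ plus the customary $\Vert M\Vert$-chaining remainder divided by $\sqrt{n}$, exactly as in \cite{CCK14}.

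It then remains to de-symmetrize and to replace the random quantities $\hat\sigma$ and $\Vert F\Vert_{\mathbb{P}_n,2}$ by their averaged counterparts $\sigma$ and $\Vert F\Vert_{\overline{\mathbb{P}},2}$. Here I would take expectations over $\{W_i\}$, using $\mathbb{E}\,\mathbb{P}_n F^2 = \overline{\mathbb{P}}F^2 = \Vert F\Vert_{\overline{\mathbb{P}},2}^2$ and the concavity of $x \mapsto \sqrt{x}\sqrt{\log(c/\sqrt{x})}$ with Jensen's inequality to pass $\mathbb{E}$ inside. The one genuinely delicate step — and the main obstacle — is bounding $\mathbb{E}\,\hat\sigma^2 = \mathbb{E}\sup_{f\in\mathcal{F}}\mathbb{P}_n f^2$: a second symmetrization together with the contraction principle (the map $x\mapsto x^2$ is $2M$-Lipschitz on $[-M,M]$, so the class $\{f^2:f\in\mathcal{F}\}$ has Rademacher complexity $\lesssim M$ times that of $\mathcal{F}$) shows that $\mathbb{E}\sup_f \mathbb{P}_n f^2 \le \sigma^2 + C\,n^{-1/2}\Vert M\Vert_2\,\mathbb{E}\Vert\sqrt{n}(\mathbb{P}_n-\overline{\mathbb{P}})\Vert_{\mathcal{F}}$, which feeds the left-hand side back into the chaining estimate. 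Inserting this produces a quadratic inequality for $\mathbb{E}\Vert\sqrt{n}(\mathbb{P}_n-\overline{\mathbb{P}})\Vert_{\mathcal{F}}$; solving it as in \cite{CCK14} and discarding lower-order terms gives the stated two-term bound. Throughout, every constant and every maximal-inequality input depends on the data only through $\overline{\mathbb{P}}$ and the uniform entropy bound, so the i.i.d. proof transfers with $\overline{\mathbb{P}}$ in place of $P$, which is exactly the claim.
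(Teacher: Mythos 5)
Your proposal is correct and follows essentially the same route as the paper's proof: symmetrization via \citet[Lemma 2.3.1]{VW96}, the conditional sub-Gaussian chaining bound with the uniform entropy condition applied to $\mathbb{P}_n$, Jensen's inequality exploiting concavity of the entropy integral, a second symmetrization plus the contraction principle to bound $\mathbb{E}\sup_{f\in\mathcal{F}}\mathbb{P}_n f^2$ in terms of the quantity being bounded, and solving the resulting self-bounding inequality as in \cite{CCK14}. The only step you compress is the Cauchy--Schwarz and moment-comparison argument (the paper's use of \citet[Lemma 6.8]{LT13}) needed to decouple the random envelope maximum $M$ from the supremum in the feedback term, which also produces the lower-order $n^{-1}\Vert M\Vert_{\mathbb{P},2}^2$ contribution; this is a detail of execution, not a gap in the approach.
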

The proof of Lemma \ref{lem:max_eq} is exactly the same as that for \citet[Corollary 5.1]{CCK14} with $\mathbb{P}$ replaced by $\overline{\mathbb{P}}$. For brevity, we just highlight some key steps below.
\begin{proof}
	Let $\{\eps_i\}_{i=1}^n$ be a sequence of Rademacher random variables that is independent of $\{W_i\}_{i=1}^n$, $\sigma_n^2 = \sup_{f \in \mathcal{F}}\mathbb{P}_n f^2$, and $Z= \mathbb{E}\left[\left\Vert \frac{1}{\sqrt{n}}\sum_{i=1}^n \eps_i f(W_i) \right\Vert_{\mathcal{F}} \right]$. Then, by \citet[Lemma 2.3.1]{VW96} or \citet[Lemma 6.3]{LT13},
	$$\mathbb{E}||\sqrt{n}(\mathbb{P}_n - \overline{\mathbb{P}})||_{\mathcal{F}}\leq 2Z.$$
	Note  \citet[Lemma 6.3]{LT13} only requires $\{W_i\}_{i=1}^n$ to be independent. In addition, let the uniform entropy integral be
	\begin{align}
	\label{eq:J}
	J(\delta) \equiv J(\delta,\mathcal{F},F) = \int_{0}^{\delta}\sup_Q\sqrt{1+\log N(\mathcal{F},e_Q,\eps ||F||_{Q,2})}d\eps
	\end{align}
	where $e_Q(f,g) = ||f-g||_{Q,2}$ and the supremum is taken over all finitely discrete probability measures on $(S,\mathcal{S})$. Then, we have
	\begin{equation}
	\begin{aligned}
	Z = & \mathbb{E} \mathbb{E}\biggl[\biggl|\biggl|\frac{1}{\sqrt{n}}\sum_{i=1}^n\eps_if(W_i) \biggr|\biggr|_\mathcal{F} |W_1,\cdots,W_n\biggr] \\
	\lesssim & \mathbb{E} \biggl[||F||_{\mathbb{P}_n,2}J(\sigma_n/||F||_{\mathbb{P}_n,2}) \biggr]  \\
	\lesssim &  ||F||_{\overline{\mathbb{P}},2} J(\sqrt{\mathbb{E}\sigma_n^2}/||F||_{\overline{\mathbb{P}},2}),
	\label{eq:Z}
	\end{aligned}
	\end{equation}
	where the second inequality is due to the Jensen's inequality and the fact that $J(\sqrt{x/y})\sqrt{y}$ is concave in $(x,y)$ as shown by \cite{CCK14}. To see the first inequality, note that by the Hoeffding inequality,
	\begin{align*}
	\mathbb{P}\left(\left|\frac{1}{\sqrt{n}}\sum_{i=1}^n\eps_if(W_i)\right| \geq t \biggl| \{W_i\}_{i=1}^n\right) \lesssim \exp\left( - \frac{t^2/2}{\frac{1}{n}\sum_{i =1}^nf(W_i)^2}\right),
	\end{align*}
	which implies the stochastic process $\frac{1}{\sqrt{n}}\sum_{i=1}^n\eps_if(W_i)$ indexed by $f$ is sub-Gaussian conditionally on $\{W_i\}_{i=1}^n$. Then, the first inequality in \eqref{eq:Z} follows \citet[Corollary 2.2.8]{VW96}, where we let $\delta = \sigma_n/||F||_{\mathbb{P}_n,2}$ and $\sigma_n$ can be viewed as the diameter of the class of functions $\mathcal{F}$. We also note that this is a conditional argument, which is still valid even when $\{W_i\}_{i=1}^n$ is i.n.i.d.
	
	Next, we aim to bound $\mathbb{E}\sigma_n^2$. Recall $\sigma^2 = \sup_{f \in \mathcal{F}}\overline{\mathbb{P}}f^2$. We have, for i.n.i.d. $\{W_i\}_{i=1}^n$,
	\begin{equation}
	\begin{aligned}
	\label{eq:sigma}
	\mathbb{E}\sigma_n^2 \leq & \; \sigma^2 + \mathbb{E}( ||(\mathbb{P}_n - \overline{\mathbb{P}})f^2||_\mathcal{F})\\
	\leq & \;\sigma^2 + 2\mathbb{E}\left[\left\Vert \frac{1}{n}\sum_{i=1}^n \eps_i f^2(W_i)\right\Vert_\mathcal{F}\right]   \\
	\leq &\; \sigma^2 + 8\mathbb{E}\left[M\left\Vert \frac{1}{n}\sum_{i=1}^n \eps_i f(W_i)\right\Vert_\mathcal{F}\right]   \\
	\leq &\; \sigma^2 + 8||M||_{\mathbb{P},2}\{\mathbb{E}[||\mathbb{P}_n \eps_i f(W_i)||^2_\mathcal{F}]\}^{1/2}  \\
	\leq &\; \sigma^2 + C||M||_{\mathbb{P},2}\{\mathbb{E}[||\mathbb{P}_n \eps_i f(W_i)||_\mathcal{F}] + n^{-1}||M||_{\mathbb{P},2}\}  \\
	= & \;\sigma^2 + Cn^{-1/2}||M||_{\mathbb{P},2}Z + Cn^{-1}||M||_{\mathbb{P},2}^2,
	\end{aligned}
	\end{equation}
	where the first inequality is due to the triangle inequality, the second inequality is due to \citet[Lemma 6.3]{LT13}, the third inequality is due to \citet[Theorem 4.12]{LT13}, the fourth inequality is due to the Cauchy-Schwarz inequality, and the fifth inequality is due to \citet[Lemma 6.8]{LT13} with $q =2$.
	
	Given \eqref{eq:sigma}, \cite{CCK14} then proved the results that, for $\delta = \sigma/||F||_{\overline{\mathbb{P}},2}$,
	\begin{align}
	\label{eq:thm51}
	\mathbb{E}[\sqrt{n}||\mathbb{P}_n - \overline{\mathbb{P}}||_{\mathcal{F}}] \lesssim J(\delta,\mathcal{F},F)||F||_{\overline{\mathbb{P}},2} + \frac{||M||_{\mathbb{P},2}J^2(\delta,\mathcal{F},F)}{\delta^2 \sqrt{n}}.
	\end{align}
	In this step, they relied on the facts that $J(\delta) = J(\delta,\mathcal{F},F)$ is concave in $\delta$ and $\delta \mapsto J(\delta)/\delta$ is nonincreasing. The desired result is a quick corollary of \eqref{eq:thm51} by noticing that, under \eqref{eq:VC},
	\begin{align}
	\label{eq:J2}
	J(\delta) \leq \int_0^\delta \sqrt{1+\nu \log\left(\frac{a}{\eps}\right)}d\eps \leq 2\sqrt{2\nu}\delta\sqrt{\log\left(\frac{a}{\delta}\right)}.
	\end{align}
\end{proof}

%
%will not repeat the proof of \cite{CCK14}. Instead, we highlight five key components for \cite{CCK14}'s proof and illustrate that all of them are valid for i.n.i.d. random variables.
%
%

\subsection{Technical Lemmas Used in the Proof of Theorem \ref{thm:est}}
\begin{lem}
	\label{lem:Q1}
	Recall $H_n(X_i,\tau)$ defined in \eqref{eq:H}. Under the assumptions in Theorem \ref{thm:est},
	\begin{align*}
	\sup_{\tau \in \Upsilon}\left|\sum_{i=1}^{2n} \left(A_i-\frac{1}{2}\right)H_n(X_i,\tau)\right| =o_p(1),
	\end{align*}
	and
	\begin{align*}
	\sup_{\tau \in \Upsilon}&\left|\sum_{i=1}^{2n} A_i \left[\xi^*_i\int_0^{\frac{u_0+ u_1}{\sqrt{n}}}\left(1\{Y_i(1) -  q_1(\tau)\leq v\} - 1\{Y_i(1) -  q_1(\tau)\leq 0\} \right)dv- H_n(X_i,\tau)\right]\right| = o_p(1),
	\end{align*}
	where either $\xi^*_i = 1$, $\xi^*_i = \xi_i$ defined in Assumption \ref{ass:weight}, or $\xi^*_i = \xi_i^p$ defined in Assumption \ref{ass:pair_boot}.
\end{lem}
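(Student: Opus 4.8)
The plan is to handle the two displays separately; in each case I reduce to a conditionally centered sum and apply a maximal inequality over $\tau$. Throughout set $\delta=(u_0+u_1)/\sqrt n$. By Assumption \ref{ass:reg}(i) the conditional densities are uniformly bounded, so
\begin{align*}
|H_n(X_i,\tau)| = \left|\int_0^{\delta}\left[F_1(q_1(\tau)+v|X_i) - F_1(q_1(\tau)|X_i)\right]dv\right| \lesssim \delta^2 = O(1/n),
\end{align*}
uniformly in $(i,\tau)$. For the first display I exploit the matched-pair design: since exactly one unit per pair is treated, $A_{\pi(2j-1)}+A_{\pi(2j)}=1$, so with $D_j=A_{\pi(2j-1)}-\tfrac12\in\{\pm\tfrac12\}$,
\begin{align*}
\sum_{i=1}^{2n}\left(A_i-\tfrac12\right)H_n(X_i,\tau)=\sum_{j=1}^{n}D_j\left[H_n(X_{\pi(2j-1)},\tau)-H_n(X_{\pi(2j)},\tau)\right].
\end{align*}
By Assumption \ref{ass:assignment1}(iii) the $D_j$ are, conditionally on $\{X_i\}_{i=1}^{2n}$, i.i.d., bounded and mean zero, so this is a scaled Rademacher process indexed by $\tau$. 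The class $\{H_n(\cdot,\tau):\tau\in\Upsilon\}$ is generated by the indicators $1\{Y(1)\le q_1(\tau)+v\}$ and is therefore of VC type, while the pairwise-difference envelope is square-summable at rate $\sum_{j=1}^n\sup_\tau[H_n(X_{\pi(2j-1)},\tau)-H_n(X_{\pi(2j)},\tau)]^2\lesssim n\,\delta^4=O(1/n)$. The conditional sub-Gaussian maximal bound already used in \eqref{eq:Z} then bounds the conditional expected supremum by (a bounded entropy integral)$\times O(n^{-1/2})=O(n^{-1/2})$, and conditional Markov plus dominated convergence give the stated $o_p(1)$.

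For the second display, write $g_i(\tau)=\int_0^{\delta}(1\{Y_i(1)-q_1(\tau)\le v\}-1\{Y_i(1)-q_1(\tau)\le 0\})dv$ and $\psi_i(\tau)=\xi_i^* g_i(\tau)-H_n(X_i,\tau)$. Since $\mathbb{E}\xi_i^*=1$ and $\xi_i^*$ is independent of $(Y_i(1),X_i)$ we have $\mathbb{E}[\psi_i(\tau)|X_i]=0$ in all three cases, and, as treatment is independent of $(Y(1),\xi^*)$ given covariates, also $\mathbb{E}[\psi_i(\tau)|\{A_l,X_l\}_l]=0$. By the same construction as in \eqref{eq:=d}, conditionally on $\{A_i,X_i\}_{i=1}^{2n}$ the treated-unit sum $\sum_{i=1}^{2n}A_i\psi_i(\tau)$ equals in distribution a sum $\sum_{j=1}^{n}\tilde\psi_j(\tau)$ of independent, centered terms; the shared weights $\xi_{(j,1)}^p=\xi_{(j,0)}^p$ are still independent across pairs. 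The key quantitative input is that $g_i(\tau)$ is nonzero only when $Y_i(1)$ lies within $\delta$ of $q_1(\tau)$ and is then at most $|\delta|$, whence $\mathbb{E}[\psi_i(\tau)^2|X_i]\lesssim \mathbb{E}[\xi_i^{*2}]\,\delta^2\,\mathbb{P}(|Y_i(1)-q_1(\tau)|\le|\delta|\,|X_i)=O(|\delta|^3)=O(n^{-3/2})$. Thus the variance proxy of Lemma \ref{lem:max_eq} satisfies $\sigma_n^2\lesssim n^{-3/2}$, while the envelope is $F\lesssim\xi^*|\delta|$ with $\|\max_j\xi_j^*\|_2=O(\log n)$ under the sub-exponential tails in Assumptions \ref{ass:weight} and \ref{ass:pair_boot}. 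Applying Lemma \ref{lem:max_eq} to the VC-type class $\{\psi_i(\cdot)\}$ then yields
\begin{align*}
\mathbb{E}\Big[\sup_{\tau\in\Upsilon}\Big|\sum_{j=1}^{n}\tilde\psi_j(\tau)\Big|\Big]=\sqrt n\,\mathbb{E}\|\sqrt n(\mathbb{P}_n-\overline{\mathbb{P}})\|_{\mathcal F}\lesssim \sqrt n\left(\sqrt{n^{-3/2}\log n}+\frac{\log^2 n}{n}\right)=o(1),
\end{align*}
so the supremum is $o_p(1)$ unconditionally.

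The main obstacle is the uniform-in-$\tau$ control rather than the pointwise variance bounds, which are routine. One must confirm that the classes $\{H_n(\cdot,\tau)\}$ and $\{\psi_i(\cdot)\}$ are of VC type with a fixed index, so that the entropy integral in Lemma \ref{lem:max_eq} (and in \eqref{eq:Z}) is bounded, and then track the delicate interplay between the shrinking variance proxy ($O(n^{-3/2})$ from the width-$\delta$ integration window) and the envelope's dependence on the possibly heavy bootstrap weights, ensuring that the envelope-driven second term of the maximal inequality still vanishes after multiplication by $\sqrt n$. For the first display the analogous care lies in phrasing the process as a Rademacher process conditionally on the covariates and verifying that the pairwise-difference envelope is square-summable at the rate $1/n$ supplied by the density bound.
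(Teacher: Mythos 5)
Your treatment of the second display is essentially the paper's own argument: you condition on $\{A_i,X_i\}_{i=1}^{2n}$, invoke the relabeling in \eqref{eq:=d} to reduce the treated-unit sum to a sum of independent, conditionally centered terms (correctly noting that the paired weights $\xi^p$ remain independent across pairs because only one unit per pair is treated), bound the variance proxy by $O(n^{-3/2})$ via the width-$\delta$ integration window, and apply Lemma \ref{lem:max_eq} with the sub-exponential envelope. Your rate bookkeeping $\sqrt{n}\bigl(\sqrt{n^{-3/2}\log n}+n^{-1}\log^{2}n\bigr)=o(1)$ is right, and it is in fact more careful than the paper's own display, which identifies the sum with $\Vert\mathbb{P}_n-\overline{\mathbb{P}}\Vert_{\mathcal{F}_4}$ rather than with $n\Vert\mathbb{P}_n-\overline{\mathbb{P}}\Vert_{\mathcal{F}_4}$; you keep the extra factor of $\sqrt{n}$ explicit, which is exactly what makes the conclusion non-trivial.

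Your treatment of the first display is a genuinely different route. The paper uses no cancellation at all: it bounds the sum pair by pair with the triangle inequality and gets smallness from the design, i.e.\ from the Lipschitz-in-$x$ condition of Assumption \ref{ass:reg}(iii) combined with $\frac1n\sum_{j}\Vert X_{\pi(2j-1)}-X_{\pi(2j)}\Vert_2\convP 0$ (Assumption \ref{ass:assignment1}(iv)). You instead exploit the conditional Rademacher structure of $A_{\pi(2j-1)}-A_{\pi(2j)}$ together with the crude bound $|H_n(X_i,\tau)|\lesssim (u_0+u_1)^2/n$ from the bounded conditional density, so the conditional variance at each $\tau$ is $O(n^{-1})$. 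This is valid and in a sense more robust: it uses neither Assumption \ref{ass:assignment1}(iv) nor Assumption \ref{ass:reg}(iii), so it would survive poorly matched pairs. Note that with your envelope the triangle inequality alone only yields $O(1)$; the extra $n^{-1/2}$ comes entirely from sign cancellation, so the conditional maximal inequality is indispensable in your argument whereas it is absent from the paper's.

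The step you should not wave through is the entropy control behind ``the class $\{H_n(\cdot,\tau):\tau\in\Upsilon\}$ is generated by the indicators and is therefore of VC type.'' The function $H_n(\cdot,\tau)$ is an integral of conditional distribution functions, and VC-type bounds for the generating indicator class do not transfer automatically to covering-number bounds for this class of functions of $x$. What your chaining argument actually requires is control of the covering numbers of $\Upsilon$ under the random pseudo-metric $d(\tau,\tau')^2=\sum_j\bigl(h_j(\tau)-h_j(\tau')\bigr)^2$, where $h_j(\tau)=H_n(X_{\pi(2j-1)},\tau)-H_n(X_{\pi(2j)},\tau)$. This can be supplied from the paper's assumptions: by Assumption \ref{ass:reg}(i)--(ii) and the Lipschitz continuity of $q_1(\cdot)$ (which follows because $f_1(q_1(\tau))$ is bounded away from zero on $\Upsilon$), one gets $|H_n(x,\tau)-H_n(x,\tau')|\lesssim C(x)\,\delta^{2}\bigl(|\tau-\tau'|+\delta\bigr)$. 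But since only $\mathbb{E}C(X)<\infty$ is assumed, a Dudley integral over $d$ is awkward (its control would effectively need $\mathbb{E}C^{2}(X)<\infty$, and your variance and envelope claims also pick up harmless $C(X_i)\delta^{3}$ corrections). The clean completion is a discretization: on a grid of mesh $1/n$, Hoeffding plus a union bound give $\max_k|\sum_j D_j h_j(\tau_k)|=O_p(\sqrt{\log n/n})$, while between grid points the triangle inequality and the displayed modulus give a remainder of order $\bigl(\sum_i C(X_i)\bigr)\delta^{2}(n^{-1}+\delta)=O_p(n)\cdot O(n^{-3/2})=O_p(n^{-1/2})$. With that substitution your first-display argument is complete.
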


\begin{proof}
	For the first result, we have
	\begin{align*}
	& \sup_{\tau \in \Upsilon}\left|\sum_{i=1}^{2n} \left(A_i-\frac{1}{2}\right)H_n(X_i,\tau)\right| \notag \\
	&\leq  \frac{1}{2}\sum_{j=1}^{n} \sup_{\tau \in \Upsilon}|H_n(X_{\pi(2j-1)},\tau) - H_n(X_{\pi(2j)},\tau)| \notag \\
	&\leq  \sum_{j=1}^n \frac{1}{2}\int_0^{\frac{|u_0+u_1|}{\sqrt{n}}}\sup_{\tau \in \Upsilon}|f_{1}(q_1(\tau)+\tilde{v}_j|X_{\pi(2j-1)}) - f_{1}(q_1(\tau)+\tilde{v}_j|X_{\pi(2j)})|vdv \notag \\
	&\lesssim  \sum_{j=1}^n \int_0^{\frac{|u_0+u_1|}{\sqrt{n}}}||X_{\pi(2j-1)} - X_{\pi(2j)}||_2vdv \notag \\
	&\lesssim  \frac{(u_0+u_1)^2}{n}\sum_{j=1}^n ||X_{\pi(2j-1)} - X_{\pi(2j)}||_2 \convP 0,
	\end{align*}
	where the first inequality is due to the fact that for the $j$-th pair, $(A_{\pi(2j-1)}-1/2,A_{\pi(2j)}-1/2)$ is either $(1/2,-1/2)$ or $(-1/2,1/2)$, the second inequality is by standard Taylor expansion to the first order where $|\tilde{v}_j| \leq (|u_0+u_1|)/\sqrt{n}$, the third inequality is due to Assumption \ref{ass:reg}, and the last convergence is due to Assumption \ref{ass:assignment1}.
	
	Let $(\tilde{\xi}_j^*,\tilde{Y}_j(1),\tilde{X}_j) = (\xi_{i_j}^*,Y_{i_j}(1),X_{i_j})$ where $i_j$ is the $j$-th smallest index in the set $\{i\in [2n]: A_i = 1 \}$. Then, we have
	\begin{align*}
	\sup_{\tau \in \Upsilon}&\left|\sum_{i=1}^{2n} A_i \left[\xi^*_i\int_0^{\frac{u_0+u_1}{\sqrt{n}}}\left(1\{Y_i(1) -  q_1(\tau)\leq v\} - 1\{Y_i(1) -  q_1(\tau)\leq 0\} \right)dv- H_n(X_i,\tau)\right]\right| \biggl|\{A_i,X_i\}_{i=1}^{2n} \\
	& \stackrel{d}{=}  ||\mathbb{P}_n - \overline{\mathbb{P}}||_{\mathcal{F}_4}|\{\tilde{X}_j\}_{j=1}^n,
	\end{align*}
	where $\mathcal{F}_4 = \{\tilde{\xi}^*\int_0^{(u_0+u_1)/\sqrt{n}}\left(1\{\tilde{Y}(1) \leq q_1(\tau) + v\} -1\{\tilde{Y}(1) \leq q_1(\tau)\}\right) dv:\tau \in \Upsilon\}$, $\mathbb{P}_nf$ is the usual empirical process, $\overline{\mathbb{P}}f = \frac{1}{n}\sum_{j=1}^n \mathbb{P}^{(j)}f$, and $\mathbb{P}^{(j)}$ denotes the probability measure of $(\tilde{\xi}_j^*,\tilde{Y}_j(1))$ given $\tilde{X}_j$. Note we only have treated units ($A_i = 1$) in the summation. No matter the weights $\xi_i^*$ are generated based on Assumption \ref{ass:weight} or \ref{ass:pair_boot}, $\{\tilde{\xi}_j^*,\tilde{Y}_j(1)\}_{j=1}^n$ are independent conditional on $\{\tilde{X}_j\}_{j=1}^n$. In addition, note that $\mathcal{F}_4$ is a VC-class with a fixed VC index, has an envelope $F_j = (|u_0+u_1|\tilde{\xi}^*_j)/\sqrt{n}$, $M = \max_{j \in [n]}F_j =  (|u_0+u_1| \log(n))/\sqrt{n}$, and
	\begin{align*}
	\sigma^2 = \sup_{f \in \mathcal{F}_4}\overline{\mathbb{P}}f^2\leq & \sup_{\tau \in \Upsilon}\frac{1}{n}\sum_{j=1}^n\left[F_1\left(q_1(\tau) + \frac{|u_0+u_1|}{\sqrt{n}}\biggr|\tilde{X}_j\right) - F_1\left(q_1(\tau) - \frac{|u_0+u_1|}{\sqrt{n}}\biggr|\tilde{X}_j\right)\right] \frac{u^2}{n}  \\
	&\leq  \frac{1}{n}\sum_{j=1}^n C(\tilde{X}_j)\frac{(u_0+u_1)^2}{n^{3/2}} \\
	&=  \frac{1}{n}\sum_{i=1}^{2n} A_iC(X_i)\frac{(u_0+u_1)^2}{n^{3/2}} \\
	&\leq \left(\frac{1}{n}\sum_{i=1}^{2n} C(X_i) \right)\frac{(u_0+u_1)^2}{n^{3/2}}.
	\end{align*}
	As $\left(\frac{1}{n}\sum_{i=1}^{2n} C(X_i) \right) \stackrel{a.s.}{\rightarrow} \mathbb{E}2C(X_i)$, we have $\left(\frac{1}{n}\sum_{i=1}^{2n} C(X_i) \right) \leq 3\mathbb{E}C(X_i)$ a.s. Given such a sequence $\{X_i\}_{i \geq 1}$, Lemma \ref{lem:max_eq} implies
	\begin{align*}
	\mathbb{E}\left[||\mathbb{P}_n - \overline{\mathbb{P}}||_{\mathcal{F}_4}|\{\tilde{X}_j\}_{i=1}^n\right] \lesssim  \sqrt{\frac{ 3\mathbb{E}C(X_i) \log(n)}{n^{3/2}}} + \frac{\log^2(n)}{n} = o_{a.s.}(1).
	\end{align*}
	This implies
	\begin{align*}
	\sup_{\tau \in \Upsilon}&\left|\sum_{i=1}^{2n} A_i \left[\xi^*_i\int_0^{\frac{u_0+ u_1}{\sqrt{n}}}\left(1\{Y_i(1) -  q_1(\tau)\leq v\} - 1\{Y_i(1) -  q_1(\tau)\leq 0\} \right)dv- H_n(X_i,\tau)\right]\right| = o_p(1).
	\end{align*}
\end{proof}

\begin{lem}
	\label{lem:R12}
	Under the assumptions in Theorem \ref{thm:est},
	\begin{align*}
	\sup_{\tau \in \Upsilon}\left|\sum_{i=1}^{2n} \frac{(A_i - 1/2)}{\sqrt{n}}m_{1,\tau}(X_i)\right| = o_p(1).
	\end{align*}
\end{lem}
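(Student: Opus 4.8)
The plan is to exploit the exact within-pair balance of the treatment indicators to rewrite the statistic as a mean-zero sign (Rademacher) process whose conditional $L_2$ radius vanishes, and then to control it uniformly in $\tau$ by a conditional sub-Gaussian maximal inequality. The pointwise version of the claim is immediate from a conditional variance computation; the only real work is the uniformity over $\tau \in \Upsilon$.

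First I would use that, conditionally on $\{X_i\}_{i=1}^{2n}$, each pair contains exactly one treated unit, so $A_{\pi(2j-1)}-1/2 = -(A_{\pi(2j)}-1/2)$ for every $j$. Setting $s_j := A_{\pi(2j-1)}-1/2 \in \{-1/2,1/2\}$, which by Assumption \ref{ass:assignment1}(iii) is an i.i.d. mean-zero sequence independent of $\{X_i\}_{i=1}^{2n}$, the statistic collapses to
$$\sum_{i=1}^{2n}\frac{(A_i - 1/2)}{\sqrt n}\,m_{1,\tau}(X_i) = \frac{1}{\sqrt n}\sum_{j=1}^n s_j\left[m_{1,\tau}(X_{\pi(2j-1)}) - m_{1,\tau}(X_{\pi(2j)})\right].$$
This differencing is the crucial step: the signs are mean zero and the increments are now governed by the within-pair covariate gaps. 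Since $m_{1,\tau}(x) = \tau - F_1(q_1(\tau)|x)$, Assumption \ref{ass:reg}(iii) taken at $v=0$ shows $m_{1,\tau}$ is Lipschitz in $x$ uniformly over $\tau$, so the conditional $L_2$ radius obeys
$$\sigma_n^2 := \sup_{\tau \in \Upsilon}\frac{1}{4n}\sum_{j=1}^n\left[m_{1,\tau}(X_{\pi(2j-1)}) - m_{1,\tau}(X_{\pi(2j)})\right]^2 \lesssim \frac{1}{n}\sum_{j=1}^n\|X_{\pi(2j-1)} - X_{\pi(2j)}\|_2^2 \convP 0$$
by Assumption \ref{ass:assignment1}(iv) with $r=2$. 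For fixed $\tau$ this already yields the result, since $\sigma_n^2$ bounds the conditional variance and conditional Chebyshev applies.

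For the uniform statement I would condition on $\{X_i\}_{i=1}^{2n}$ and view $\tau \mapsto \frac{1}{\sqrt n}\sum_j s_j[m_{1,\tau}(X_{\pi(2j-1)}) - m_{1,\tau}(X_{\pi(2j)})]$ as a sub-Gaussian process in the $L_2(\mathbb{P}_n)$ semimetric on the pairs. The class $\{m_{1,\tau}(\cdot):\tau\in\Upsilon\}$ is VC-type with polynomial uniform entropy, the same property used in Step (2) of the proof of Theorem \ref{thm:est} where $\{m_{1,\tau}\}$ is shown to be Donsker, and it carries over to the class of pairwise differences. Applying the conditional Rademacher/Dudley bound underlying Lemma \ref{lem:max_eq} then gives
$$\mathbb{E}\left[\sup_{\tau\in\Upsilon}\left|\frac{1}{\sqrt n}\sum_{j=1}^n s_j\left(m_{1,\tau}(X_{\pi(2j-1)}) - m_{1,\tau}(X_{\pi(2j)})\right)\right|\ \bigg|\ \{X_i\}_{i=1}^{2n}\right] \lesssim \sigma_n\sqrt{\log(1/\sigma_n)} + \text{(lower order)},$$
whose right-hand side is $o_p(1)$ because $\sigma_n \convP 0$ and $t\mapsto t\sqrt{\log(1/t)}$ is continuous and vanishes at $0$.

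Finally I would pass from the conditional bound to the unconditional conclusion by conditional Markov plus bounded convergence: for any $\delta>0$,
$$\mathbb{P}\Big(\sup_{\tau\in\Upsilon}|\cdots|>\delta\Big) \le \mathbb{E}\left[\min\Big(1,\ \tfrac{1}{\delta}\,\mathbb{E}\big[\sup_{\tau\in\Upsilon}|\cdots|\ \big|\ \{X_i\}_{i=1}^{2n}\big]\Big)\right]\longrightarrow 0,$$
since the inner conditional expectation tends to $0$ in probability and the integrand is bounded by $1$. The main obstacle is this uniformity step: verifying that the differenced function class simultaneously has a vanishing conditional $L_2(\mathbb{P}_n)$ diameter $\sigma_n$ and controlled entropy, so that the sub-Gaussian maximal inequality produces a bound that collapses with $\sigma_n$. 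Once the within-pair cancellation and the uniform Lipschitz property are in hand, this is precisely the entropy estimate provided by Lemma \ref{lem:max_eq}.
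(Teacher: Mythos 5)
Your proposal is correct and follows essentially the same route as the paper's proof of Lemma \ref{lem:R12}: rewrite the statistic by within-pair differencing as a conditional Rademacher process, bound its $L_2(\mathbb{P}_n)$ radius by the within-pair covariate gaps via the Lipschitz condition in Assumption \ref{ass:reg}(iii) together with Assumption \ref{ass:assignment1}(iv), and control the supremum over $\tau$ with a conditional sub-Gaussian (entropy) maximal inequality before passing to the unconditional statement. The only cosmetic difference is that the paper restricts to the event $\{\sigma_n^2 \leq \eps\}$ and takes iterated limits ($n \rightarrow \infty$ then $\eps \rightarrow 0$), whereas you use conditional Markov plus bounded convergence; the two devices are interchangeable.
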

\begin{proof}
	We have
	\begin{align*}
	\sup_{\tau \in \Upsilon}\left|\sum_{i=1}^{2n} \frac{(A_i - 1/2)}{\sqrt{n}}m_{1,\tau}(X_i)\right| = \sup_{\tau \in \Upsilon}\left|\sum_{j=1}^{n}\frac{1}{2\sqrt{n}}(A_{\pi(2j-1)}-A_{\pi(2j)})(F_1(q_1(\tau)|X_{\pi(2j-1)}) - F_1(q_1(\tau)|X_{\pi(2j)}))\right|.
	\end{align*}
	Note that
	\begin{align*}
	\mathcal{F}_5 = \{F_1(q_1(\tau)|X) - F_1(q_1(\tau)|X'):\tau \in \Upsilon\}
	\end{align*}
	is a VC-class with a fixed VC-index and has an envelope $F=2$. This implies \eqref{eq:VC} holds with some constants $a\geq e$ and $v\geq 1$. Then, as discussed in the \eqref{eq:J2}, the uniform entropy integral $J(\delta)$ of $\mathcal{F}_5$ satisfies
	\begin{align*}
	J(\delta) \leq \int_0^\delta \sqrt{1+\nu \log\left(\frac{a}{\eps}\right)}d\eps \leq 2\sqrt{2\nu}\delta\sqrt{\log\left(\frac{a}{\delta}\right)}.
	\end{align*}

	In addition,
	\begin{align*}
	\sigma_n^2 = \sup_{\tau \in \Upsilon} \frac{1}{n}\sum_{j=1}^n (F_1(q_1(\tau)|X_{\pi(2j-1)}) - F_1(q_1(\tau)|X_{\pi(2j)}))^2 \lesssim \frac{1}{n}\sum_{j=1}^n \left \Vert X_{\pi(2j-1)}-X_{\pi(2j)} \right \Vert^2_2 \convP 0.
	\end{align*}
	
	We focus on the set $\mathcal{A}_n = \{\sigma_n^2 \leq \eps\}$ for some arbitrary $\eps>0$ so that $\mathbb{P}(\mathcal{A}_n) \geq 1-\eps$ for $n$ sufficiently large. Note that $\mathcal{A}_n$ belongs to the sigma field generated by $\{X_i\}_{i=1}^{2n}$. In addition, note that conditional on $\{X_i\}_{i=1}^{2n}$, $\{A_{\pi(2j-1)}-A_{\pi(2j)}\}_{j=1}^n$ is a sequence of i.i.d. Rademacher random variables. Then, following the same argument as in \eqref{eq:Z}
	\begin{align*}
	& \mathbb{E}\sup_{\tau \in \Upsilon}\left|\sum_{i=1}^{2n} \frac{(A_i - 1/2)}{\sqrt{n}}m_{1,\tau}(X_i)\right|1\{\mathcal{A}_n\} \\
	&= \mathbb{E}\left\{\mathbb{E}\left[\left\Vert \frac{1}{2\sqrt{n}} \sum_{j=1}^n (A_{\pi(2j-1)}-A_{\pi(2j)})f(X_{\pi(2j-1)},X_{\pi(2j)})\right\Vert_{\mathcal{F}_5}\biggl|\{X_i\}_{i=1}^{2n}\right]1\{\mathcal{A}_n\}\right\} \\
	&\lesssim  \mathbb{E} J(\sigma_n/2) 1\{\mathcal{A}_n\} \\
	&\lesssim  J(\eps/2) \lesssim \sqrt{2\nu}\eps\sqrt{\log\left(\frac{2a}{\eps}\right)},
	\end{align*}
	where the first inequality is due to \citet[Corollary 2.2.8]{VW96} and the fact that, by Hoeffding's inequality, for any $f \in \mathcal{F}_5$,
	\begin{align*}
	\mathbb{P}\biggl(\left|\sum_{j=1}^n(A_{\pi(2j-1)}-A_{\pi(2j)})f(X_{\pi(2j-1)},X_{\pi(2j)})\right| \geq x \biggl| \{X_i\}_{i=1}^{2n}\biggr) \leq 2 \exp\left(-\frac{1}{2}\frac{x^2}{\sum_{j=1}^n f^2(X_{\pi(2j-1)},X_{\pi(2j)})}\right).
	\end{align*}
	As $\sqrt{2\nu}\eps\sqrt{\log\left(\frac{2a}{\eps}\right)} \rightarrow 0$ as $\eps \rightarrow 0$, we derive the desired result by letting $n \rightarrow \infty$ followed by $\eps \rightarrow 0$.

\end{proof}

\subsection{Technical Lemmas Used in the Proof of Theorem \ref{thm:weight}}
\begin{lem}
	\label{lem:w1}
	Suppose the assumptions in Theorem \ref{thm:weight} hold, then
	\begin{align*}
	\frac{1}{n}\sum_{i =1}^{2n}A_i \eta_{1,i}(\tau) m_{j,\tau' }(X_i) \convP 0,
	\end{align*}
	\begin{align*}
	\frac{1}{n}\sum_{i =1}^{2n}A_i \eta_{1,i}(\tau) m_{0,\tau' }(X_i) \convP 0,
	\end{align*}
	\begin{align*}
	\frac{1}{n}\sum_{i =1}^{2n}(1-A_i) \eta_{0,i}(\tau) m_{0,\tau' }(X_i) \convP 0,
	\end{align*}
	and
	\begin{align*}
	\frac{1}{n}\sum_{i =1}^{2n}(1-A_i) \eta_{0,i}(\tau) m_{1,\tau' }(X_i) \convP 0.
	\end{align*}
\end{lem}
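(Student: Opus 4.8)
The plan is to prove the first display by a direct second-moment argument and to obtain the remaining three by symmetric reasoning. The key structural fact is that $\eta_{1,i}(\tau) = \tau - 1\{Y_i(1) \le q_1(\tau)\} - m_{1,\tau}(X_i)$ satisfies $\mathbb{E}(\eta_{1,i}(\tau)\mid X_i) = 0$ by the very definition of $m_{1,\tau}(X_i)$, and that by Assumption \ref{ass:assignment1}(ii) the sequence $\{Y_i(1)\}_{i=1}^{2n}$ is independent of $\{A_i\}_{i=1}^{2n}$ conditionally on $\{X_i\}_{i=1}^{2n}$; hence $\mathbb{E}(\eta_{1,i}(\tau)\mid \{A_i,X_i\}_{i=1}^{2n}) = 0$. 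Since each claim is pointwise in $(\tau,\tau')$, no maximal inequality is needed, and a bound on the $L^2$ norm of the average suffices.

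First I would verify that the quantity has mean zero. Denoting the first average by $\mathcal{T}_n = \frac{1}{n}\sum_{i=1}^{2n} A_i \eta_{1,i}(\tau) m_{1,\tau'}(X_i)$, its conditional expectation given $\{A_i,X_i\}_{i=1}^{2n}$ equals $\frac{1}{n}\sum_{i=1}^{2n} A_i m_{1,\tau'}(X_i)\,\mathbb{E}(\eta_{1,i}(\tau)\mid X_i) = 0$, so $\mathbb{E}\mathcal{T}_n = 0$. This step uses the conditional independence $A_i \indep Y_i(1) \mid \{X_i\}$ to push the conditional expectation of $\eta_{1,i}(\tau)$ through the selection variable $A_i$.

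Next I would bound $\mathbb{E}\mathcal{T}_n^2$. Expanding the square gives $\frac{1}{n^2}\sum_{i,i'} \mathbb{E}\bigl[A_iA_{i'}\eta_{1,i}(\tau)\eta_{1,i'}(\tau) m_{1,\tau'}(X_i)m_{1,\tau'}(X_{i'})\bigr]$. For the off-diagonal terms $i\neq i'$, conditioning on $\{A_i,X_i\}_{i=1}^{2n}$ and using that $\eta_{1,i}(\tau)$ and $\eta_{1,i'}(\tau)$ are then independent with mean zero, each such expectation vanishes. For the diagonal terms $i=i'$, the quantities $\eta_{1,i}(\tau)$ and $m_{1,\tau'}(X_i)$ are bounded (both are differences of indicators or of conditional probabilities), so $\mathbb{E}[A_i\eta_{1,i}^2(\tau)m_{1,\tau'}^2(X_i)]$ is bounded by a constant; summing the $2n$ diagonal terms and dividing by $n^2$ yields $\mathbb{E}\mathcal{T}_n^2 = O(1/n) = o(1)$. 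Chebyshev's inequality then delivers $\mathcal{T}_n \convP 0$.

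The remaining three displays follow by exactly the same computation after the obvious substitutions, replacing $A_i$ by $1-A_i$ and $(\eta_{1,i},m_{1,\tau'})$ by $(\eta_{0,i},m_{0,\tau'})$ or the appropriate mixed pairing; the conditional mean-zero property and the cross-$i$ conditional independence of the outcome residuals are unaffected. There is essentially no substantive obstacle here: the within-pair dependence of $\{A_i\}$, which is the crux of the difficulty elsewhere in the paper, is irrelevant for this lemma, because the off-diagonal terms are annihilated by the conditional mean-zero property of the $\eta$'s rather than by any feature of the treatment-assignment mechanism. The only point requiring care is the correct invocation of Assumption \ref{ass:assignment1}(ii) in the mean and cross-term calculations.
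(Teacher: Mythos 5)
Your proof is correct, but it takes a genuinely different route from the paper's. The paper does not compute a variance at all: it rewrites the sum over treated units as $\frac{1}{n}\sum_{j=1}^{n}\eta_{1,(j,1)}(q_1(\tau),\tau)\,m_{1,\tau'}(X_{(j,1)},q_1(\tau'))$, observes that this is the special case $v=v'=0$ of the expression in \eqref{eq:II1}, and then invokes the argument of Lemma \ref{lem:23}: Rademacher symmetrization conditional on $\{A_i,X_i\}_{i=1}^{2n}$ followed by a sub-Gaussian entropy (maximal-inequality) bound, which yields $\mathbb{E}\sup_{\tau,\tau'}|\cdot|\lesssim n^{-1/2}$, i.e.\ a bound that is uniform over $(\tau,\tau')$ and over $O(n^{-1/2})$ perturbations of the quantiles. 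Your argument instead exploits the same structural facts the paper records elsewhere (e.g.\ in Step (2) of the proof of Theorem \ref{thm:est}) --- conditionally on $\{A_i,X_i\}_{i=1}^{2n}$ the residuals $\eta_{1,i}(\tau)$ are independent across $i$ and centered, by Assumptions \ref{ass:assignment1}(i)--(ii) --- to annihilate the off-diagonal terms of the second moment directly, leaving $\mathbb{E}\mathcal{T}_n^2=O(1/n)$ and Chebyshev. What each approach buys: yours is more elementary and self-contained (no empirical-process machinery), and since the lemma is stated pointwise in $(\tau,\tau')$ and is only ever invoked for pointwise covariance-kernel limits in Theorems \ref{thm:weight} and \ref{thm:ipw_w}, it fully suffices, with an explicit $O_p(n^{-1/2})$ rate as a bonus; the paper's route costs nothing new (the machinery already exists for Lemma \ref{lem:23}, where uniformity over the estimated quantiles $\hat{q}_1(\tau)$ genuinely is needed) and delivers the stronger uniform-in-$(\tau,\tau')$ conclusion as a byproduct. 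Your closing observation is also correct and worth keeping: the within-pair dependence of the $A_i$'s is irrelevant here because the cross terms vanish through conditional centering of the $\eta$'s, not through any property of the assignment mechanism.
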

\begin{proof}
	We focus on the first statement. The rest can be proved in the same manner. Based on the notation in Section \ref{sec:boot}, we have
	\begin{align*}
	\frac{1}{n}\sum_{i =1}^{2n}A_i \eta_{1,i}(\tau) m_{1,\tau' }(X_i)  = \frac{1}{n}\sum_{j =1}^{n} \eta_{1,(j,1)}(q_1(\tau),\tau) m_{1,\tau' }(X_{(j,1)},q_1(\tau' )).
	\end{align*}
	where $\eta_{1,i}(q,\tau) = (\tau - 1\{Y_i(1) \leq q\}) - m_{1,\tau}(X_{i},q)$. This is a special case of the RHS of \eqref{eq:II1} with $\nu = \nu^\top  = 0$. Therefore, the argument after \eqref{eq:II1} shows
	\begin{align*}
	\frac{1}{n}\sum_{j =1}^{n} \eta_{1,(j,1)}(q_1(\tau),\tau) m_{1,\tau' }(X_{(j,1)},q_1(\tau' )) \convP 0.
	\end{align*}
	
\end{proof}

\subsection{Technical Lemmas Used in the Proof of Theorem \ref{thm:pair}}
\begin{lem}
	\label{lem:Q1m}
	Recall $H_n(X_i,\tau)$ defined in \eqref{eq:H}. Under the assumptions in Theorem \ref{thm:pair},
	\begin{align*}
	\sup_{\tau \in \Upsilon}&\left|\sum_{i=1}^{2n} A_i \left[\xi_i\int_0^{\frac{u_0+ u_1}{\sqrt{n}}}\left(1\{Y_i(1) -  q_1(\tau)\leq v\} - 1\{Y_i(1) -  q_1(\tau)\leq 0\} \right)dv- H_n(X_i,\tau)\right]\right| = o_p(1),
	\end{align*}
	where $\xi_i$ is defined in Assumption \ref{ass:weight}. The same result holds if $\xi_i$ is replaced by $\xi_i^p$ as defined in Assumption \ref{ass:pair_boot}.
\end{lem}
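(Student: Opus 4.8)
The plan is to recognize that this statement is precisely the second display of Lemma \ref{lem:Q1} specialized to the weights $\xi^*_i=\xi_i$ and $\xi^*_i=\xi_i^p$, so the proof will mirror that argument, with attention paid to the one feature specific to the pair-bootstrap weights. First I would condition on $\{A_i,X_i\}_{i=1}^{2n}$ and pass to the ``tilde'' representation used there: letting $i_j$ denote the $j$-th smallest index in $\{i\in[2n]:A_i=1\}$ and $(\tilde\xi_j,\tilde Y_j(1),\tilde X_j)=(\xi_{i_j},Y_{i_j}(1),X_{i_j})$ (and analogously with $\xi^p_{i_j}$ in the pair-bootstrap case), the quantity inside the supremum equals in distribution $\|\mathbb P_n-\overline{\mathbb P}\|_{\mathcal F_4}$ along the treated subsample, where $\mathcal F_4$ is the class already defined in the proof of Lemma \ref{lem:Q1}.

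The crucial step, and the only place where the pair-bootstrap weight requires extra care, is to confirm that, conditionally on $\{\tilde X_j\}_{j=1}^n$, the pairs $(\tilde\xi_j,\tilde Y_j(1))$ remain independent across $j$. For $\xi_i$ under Assumption \ref{ass:weight} this is immediate since the weights are i.i.d. and independent of the data. For $\xi_i^p$ under Assumption \ref{ass:pair_boot} it holds because each matched pair contributes exactly one treated unit, so the treated indices $\{i_j\}$ lie in distinct pairs; as the pair weights are i.i.d. across pairs, the restricted weights $\{\tilde\xi_j\}$ are independent. With independence secured, $\{\tilde\xi_j,\tilde Y_j(1)\}_{j=1}^n$ is an i.n.i.d. sequence and Lemma \ref{lem:max_eq} applies.

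Then I would verify the ingredients of Lemma \ref{lem:max_eq}: $\mathcal F_4$ is a VC-class with fixed VC-index, its envelope is $F_j=|u_0+u_1|\tilde\xi_j/\sqrt n$, and the sub-exponential upper tail of the weights gives $\|M\|_2=\|\max_{j\in[n]}F_j\|_2\lesssim |u_0+u_1|\log(n)/\sqrt n$. The variance proxy is controlled exactly as in Lemma \ref{lem:Q1}: using the Lipschitz bound in Assumption \ref{ass:reg} together with $\mathbb E[(\tilde\xi_j)^2]<\infty$ (from Assumption \ref{ass:weight} or \ref{ass:pair_boot}),
\begin{align*}
\sigma^2=\sup_{f\in\mathcal F_4}\overline{\mathbb P}f^2 \lesssim \left(\frac1n\sum_{i=1}^{2n}C(X_i)\right)\frac{(u_0+u_1)^2}{n^{3/2}},
\end{align*}
which is $O(n^{-3/2})$ almost surely since $\frac1n\sum_{i=1}^{2n}C(X_i)$ converges. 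Feeding these into Lemma \ref{lem:max_eq} yields
\begin{align*}
\mathbb E\!\left[\|\mathbb P_n-\overline{\mathbb P}\|_{\mathcal F_4}\,\big|\,\{\tilde X_j\}_{j=1}^n\right]\lesssim \sqrt{\frac{\log n}{n^{3/2}}}+\frac{\log^2 n}{n}=o_{a.s.}(1).
\end{align*}

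Finally I would remove the conditioning: taking expectations over $\{X_i\}_{i=1}^{2n}$ and invoking Markov's inequality converts the vanishing conditional expectation into the claimed unconditional $o_p(1)$ bound, establishing the result simultaneously for $\xi_i$ and $\xi_i^p$. I expect the main (indeed essentially the only nontrivial) obstacle to be the independence verification for $\xi_i^p$ in the second paragraph; once that is in place, the remainder reduces to the maximal-inequality estimate already carried out in Lemma \ref{lem:Q1}.
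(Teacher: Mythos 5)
Your proposal is correct and follows essentially the same route as the paper: the paper's proof of this lemma is exactly the tilde-representation/conditioning argument you describe, invoking the same class $\mathcal{F}_4$, the same envelope and variance bounds, and the same maximal inequality (Lemma \ref{lem:max_eq}), with the independence of $\{\tilde{\xi}_j,\tilde{Y}_j(1)\}_{j=1}^n$ across treated indices justified precisely by your observation that each pair contains exactly one treated unit so the pair weights restricted to treated units remain i.i.d. (the paper records this point in a footnote). Your explicit treatment of that independence step, and your writing $\Vert M \Vert_2 \lesssim |u_0+u_1|\log(n)/\sqrt{n}$ rather than the paper's slightly informal equality, are if anything marginally more careful than the original.
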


\begin{proof}
	Let $(\tilde{\xi}_j,\tilde{Y}_j(1),\tilde{X}_j) = (\xi_{i_j},Y_{i_j}(1),X_{i_j})$ where $i_j$ is the $j$-th smallest index in the set $\{i\in [2n]: A_i = 1 \}$. Then, similar to \eqref{eq:=dm}, we have
	\begin{align*}
	\sup_{\tau \in \Upsilon}&\left|\sum_{i=1}^{2n} A_i \left[\xi_i\int_0^{\frac{u_0+u_1}{\sqrt{n}}}\left(1\{Y_i(1) -  q_1(\tau)\leq v\} - 1\{Y_i(1) -  q_1(\tau)\leq 0\} \right)dv- H_n(X_i,\tau)\right]\right| \biggl|\{A_i,X_i\}_{i=1}^{2n} \\
	& \stackrel{d}{=}  ||\mathbb{P}_n - \overline{\mathbb{P}}||_{\mathcal{F}_4}|\{\tilde{X}_j\}_{j=1}^n,
	\end{align*}
	where $\mathcal{F}_4 = \{\tilde{\xi}\int_0^{(u_0+u_1)/\sqrt{n}}\left(1\{\tilde{Y}(1) \leq q_1(\tau) + v\} -1\{\tilde{Y}(1) \leq q_1(\tau)\}\right) dv:\tau \in \Upsilon\}$, $\mathbb{P}_nf$ is the usual empirical process, $\overline{\mathbb{P}}f = \frac{1}{n}\sum_{j=1}^n \mathbb{P}^{(j)}f$, and $\mathbb{P}^{(j)}$ denotes the probability measure of $(\tilde{\xi}_j,\tilde{Y}_j(1))$ given $\tilde{X}_j$ such that given $\{\tilde{X}_j\}_{j=1}^n$, $\{\tilde{\xi}_j,\tilde{Y}_j(1)\}_{j=1}^n$ are independent. Note that $\mathcal{F}_4$ is a VC-class with a fixed VC index, has an envelope $F_j = (|u_0+u_1|\tilde{\xi}_j)/\sqrt{n}$, $M = \max_{j \in [n]}F_j =  (|u_0+u_1| \log(n))/\sqrt{n}$, and
	\begin{align*}
	\sigma^2 = \sup_{f \in \mathcal{F}_4}\overline{\mathbb{P}}f^2\leq & \sup_{\tau \in \Upsilon}\frac{1}{n}\sum_{j=1}^n\left[F_1\left(q_1(\tau) + \frac{|u_0+u_1|}{\sqrt{n}}\biggr|\tilde{X}_j\right) - F_1\left(q_1(\tau) - \frac{|u_0+u_1|}{\sqrt{n}}\biggr|\tilde{X}_j\right)\right] \frac{u^2}{n}  \\
	&\leq  \frac{1}{n}\sum_{j=1}^n C(\tilde{X}_j)\frac{(u_0+u_1)^2}{n^{3/2}} \\
	&=  \frac{1}{n}\sum_{i=1}^{2n} A_iC(X_i)\frac{(u_0+u_1)^2}{n^{3/2}} \\
	&\leq \left(\frac{1}{n}\sum_{i=1}^{2n} C(X_i) \right)\frac{(u_0+u_1)^2}{n^{3/2}}.
	\end{align*}
	As $\left(\frac{1}{n}\sum_{i=1}^{2n} C(X_i) \right) \stackrel{a.s.}{\rightarrow} \mathbb{E}2C(X_i)$, we have $\left(\frac{1}{n}\sum_{i=1}^{2n} C(X_i) \right) \leq 3\mathbb{E}C(X_i)$ a.s. Given such a sequence $\{X_i\}_{i \geq 1}$, Lemma \ref{lem:max_eq} implies
	\begin{align*}
	\mathbb{E}\left[||\mathbb{P}_n - \overline{\mathbb{P}}||_{\mathcal{F}_4}|\{\tilde{X}_j\}_{i=1}^n\right] \lesssim  \sqrt{\frac{ 3\mathbb{E}C(X_i) \log(n)}{n^{3/2}}} + \frac{\log^2(n)}{n} = o_{a.s.}(1).
	\end{align*}
	This implies
	\begin{align*}
	\sup_{\tau \in \Upsilon}&\left|\sum_{i=1}^{2n} A_i \left[\xi_i\int_0^{\frac{u_0+ u_1}{\sqrt{n}}}\left(1\{Y_i(1) -  q_1(\tau)\leq v\} - 1\{Y_i(1) -  q_1(\tau)\leq 0\} \right)dv- H_n(X_i,\tau)\right]\right| = o_p(1).
	\end{align*}
	
\end{proof}

\subsection{Technical Lemmas Used in the Proof of Theorem \ref{thm:boot}}
\begin{lem}
	\label{lem:23}
	Recall $II(\tau,\tau' )$ and $III(\tau,\tau' )$ defined in \eqref{eq:sigma111}. Suppose the assumptions in Theorem \ref{thm:est} hold, then
	\begin{align*}
	\sup_{\tau, \tau' \in \Upsilon}|II(\tau,\tau' )| \convP 0 \quad \text{and} \quad \sup_{\tau, \tau' \in \Upsilon}|III(\tau,\tau' )| \convP 0.
	\end{align*}
\end{lem}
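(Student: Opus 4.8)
The plan is to prove the claim for $II(\tau,\tau')$; the bound for $III(\tau,\tau')$ then follows verbatim after interchanging the roles of $\tau$ and $\tau'$. The structural fact I would exploit is that, for every fixed threshold $q$, the residual $\eta_{1,i}(q,\tau)=(\tau-1\{Y_i(1)\leq q\})-m_{1,\tau}(X_i,q)$ is centered conditionally on $X_i$, i.e. $\mathbb{E}[\eta_{1,i}(q,\tau)\mid X_i]=0$, whereas the companion factor $m_{1,\tau'}(X_{(j,1)},\cdot)$ is a bounded and $X_{(j,1)}$-measurable function. I would first localize to the event $\mathcal{A}_n=\{\sup_{\tau\in\Upsilon}|\hat{q}_1(\tau)-q_1(\tau)|\leq L/\sqrt n\}$, which by \eqref{eq:qhat} has probability at least $1-\eps$ once $L$ is large; since $\eps$ is arbitrary it suffices to control $II$ on $\mathcal{A}_n$.

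For fixed $q,q'$ set $g_{\tau,\tau',q,q'}(y,x)=[\tau-1\{y\leq q\}-m_{1,\tau}(x,q)]\,m_{1,\tau'}(x,q')$, so that $II(\tau,\tau')$ equals $\mathbb{P}_n g_{\tau,\tau',q,q'}$ evaluated at $q=\hat{q}_1(\tau),\,q'=\hat{q}_1(\tau')$, where $\mathbb{P}_n$ is the empirical measure of $\{(Y_{(j,1)}(1),X_{(j,1)})\}_{j=1}^n$. Conditioning on $\{A_i,X_i\}_{i=1}^{2n}$, these observations are independent with laws $\mathbb{P}^{(j)}$, the conditional law of $(Y(1),X)$ at $X=X_{(j,1)}$, and because $m_{1,\tau'}(x,q')$ is $x$-measurable the centering is exact: $\overline{\mathbb{P}}g_{\tau,\tau',q,q'}=0$ for every fixed $q,q',\tau,\tau'$, with $\overline{\mathbb{P}}=\frac1n\sum_{j=1}^n\mathbb{P}^{(j)}$. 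Since this identity holds for all thresholds, it holds at $q=\hat{q}_1(\tau),\,q'=\hat{q}_1(\tau')$ as well, so $II(\tau,\tau')=(\mathbb{P}_n-\overline{\mathbb{P}})g_{\tau,\tau',q,q'}$ there: a purely centered empirical process.

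On $\mathcal{A}_n$ one has $\hat{q}_1(\tau)=q_1(\tau)+v$ with $|v|\leq L/\sqrt n$, whence
\begin{align*}
\sup_{\tau,\tau'\in\Upsilon}|II(\tau,\tau')|\leq ||\mathbb{P}_n-\overline{\mathbb{P}}||_{\mathcal{F}},\quad \mathcal{F}=\{g_{\tau,\tau',q_1(\tau)+v,q_1(\tau')+v'}:\tau,\tau'\in\Upsilon,\ |v|,|v'|\leq L/\sqrt n\}.
\end{align*}
The class $\mathcal{F}$ is a product of uniformly bounded classes, each nested in a VC-class of fixed index, so $\mathcal{F}$ is VC-type with a constant envelope $F\equiv 2$ and $\sigma^2=\sup_{f\in\mathcal{F}}\overline{\mathbb{P}}f^2=O(1)$ almost surely. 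Applying Lemma \ref{lem:max_eq} conditionally on $\{A_i,X_i\}_{i=1}^{2n}$ then gives $\mathbb{E}[\,||\mathbb{P}_n-\overline{\mathbb{P}}||_{\mathcal{F}}\mid\{A_i,X_i\}\,]\lesssim 1/\sqrt n$ almost surely, exactly as in the derivation of \eqref{eq:I2}. Since $\mathbb{P}(\mathcal{A}_n)\geq 1-\eps$ with $\eps$ arbitrary, a conditional Markov inequality delivers $\sup_{\tau,\tau'\in\Upsilon}|II(\tau,\tau')|=o_p(1)$, and the same steps give $\sup_{\tau,\tau'\in\Upsilon}|III(\tau,\tau')|=o_p(1)$.

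The main obstacle is organizational rather than analytic: one must separate the $X$-measurable factor $m_{1,\tau'}$ from the conditionally centered residual so that the bias term is identically zero, and one must respect the conditioning on the treatment assignment, under which the treated outcomes are independent but not identically distributed, so that the i.n.i.d. maximal inequality of Lemma \ref{lem:max_eq} — not a standard i.i.d. result — is required. The only genuine verifications are that $\mathcal{F}$ inherits the uniform entropy bound \eqref{eq:VC} (multiplying the indicator class by the bounded, Lipschitz factor $m_{1,\tau'}$ preserves the VC-type property with a fixed index) and that $\sigma^2$ stays bounded; both follow from the boundedness of $\eta_1$ and $m_{1,\tau'}$ together with Assumption \ref{ass:reg}.
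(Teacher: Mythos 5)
Your proposal is correct and matches the paper's own proof in essentially every step: localization through \eqref{eq:qhat}, the observation that $\eta_{1,(j,1)}(\cdot,\tau)\,m_{1,\tau'}(X_{(j,1)},\cdot)$ is conditionally centered given $\{A_i,X_i\}_{i=1}^{2n}$ (so the sum is a centered i.n.i.d.\ empirical process), and a VC-type entropy bound over the class indexed by $\tau,\tau'\in\Upsilon$ and $|v|,|v'|\leq L/\sqrt{n}$ yielding an $O_p(1/\sqrt{n})$ uniform bound. The only difference is presentational: you invoke Lemma \ref{lem:max_eq} as a black box, while the paper inlines the equivalent symmetrization-plus-chaining argument in \eqref{eq:II2}.
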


\begin{proof}
	We focus on bounding $II(\tau,\tau' )$. The bound for $III(\tau,\tau' )$ can be established similarly. By \eqref{eq:qhat}, we have, with probability greater than $1-\eps$,
	\begin{align}
	\label{eq:II1}
	|II(\tau,\tau' )| \leq \sup_{\tau, \tau' \in \Upsilon, |v|,|v'|\leq L/\sqrt{n}}\left|\frac{1}{n}\sum_{j =1}^n\eta_{1,(j,1)}(q_1(\tau)+v,\tau)m_{1,\tau' }(X_{(j,1)},q_1(\tau' )+v')  \right|.
	\end{align}
	We aim to bound the RHS. Let $\{\eps_j\}_{j=1}^n$ denote a sequence of i.i.d. Rademacher random variables that is independent of the data. Further denote the class of functions
	\begin{align*}
	\mathcal{F}_6 = \{\eta_{1,(j,1)}(q_1(\tau)+v,\tau)m_{1,\tau' }(X_{(j,1)},q_1(\tau' )+v'):\tau,\tau'  \in \Upsilon, |v|,|v'| \leq L/\sqrt{n} \}.
	\end{align*}
	Note $\mathcal{F}_6$ has an envelope $F=1$ and is nested by a VC-class of functions with a fixed VC-index. Then,
	\begin{align}
	\label{eq:II2}
	& \mathbb{E}\left[  \sup_{\tau, \tau' \in \Upsilon, |v|,|v'|\leq L/\sqrt{n}}\left|\frac{1}{n}\sum_{j =1}^n\eta_{1,(j,1)}(q_1(\tau)+v,\tau)m_{1,\tau' }(X_{(j,1)},q_1(\tau' )+v')  \right|    \right] \notag \\
	&= \mathbb{E} \left\{
	\mathbb{E}\left[  \sup_{\tau, \tau' \in \Upsilon, |v|,|v'|\leq L/\sqrt{n}}\left|\frac{1}{n}\sum_{j =1}^n\eta_{1,(j,1)}(q_1(\tau)+v,\tau)m_{1,\tau' }(X_{(j,1)},q_1(\tau' )+v')  \right|    |\{X_i,A_i\}_{i=1}^{2n}\right] \right\} \notag \\
	&\lesssim   \mathbb{E} \left\{
	\mathbb{E}\left[  \sup_{\tau, \tau' \in \Upsilon, |v|,|v'|\leq L/\sqrt{n}}\left|\frac{1}{n}\sum_{j =1}^n\eps_j\eta_{1,(j,1)}(q_1(\tau)+v,\tau)m_{1,\tau' }(X_{(j,1)},q_1(\tau' )+v')  \right|    |\{X_i,A_i\}_{i=1}^{2n}\right] \right\} \notag \\
	&=  \mathbb{E} \left\{
	\mathbb{E}\left[  \sup_{\tau, \tau' \in \Upsilon, |v|,|v'|\leq L/\sqrt{n}}\left|\frac{1}{n}\sum_{j =1}^n\eps_j\eta_{1,(j,1)}(q_1(\tau)+v,\tau)m_{1,\tau' }(X_{(j,1)},q_1(\tau' )+v')  \right|    |\{X_i,A_i,Y_i(1)\}_{i=1}^{2n}\right] \right\} \notag \\
	&\leq \frac{||F||_{\overline{\mathbb{P}},2}J(\sqrt{\mathbb{E}\sigma_n^2}/||F||_{\overline{\mathbb{P}},2})}{\sqrt{n}} \lesssim \frac{1}{\sqrt{n}},
	\end{align}
	where the first equality is due to the law of iterated expectation, the first inequality is due to \citet[Lemma 6.3]{LT13} and the fact that $\{\eta_{1,(j,1)}(q_1(\tau)+v,\tau)\}_{j=1}^n$ is a sequence of independent and centered random variables given $\{X_i,A_i\}_{i=1}^{2n}$, the second inequality follows the same argument in \eqref{eq:Z} with $F = 2$,
	\begin{align*}
	\sigma_n^2 =  \sup_{\tau, \tau' \in \Upsilon, |v|,|v'|\leq L/\sqrt{n}}\frac{1}{n}\sum_{j =1}^n\left[\eta_{1,(j,1)}(q_1(\tau)+v,\tau)m_{1,\tau' }(X_{(j,1)},q_1(\tau' )+v')  \right]^2 \leq 4,
	\end{align*}
	and $J(\cdot)$ being the uniform entropy integral for the class of functions $\mathcal{F}_6$ defined in \eqref{eq:J}, and the last inequality holds because when $\mathcal{F}_6$ is nested by a VC-class, $\eps_i$ is bounded, and thus, has a sub-Gaussian tail, and $\delta = \sqrt{\mathbb{E}\sigma_n^2}/||F||_{\overline{\mathbb{P}},2} \leq 1$, we have
	\begin{align*}
	J(\delta) \lesssim \delta \max(\sqrt{\log(1/\delta)},1)\lesssim 1,
	\end{align*}
	as shown in \eqref{eq:J2}. This implies, uniformly over $\tau,\tau'  \in \Upsilon$,
	\begin{align*}
	II(\tau,\tau' ) \convP 0.
	\end{align*}
	
\end{proof}

\begin{lem}
	\label{lem:4}
	Recall $R_{IV}(\tau,\tau' )$ defined in \eqref{eq:IV}. Suppose assumptions in Theorem \ref{thm:boot} hold, then
	\begin{align*}
	\sup_{\tau, \tau' \in \Upsilon}|R_{IV}(\tau,\tau' )| = o_p(1)   \quad \text{and} \quad \sup_{\tau, \tau' \in \Upsilon}\left|\frac{1}{n}\sum_{i =1}^{2n} \left(A_i-\frac{1}{2}\right) m_{1,\tau}(X_i)m_{1,\tau' }(X_i)\right| =o_p(1).
	\end{align*}
\end{lem}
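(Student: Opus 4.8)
The plan is to handle the two displayed bounds separately: the first is a quantile plug-in error and the second a treatment-assignment fluctuation. For $R_{IV}(\tau,\tau')$, recall from \eqref{eq:IV} that it is the average discrepancy between the products evaluated at the estimated quantiles and at the true quantiles, i.e. with the shorthand $m_{1,\tau}(x,q) = \tau - F_1(q|x)$,
\[
R_{IV}(\tau,\tau') = \frac1n\sum_{j=1}^n\Big[m_{1,\tau}(X_{(j,1)},\hat q_1(\tau))m_{1,\tau'}(X_{(j,1)},\hat q_1(\tau')) - m_{1,\tau}(X_{(j,1)})m_{1,\tau'}(X_{(j,1)})\Big].
\]
First I would add and subtract a mixed term to split each summand as $[m_{1,\tau}(X,\hat q_1(\tau)) - m_{1,\tau}(X)]\,m_{1,\tau'}(X,\hat q_1(\tau')) + m_{1,\tau}(X)\,[m_{1,\tau'}(X,\hat q_1(\tau')) - m_{1,\tau'}(X)]$. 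By Assumption \ref{ass:reg}(i) the conditional density $f_1(\cdot|x)$ is uniformly bounded, so $q\mapsto m_{1,\tau}(x,q)$ is Lipschitz with a constant free of $x$ and $\tau$; combined with $|m_{a,\tau}(x,q)|\le 1$, each bracket is dominated by $C\sup_{\tau\in\Upsilon}|\hat q_1(\tau) - q_1(\tau)|$. Since \eqref{eq:qhat} (a consequence of Theorem \ref{thm:est}) gives $\sup_{\tau\in\Upsilon}|\hat q_1(\tau) - q_1(\tau)| = O_p(1/\sqrt n)$, I obtain $\sup_{\tau,\tau'\in\Upsilon}|R_{IV}(\tau,\tau')| = O_p(1/\sqrt n) = o_p(1)$.

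For the second bound the strategy mirrors the proof of Lemma \ref{lem:R12}. Because within the $j$-th pair $(A_{\pi(2j-1)}-\tfrac12)+(A_{\pi(2j)}-\tfrac12)=0$, I would rewrite, with $g_{\tau,\tau'}(x) = m_{1,\tau}(x)m_{1,\tau'}(x)$,
\[
\frac1n\sum_{i=1}^{2n}\Big(A_i-\tfrac12\Big)m_{1,\tau}(X_i)m_{1,\tau'}(X_i) = \frac{1}{2n}\sum_{j=1}^n\varepsilon_j\big[g_{\tau,\tau'}(X_{\pi(2j-1)}) - g_{\tau,\tau'}(X_{\pi(2j)})\big],
\]
where $\varepsilon_j = 2(A_{\pi(2j-1)}-\tfrac12)\in\{-1,1\}$ is, conditionally on $\{X_i\}_{i=1}^{2n}$, an i.i.d. Rademacher sequence by Assumption \ref{ass:assignment1}(iii). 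I would then index by $(\tau,\tau')\in\Upsilon\times\Upsilon$ and apply the conditional maximal inequality (Lemma \ref{lem:max_eq}, or \citet[Corollary 2.2.8]{VW96}) to the class $\mathcal{F} = \{(x,x')\mapsto g_{\tau,\tau'}(x) - g_{\tau,\tau'}(x'):\tau,\tau'\in\Upsilon\}$, which has constant envelope $2$ and is VC-type, being a finite combination of bounded VC-subgraph classes. The variance proxy obeys $\sigma_n^2 = \sup_{\tau,\tau'}\frac1n\sum_{j=1}^n[g_{\tau,\tau'}(X_{\pi(2j-1)}) - g_{\tau,\tau'}(X_{\pi(2j)})]^2 \lesssim \frac1n\sum_{j=1}^n\|X_{\pi(2j-1)} - X_{\pi(2j)}\|_2^2 \convP 0$, where the inequality uses that $g_{\tau,\tau'}$ is Lipschitz in $x$ uniformly in $(\tau,\tau')$ — a product of the bounded, uniformly Lipschitz maps $m_{1,\tau}(\cdot)$ (Lipschitz continuity from Assumption \ref{ass:reg}(iii) at $v=0$) — and the convergence is Assumption \ref{ass:assignment1}(iv) with $r=2$.

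To close, conditioning on the covariates the maximal inequality bounds $\mathbb{E}[\sup_{\tau,\tau'}|\tfrac{1}{\sqrt n}\sum_j\varepsilon_j(g_{\tau,\tau'}(X_{\pi(2j-1)}) - g_{\tau,\tau'}(X_{\pi(2j)}))| \mid \{X_i\}]$ by a term of order $\sqrt{\sigma_n^2\log(1/\sigma_n)} + \log(1/\sigma_n)/\sqrt n$, which vanishes on the high-probability event $\{\sigma_n^2\le\epsilon\}$ upon letting $n\to\infty$ then $\epsilon\to0$; the extra factor $n^{-1/2}$ sitting outside the empirical process only strengthens the bound, so the unconditional statement follows by the same dominated-convergence argument used at the end of Lemma \ref{lem:R12}. \textbf{The main obstacle} is verifying that the variance proxy satisfies $\sigma_n^2\convP0$, which rests on pushing the within-pair Lipschitz bound through the product $m_{1,\tau}m_{1,\tau'}$ uniformly in both indices; once this is in place the remainder is a verbatim adaptation of Lemma \ref{lem:R12}. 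The only wrinkle relative to that lemma is the two-dimensional index set, but since products of bounded VC-subgraph classes remain VC-type the uniform entropy bound is unaffected.
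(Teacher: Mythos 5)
Your proposal is correct and follows essentially the same route as the paper: the first bound via Lipschitz continuity of $q\mapsto F_1(q|x)$ together with the $O_p(1/\sqrt{n})$ uniform consistency in \eqref{eq:qhat}, and the second via the within-pair Rademacher rewrite and conditional maximal inequality, which is exactly the Lemma \ref{lem:R12} argument the paper invokes verbatim. You in fact spell out two details the paper leaves implicit (the product decomposition of $R_{IV}$ and the VC-type property of the doubly-indexed product class $m_{1,\tau}m_{1,\tau'}$), which only strengthens the write-up.
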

\begin{proof}
	Note
	\begin{align*}
	R_{IV}(\tau,\tau' ) = \frac{1}{n}\sum_{j =1}^n \left[m_{1,\tau}(X_{(j,1)})m_{1,\tau' }(X_{(j,1)}) - m_{1,\tau}(X_{(j,1)},\hat{q}_1(\tau))m_{1,\tau' }(X_{(j,1)},\hat{q}_1(\tau' ))\right].
	\end{align*}
	By \eqref{eq:qhat} and the fact that $F_1(\cdot|X)$ is Lipschitz continuous, we have
	\begin{align*}
	& \sup_{\tau, \tau' \in \Upsilon}|R_{IV}(\tau,\tau' )| \\
	\leq &   \sup_{\tau, \tau' \in \Upsilon} \frac{1}{n}\sum_{j =1}^n \left|m_{1,\tau}(X_{(j,1)})m_{1,\tau' }(X_{(j,1)}) - m_{1,\tau}(X_{(j,1)},\hat{q}_1(\tau))m_{1,\tau' }(X_{(j,1)},\hat{q}_1(\tau' )) \right| \convP 0.
	\end{align*}
	By the same argument as in the proof of Lemma \ref{lem:R12}, we have
	\begin{align*}
	\sup_{\tau, \tau' \in \Upsilon}\left|\frac{1}{n}\sum_{i =1}^{2n} \left(A_i-\frac{1}{2}\right) m_{1,\tau}(X_i)m_{1,\tau' }(X_i)\right| \convP 0.
	\end{align*}
\end{proof}

\begin{lem}
	\label{lem:tight1}
	Recall $S_{n,1}^*(\tau)$ defined in \eqref{eq:S1}. Suppose assumptions in Theorem \ref{thm:est} hold. Then,
	$\{S_{n,1}^*(\tau):\tau \in \Upsilon\}$ is stochastically equicontinuous and $\sup_{\tau \in \Upsilon}||S_{n,1}^*(\tau)||_2 = O_p(1)$.
\end{lem}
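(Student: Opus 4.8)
The plan is to work conditionally on the data and exploit that, given $\{Y_i,X_i,A_i\}_{i=1}^{2n}$, the process $\{S_{n,1}^*(\tau):\tau \in \Upsilon\}$ is a mean-zero Gaussian process, being linear in the i.i.d. standard normal multipliers $\{\eta_j\}_{j=1}^n$. Since its two coordinates are structurally identical (one a sum over treated, the other over control units), it suffices to bound each coordinate and then the Euclidean norm follows. First I would invoke Theorem \ref{thm:est} to restrict attention, on an event of probability at least $1-\eps$, to the region $\sup_{\tau \in \Upsilon}|\hat{q}_a(\tau) - q_a(\tau)| \leq L/\sqrt{n}$ for $a=0,1$ and some large constant $L$, exactly as in \eqref{eq:qhat}. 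On this event the indicators $1\{Y_{(j,a)} \leq \hat{q}_a(\tau)\}$ all lie in the localized VC-type classes $\mathcal{F}_3$ (equivalently $\mathcal{F}_6$) already introduced, so the relevant function classes have a fixed VC index and unit envelope.

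For stochastic equicontinuity I would consider the increment class
$$\mathcal{H}_\eps = \{(\tau - 1\{Y \leq q_1(\tau)+v\}) - (\tau' - 1\{Y \leq q_1(\tau')+v'\}): \tau,\tau' \in \Upsilon, |\tau - \tau'|\leq \eps, |v|,|v'|\leq L/\sqrt{n}\},$$
which is nested in a VC class with envelope $2$. Conditionally on the data, $\frac{1}{\sqrt{n}}\sum_j \eta_j f(Y_{(j,1)})$ is sub-Gaussian with conditional variance proxy $\mathbb{P}_n f^2$, so the conditional chaining bound (\citet[Corollary 2.2.8]{VW96}, as applied in \eqref{eq:Z}) controls $\mathbb{E}[\sup_{f \in \mathcal{H}_\eps}|\frac{1}{\sqrt{n}}\sum_j \eta_j f(Y_{(j,1)})| \mid \text{data}]$ by the uniform entropy integral evaluated at $\sigma_n = (\sup_{f \in \mathcal{H}_\eps}\mathbb{P}_n f^2)^{1/2}$. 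The key computation is to show $\sigma_n^2 \lesssim \eps$ with high probability: using $f^2 \lesssim (\tau-\tau')^2 + |1\{Y \leq q_1(\tau)+v\} - 1\{Y \leq q_1(\tau')+v'\}|$, the Lipschitz continuity of $F_1(\cdot|x)$ and the bound on $f_1$ from Assumption \ref{ass:reg}, together with a uniform law of large numbers to replace the empirical average of the indicator differences by $\mathbb{E}|F_1(q_1(\tau)+v|X) - F_1(q_1(\tau')+v'|X)| \lesssim \eps + L/\sqrt{n}$. Since $J(\delta) \lesssim \delta\sqrt{\log(1/\delta)}$ for a VC class (see \eqref{eq:J2}), the conditional bound is of order $\sqrt{\eps\log(1/\eps)} + o(1)$, which vanishes as $n \to \infty$ followed by $\eps \to 0$. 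Taking expectations over the data and applying Markov's inequality (as in Step (2) of Section \ref{sec:thmest}) then yields the unconditional equicontinuity.

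For the uniform bound $\sup_{\tau \in \Upsilon}||S_{n,1}^*(\tau)||_2 = O_p(1)$ I would apply the same conditional chaining argument to the whole class (indexed by $\tau \in \Upsilon$ rather than by increments), whose entropy integral is a finite constant, giving a conditional-expectation bound of order $O(1)$ on the event above. Equivalently, one can combine the equicontinuity just established with finite-dimensional boundedness at a single $\tau_0$ through $\sup_\tau ||S_{n,1}^*(\tau)||_2 \leq ||S_{n,1}^*(\tau_0)||_2 + \sup_\tau ||S_{n,1}^*(\tau) - S_{n,1}^*(\tau_0)||_2$, where $||S_{n,1}^*(\tau_0)||_2 = O_p(1)$ because its conditional variance converges (the limit of $\tilde{\Sigma}^*_{1,1,1}(\tau_0,\tau_0)$ established in Step (1) of the proof of Theorem \ref{thm:boot}, which does not itself rely on tightness).

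The main obstacle I anticipate is the control of the conditional variance proxy $\sigma_n^2$ over the increment class: one must simultaneously absorb the deterministic gap $|\tau - \tau'| \leq \eps$ and the random quantile-estimation perturbation of size $L/\sqrt{n}$ inside the indicators, and pass from the empirical average of indicator differences to its population counterpart uniformly in $\tau,\tau'$. This is where the Lipschitz smoothness of the conditional CDF in Assumption \ref{ass:reg} and a uniform law of large numbers over the VC class are essential, and where care is needed to ensure that localizing by $\hat{q}_a$ preserves the entropy bound.
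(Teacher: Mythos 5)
Your proposal follows the paper's own proof in all essentials: localization to the event $\sup_{\tau\in\Upsilon}|\hat{q}_1(\tau)-q_1(\tau)|\le L/\sqrt{n}$ as in \eqref{eq:qhat}, an increment class of indicator differences nested in a VC class with envelope $2$ (the paper's $\mathcal{F}_7$), conditional sub-Gaussian chaining via \citet[Corollary 2.2.8]{VW96} with entropy bound \eqref{eq:J2}, a variance-proxy bound of order $\eps$ on a high-probability event (the paper obtains it by reusing the $\tilde{\Sigma}^*_{1,1,1}$ computation with $\hat{q}_1(\tau)$ replaced by $q_1(\tau)+v/\sqrt{n}$, you by the same Lipschitz-plus-ULLN calculation carried out directly), the limit $n\to\infty$ followed by $\eps\to 0$ with Markov's inequality, and finally pointwise boundedness combined with equicontinuity to get $\sup_{\tau\in\Upsilon}\|S_{n,1}^*(\tau)\|_2=O_p(1)$. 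This is essentially the same argument, and it is correct.
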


\begin{proof}
	It suffices to show the two marginals of $S_{n,1}^*(\tau)$ are stochastically equicontinuous and uniformly bounded in probability. We focus on the first marginal
	\begin{align*}
	\left\{\frac{1}{\sqrt{n}}\sum_{j=1}^n \eta_j(\tau - 1\{Y_{(j,1)} \leq \hat{q}_1(\tau)\}):\tau \in \Upsilon \right\}.
	\end{align*}
	Based on \eqref{eq:qhat}, it suffices to establish the stochastic equicontinuity of
	\begin{align*}
	\left\{\frac{1}{\sqrt{n}}\sum_{j=1}^n \eta_j(\tau - 1\{Y_{(j,1)} \leq q_1(\tau) + v/\sqrt{n}\}):\tau \in \Upsilon, |v|\leq L \right\}
	\end{align*}
	for any fixed $L$. Let
	\begin{align*}
	\mathcal{F}_7 = \begin{Bmatrix}
	(\tau - 1\{Y_{(j,1)} \leq q_1(\tau) + v/\sqrt{n}\}) - (\tau'  - 1\{Y_{(j,1)} \leq q_1(\tau' ) + v'/\sqrt{n}\}): \\
	\tau,\tau'  \in \Upsilon, |v|,|v'|\leq L, |\tau - \tau' |\leq \eps, |v-v'|\leq \eps
	\end{Bmatrix},
	\end{align*}
	which is nested by a VC-class with envelope $2$. Then, by \eqref{eq:J} and \eqref{eq:J2}, the uniform entropy integral $J(\delta)$ of $\mathcal{F}_7$ satisfies
	\begin{align*}
	J(\delta) \lesssim \delta \max(1,\sqrt{\log(1/\delta)}).
	\end{align*}

	By the calculation of $\tilde{\Sigma}^*_{1,1,1}(\tau,\tau' )$ (with $\hat{q}_1(\tau)$ replaced by $q_1(\tau) + \frac{v}{\sqrt{n}}$) in Section \ref{sec:boot_pf}, we have, uniformly over $\tau,\tau'  \in \Upsilon$, $v, v' \in [-L,L]$,
	\begin{align}
	\label{eq:sigma2n}
	\sigma_n^2(\tau,\tau' ,v,v') = & \frac{1}{n}\sum_{j=1}^{n} \left[(\tau - 1\{Y_{(j,1)} \leq q_1(\tau) + v/\sqrt{n}\}) - (\tau'  - 1\{Y_{(j,1)} \leq q_1(\tau' ) + v'/\sqrt{n}\})\right]^2 \notag \\
	\convP & \tau(1-\tau) + \tau' (1-\tau' ) - 2(\min(\tau,\tau' ) - \tau \tau' ) = |\tau - \tau' | - (\tau - \tau' )^2.
	\end{align}
	Let $\mathcal{A}_n(\eps) = 1\{ \sup_{\tau,\tau'  \in \Upsilon,v, v' \in [-L,L]} |\sigma_n^2(\tau,\tau' ,v,v') - \left( |\tau - \tau' | - (\tau - \tau' )^2 \right)|\leq \eps  \}$,  which will occur with probability approaching one. Also by construction, conditionally on data, $\frac{1}{\sqrt{n}}\sum_{j=1}^n \eta_j(\tau - 1\{Y_{(j,1)} \leq q_1(\tau) + v/\sqrt{n}\})$ is a sub-Gaussian process. Then,
	\small{
		\begin{align*}
		& \mathbb{E} \left[\sup \frac{1}{\sqrt{n}}\sum_{j=1}^n \eta_j(\tau - 1\{Y_{(j,1)} \leq q_1(\tau) + v/\sqrt{n}\}) - (\tau'  - 1\{Y_{(j,1)} \leq q_1(\tau' ) + v'/\sqrt{n}\}) \biggl| Data \right]1\{\mathcal{A}_n(\eps)\}\\
		&\lesssim J(\frac{\sup\sigma_n(\tau,\tau' ,v,v')}{2})1\{\mathcal{A}_n(\eps)\} \\
		&\lesssim J(\sqrt{\eps}) \lesssim \sqrt{\eps}\max(1,\sqrt{\log(1/\eps)}),
		\end{align*}}
	\normalsize
	where the supremum is taken over $ \tau,\tau'  \in \Upsilon, |v|,|v'|\leq L, |\tau - \tau' |\leq \eps, |v-v'|\leq \eps$, the first inequality is due to \cite[Corollary 2.2.8]{VW96}, and the second inequality is due to \eqref{eq:sigma2n} and the definition of $\mathcal{A}_n$. Then, for any $t>0$
	\small
	\begin{align*}
	&\mathbb{P}\left( \sup \frac{1}{\sqrt{n}}\sum_{j=1}^n \eta_j\left[(\tau - 1\{Y_{(j,1)} \leq q_1(\tau) + v/\sqrt{n}\}) - (\tau'  - 1\{Y \leq q_1(\tau' ) + v'/\sqrt{n}\}) \right] \geq t\right) \\
	&\leq \mathbb{P}(\mathcal{A}_n^c(\eps)) + \mathbb{P}\left( \sup \frac{1}{\sqrt{n}}\sum_{j=1}^n \eta_j\left[(\tau - 1\{Y \leq q_1(\tau) + v/\sqrt{n}\}) - (\tau'  - 1\{Y \leq q_1(\tau' ) + v'/\sqrt{n}\}) \right] \geq t,\mathcal{A}_n(\eps)\right) \\
	&\leq  \mathbb{E} \left\{ \frac{ \mathbb{E} \left[\sup \frac{1}{\sqrt{n}}\sum_{j=1}^n \eta_j(\tau - 1\{Y_{(j,1)} \leq q_1(\tau) + v/\sqrt{n}\}) - (\tau'  - 1\{Y_{(j,1)} \leq q_1(\tau' ) + v'/\sqrt{n}\}) \biggl| Data \right]1\{\mathcal{A}_n(\eps)\}}{t}\right\} \\
	& + \mathbb{P}(\mathcal{A}_n^c(\eps)) \\
	&\lesssim  \mathbb{P}(\mathcal{A}_n^c(\eps)) + \frac{\sqrt{\eps}\max(1,\sqrt{\log(1/\eps)})}{t},
	\end{align*}
	\normalsize
	where the supremum is taken over $ \tau,\tau'  \in \Upsilon, |v|,|v'|\leq L, |\tau - \tau' |\leq \eps, |v-v'|\leq \eps$. Let $n\rightarrow \infty$ followed by $\eps \rightarrow 0$, we have
	\small
	\begin{align*}
	\lim_{\eps \rightarrow 0} \limsup_{n} \mathbb{P}\left( \sup \frac{1}{\sqrt{n}}\sum_{j=1}^n \eta_j\left[(\tau - 1\{Y_{(j,1)} \leq q_1(\tau) + v/\sqrt{n}\}) - (\tau'  - 1\{Y_{(j,1)}\leq q_1(\tau' ) + v'/\sqrt{n}\}) \right] \geq t\right) = 0,
	\end{align*}
	\normalsize
	which implies $\left\{\frac{1}{\sqrt{n}}\sum_{j=1}^n \eta_j(\tau - 1\{Y_{(j,1)} \leq \hat{q}_1(\tau)\}):\tau \in \Upsilon \right\}$ is stochastically equicontinuous. In addition, for any fixed $\tau$,
	\begin{align*}
	\frac{1}{\sqrt{n}}\sum_{j=1}^n \eta_j(\tau - 1\{Y_{(j,1)} \leq \hat{q}_1(\tau)\}) = O_p(1).
	\end{align*}
	This implies $\sup_{\tau \in \Upsilon}|\frac{1}{\sqrt{n}}\sum_{j=1}^n \eta_j(\tau - 1\{Y_{(j,1)} \leq \hat{q}_1(\tau)\})| = O_p(1)$.
\end{proof}

\subsection{Technical Lemmas Used in the Proof of Theorem \ref{thm:ipw_w}}
\begin{lem}
	\label{lem:sieve}
	Suppose the assumptions in Theorem \ref{thm:ipw_w} hold, then
	\begin{align*}
	\max_{i \in [2n]}|\hat{A}_i-1/2| = o_p(1)
	\end{align*}
	and
	\begin{align*}
	\frac{1}{n}\sum_{i=1}^{2n}\xi_i(\hat{A}_i-1/2)^2 = o_p(n^{-1/2}).
	\end{align*}
\end{lem}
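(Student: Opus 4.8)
The plan is to exploit the fact that the linear probability model for the propensity score is correctly specified, so that $\hat{A}_i - 1/2 = b^\top(X_i)(\hat{\theta}-\theta_0)$ with $\theta_0 = (0.5,0,\cdots,0)^\top$ and $b^\top(X_i)\theta_0 = 1/2$ exactly. Writing $\varepsilon_i = A_i - 1/2$, the first-order condition for the weighted least-squares problem defining $\hat{\theta}$ yields the closed form $\hat{\theta}-\theta_0 = \hat{\Omega}^{-1}V_n$, where $\hat{\Omega} = \frac{1}{n}\sum_{i=1}^{2n}\xi_i b(X_i)b^\top(X_i)$ and $V_n = \frac{1}{n}\sum_{i=1}^{2n}\xi_i b(X_i)\varepsilon_i$. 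By Assumption \ref{ass:sieve}(iv) the eigenvalues of $\hat{\Omega}$ are bounded and bounded away from zero with probability approaching one, so the entire argument reduces to controlling the size of $V_n$.

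First I would bound $\mathbb{E}\|V_n\|_2^2$. Since $\{\xi_i\}$ are i.i.d. with unit mean and variance (Assumption \ref{ass:weight}) and independent of $(X,A)$, while $\varepsilon_i \in \{-1/2,1/2\}$ is bounded with $\mathbb{E}(\varepsilon_i\mid \{X_j\}_{j=1}^{2n}) = 0$, the expansion $\|V_n\|_2^2 = \sum_{k=1}^K V_{n,k}^2$ splits into diagonal and off-diagonal contributions. The diagonal terms contribute $\tfrac{1}{n}\sum_{k}\mathbb{E}b_k^2(X_i) \lesssim K/n$ by Assumption \ref{ass:sieve}(ii). For the off-diagonal terms I would condition on $\{X_i\}_{i=1}^{2n}$ and invoke Assumption \ref{ass:assignment1}(iii): across distinct matched pairs the assignments are conditionally independent with mean one-half, so those cross terms vanish in expectation, and the only surviving cross terms are those between the two units in the same pair, for which $\varepsilon_{\pi(2j-1)} = -\varepsilon_{\pi(2j)}$. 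There are $O(n)$ such within-pair terms, each bounded in magnitude by $\tfrac14\overline{C}$ via Cauchy--Schwarz and Assumption \ref{ass:sieve}(ii), giving a further $O(K/n)$. Hence $\mathbb{E}\|V_n\|_2^2 \lesssim K/n$, and Markov's inequality gives $\|V_n\|_2 = O_p(\sqrt{K/n})$.

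The two claims then follow by combining this rate with the rate conditions in Assumption \ref{ass:sieve}(iii). For the second claim I would write $\frac{1}{n}\sum_i \xi_i(\hat{A}_i-1/2)^2 = (\hat{\theta}-\theta_0)^\top \hat{\Omega}(\hat{\theta}-\theta_0) = V_n^\top\hat{\Omega}^{-1}V_n \lesssim \underline{C}^{-1}\|V_n\|_2^2 = O_p(K/n)$ on the probability-approaching-one event supplied by Assumption \ref{ass:sieve}(iv); since $K^2\zeta(K)^2\log(n) = o(n)$ forces $K = o(n^{1/2})$, this is $o_p(n^{-1/2})$. For the first claim I would use the uniform bound $\max_{i\in[2n]}|\hat{A}_i - 1/2| \leq \zeta(K)\|\hat{\theta}-\theta_0\|_2 \leq \underline{C}^{-1}\zeta(K)\|V_n\|_2 = O_p\!\big(\zeta(K)\sqrt{K/n}\big)$, and note that $\zeta(K)^2 K \leq \zeta(K)^2 K^2\log(n) = o(n)$ makes this $o_p(1)$.

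The main obstacle I anticipate is the second-moment computation for $V_n$: one must carefully account for the negative within-pair dependence of $\varepsilon_i$ induced by the matched-pairs design together with the extra randomness of the bootstrap weights $\xi_i$, and then check that the resulting $O(K/n)$ bound is sharp enough to meet the demanding $o_p(n^{-1/2})$ target after Assumption \ref{ass:sieve}(iii) is applied. The eigenvalue control of $\hat{\Omega}$ is handed to us directly by Assumption \ref{ass:sieve}(iv), so beyond the variance bound the remaining steps are routine linear algebra and rate bookkeeping.
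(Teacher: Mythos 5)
Your proposal is correct in substance but follows a genuinely different route from the paper. The paper bounds the score $V_n = \frac{1}{n}\sum_i \xi_i b(X_i)(A_i - 1/2)$ in the sup-norm, coordinate by coordinate: it first shows $\max_{k\in[K]}\frac{1}{n}\sum_j(\xi_{\pi(2j-1)}b_k(X_{\pi(2j-1)}) - \xi_{\pi(2j)}b_k(X_{\pi(2j)}))^2 \leq 2\overline{C}$ w.p.a.1 (via symmetrization and a maximal inequality, which is where the sub-exponential tail of $\xi$ is consumed), then applies Hoeffding's inequality conditionally on $\{X_i,\xi_i\}$ to the Rademacher sequence $\{A_{\pi(2j-1)}-A_{\pi(2j)}\}_j$ with a union bound over the $K$ coordinates, obtaining $\|\hat{\theta}-\theta_0\|_2 \lesssim \sqrt{K}\|V_n\|_\infty = O_p(\sqrt{K\log(n)/n})$. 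You instead compute $\mathbb{E}\|V_n\|_2^2$ exactly, using that cross-pair terms have zero conditional mean and within-pair terms number $O(n)$, and conclude $\|V_n\|_2 = O_p(\sqrt{K/n})$ by Markov. Your route buys three things: it removes the $\log n$ factor (so the final rate verification $K = o(\sqrt{n})$ follows immediately from Assumption 6(iii), whereas the paper's $O_p(K\log(n)/n)$ bound requires the claim $K^2\log^2(n) = o(n)$); it uses only the first two moments of $\xi$ rather than its tail behavior; and your identity $\frac{1}{n}\sum_i\xi_i(\hat{A}_i-1/2)^2 = V_n^\top\hat{\Omega}^{-1}V_n$ is cleaner than the paper's $\lambda_{\max}(\hat{\Omega})\|\hat{\theta}-\theta_0\|_2^2$ bound. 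The paper's sup-norm/Hoeffding machinery buys exponential tail control per coordinate, which is unnecessary here since only a rate in probability is needed.

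One step in your variance computation needs repair. You bound each within-pair cross term by $\tfrac14\overline{C}$ ``via Cauchy--Schwarz and Assumption 6(ii),'' but Assumption 6(ii) controls $\mathbb{E}b_k^2(X_i)$ for an i.i.d.\ draw, not $\mathbb{E}b_k^2(X_{\pi(2j)})$ at a matched index: $\pi(2j)$ is a data-dependent selection, so the per-pair expectation is not directly bounded. The fix is to bound in aggregate rather than term by term: use $|b_k(x)b_k(x')| \leq \tfrac12\bigl(b_k^2(x)+b_k^2(x')\bigr)$ pointwise and then sum over all pairs, exploiting that $\pi$ is a permutation of $[2n]$, so that
\begin{align*}
\sum_{j=1}^n \mathbb{E}\bigl[b_k^2(X_{\pi(2j-1)}) + b_k^2(X_{\pi(2j)})\bigr] = \mathbb{E}\sum_{i=1}^{2n} b_k^2(X_i) = 2n\,\mathbb{E}b_k^2(X_1) \leq 2n\overline{C}.
\end{align*}
This yields the same $O(K/n)$ total for the within-pair contribution, so your conclusion stands; only the justification of that one step changes.
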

\begin{proof}
	Let $\theta_0 = (0.5,0,\cdots,0)^\top $ be a $K\times 1$ vector. Then,
	
	\begin{align*}
	||\hat{\theta} - \theta_0||_2 = & \left\Vert \left[\frac{1}{n}\sum_{i=1}^{2n}\xi_i b(X_i)b(X_i)^\top \right]^{-1}\left[\frac{1}{n}\sum_{i=1}^{2n}\xi_i b(X_i)(A_i - \frac{1}{2})\right]\right\Vert_2 \\
	&\lesssim  \left\Vert\frac{1}{n}\sum_{i=1}^{2n}\xi_i b(X_i)(A_i - \frac{1}{2})\right\Vert_2 \\
	&	\lesssim  \sqrt{K}\left\Vert\frac{1}{n}\sum_{i=1}^{2n}\xi_i b(X_i)(A_i - \frac{1}{2})\right\Vert_\infty.
	\end{align*}
	
	Next, we aim to bound $\left\Vert\frac{1}{n}\sum_{i=1}^{2n}\xi_i b(X_i)(A_i - \frac{1}{2})\right\Vert_\infty$. Let $b_k(X)$ be the $k$th component of $b(X)$. Then,
	\begin{align*}
	& \max_{k \in [K]}\frac{1}{n}\sum_{j=1}^{n}(\xi_{\pi(2j-1)} b_k(X_{\pi(2j-1)})-\xi_{\pi(2j)} b_k(X_{\pi(2j)}))^2 \\
	&\lesssim   \max_{k \in [K]}\frac{1}{n}\sum_{i=1}^{2n}\xi_{i}^2 b^2_k(X_{i}) \\
	&\lesssim  \max_{k \in [K]}\mathbb{E}\xi_{i}^2 b^2_k(X_{i}) + \max_{k \in [K]}\left|\frac{1}{n}\sum_{i=1}^{2n}\left[\xi_{i}^2 b^2_k(X_{i}) - \mathbb{E}\xi_{i}^2 b^2_k(X_{i})\right]\right|.
	\end{align*}
	The first term on the RHS of the above display is bounded by $\overline{C}$ based on Assumption \ref{ass:sieve}.
	Let $\{\eps_i\}_{i \in [2n]}$ be a sequence of i.i.d. Rademacher random variables. Then,
	\begin{align*}
	\mathbb{E}\max_{k \in [K]}\left|\frac{1}{n}\sum_{i=1}^{2n}\left[\xi_{i}^2 b^2_k(X_{i}) - \mathbb{E}\xi_{i}^2 b^2_k(X_{i})\right]\right| \leq 2\mathbb{E}\max_{k \in [K]}\left|\frac{1}{n}\sum_{i=1}^{2n}\eps_i\left[\xi_{i}^2 b^2_k(X_{i}) - \mathbb{E}\xi_{i}^2 b^2_k(X_{i})\right]\right|.
	\end{align*}
	By Hoeffding's inequality,
	\begin{align*}
	\mathbb{P}\left(\left|\frac{1}{\sqrt{2n}}\sum_{i=1}^{2n}\eps_i\left[\xi_{i}^2 b^2_k(X_{i}) - \mathbb{E}\xi_{i}^2 b^2_k(X_{i})\right]\right| \geq t \biggl|\{\xi_i,X_i\}_{i \in [2n]}\right) \leq 2\exp(-\frac{t^2}{2\sigma_k^2}),
	\end{align*}
	where $\sigma_k^2 = \frac{1}{2n}\sum_{i=1}^{2n}\left[\xi_{i}^2 b^2_k(X_{i}) - \mathbb{E}\xi_{i}^2 b^2_k(X_{i})\right]^2$. Then, by \citet[Lemmas 2.2.1 and 2.2.2]{VW96},
	\begin{align*}
	\mathbb{E}\left[\max_{k \in [K]}\left|\frac{1}{n}\sum_{i=1}^{2n}\eps_i\xi_{i}^2 b^2_k(X_{i})\right|\biggl|\{\xi_i,X_i\}_{i \in [2n]}\right] \lesssim \sqrt{\frac{\log(K)}{n}}\sqrt{\max_{k \in [K]}\sigma_k^2}.
	\end{align*}
	Applying expectation on both sides and noticing that the square root function is concave, we have
	\begin{align*}
	\mathbb{E}\max_{k \in [K]}\left|\frac{1}{n}\sum_{i=1}^{2n}\eps_i\xi_{i}^2 b^2_k(X_{i})\right| \lesssim & \sqrt{\frac{\log(K)}{n}} \sqrt{\mathbb{E}\max_{k \in [K]}\sigma_k^2} \\
	\lesssim & \sqrt{\frac{\log(K)}{n}} \sqrt{\sum_{k \in [K]}\mathbb{E}\sigma_k^2} \\
	\lesssim & \sqrt{\frac{\log(K)}{n}} \zeta(K) \sqrt{K} = o(1).
	\end{align*}
	Therefore,
	\begin{align*}
	\max_{k \in [K]}\left|\frac{1}{n}\sum_{i=1}^{2n}\left[\xi_{i}^2 b^2_k(X_{i}) - \mathbb{E}\xi_{i}^2 b^2_k(X_{i})\right]\right| = o_p(1)
	\end{align*}
	and with probability approaching one,
	\begin{align*}
	\max_{k \in [K]}\frac{1}{n}\sum_{j=1}^{n}(\xi_{\pi(2j-1)} b_k(X_{\pi(2j-1)})-\xi_{\pi(2j)} b_k(X_{\pi(2j)}))^2 \leq 2\overline{C}.
	\end{align*}
	
	Let $I'_n = \{\max_{k \in [K]}\frac{1}{n}\sum_{j=1}^{n}(\xi_{\pi(2j-1)} b_k(X_{\pi(2j-1)})-\xi_{\pi(2j)} b_k(X_{\pi(2j)}))^2 \leq 2\overline{C}\}$. For $t=\sqrt{\log(n)\overline{C}}$, we have
	\begin{align*}
	& \mathbb{P}\left(\left\Vert\frac{1}{n}\sum_{i=1}^{2n}\xi_i b(X_i)(A_i - \frac{1}{2})\right\Vert_\infty \geq t/\sqrt{n},I'_n\right) \\
	= & \mathbb{E}\mathbb{P}\left(\left\Vert\frac{1}{n}\sum_{i=1}^{2n}\xi_i b(X_i)(A_i - \frac{1}{2})\right\Vert_\infty\geq t/\sqrt{n}\biggl|\{X_i,\xi_i\}_{i \in [2n]}\right) 1\{I'_n\}\\
	= & \mathbb{E}\mathbb{P}\left(\left\Vert \sum_{j=1}^{n} (A_{\pi(2j-1)}-A_{\pi(2j)})(\xi_{\pi(2j-1)} b(X_{\pi(2j-1)})-\xi_{\pi(2j)} b(X_{\pi(2j)}))\right\Vert_\infty\geq 2t\sqrt{n}\biggl|\{X_i,\xi_i\}_{i \in [2n]}\right)1\{I'_n\} \\
	\leq & \sum_{k=1}^K\mathbb{E}\mathbb{P}\left(\left| \sum_{j=1}^{n} (A_{\pi(2j-1)}-A_{\pi(2j)})(\xi_{\pi(2j-1)} b_k(X_{\pi(2j-1)})-\xi_{\pi(2j)} b_k(X_{\pi(2j)}))\right|\geq 2t\sqrt{n}\biggl|\{X_i,\xi_i\}_{i \in [2n]}\right)1\{I'_n\}\\
	\leq & \sum_{k=1}^K 2\mathbb{E}\exp\left( \frac{-2t^2n}{\sum_{j=1}^{n}(\xi_{\pi(2j-1)} b_k(X_{\pi(2j-1)})-\xi_{\pi(2j)} b_k(X_{\pi(2j)}))^2} \right)1\{I'_n\} \\
	\leq &  2\exp\left(\log(K)- \frac{t^2}{\overline{C}}\right) \rightarrow 0,
	\end{align*}
	where the second last inequality is due to Hoeffding's inequality and the fact that given $\{X_i,\xi_i\}_{i \in [2n]}$, $\{A_{\pi(2j-1)}-A_{\pi(2j)}\}_{j \in [n]}$ is an i.i.d. sequence of Rademacher random variables.
	
	This implies,
	\begin{align*}
	||\hat{\theta}-\theta_0 ||_2 = O_p(\sqrt{\frac{K \log(n)}{n}}),
	\end{align*}
	and thus
	\begin{align*}
	\max_{i \in [2n]}|\hat{A}_i-1/2| = \max_i|b^\top(X_i)  (\hat{\theta}-\theta_0)| = O_p\left( \zeta(K)\sqrt{\frac{K \log(n)}{n}}\right) = o_p(1).
	\end{align*}
	
	For the second result, we have
	\begin{align*}
	\frac{1}{n}\sum_{i=1}^{2n}\xi_i(\hat{A}_i-1/2)^2 \leq \lambda_{\max}\left(\frac{1}{n}\sum_{i=1}^{2n}\xi_ib(X_i)b^\top(X_i)  \right)||\hat{\theta}-\theta_0||_2^2 =  O_p\left(\frac{K \log(n)}{n}\right) = o_p(n^{-1/2}),
	\end{align*}
	as $K^2\log^2(n) = o(n)$.
	
\end{proof}

\begin{lem}
	\label{lem:sieve2}
	Suppose assumptions in Theorem \ref{thm:ipw_w} hold, then
	\begin{align*}
	\sup_{\tau \in \Upsilon}\left|\sum_{i=1}^{2n}\frac{\xi_i m_{1,\tau}(X_i) (\hat{A}_i-1/2)}{\sqrt{n}} - \sum_{i=1}^{2n}\frac{\xi_i m_{1,\tau}(X_i)(A_i - 1/2)}{\sqrt{n}}\right| = o_p(1),
	\end{align*}
	\begin{align*}
	\sup_{\tau \in \Upsilon}\left|\sum_{i=1}^{2n}\frac{2\xi_i (A_i - 1/2)m_{1,\tau}(X_i) (\hat{A}_i-1/2)}{\sqrt{n}}\right| = o_p(1),
	\end{align*}
	and
	\begin{align*}
	\sup_{\tau \in \Upsilon}\left|\sum_{i=1}^{2n}\frac{2\xi_iA_i \eta_{1,i}(\tau) (\hat{A}_i-1/2)}{\sqrt{n}}\right| = o_p(1).
	\end{align*}
\end{lem}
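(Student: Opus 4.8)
The plan is to substitute the sieve representation $\hat A_i - 1/2 = b^\top(X_i)(\hat\theta - \theta_0)$ (with $\theta_0 = (0.5,0,\cdots,0)^\top$) into each of the three sums, turning them into bilinear forms $c_n(\tau)^\top(\hat\theta - \theta_0)$ for an appropriate $\Re^K$-valued process $c_n(\tau)$, and then to bound them via Cauchy--Schwarz as $\sup_{\tau\in\Upsilon}|c_n(\tau)^\top(\hat\theta-\theta_0)| \le \big(\sup_{\tau\in\Upsilon}\Vert c_n(\tau)\Vert_2\big)\Vert\hat\theta - \theta_0\Vert_2$, invoking $\Vert\hat\theta-\theta_0\Vert_2 = O_p(\sqrt{K\log(n)/n})$ from Lemma \ref{lem:sieve}.

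For the first display the crucial device is the least-squares normal equation defining $\hat\theta$, namely $\sum_{i=1}^{2n}\xi_i b(X_i)(A_i - \hat A_i) = 0$. Writing $m_{1,\tau}(X_i) = b^\top(X_i)\gamma_1(\tau) + B_{1,\tau}(X_i)$ as in Assumption \ref{ass:sieve}(v), the projected part contributes $\gamma_1(\tau)^\top\sum_i\xi_i b(X_i)(\hat A_i - A_i) = 0$ \emph{exactly}, so only the approximation error remains. Since $\sup_{\tau,x}|B_{1,\tau}(x)| = o(1/\sqrt n)$ and $\sum_i\xi_i|\hat A_i - A_i| = O_p(n)$ (because $|A_i - 1/2| = 1/2$ and $\max_i|\hat A_i - 1/2| = o_p(1)$ by Lemma \ref{lem:sieve}), a crude bound yields $n^{-1/2}\cdot o(1/\sqrt n)\cdot O_p(n) = o_p(1)$ uniformly in $\tau$.

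For the second and third displays the cores of $c_n(\tau)$ are conditionally mean zero, and the work is to show $\sup_{\tau\in\Upsilon}\Vert c_n(\tau)\Vert_2 = O_p(\sqrt K)$ up to logarithmic factors. In the second display $c_n(\tau) = \tfrac{2}{\sqrt n}\sum_i\xi_i(A_i-1/2)m_{1,\tau}(X_i)b(X_i)$; conditioning on $\{X_i,\xi_i,Y_i\}$, the within-pair differences $A_{\pi(2j-1)}-1/2 = -(A_{\pi(2j)}-1/2)$ are i.i.d.\ symmetric, so $c_n(\tau)$ is a Rademacher sum whose conditional second moment, summed over the $K$ coordinates, is $O(K)$ by $\mathbb E\Vert b(X)\Vert_2^2 \le K\overline C$ from Assumption \ref{ass:sieve}(ii). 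In the third display $c_n(\tau) = \tfrac{2}{\sqrt n}\sum_i\xi_i A_i\eta_{1,i}(\tau)b(X_i)$; conditioning on $\{X_i,\xi_i,A_i\}$, the residuals $\eta_{1,i}(\tau)$ are independent across $i$ with $\mathbb E[\eta_{1,i}(\tau)\mid X_i]=0$ and conditional variance $\Lambda_\tau(X_i)\le 1/4$, again giving $\mathbb E\Vert c_n(\tau)\Vert_2^2 = O(K)$. In both cases uniformity over $\tau$ follows from the maximal inequality of Lemma \ref{lem:max_eq} applied to the VC-type classes $\{m_{1,\tau}\}$ and $\{1\{Y\le q_1(\tau)\}\}$, which inflates the pointwise bound by at most a logarithmic factor. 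Combining, each display is $O_p(\sqrt K)\cdot O_p(\sqrt{K\log(n)/n}) = O_p(\sqrt{K^2\log(n)/n}) = o_p(1)$ under Assumption \ref{ass:sieve}(iii).

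The main obstacle is the uniform-in-$\tau$ control of the growing-dimensional process $c_n(\tau)$: one must combine the VC structure in the index $\tau$ with the increasing basis dimension $K\to\infty$ so that the resulting rate stays strictly below the $\sqrt{n/(K\log n)}$ precision that $\Vert\hat\theta-\theta_0\Vert_2$ can absorb. Passing from the pointwise second-moment bounds to a genuine bound on $\sup_{\tau}\Vert c_n(\tau)\Vert_2$ while tracking the dependence on $K$ — rather than treating $\tau$ as fixed — is the step requiring the most care, and it is precisely here that Assumption \ref{ass:sieve}(ii)--(iii) and the conditional independence structure must be used simultaneously.
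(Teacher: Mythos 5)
Your treatment of the first display is essentially identical to the paper's: both exploit the weighted least-squares normal equation $\sum_{i=1}^{2n}\xi_i b(X_i)(A_i-\hat{A}_i)=0$ so that, after writing $m_{1,\tau}(X_i)=b^\top(X_i)\gamma_1(\tau)+B_{1,\tau}(X_i)$, the projected part cancels exactly and only the approximation error $B_{1,\tau}$ survives, which is then killed by the crude bound $n^{-1/2}\cdot o(1/\sqrt{n})\cdot O_p(n)$. For the second and third displays your skeleton also matches the paper: Cauchy--Schwarz against $\Vert\hat\theta-\theta_0\Vert_2=O_p(\sqrt{K\log(n)/n})$ from Lemma \ref{lem:sieve}, plus the conditional mean-zero structure (within-pair Rademacher signs for the second display, conditional independence of the $\eta_{1,i}(\tau)$ for the third). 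Where you genuinely diverge is in how $\sup_{\tau\in\Upsilon}\Vert c_n(\tau)\Vert_2$ is controlled: the paper writes $\Vert c_n(\tau)\Vert_2=\sup_{\Vert\rho\Vert_2=1}\rho^\top c_n(\tau)$ and applies the entropy-integral bound to the class indexed jointly by $(\tau,\rho)$, whose covering number is $(a/\eps)^{cK}$ and whose envelope is of order $\zeta(K)$, yielding $O_p(\sqrt{K}\,\zeta(K))$ and hence $o_p(1)$ exactly under Assumption \ref{ass:sieve}(iii); you instead propose applying Lemma \ref{lem:max_eq} coordinate-by-coordinate to the scalar classes indexed by $\tau$ alone and claim the pointwise $O_p(\sqrt{K})$ bound survives up to logarithms.

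That coordinate-wise step is where your proposal has a real gap. Lemma \ref{lem:max_eq} delivers a bound on $\mathbb{E}\sup_{\tau}|c_{n,k}(\tau)|$, a \emph{first} moment, for each coordinate $k$. To conclude $\sup_\tau\Vert c_n(\tau)\Vert_2^2\le\sum_{k=1}^{K}\sup_\tau c_{n,k}^2(\tau)=O_p(K)$ you need \emph{second} moments of suprema, summed over a number of coordinates that grows with $n$, and you must simultaneously track the envelope max terms $\max_i \xi_i|b_k(X_i)|$, which contribute a $\zeta(K)\log(n)/\sqrt{n}$ term per coordinate through the sub-exponential tails of $\xi_i$. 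This conversion can be done (for instance with the Ledoux--Talagrand second-moment inequality that the paper already uses inside the proof of Lemma \ref{lem:max_eq}), and if carried out it would in fact give a slightly sharper final rate than the paper's, $O_p(\sqrt{K^2\log(n)/n})$ rather than $O_p(\sqrt{K^2\zeta^2(K)\log(n)/n})$ --- both sufficient under Assumption \ref{ass:sieve}(iii) since $\zeta(K)\ge 1$. But in your write-up this step is asserted (``inflates the pointwise bound by at most a logarithmic factor'') rather than executed, and you yourself identify it as the crux; as it stands, the entire content of the lemma beyond the pointwise variance computation is missing. A smaller but genuine error: for the second display you condition on $\{X_i,\xi_i,Y_i\}$. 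You cannot condition on the observed outcomes, because $Y_i=A_iY_i(1)+(1-A_i)Y_i(0)$ is a function of $A_i$, so given the $Y_i$ the within-pair signs $A_{\pi(2j-1)}-A_{\pi(2j)}$ are no longer i.i.d.\ Rademacher. Since $m_{1,\tau}(X_i)b(X_i)$ depends only on $X_i$, conditioning on $\{X_i,\xi_i\}$ alone is both correct and sufficient, and is what the paper does.
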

\begin{proof}
	For the first result, note $m_{1,\tau}(X_i) = b^\top(X_i)  \gamma_1(\tau)+B_\tau(X_i)$ such that $\sup_{x \in \Supp(X),\tau \in \Upsilon}|B_\tau(x)| = o(1/\sqrt{n})$. Then,
	\begin{align*}
	& \sum_{i=1}^{2n}\frac{\xi_i m_{1,\tau}(X_i) (\hat{A}_i-1/2)}{\sqrt{n}} \\
	&=  \sum_{i=1}^{2n}\frac{\xi_i m_{1,\tau}(X_i) b^\top(X_i)  (\hat{\theta} - \theta_0)}{\sqrt{n}} \\
	&= \gamma_1^\top(\tau)\left[\sum_{i=1}^{2n}\frac{\xi_ib(X_i)b^\top(X_i)  }{\sqrt{n}}\right] (\hat{\theta} - \theta_0) + \sum_{i=1}^{2n}\frac{\xi_i B_\tau(X_i) b^\top(X_i)  (\hat{\theta} - \theta_0)}{\sqrt{n}} \\
	&= \sum_{i=1}^{2n}\frac{\xi_i\gamma_1^\top(\tau)b(X_i)(A_i-1/2)}{\sqrt{n}}+ \sum_{i=1}^{2n}\frac{\xi_i B_\tau(X_i) b^\top(X_i)  (\hat{\theta} - \theta_0)}{\sqrt{n}}  \\
	&= \sum_{i=1}^{2n}\frac{\xi_im_{1,\tau}(X_i)(A_i-1/2)}{\sqrt{n}}-\sum_{i=1}^{2n}\frac{\xi_i B_\tau(X_i)(A_i-1/2)}{\sqrt{n}} + \sum_{i=1}^{2n}\frac{\xi_i B_\tau(X_i) b^\top(X_i)  (\hat{\theta} - \theta_0)}{\sqrt{n}},
	\end{align*}
	where the third equality holds because
	\begin{align*}
	\hat{\theta} - \theta_0 = \left[\sum_{i=1}^{2n}\frac{\xi_ib(X_i)b^\top(X_i)  }{n}\right]^{-1}\left[\sum_{i=1}^{2n}\frac{\xi_ib(X_i)(A_i-1/2)}{n}\right].
	\end{align*}
	Furthermore,
	\begin{align*}
	\sup_{\tau \in \Upsilon}\left|\sum_{i=1}^{2n}\frac{\xi_i B_\tau(X_i)(A_i-1/2)}{\sqrt{n}}\right| \leq o_p(1) \left(\frac{1}{2n}\sum_{i=1}^{2n}\xi_i\right) = o_p(1),
	\end{align*}
	and
	\begin{align*}
	\sup_{\tau \in \Upsilon}\left|\sum_{i=1}^{2n}\frac{\xi_i B_\tau(X_i) b^\top(X_i)  (\hat{\theta} - \theta_0)}{\sqrt{n}}\right| \leq & \sum_{i=1}^{2n}\frac{\xi_i \zeta(K)||\hat{\theta} - \theta_0||_2}{\sqrt{n}}o_p(1/\sqrt{n}) \\
	= & \left(\sum_{i =1}^{2n}\frac{\xi_i}{n}\right) o_p\left(\sqrt{\frac{K \zeta^2(K)\log(n)}{n}}\right) = o_p(1).
	\end{align*}
	This leads to the first result.
	
	For the second result, we have
	\begin{align*}
	\left|\sum_{i=1}^{2n}\frac{2\xi_i (A_i - 1/2)m_{1,\tau}(X_i) (\hat{A}_i-1/2)}{\sqrt{n}}\right| \leq \left\Vert\sum_{i=1}^{2n}\frac{2\xi_i (A_i - 1/2)m_{1,\tau}(X_i)b(X_i)}{\sqrt{n}}\right\Vert_2 ||\hat{\theta} - \theta_0||_2.
	\end{align*}
	In addition,
	\begin{align*}
	& \sup_{\tau \in \Upsilon}\left\Vert\sum_{i=1}^{2n}\frac{2\xi_i (A_i - 1/2)m_{1,\tau}(X_i)b(X_i)}{\sqrt{n}}\right\Vert_2 \\
	= & \sup_{\tau \in \Upsilon,\rho \in \Re^K, ||\rho||_2=1}\sum_{i=1}^{2n}\frac{2\xi_i (A_i - 1/2)m_{1,\tau}(X_i)b^\top(X_i)  \rho}{\sqrt{n}} \\
	= & \sup_{\tau \in \Upsilon,\rho \in \Re^K, ||\rho||_2=1}\sum_{j=1}^{n}\frac{(A_{\pi(2j-1)} - A_{\pi(2j)})( \xi_{\pi(2j-1)} m_{1,\tau}(X_{\pi(2j-1)})b^\top(X_{\pi(2j-1)}) - \xi_{\pi(2j)} m_{1,\tau}(X_{\pi(2j)})b^\top(X_{\pi(2j)}))\rho}{\sqrt{n}}.
	\end{align*}
	Conditional on $\{X_i,\xi_i\}_{i \in [2n]}$, $\{(A_{\pi(2j-1)} - A_{\pi(2j)})\}_{j=1}^n$ is a sequence of i.i.d. Rademacher random variables. In addition, let
	\begin{align*}
	\mathcal{F}_8 = \{( \xi_{\pi(2j-1)} m_{1,\tau}(X_{\pi(2j-1)})b^\top(X_{\pi(2j-1)}) - \xi_{\pi(2j)} m_{1,\tau}(X_{\pi(2j)})b^\top(X_{\pi(2j)}))\rho: \tau \in \Upsilon, \rho \in R^K, ||\rho||_2=1\}
	\end{align*}
	with envelope $F_j = (\xi_{\pi(2j-1)}\zeta(K)+\xi_{\pi(2j)}\zeta(K)) .$ Then, w.p.a.1,
	\begin{align*}
	& \mathbb{E}\frac{1}{n}\sum_{j=1}^n F_j^2 \leq \frac{1}{n}\sum_{i =1}^{2n} \mathbb{E}\xi_i^2 \zeta^2(K) \leq \overline{C} \zeta^2(K).
	\end{align*}
	In addition, for some constant $c>0$,
	\begin{align*}
	\sup_Q N(\mathcal{F}_8,e_Q,\eps||F||_{Q,2}) \leq \left(\frac{a}{\eps}\right)^{cK}, \quad \forall \eps \in (0,1].
	\end{align*}
	Let $\sigma_n^2 = \sup_{f \in \mathcal{F}_8} \mathbb{P}_n f^2$ and $\delta^2 =  \frac{\sigma_n^2}{\frac{1}{n}\sum_{j=1}^n F_j^2} \leq 1$. Then, by \citet[Corollary 2.2.8]{VW96}, \eqref{eq:J} and \eqref{eq:J2}, we have, w.p.a.1,
	\begin{align*}
	&\mathbb{E}\mathbb{E}\left[\sup_{\tau \in \Upsilon}\left\Vert\sum_{i=1}^{2n}\frac{2\xi_i (A_i - 1/2)m_{1,\tau}(X_i)b(X_i)}{\sqrt{n}}\right\Vert_2 \biggl|\{X_i,\xi_i\}_{i \in [2n]}\right] \\
	& \lesssim  \mathbb{E}\int_0^{\sigma_n}\sqrt{1+\log(N(\mathcal{F}_8,e_{\mathbb{P}_n},\eps)) }d\eps \\
	& \lesssim  \mathbb{E}\sqrt{\frac{1}{n}\sum_{j=1}^n F_j^2} \int_0^\delta \sqrt{1+\log \sup_Q N(\mathcal{F}_8,e_Q,\eps||F||_{Q,2})} d\eps \\
	& \leq  \left(\mathbb{E}\sqrt{\frac{1}{n}\sum_{j=1}^n F_j^2}\right)J(1) \\
	& \lesssim  \left(\sqrt{\mathbb{E}\frac{1}{n}\sum_{j=1}^n F_j^2}\right) \sqrt{K} \\
	& \lesssim  \sqrt{K}\zeta(K).
	\end{align*}
	
	This implies
	\begin{align*}
	\sup_{\tau \in \Upsilon}\left\Vert\sum_{i=1}^{2n}\frac{2\xi_i (A_i - 1/2)m_{1,\tau}(X_i)b(X_i)}{\sqrt{n}}\right\Vert_2 = O_p(\sqrt{K}\zeta(K)),
	\end{align*}
	and
	\begin{align*}
	\sup_{\tau \in \Upsilon}\left|\left|\sum_{i=1}^{2n}\frac{2\xi_i (A_i - 1/2)m_{1,\tau}(X_i) (\hat{A}_i-1/2)}{\sqrt{n}}\right| \right|_2 = O_p\left(\sqrt{\frac{K^2 \zeta^2(K)\log(n)}{n}}\right) = o_p(1).
	\end{align*}
	
	Last, for the third result, we have
	\begin{align}
	\label{eq:sieve23}
	\sup_{\tau \in \Upsilon}\left|\sum_{i=1}^{2n}\frac{2\xi_iA_i \eta_{1,i}(\tau) (\hat{A}_i-1/2)}{\sqrt{n}}\right| \leq &  \sup_{\tau \in \Upsilon}\left\Vert\sum_{i=1}^{2n}\frac{2\xi_iA_i \eta_{1,i}(\tau) b(X_i)}{\sqrt{n}} \right\Vert_2 ||\hat{\theta}-\theta_0||_2 \notag \\
	\leq & \sup_{\tau \in \Upsilon,\rho \in \Re^K, ||\rho||_2=1}\left[\sum_{i=1}^{2n}\frac{2\xi_iA_i \eta_{1,i}(\tau) b^\top(X_i)  \rho}{\sqrt{n}} \right] ||\hat{\theta}-\theta_0||_2.
	\end{align}
	Let $\{\tilde{\eps}_j\}_{j \in [n]}$ be a sequences of i.i.d. Rademacher random variables that is independent of the data. By \eqref{eq:=d}, we have
	\begin{align*}
	\sum_{i=1}^{2n}\frac{2\xi_iA_i \eta_{1,i}(\tau) b^\top(X_i)  \rho}{\sqrt{n}} \biggl |\{A_i,X_i\}_{i \in [2n]} \stackrel{d}{=}\sum_{j=1}^{n}\frac{2\tilde{\xi}_j \tilde{\eta}_{1,j}(\tau) b^\top(\tilde{X}_j)\rho}{\sqrt{n}} \biggl |\{\tilde{X}_j\}_{j \in [n]},
	\end{align*}
	and
	\begin{align*}
	\sum_{i=1}^{2n}\frac{2\eps_i\xi_iA_i \eta_{1,i}(\tau) b^\top(X_i)  \rho}{\sqrt{n}} \biggl |\{A_i,X_i\}_{i \in [2n]} \stackrel{d}{=}\sum_{j=1}^{n}\frac{2\tilde{\eps}_j \tilde{\xi}_j \tilde{\eta}_{1,j}(\tau) b^\top(\tilde{X}_j)\rho}{\sqrt{n}} \biggl |\{\tilde{X}_j\}_{j \in [n]},
	\end{align*}
	where conditionally on $\{\tilde{X}_j\}_{j \in [n]}$, $\{\tilde{\xi}_j \tilde{\eta}_{1,j}(\tau)\}_{j \in [n]}$ is a sequence of independent random variables.  Then, by the same argument as in \eqref{eq:II2}, we have
	\begin{align}
	& \mathbb{E} \sup_{\tau \in \Upsilon,\rho \in \Re^K, ||\rho||_2=1}\left[\sum_{i=1}^{2n}\frac{2 \xi_iA_i \eta_{1,i}(\tau) b^\top(X_i)  \rho}{\sqrt{n}}\biggl|\{X_i,A_i\}_{i \in [2n]} \right]   \notag \\
	&= \mathbb{E} \sup_{\tau \in \Upsilon,\rho \in \Re^K, ||\rho||_2=1}\left[\sum_{j=1}^{n}\frac{2\tilde{\xi}_j \tilde{\eta}_{1,j}(\tau) b^\top(\tilde{X}_j)\rho}{\sqrt{n}} \biggl |\{\tilde{X}_j\}_{j \in [n]} \right]  \notag \\
	&\lesssim \mathbb{E} \sup_{\tau \in \Upsilon,\rho \in \Re^K, ||\rho||_2=1}\left[\sum_{j=1}^{n}\frac{2\tilde{\eps}_j\tilde{\xi}_j \tilde{\eta}_{1,j}(\tau) b^\top(\tilde{X}_j)\rho}{\sqrt{n}} \biggl |\{\tilde{X}_j\}_{j \in [n]} \right]  \notag \\
	& = \mathbb{E}\left\{  \mathbb{E} \sup_{\tau \in \Upsilon,\rho \in \Re^K, ||\rho||_2=1}\left[\sum_{j=1}^{n}\frac{2\tilde{\eps}_j\tilde{\xi}_j \tilde{\eta}_{1,j}(\tau) b^\top(\tilde{X}_j)\rho}{\sqrt{n}} \biggl |\{\tilde{Y}_{j}(1),\tilde{\xi}_j, \tilde{X}_j\}_{j \in [n]} \right] \biggl| \{\tilde{X}_j\}_{j \in [n]}  \right\}.
	\label{eq:sieve1}
	\end{align}
	
	Let
	\begin{align*}
	\mathcal{F}_9 = \{2\tilde{\xi}_j (\tau - 1\{\tilde{Y}_j(1) \leq q_1(\tau) \} - m_{1,\tau}(\tilde{X}_j)) b^\top(\tilde{X}_j)\rho: \tau \in \Upsilon,\rho \in \Re^K, ||\rho||_2=1\},
	\end{align*}
	with envelope $F_j =2\tilde{\xi}_j \zeta(K)$. In addition, for some constant $c>0$,
	\begin{align*}
	\sup_Q N(\mathcal{F}_9,e_Q,\eps||F||_{Q,2}) \leq \left(\frac{a}{\eps}\right)^{cK}, \quad \forall \eps \in (0,1].
	\end{align*}
	Then, following \eqref{eq:Z} and \eqref{eq:J2}, we have
	\begin{align}
	& \mathbb{E} \sup_{\tau \in \Upsilon,\rho \in \Re^K, ||\rho||_2=1}\left[\sum_{j=1}^{n}\frac{2\tilde{\eps}_j\tilde{\xi}_j \tilde{\eta}_{1,j}(\tau) b^\top(\tilde{X}_j)\rho}{\sqrt{n}} \biggl |\{\tilde{Y}_{j}(1),\tilde{\xi}_j, \tilde{X}_j\}_{j \in [n]} \right] \notag \\
	\lesssim & ||F||_{\mathbb{P}_n,2} J(1) \lesssim \sqrt{K}\zeta(K) \left(\frac{1}{n}\sum_{j \in [n]} \tilde{\xi}_j^2 \right)^{1/2}.
	\label{eq:sieve2}
	\end{align}
	Combining \eqref{eq:sieve1} and \eqref{eq:sieve2}, we have
	\begin{align*}
	& \mathbb{E} \sup_{\tau \in \Upsilon,\rho \in \Re^K, ||\rho||_2=1}\left[\sum_{i=1}^{2n}\frac{2 \xi_iA_i \eta_{1,i}(\tau) b^\top(X_i)  \rho}{\sqrt{n}}\biggl|\{X_i,A_i\}_{i \in [2n]} \right]\\
	\lesssim & \mathbb{E}\left\{  \mathbb{E} \sup_{\tau \in \Upsilon,\rho \in \Re^K, ||\rho||_2=1}\left[\sum_{j=1}^{n}\frac{2\tilde{\eps}_j\tilde{\xi}_j \tilde{\eta}_{1,j}(\tau) b^\top(\tilde{X}_j)\rho}{\sqrt{n}} \biggl |\{\tilde{Y}_{j}(1),\tilde{\xi}_j, \tilde{X}_j\}_{j \in [n]} \right] \biggl| \{\tilde{X}_j\}_{j \in [n]}  \right\} \\
	\lesssim & \mathbb{E}\left\{\sqrt{K}\zeta(K)\left(\frac{1}{n}\sum_{j \in [n]} \tilde{\xi}_j^2 \right)^{1/2}\biggl| \{\tilde{X}_j\}_{j \in [n]}\right\} \\
	= & \sqrt{K}\zeta(K)\mathbb{E}\left\{\left(\frac{1}{n}\sum_{i \in [2n]}A_i \xi_i^2 \right)^{1/2}\biggl| \{A_i,X_i\}_{i \in [2n]}\right\} \\
	\lesssim & \sqrt{K}\zeta(K)\mathbb{E}\left\{\left(\frac{1}{n}\sum_{i \in [2n]}\xi_i^2 \right)^{1/2}\biggl| \{A_i,X_i\}_{i \in [2n]}\right\} \\
	= & \sqrt{K}\zeta(K)\mathbb{E} \left(\frac{1}{n}\sum_{i \in [2n]}\xi_i^2 \right)^{1/2} \\
	\lesssim  & \sqrt{K}\zeta(K),
	\end{align*}
	and thus,
	\begin{align*}
	\sup_{\tau \in \Upsilon,\rho \in \Re^K, ||\rho||_2=1}\sum_{i=1}^{2n}\frac{2\xi_iA_i \eta_{1,i}(\tau) b^\top(X_i)  \rho}{\sqrt{n}}  = O_p(\sqrt{K}\zeta(K)).
	\end{align*}
	Then, by \eqref{eq:sieve23} and Lemma \ref{lem:sieve}, we have
	\begin{align*}
	\sup_{\tau \in \Upsilon}\left|\sum_{i=1}^{2n}\frac{2\xi_iA_i \eta_{1,i}(\tau) (\hat{A}_i-1/2)}{\sqrt{n}}\right| = O_p\left(\sqrt{\frac{K^2\zeta^2(K)\log(n)}{n}}\right) = o_p(1).
	\end{align*}
	
\end{proof}

\section{IPW Multiplier Bootstrap with Cross-Validation}
\label{sec:CV}

We implement sieve estimation in the IPW multiplier bootstrap. A practical issue in implementation is the choice of the sieve basis functions. In this section, we first propose a leave-one-out cross-validation (CV) method to choose the sieve basis functions. We then provide computational details and report the performance of the CV method in both the simulated and real datasets used in the paper.

Our leave-one-out CV method uses a procedure similar to that described in \cite{hansen2014}. Although the sieve regression is implemented to estimate the propensity score, scrutiny of the proof reveals that the sieve basis functions are actually used to project $\mathbb{P}(Y(a) \leq q_a(\tau)|X=x)$. For this reason, we propose to choose the sieve basis functions by cross-validating the linear regression of $1\{Y_i \leq \hat{q}_a(\tau)\}$ on the basis functions for the subsample of $\{i \in [2n]: A_i = a\}$, where $\hat{q}_1(\tau)$ and $\hat{q}_0(\tau)$ are the $\tau$th percentiles of the outcomes in the treated and control groups as defined in Section
\ref{sec:est}, respectively.\footnote{In theory, cross-validating the regression of $A_i$ on the basis functions may not work. This is because we know that $\mathbb{E}(A_i|X_i) = 0.5$. As long as the basis functions contain an intercept term the linear regression of $A_i$ on the basis functions is correctly specified with the intercept being $0.5$ and the rest of the linear coefficients being zero. Therefore, the CV will favor the most parsimonious model which only contains the intercept term, i.e., $K=1$. This choice of $K$ clearly violates Assumption 6(v) in the paper.}

\subsection{Computation}
\label{subsec:CV_procedure}

The procedure to compute the IPW multiplier bootstrap estimator with CV begins with selecting the sieve terms from a class of basis functions. After that, we estimate the propensity score $\hat{A}_{i}$ based on the selected sieve basis functions and obtain the IPW multiplier bootstrap estimators of the QTE and ATE as in Section \ref{subsec:IPW computation}. Suppose the sets of sieve basis functions are indexed by $m \in \mathcal{M}$, where $\mathcal{M}$ denotes the class of models to be selected.

\begin{enumerate}
	\item Denote the  basis functions for model $m \in \mathcal{M}$ as $b_{m}\left(X_{i}\right)=\left(b_{i,1},\dots,b_{i,K}\right)^{\prime}$ for every $X_{i}$, we compute the CV criterion for both the treatment and control groups.
	
	\begin{enumerate}
		\item For $a=0,1$, compute $\hat{q}_{a}\left(\tau\right)$ by
		\[
		\hat{q}_{a}\left(\tau\right)=\underset{q}{\arg \min } \sum_{j=1}^{n} \rho_{\tau}\left(Y_{(j,a)}-q\right),~ \text{for
			a=0,1},
		\]
		where $\left\{ Y_{(j,1)}\right\} _{j \in [n]}$ and $\left\{ Y_{(j,0)}\right\} _{j \in [n]}$ are the observed outcomes for the treatment and control groups, respectively.
		\item Construct the dependent variable for CV,
		\[
		D_{(j,a)}=1\left\{ Y_{(j,a)}\leq\hat{q}_{a}\left(\tau\right)\right\} ,\text{ for }j \in [n],a=0,1.
		\]

		\item Obtain the CV criterion for the given basis functions
		\[
		CV^{a}_{n}\left(m\right)=\frac{1}{n}\sum_{j \in [n]}\frac{\hat{e}_{mj}^{a}}{\left(1-h_{mj}^{a}\right)^{2}}.
		\]
		where $\hat{e}_{mj}^{a}=D_{(j,a)}-\hat{D}_{(j,a)}$, $h_{mj}^{a}=b_{m}\left(X_{(j,a)}\right)^{\top}\left(\sum_{j \in [n]}b_{m}\left(X_{(j,a)}\right)b_{m}\left(X_{(j,a)}\right)^{\top}\right)^{-1}b_{m}\left(X_{(j,a)}\right)$, $\{X_{(j,1)}\}_{j \in [n]}$ and $\{X_{(j,0)}\}_{j \in [n]}$ are  the covariates for the treatment and control groups, respectively, $D_{(j,a)}$ is the variable constructed in Step (b), and
		$\hat{D}_{(j,a)}=b_{m}\left(X_{(j,a)}\right)^{\top}\hat{\beta}^{a,CV}$ with  $$\hat{\beta}^{a,CV}=\left(\sum_{j \in [n]}b_{m}\left(X_{(j,a)}\right)b_{m}\left(X_{(j,a)}\right)^{\top}\right)^{-1}\sum_{j \in [n]}b_{m}\left(X_{(j,a)}\right)D_{(j,a)}.$$
	\end{enumerate}

	\item For both the treatment and control groups, choose the basis functions that give the smallest value of $CV^{a}_{n}\left(m\right)$ as
	\[
	\hat{m}^{a} = \argmin_{m \in \mathcal{M}}{CV_{n}^{a}(m)},~a=0,1.
	\]

	\item Using the basis functions selected for the treatment and control groups, we estimate the propensity score $\hat{A}_{i}$. Specifically, for $i \in [2n]$, let
	
	\[
	\hat{A}_{i}=A_i b_{\hat{m}^1}\left(X_{i}\right)^{\top}\hat{\theta}^{1} + (1-A_i) b_{\hat{m}^0}\left(X_{i}\right)^{\top}\hat{\theta}^{0},
	\]
	where for $a=0,1$,  $\hat{\theta}^{a}=\arg\min_{\theta}\sum_{i=1}^{2n}\xi_{i}\left(A_{i}-b_{\hat{m}^a}\left(X_{i}\right)^{\top}\theta\right)^{2}$.\footnote{Note that $\hat{\theta}^{a}$ is estimated by using all the data in the treatment and control groups.}
	
	\item The IPW multiplier bootstrap estimator $\hat{q}^w_{ipw}(\tau)$ with CV is obtained as in Section \ref{subsec:IPW computation} of the paper by plugging $\{ \hat{A}_i\}_{i \in [2n]}$ obtained in Step 3.
	
	\item We repeat the above procedure for $\tau \in \mathcal{G}$.
\end{enumerate}

To calculate the IPW multiplier bootstrap estimator of the ATE with CV, we use a similar procedure with one modification specific to the ATE: in Step 1, we can directly calculate the CV criterion function with
\begin{align*}
& \hat{e}_{mj}^{a}=Y_{(j,a)}-\hat{Y}_{(j,a)} \\
& h_{mj}^{a}=b_{m}\left(X_{(j,a)}\right)^{\top}\left(\sum_{j \in [n]}b_{m}\left(X_{(j,a)}\right)b_{m}\left(X_{(j,a)}\right)^{\top}\right)^{-1}b_{m}\left(X_{(j,a)}\right) \\
& \hat{Y}_{(j,a)}=b_{m}\left(X_{(j,a)}\right)^{\top}\hat{\beta}^{a,CV}, \quad \text{and}, \\
& \hat{\beta}^{a,CV}=\left(\sum_{j \in [n]}b_{m}\left(X_{(j,a)}\right)b_{m}\left(X_{(j,a)}\right)^{\top}\right)^{-1}\sum_{j \in [n]}b_{m}\left(X_{(j,a)}\right)Y_{(j,a)}.
\end{align*}

\subsection{Simulations}
\label{subsec:CV_simul}

We assess the finite sample performance of the CV method discussed in Section \ref{subsec:CV_procedure} using the same set of DGPs considered in Section \ref{sec:sim} of the paper.\footnote{Let $qx_{i,\alpha}$ be the $\alpha$-th quantile of $X_i$. When $d_{x}=1$ , the sets of basis functions under selection by the CV procedure are
	\begin{flalign*}
	& \left\{ 1,X,\max\left\{X-qx_{0.5},0\right\} \right\}, &&\\
	& \left\{ 1,X,\max\left\{X-qx_{0.3},0\right\} ,\max\left\{ X-qx_{0.7},0\right\} \right\}, &&\\
	& \left\{ 1,X,X^2,\max\left\{X-qx_{0.5},0\right\}^{2} \right\}, &&\\
	& \left\{ 1,X,X^{2},\max\left\{X-qx_{0.3},0\right\} ^{2},\max\left\{X-qx_{0.7},0\right\} ^{2}\right\}.&&
	\end{flalign*}
	When $d_{x}=2$, the sets of basis functions under selection by the CV procedure are
	\begin{flalign*}
	& \left\{ 1,X_{1},X_{2},X_1X_2,\max\left\{ X_1 -qx_{1,0.5},0\right\} ,\max\left\{ X_2 -qx_{2,0.5},0\right\} \right\}, &&\\
	& \left\{ 1,X_{1},X_{2},X_1X_2,\max\left\{ X_1 -qx_{1,0.3},0\right\} , \max\left\{ X_1 -qx_{1,0.7},0\right\},\max\left\{ X_2 -qx_{2,0.5},0\right\},\max\left\{ X_2 -qx_{2,0.7},0\right\} \right\} , &&\\
	& \left\{ 1,X_{1},X_{2},X_1X_2,X_1^2,X_2^2,\max\left\{ X_1 -qx_{1,0.5},0\right\}^2 ,\max\left\{ X_2 -qx_{2,0.5},0\right\}^2 \right\}, &&\\
	& \left\{ 1,X_{1},X_{2},X_1X_2,X_1^2,X_2^2,\max\left\{ X_1 -qx_{1,0.3},0\right\}^2 , \max\left\{ X_1 -qx_{1,0.7},0\right\}^2,\max\left\{ X_2 -qx_{2,0.5},0\right\}^2,\max\left\{ X_2 -qx_{2,0.7},0\right\}^2 \right\}.&&
	\end{flalign*}
}

Tables \ref{tab:sim_ate_IPWCV}--\ref{tab:sim_uniform_IPWCV} report the results corresponding to those in Tables \ref{tab:sim_ate}--\ref{tab:sim_uniform} in the paper,  respectively. For a direct comparison, Tables \ref{tab:sim_ate_IPWCV}--\ref{tab:sim_uniform_IPWCV} only contain the results for the IPW multiplier bootstrap with and without the CV. In general, the simulation shows that the performance of the two bootstrap methods is similar.  For the empirical size and power of the uniform confidence bands, Table \ref{tab:sim_uniform_IPWCV} shows that the performance of the bootstrap method with CV is similar to that without for models 1 and 2. For models 3 and 4, the bootstrap method with CV performs poorly when the number of pairs is as small as 50. When the number of pairs increases to 100, its performance improves and is close to that of the IPW multiplier bootstrap method used in the paper, in which the sieve basis functions are selected in an ad-hoc manner. We conjecture that the poor performance for $n=50$ is due to the model selection bias. As we do not change the size of $\mathcal{M}$, the model selection bias becomes asymptotically negligible as the sample size increases. Therefore, the performance of the uniform confidence band improves when $n$ increases to 100.

\newcolumntype{L}{>{\raggedright\arraybackslash}X}
\newcolumntype{C}{>{\centering\arraybackslash}X}

\begin{table}[ht]
	\caption{The Empirical Size and Power of Tests for ATEs}
	\vspace{1ex}
	\begin{adjustbox}{max width=\textwidth}
		\centering{}%
		\begin{tabular}{l ccccc ccccc ccccc ccccc}
			\hline
			\hline
			\multirow{3}{*}{Model} & \multicolumn{4}{c}{$\mathcal{H}_{0}$: $\Delta=0$} & \multicolumn{4}{c}{$\mathcal{H}_{1}$: $\Delta=1/2$}\\
			& \multicolumn{2}{c}{$n=50$} & \multicolumn{2}{c}{$n=100$} & \multicolumn{2}{c}{$n=50$} & \multicolumn{2}{c}{$n=100$}\\
			\cmidrule(lr){2-3} \cmidrule(lr){4-5} \cmidrule(lr){6-7}  \cmidrule(lr){8-9}
			&  IPW  & IPW with CV & IPW  & IPW with CV & IPW  & IPW with CV & IPW  & IPW with CV\\
			\hline
			1 & 5.44 & 5.65 & 6.00 & 5.92 & 29.44 & 29.73 & 50.46 & 50.35 \\
			2 & 5.59 & 5.81 & 5.89 & 5.76 & 24.24 & 24.25 & 41.68 & 41.46 \\
			3 & 4.92 & 4.50 & 5.16 & 5.16 & 5.91 & 5.41 & 8.74 & 8.47 \\
			4 & 6.01 & 5.57 & 4.74 & 4.73 & 4.35 & 4.07 & 4.89 & 4.85 \\
			\hline
		\end{tabular}
	\end{adjustbox}
	\vspace{-1ex}
	\justify
	Notes: The table presents the rejection probabilities for tests of ATEs. The columns  `IPW'  and `IPW with CV'  correspond to the $t$-test using the standard errors estimated by the IPW multiplier bootstrap reported in the main part of the paper and the one with CV, respectively.
	\label{tab:sim_ate_IPWCV}
\end{table}

\begin{landscape}
	\begin{table}[htbp]
		\caption{The Empirical Size and Power of Tests for QTEs }
		\vspace{1ex}

		\centering{}
		\begin{threeparttable}
			\begin{tabular}{lcccccccccccccccc}
				\hline
				\hline
				\multicolumn{1}{l}{} & \multicolumn{8}{c}{$\mathcal{H}_{0}$: $\Delta=0$} & \multicolumn{8}{c}{$\mathcal{H}_{1}$: $\Delta=1/2$}\\
				\multicolumn{1}{l}{} & \multicolumn{4}{c}{$n=50$} & \multicolumn{4}{c}{$n=100$} & \multicolumn{4}{c}{$n=50$} & \multicolumn{4}{c}{$n=100$}\\
				\cmidrule(lr){2-5} \cmidrule(lr){6-9} \cmidrule(lr){10-13}  \cmidrule(lr){14-17}
				\multicolumn{1}{l}{} & 0.25 & 0.50 & 0.75 & \multicolumn{1}{c}{Dif} & 0.25 & 0.50 & 0.75 & \multicolumn{1}{c}{Dif} & 0.25 & 0.50 & 0.75 & \multicolumn{1}{c}{Dif} & 0.25 & 0.50 & 0.75 & Dif\\
				\hline
				\emph{Model 1} &  &  &  &  &  &  &  &  &  &  &  &  &  &  &  & \\
				IPW with CV & 5.47 & 5.37 & 6.16 & 4.19 & 5.23 & 5.89 & 5.67 & 4.11 & 24.51 & 13.76 & 12.20 & 8.49 & 43.93 & 23.25 & 17.19 & 14.09 \\
				IPW & 5.47 & 5.31 & 6.17 & 4.24 & 5.26 & 5.83 & 5.65 & 3.95 & 24.81 & 13.48 & 12.12 & 8.40 & 43.93 & 23.33 & 17.21 & 13.91 \\
				&  &  &  &  &  &  &  &  &  &  &  &  &  &  &  & \\
				\emph{Model 2} &  &  &  &  &  &  &  &  &  &  &  &  &  &  &  & \\
				IPW with CV & 4.99 & 5.11 & 5.79 & 4.47 & 5.15 & 5.77 & 5.95 & 4.16 & 20.43 & 13.11 & 10.51 & 7.50 & 36.24 & 21.62 & 15.24 & 12.46 \\
				IPW & 4.93 & 5.12 & 5.78 & 4.45 & 5.17 & 5.73 & 5.88 & 4.00 & 20.29 & 12.90 & 10.40 & 7.35 & 36.38 & 21.53 & 15.14 & 12.53 \\
				&  &  &  &  &  &  &  &  &  &  &  &  &  &  &  & \\
				\emph{Model 3} &  &  &  &  &  &  &  &  &  &  &  &  &  &  &  & \\
				IPW with CV & 4.55 & 3.04 & 5.17 & 2.18 & 4.76 & 3.61 & 4.79 & 2.91 & 8.44 & 7.30 & 4.94 & 2.70 & 13.14 & 15.49 & 6.05 & 4.14 \\
				IPW & 4.76 & 3.19 & 5.61 & 2.60 & 4.77 & 3.71 & 4.95 & 3.02 & 8.75 & 7.81 & 5.35 & 3.09 & 13.04 & 15.42 & 6.06 & 4.21 \\
				&  &  &  &  &  &  &  &  &  &  &  &  &  &  &  & \\
				\emph{Model 4} &  &  &  &  &  &  &  &  &  &  &  &  &  &  &  & \\
				IPW with CV & 3.57 & 3.34 & 4.43 & 2.96 & 4.29 & 4.60 & 5.11 & 3.60 & 7.44 & 4.64 & 4.08 & 2.63 & 13.91 & 8.57 & 5.38 & 3.78 \\
				IPW & 3.97 & 3.97 & 4.91 & 3.68 & 4.23 & 4.51 & 5.01 & 3.48 & 8.08 & 5.37 & 4.79 & 3.26 & 13.50 & 8.33 & 5.17 & 3.51 \\
				\hline
			\end{tabular}
			
			\begin{tablenotes}[para,flushleft]
				Note:	The table presents the rejection probabilities for tests of QTEs. The columns `0.25', `0.50', and `0.75' correspond to tests with quantiles at 25\%, 50\%, and 75\%, respectively; the column `Dif' corresponds to the test with the null hypothesis specified in \eqref{eq:diff} in the paper. The rows `IPW'  and `IPW with CV' correspond to the results of the IPW multiplier bootstrap reported in the main part of the paper and the one with CV, respectively.
			\end{tablenotes}
			
			\label{tab:sim_qte_IPWCV}
		\end{threeparttable}
		
	\end{table}
	
\end{landscape}

\begin{table}[ht]
	\caption{The Empirical Size and Power of Uniform Inferences for QTEs}
	\vspace{1ex}
	
	\centering{}%
	\begin{tabularx}{1\textwidth}{LCCCC}
		\hline
		\hline
		& \multicolumn{2}{c}{$\mathcal{H}_{0}$: $\Delta=0$} & \multicolumn{2}{c}{$\mathcal{H}_{1}$: $\Delta=1/2$}\\
		\cmidrule(lr){2-3} \cmidrule(lr){4-5}
		& \multicolumn{1}{c}{$n=50$} & \multicolumn{1}{c}{$n=100$} & \multicolumn{1}{c}{$n=50$} & \multicolumn{1}{c}{$n=100$}\\
		\hline
		\emph{Model 1} &  &  &  & \\
		IPW with CV & 4.03 & 4.94 & 15.30 & 32.09 \\
		IPW & 4.49 & 4.94 & 16.30 & 32.40 \\
		&  &  &  & \\
		\emph{Model 2} &  &  &  & \\
		IPW with CV & 3.98 & 4.79 & 13.34 & 27.29 \\
		IPW & 4.25 & 4.91 & 14.27 & 27.47 \\
		&  &  &  & \\
		\emph{Model 3} &  &  &  & \\
		IPW with CV & 1.22 & 2.55 & 2.33 & 10.25 \\
		IPW & 2.19 & 2.99 & 4.25 & 11.34 \\
		&  &  &  & \\
		\emph{Model 4} &  &  &  & \\
		IPW with CV & 0.99 & 2.95 & 0.91 & 6.20 \\
		IPW & 2.78 & 3.36 & 3.18 & 6.98 \\
		\hline
	\end{tabularx}
	
	\vspace{-1ex}
	\justify
	Notes: The table presents the rejection probabilities of the uniform confidence bands for the hypothesis specified in \eqref{eq:uni} in the paper. The rows `IPW'  and `IPW with CV' correspond to the results of the IPW multiplier bootstrap reported in the main part of the paper and the one with CV, respectively.
	
	\label{tab:sim_uniform_IPWCV}
\end{table}

\subsection{Empirical Application}
\label{subsec:CV_app}

We revisit the empirical application in Section \ref{sec:app} with the CV method discussed in Section \ref{subsec:CV_procedure}.\footnote{Let $qx_{i,\alpha}$ be the $\alpha$-th quantile of $X_i$ and use the same set of the variables $X$s and DV in the paper. The sets of basis functions under selection by the CV procedure are
	\begin{align*}
	& \left\{ 1,X_1X_2,X_2X_3,X_1X_3,\max\left\{ X_1 -qx_{1,0.5},0\right\},\max\left\{ X_2 -qx_{2,0.5},0\right\},\max\left\{X_3 -qx_{3,0.5},0\right\}, DV\right\}, \\
	& \left\{ 1,X_1X_2,X_2X_3,X_1X_3,\max\left\{ X_1 -qx_{1,0.5},0\right\}^2,\max\left\{ X_2 -qx_{2,0.5},0\right\}^2,\max\left\{ X_3 -qx_{3,0.5},0\right\}^2,DV \right\}, \\
	& \left\{ 1,X_1,X_2,X_3,X_1X_2,X_2X_3,X_1X_3,\max\left\{ X_1 -qx_{1,0.5},0\right\},\max\left\{ X_2 -qx_{2,0.5},0\right\},\max\left\{X_3 -qx_{3,0.5},0\right\}, DV\right\}, \\
	& \left\{ 1,X_1,X_2,X_3,X_1X_2,X_2X_3,X_1X_3,\max\left\{ X_1 -qx_{1,0.5},0\right\}^2,\max\left\{ X_2 -qx_{2,0.5},0\right\}^2,\max\left\{ X_3 -qx_{3,0.5},0\right\}^2,DV \right\}, \\
	& \left\{ 1,X_1X_2,X_2X_3,X_1X_3,\max\left\{ X_1 -qx_{1,0.3},0\right\},\max\left\{ X_1 -qx_{1,0.5},0\right\},\max\left\{ X_2 -qx_{2,0.3},0\right\},\max\left\{ X_2 -qx_{2,0.5},0\right\},DV \right\},\\
	& \begin{Bmatrix}
	1,X_1X_2,X_2X_3,X_1X_3,\max\left\{ X_1 -qx_{1,0.3},0\right\}^2,\max\left\{ X_1 -qx_{1,0.5},0\right\}^2,\max\left\{ X_2 -qx_{2,0.3},0\right\}^2,\max\left\{ X_2 -qx_{2,0.5},0\right\}^2, \\
	\max\left\{ X_3 -qx_{3,0.3},0\right\}^2,\max\left\{ X_3 -qx_{3,0.5},0\right\}^2,DV
	\end{Bmatrix},\\
	& \begin{Bmatrix}
	1,X_1,X_2,X_3,X_1X_2,X_2X_3,X_1X_3,\max\left\{ X_1 -qx_{1,0.3},0\right\},\max\left\{ X_1 -qx_{1,0.5},0\right\},\max\left\{ X_2 -qx_{2,0.3},0\right\},\max\left\{ X_2 -qx_{2,0.5},0\right\}, \\ \max\left\{X_3 -qx_{3,0.3},0\right\},\max\left\{X_3 -qx_{3,0.5},0\right\}, DV
	\end{Bmatrix},\\
	& \begin{Bmatrix}
	1,X_1,X_2,X_3,X_1X_2,X_2X_3,X_1X_3,\max\left\{ X_1 -qx_{1,0.3},0\right\}^2,\max\left\{ X_1 -qx_{1,0.5},0\right\}^2,\max\left\{ X_2 -qx_{2,0.3},0\right\}^2,\max\left\{ X_2 -qx_{2,0.5},0\right\}^2, \\
	\max\left\{ X_3 -qx_{3,0.3},0\right\}^2,\max\left\{ X_3 -qx_{3,0.5},0\right\}^2,DV
	\end{Bmatrix}.
	\end{align*}
} Tables \ref{tab:app_ate_IPWCV}--\ref{tab:app_qte_dif_IPWCV} report the results corresponding to those in Tables \ref{tab:emp_ate}--\ref{tab:emp_qte_dif} in the paper,  respectively. For a direct comparison, Tables \ref{tab:app_ate_IPWCV}--\ref{tab:app_qte_dif_IPWCV} only contain the results for the IPW multiplier bootstrap method with and without the CV.

In general, the standard errors estimated by the bootstrap method with CV are slightly smaller than those without. All the empirical implications from Tables \ref{tab:app_ate_IPWCV}--\ref{tab:app_qte_dif_IPWCV} in the paper remain valid if we use the IPW multiplier bootstrap with CV instead.

\begin{table}[ht]
	\caption{ATEs of Macroinsurance on Consumption and Profits}
	\vspace{1ex}
	\centering{}%
	\begin{tabularx}{1\textwidth}{LCC}
		\hline
		\hline
		& IPW & IPW with CV \\
		\hline
		Consumption & -1.59(29.45) & -1.59(29.28) \\
		Profit & -86.68(45.38) & -86.68(44.33) \\
		\hline
	\end{tabularx} \\
	
	\vspace{-1ex}
	\justify
	Notes: The table presents the ATE estimates of the effect of macroinsurance on the monthly consumption and profits. Standard errors are in parentheses.  The columns  `IPW'  and `IPW with CV' correspond to the $t$-test using the standard errors estimated by the IPW multiplier bootstrap  ATE estimator reported in the main part of the paper and the one with CV, respectively.
	
	\label{tab:app_ate_IPWCV}
\end{table}

\begin{table}[ht]
	\caption{QTEs of Macroinsurance on Consumption and Profits}
	\vspace{1ex}
	\centering{}%
	\begin{tabularx}{1\textwidth}{LCC}
		\hline
		\hline
		& IPW & IPW with CV  \\
		\hline
		\textit{Panel A. Consumption} & & \\
		25\% & -14.33(25.89) & -14.33(25.26) \\
		50\% & -4.50(31.80) & -4.50(31.29) \\
		75\% & -22.17(56.08) & -22.17(54.98) \\
		& & \\
		\textit{Panel B. Profit } & &\\
		25\% & -33.33(25.94) & -33.33(25.51) \\
		50\% & -66.67(51.02) & -66.67(51.02) \\
		75\% & -200.00(85.04) & -200.00(85.04) \\
		\hline
	\end{tabularx} \\
	
	\vspace{-1ex}
	\justify
	Notes: The table presents the QTE estimates of the effect of macroinsurance on the monthly consumption and profits at quantiles 25\%, 50\%, and 75\%. Standard errors are in parentheses. The rows `IPW'  and `IPW with CV' correspond to the results of the IPW multiplier bootstrap reported in the main part of the paper and the one with CV, respectively.
	\label{tab:app_qte_IPWCV}
\end{table}

\begin{table}[ht]
	\centering
	\caption{Tests for the Difference between Two QTEs of Macroinsurance}
	\vspace{1ex}
	\begin{tabularx}{1\textwidth}{LCC}
		\hline
		\hline
		& IPW & IPW with CV  \\
		\hline
		\textit{Panel A. Consumption} & & \\
		50\%-25\% & 9.83(29.87) & 9.83(30.12) \\
		75\%-50\% & -17.67(48.58) & -17.67(48.47) \\
		75\%-25\% & -7.83(55.87) & -7.83(55.44) \\
		& & \\
		\textit{Panel B. Profit } & &\\
		50\%-25\% & -33.33(49.32) & -33.33(51.02) \\
		75\%-50\% & -133.33(85.04) & -133.33(85.04) \\
		75\%-25\% & -166.67(89.29) & -166.67(85.04) \\
		\hline
	\end{tabularx} \\
	\vspace{-1ex}
	\justify
	Notes: The table presents tests for the difference between two QTEs of macroinsurance on the monthly consumption and profits. Standard errors are in parentheses. The rows `IPW'  and `IPW with CV' correspond to the results of the IPW multiplier bootstrap  QTE estimator reported in the main part of the paper and the one with CV, respectively.
	\label{tab:app_qte_dif_IPWCV}
\end{table}

\subsection{Conclusion}
\label{subsec:CV_conclusion}

This section proposed a leave-one-out CV method to choose the sieve basis functions for the IPW multiplier bootstrap method. We provided details of its computation and compared the performance of the IPW multiplier bootstrap methods with and without CV in both the simulated and real datasets considered in the main paper. The findings show that the IPW multiplier bootstrap methods perform equally well with and without CV.

From a theory perspective using CV to select the basis functions is likely to induce some model selection bias. It is possible to reduce such a bias by using the orthogonal moment to estimate the QTE. The same idea is used in the literature of causal inference with high-dimensional data such as \cite{BCFH13}. In this case, we need to estimate other nuisance parameters such as $\mathbb{P}(Y \leq q_1(\tau)|A=1,X)$ and $\mathbb{P}(Y \leq q_0(\tau)|A=0,X)$ in addition to the propensity score. It is of interest to estimate the QTE via the orthogonal moment with a built-in model selection step, and derive the limit distribution and bootstrap validity of the moment-based estimator under MPD. A comprehensive study of this approach deserves an independent paper and is left for  future research.

\section{Additional Simulations with dim(X)=5}
\label{sec:add_sim}

We have conducted additional simulations with dim(X) = 5. In all cases, potential outcomes for $a\in\{0,1\}$ and $1\leq i\leq2n$ are generated according to
\begin{equation}
\ensuremath{Y_{i}(a)=\mu_{a}+m_{a}\left(X_{i}\right)+\sigma_{a}\left(X_{i}\right)\epsilon_{a,i}},~a=0,1,\label{eq:simulpart01}
\end{equation}
where $\mu_{a},m_{a}\left(X_{i}\right),\sigma_{a}\left(X_{i}\right)$,
and $\epsilon_{a,i}$ are specified as follows. In each of the specifications below, $n\in\{50,100\}$ and ($X_{i},\epsilon_{0,i},\epsilon_{1,i}$) are i.i.d. The number of replications is 10,000. For bootstrap replications we set $B=5,000$. To better illustrate the power of the tests, we add the  alternative hypothesis $\mathcal{H}_1: \Delta = 2$ besides $\mathcal{H}_1: \Delta = 1/2$.

\begin{itemize}
	\item 	\emph{Model 5}: Same as Model 4 in the paper, but with $d_{x}=5$. Specifically, $X_{i}=\left(\Phi\left(V_{i1}\right),\Phi\left(V_{i2}\right),\dots,\Phi\left(V_{id_{x}}\right)\right)^{\prime}$,
	where $\Phi(\cdot)$ is the standard normal c.d.f. and
	\[
	V_{i}=\left(V_{i1,}\dots,V_{id_{x}}\right)\sim N\left(\left(\begin{array}{c}
	0\\
	\vdots\\
	0
	\end{array}\right),\left(\begin{array}{llccc}
	1 & \rho & \rho & \cdots & \rho\\
	\rho & 1 & \rho & \ddots & \vdots\\
	\rho & \rho & 1 & \ddots & \rho\\
	\vdots & \ddots & \ddots & \ddots & \rho\\
	\rho & \cdots & \rho& \rho & 1
	\end{array}\right)\right)
	\]
	$m_{0}\left(X_{i}\right)= \gamma^{\prime}X_{i}-1, m_{1}\left(X_{i}\right)=m_{0}\left(X_{i}\right) + 10(\Phi^{-1}(X_{i1})\Phi^{-1}(X_{i2})-\rho); \epsilon_{d,i}\sim N(0,1)$
	for $d=0,1; \sigma_{0}\left(X_{i}\right)=\sigma_{0}=1$ and $\sigma_{1}\left(X_{i}\right)=\sigma_{1}=2$; $\rho=0.7$. We set $\gamma=\left(1,4,1,1,1\right)^{\prime}$.
	
	\item  \emph{Model 6}:
	Same as Model 5, but with $\rho = 0.2$.

\end{itemize}

Table \ref{tab:sim_ate'} collects the simulation results for ATEs. We make the following observations. First, both the two-sample $t$ test and the naive pair $t$-test have rejection probabilities under $\mathcal{H}_{0}$ far below the nominal level. Second, the adjusted $t$-test has rejection probability under $\mathcal{H}_{0}$ close to the nominal level and is not conservative. Third, the IPW $t$-test proposed in this paper has performance close to the adjusted $t$-test when $n=100$.\footnote{For both ATE and QTE estimations in this additional simulation exercise, we choose the following basis function when estimating the propensity score in the IPW multiplier bootstrap: $\{1,X_{1},X_{2},X_{3},X_{4},X_{5},\max(X_{1}-qx_{1,0.5},0), \max(X_{2}-qx_{2,0.5},0), \max(X_{3}-qx_{3,0.5},0), \max(X_{4}-qx_{4,0.5},0), \max(X_{5}-qx_{5,0.5},0)\}$. } When $n=50$, the IPW $t$-test performs poorly due to the curse of dimensionality in estimating the propensity score. However, its performance improves as we increase the sample size to $n=100$: under $\mathcal{H}_{0}$, the IPW $t$-test has rejection probability close to that of the  adjusted $t$-test; under $\mathcal{H}_{1}$, it is more powerful than the `naive' and `naive pair' methods and has power similar to the adjusted $t$-test.

Tables \ref{tab:sim_qte_size'}-\ref{tab:sim_qte_power'} report the empirical size and power of the tests for a single null hypothesis involving one or two quantile indexes. Consistent with the results in the main part of the paper, the test with standard errors estimated by two naive methods performs poorly in all cases. It is conservative under $\mathcal{H}_{0}$ and lacks power under $\mathcal{H}_{1}$. In contrast, the test using standard errors estimated by the gradient bootstrap or the IPW multiplier bootstrap has a rejection probability under $\mathcal{H}_{0}$ that is close to the nominal level when $n=100$.

Table \ref{tab:sim_uniform'} reports the empirical size and power of the uniform confidence bands. The test using standard errors estimated by the two naive methods has rejection probabilities under $\mathcal{H}_{0}$ far below the nominal level in both DGPs. In both models. the tests constructed based on the two proposed bootstrap methods are conservative when $n=50$. But their performance improves as the number of pairs increases to 100. In addition, under $\mathcal{H}_{1}$, the tests based on both the gradient and the IPW multiplier bootstraps are more powerful than those based on the naive methods when $n=100$.

\begin{table}[htbp]
	\caption{The Empirical Size and Power of Tests for ATEs }
	\vspace{1ex}

	\centering{}
	\begin{tabularx}{1\textwidth}{LCCCCCC}
		\hline
		\hline
		\multicolumn{1}{l}{} & \multicolumn{2}{c}{$\mathcal{H}_{0}$: $\Delta=0$} & \multicolumn{2}{c}{$\mathcal{H}_{1}$: $\Delta=1/2$} &
		\multicolumn{2}{c}{$\mathcal{H}_{1}$: $\Delta=2$}\\
		\cmidrule(lr){2-3} \cmidrule(lr){4-5} \cmidrule(lr){6-7}
		\multicolumn{1}{l}{} & \multicolumn{1}{c}{$n=50$} & \multicolumn{1}{c}{$n=100$} & \multicolumn{1}{c}{$n=50$} & \multicolumn{1}{c}{$n=100$} & \multicolumn{1}{c}{$n=50$}& $n=100$ \\
		\hline
		\emph{Model 5} &  &  &  &  &  & \\
		Naive & 2.52 & 1.71 & 1.35 & 0.93 & 7.45 & 26.28 \\
		Naive Pair & 2.89 & 1.88 & 1.62 & 1.12 & 9.67 & 27.88  \\
		Adj & 5.75 & 5.22 & 4.18 & 5.38 & 22.47 & 51.68 \\
		IPW & 1.69 & 4.40 & 1.05 & 4.44 & 7.28 & 47.98 \\
		&  &  &  &  &  &  \\
		\emph{Model 6} &  &  &  &  & \\
		Naive & 2.19 & 1.63 & 2.04 & 2.50 & 22.18 & 46.63 \\
		Naive Pair & 2.35 & 1.76 & 2.44 & 2.72 & 23.70 & 47.32  \\
		Adj & 4.01 & 4.06 & 4.52 & 6.14 & 32.25 & 60.98 \\
		IPW & 0.86 & 2.61 & 0.94 & 4.50 & 12.16 & 51.67 \\

		\hline
	\end{tabularx}
	
	\vspace{-1ex}
	\justify
	Notes: The table presents the rejection probabilities for tests of ATEs.
	The columns `Naive' and `Adj' correspond to the two-sample $t$-test and the adjusted $t$-test in \cite{BRS19}, respectively; the `Naive pair' column corresponds to the $t$-test using the standard errors estimated by the naive multiplier bootstrap of the pairs; the column `IPW' corresponds to the $t$-test using the standard errors estimated by the IPW multiplier bootstrap ATE estimator.
	\label{tab:sim_ate'}
\end{table}

\begin{table}[htbp]
	\caption{The Empirical Size of Tests for QTEs }
	\vspace{1ex}

	\centering{}
	\begin{tabularx}{1\textwidth}{LCCCCCCCC}
		\hline
		\hline
		\multicolumn{1}{l}{} & \multicolumn{8}{c}{$\mathcal{H}_{0}$: $\Delta=0$}\\
		\multicolumn{1}{l}{} & \multicolumn{4}{c}{$n=50$} & \multicolumn{4}{c}{$n=100$}\\
		\cmidrule(lr){2-5} \cmidrule(lr){6-9}
		\multicolumn{1}{l}{} & 0.25 & 0.50 & 0.75 & \multicolumn{1}{c}{Dif} & 0.25 & 0.50 & 0.75 & Dif\\
		\hline
		\emph{Model 5} &  &  &  &  &  &  &  & \\
		Naive & 2.68 & 1.48 & 2.30 & 1.60 & 2.68 & 1.45 & 1.59 & 1.70 \\
		Naive Pair & 2.87 & 1.58 & 2.40 & 1.89 & 2.94 & 1.60 & 1.75 & 1.78 \\
		Gradient & 4.44 & 3.21 & 3.18 & 2.41 & 4.13 & 3.87 & 4.41 & 3.64 \\
		IPW & 1.10 & 0.63 & 1.47 & 0.93 & 3.86 & 3.58 & 4.84 & 3.79 \\
		&  &  &  &  &  &  &  & \\
		\emph{Model 6} &  &  &  &  &  &  &  &  \\
		Naive & 3.31 & 1.88 & 2.96 & 1.82 & 2.55 & 2.14 & 2.63 & 1.85 \\
		Naive Pair & 3.44 & 2.20 & 3.08 & 2.08 & 2.72 & 2.29 & 2.68 & 2.07 \\
		Gradient & 4.46 & 3.34 & 2.96 & 1.68 & 4.37 & 4.25 & 4.25 & 2.60 \\
		IPW & 1.75 & 0.58 & 1.02 & 0.76 & 3.48 & 3.25 & 4.31 & 4.03 \\
		\hline
	\end{tabularx}
	
	\vspace{-1ex}
	\justify
	Note:	The table presents the empirical size of the tests for QTEs. The columns `0.25', `0.50', and `0.75' correspond to tests with quantiles at 25\%, 50\%, and 75\%, respectively; the column `Dif' corresponds to the test with the null hypothesis specified in \eqref{eq:diff}. The rows `Naive', `Naive pair', `Gradient', and `IPW' correspond to the results of the naive multiplier bootstrap, the naive multiplier bootstrap of the pairs, the gradient bootstrap, and IPW multiplier bootstrap, respectively.
	
	\label{tab:sim_qte_size'}
\end{table}

\begin{landscape}
	\begin{table}[htbp]
		\caption{The Empirical Power of Tests for QTEs }
		\vspace{1ex}

		\centering{}
		\begin{threeparttable}
			\begin{tabular}{lcccccccccccccccc}
				\hline
				\hline
				\multicolumn{1}{l}{} & \multicolumn{8}{c}{$\mathcal{H}_{1}$: $\Delta=1/2$} & \multicolumn{8}{c}{$\mathcal{H}_{1}$: $\Delta=2$}\\
				\multicolumn{1}{l}{} & \multicolumn{4}{c}{$n=50$} & \multicolumn{4}{c}{$n=100$} & \multicolumn{4}{c}{$n=50$} & \multicolumn{4}{c}{$n=100$}\\
				\cmidrule(lr){2-5} \cmidrule(lr){6-9} \cmidrule(lr){10-13}  \cmidrule(lr){14-17}
				\multicolumn{1}{l}{} & 0.25 & 0.50 & 0.75 & \multicolumn{1}{c}{Dif} & 0.25 & 0.50 & 0.75 & \multicolumn{1}{c}{Dif} & 0.25 & 0.50 & 0.75 & \multicolumn{1}{c}{Dif} & 0.25 & 0.50 & 0.75 & Dif\\
				\hline
				\emph{Model 5} &  &  &  &  &  &  &  &  &  &  &  &  &  &  &  & \\
				Naive & 5.76 & 1.93 & 2.15 & 1.54 & 8.93 & 2.62 & 1.51 & 1.56 & 53.24 & 24.46 & 4.66 & 4.60 & 85.86 & 57.75 & 8.18 & 8.96 \\
				Naive Pair & 5.86 & 2.13 & 2.35 & 1.68 & 9.25 & 2.80 & 1.57 & 1.56 & 53.15 & 25.54 & 4.94 & 4.86 & 86.05 & 58.52 & 8.40 & 9.31 \\
				Gradient & 7.99 & 4.33 & 3.38 & 2.71 & 11.84 & 7.28 & 4.26 & 3.67 & 59.77 & 36.90 & 7.78 & 7.32 & 89.24 & 74.33 & 17.62 & 17.56 \\
				IPW & 2.53 & 1.32 & 1.13 & 0.70 & 11.28 & 6.65 & 4.75 & 4.09 & 36.02 & 19.39 & 2.29 & 1.87 & 88.15 & 73.61 & 18.92 & 18.54 \\
				&  &  &  &  &  &  &  &  &  &  &  &  &  &  &  & \\
				\emph{Model 6} &  &  &  &  &  &  &  &  &  &  &  &  &  &  &  & \\
				Naive & 5.74 & 3.59 & 2.99 & 1.82 & 6.84 & 5.78 & 2.61 & 2.57 & 32.87 & 42.53 & 11.81 & 8.68 & 56.27 & 78.14 & 22.91 & 21.33 \\
				Naive Pair & 6.24 & 3.75 & 3.02 & 2.05 & 7.15 & 6.09 & 2.66 & 2.65 & 33.24 & 43.69 & 12.06 & 9.36 & 56.51 & 78.53 & 23.39 & 21.91 \\
				Gradient & 7.30 & 5.94 & 2.96 & 2.02 & 9.55 & 10.28 & 4.52 & 3.49 & 37.06 & 50.97 & 12.07 & 8.85 & 63.31 & 85.29 & 30.89 & 25.76 \\
				IPW & 3.63 & 0.94 & 0.97 & 0.99 & 8.35 & 8.20 & 5.14 & 5.55 & 24.52 & 23.90 & 5.67 & 6.44 & 60.06 & 81.93 & 33.75 & 34.67 \\
				\hline
			\end{tabular}

			\begin{tablenotes}[para,flushleft]
				Note:	The table presents the empirical power of the tests for QTEs. The columns `0.25', `0.50', and `0.75' correspond to tests with quantiles at 25\%, 50\%, and 75\%, respectively; the column `Dif' corresponds to the test with the null hypothesis specified in \eqref{eq:diff}. The rows `Naive', `Naive pair', `Gradient', and `IPW' correspond to the results of the naive multiplier bootstrap, the naive multiplier bootstrap of the pairs, the gradient bootstrap, and IPW multiplier bootstrap, respectively.
			\end{tablenotes}

		\end{threeparttable}
		\label{tab:sim_qte_power'}
	\end{table}
	
\end{landscape}

\newcolumntype{L}{>{\hsize=1.3\hsize\raggedright\arraybackslash}X}
\newcolumntype{C}{>{\hsize=0.95\hsize\centering\arraybackslash}X}

\begin{table}[H]
	\caption{The Empirical Size and Power of Uniform Inferences for QTEs}
	\vspace{1ex}
	
	\centering{}%
	\begin{tabularx}{1\textwidth}{LCCCCCC}
		\hline
		\hline
		& \multicolumn{2}{c}{$\mathcal{H}_{0}$: $\Delta=0$} & \multicolumn{2}{c}{$\mathcal{H}_{1}$: $\Delta=1/2$} & \multicolumn{2}{c}{$\mathcal{H}_{1}$: $\Delta=2$}\\
		\cmidrule(lr){2-3} \cmidrule(lr){4-5} \cmidrule(lr){6-7}
		& \multicolumn{1}{c}{$n=50$} & \multicolumn{1}{c}{$n=100$} & \multicolumn{1}{c}{$n=50$} & \multicolumn{1}{c}{$n=100$} & \multicolumn{1}{c}{$n=50$} & \multicolumn{1}{c}{$n=100$} \\
		\hline
		\emph{Model 5} &  &  &  &  &  & \\
		Naive & 1.53 & 1.26 & 1.99 & 3.19 & 32.12 & 75.44\\
		Naive Pair & 1.59 & 1.51 & 2.14 & 3.43 & 30.52 & 76.10 \\
		Gradient & 2.37 & 3.13 & 3.25 & 6.09 & 40.45 & 83.37 \\
		IPW & 0.25 & 3.67 & 0.41 & 6.24 & 11.04 & 82.91 \\
		&  &  &  &  &  & \\
		\emph{Model 6} &  &  &  &  &  & \\
		Naive & 1.11 & 1.34 & 2.40 & 3.81 & 32.54 & 73.05 \\
		Naive Pair & 1.23 & 1.32 & 2.56 & 4.12 & 33.42 & 73.59 \\
		Gradient & 1.62 & 2.43 & 3.57 & 7.21 & 40.37 & 82.48 \\
		IPW & 0.16 & 2.13 & 0.46 & 5.91 & 11.44 & 78.92 \\
		
		\hline
	\end{tabularx}
	
	\vspace{-1ex}
	\justify
	Notes: The table presents the rejection probabilities of the uniform confidence bands for the hypothesis specified in \eqref{eq:uni}. The rows `Naive', `Naive pair', `Gradient', and `IPW' correspond to the results of the naive multiplier bootstrap, the naive multiplier bootstrap of the pairs, the gradient bootstrap, and the IPW multiplier bootstrap, respectively.
	
	\label{tab:sim_uniform'}
\end{table}

\bibliographystyle{chicago}
\bibliography{BCAR}

\end{document}